\newcommand{\FIGDIR}{./images}
\algrenewcommand\algorithmicrequire{\textbf{Input:}}
\algrenewcommand\algorithmicensure{\textbf{Output:}}
\newtheorem{theorem}{Theorem}
\newtheorem{lemma}{Lemma}
\newtheorem{corollary}{Corollary}
\newtheorem{proposition}{Proposition}
\newtheorem{fact}{Observation}
\newtheorem{definition}[theorem]{Definition}%
\newtheorem*{remark}{Remark}%
\newtheorem{claim}{Claim}
\newenvironment{claim-proof}%
{\begin{description}[leftmargin = 0.2cm, labelsep = 0.2cm]
    \item[\emph{Proof of claim:}]}{\hfill$\diamond$\end{description}}
\newtheorem{problem}{Problem}[section]
\newcommand{\PROBLEM}[1]{{\sc #1}} 
\def\moverlay{\mathpalette\mov@rlay}
\def\mov@rlay#1#2{\leavevmode\vtop{%
   \baselineskip\z@skip \lineskiplimit-\maxdimen
   \ialign{\hfil$\m@th#1##$\hfil\cr#2\crcr}}}
\newcommand{\charfusion}[3][\mathord]{
    #1{\ifx#1\mathop\vphantom{#2}\fi
        \mathpalette\mov@rlay{#2\cr#3}
      }
    \ifx#1\mathop\expandafter\displaylimits\fi}
\newcommand{\cupdot}{\charfusion[\mathbin]{\cup}{\cdot}}
\newcommand{\multiline}[1]{%
  \begin{tabularx}{\dimexpr\linewidth-\ALG@thistlm}[t]{@{}X@{}}
    #1
  \end{tabularx}
}
\newcommand{\AX}[1]{{\upshape(#1)}}
\newcommand{\scen}{\ensuremath{\mathscr{S}}}
\newcommand{\graphs}{\ensuremath{\mathscr{G}}}
\newcommand{\R}{\mathscr{R}}
\newcommand{\F}{\mathscr{F}}
\newcommand{\tT}{\ensuremath{\tau_{T}}}
\newcommand{\tS}{\ensuremath{\tau_{S}}}
\newcommand{\Gg}{\ensuremath{G_{_{=}}}}
\newcommand{\Gu}{\ensuremath{G_{_{<}}}}
\newcommand{\Ga}{\ensuremath{G_{_{>}}}}
\newcommand{\Sri}{\ensuremath{\mathfrak{S}}}
\newcommand{\rpair}[3]{\{{#1}{#2}\vert {#3},{#1}{#3}\vert {#2}\}}
\newcommand{\CF}{C_F}
\newcommand{\CR}{C_R}
\DeclareMathOperator{\parent}{par}
\DeclareMathOperator{\lca}{lca}
\DeclareMathOperator{\child}{child}
\DeclareMathOperator{\depth}{depth}
\DeclareMathOperator{\LA}{LA}
\DeclareMathOperator{\gfitch}{\digamma}
\DeclareMathOperator{\gjoin}{\triangledown}
\newcommand{\wQO}{\ensuremath{\Psi^{w}}}  % weak quasi-orthologs
\newcommand{\sQO}{\ensuremath{\Psi^{s}}}			% (strict) quasi-orthologs
\newcommand{\wO}{\ensuremath{\Theta^{w}}}  % weak quasi-orthologs
\newcommand{\sO}{\ensuremath{\Theta^{s}}}			% (strict) quasi-orthologs
\providecommand{\keywords}[1]{\textbf{\textit{Keywords: }} #1}
  \title{Relative Timing Information and Orthology in Evolutionary Scenarios}
\author[1,2]{David Schaller}
\author[1]{Tom Hartmann} 
\author[3]{Manuel Lafond} 
\author[8]{Nicolas Wieseke} 
\author[1,2,4-7,*]{Peter F. Stadler}    
\author[9,*]{Marc Hellmuth}
\affil[1]{Bioinformatics Group, Department of Computer Science \&
  Interdisciplinary Center for Bioinformatics, Universit{\"a}t Leipzig,
  H{\"a}rtelstra{\ss}e~16--18, D-04107 Leipzig, Germany.}
\affil[2]{Max Planck Institute for Mathematics in the Sciences,
  Inselstra{\ss}e 22, D-04103 Leipzig, Germany}
\affil[3]{  Department of Computer Science, Universit{\'e} de Sherbrooke,
  2500 boul. de l'Universit{\'e} Sherbrooke, Qc, Canada, J1K 2R1}
\affil[4]{German Centre for Integrative Biodiversity Research
  (iDiv) Halle-Jena-Leipzig, Competence Center for Scalable Data Services
  and Solutions Dresden-Leipzig, Leipzig Research Center for Civilization
  Diseases, and Centre for Biotechnology and Biomedicine at Leipzig
  University at Universit{\"a}t Leipzig}
\affil[5]{Institute for Theoretical Chemistry, University of Vienna,
  W{\"a}hringerstrasse 17, A-1090 Wien, Austria}
\affil[6]{Facultad de Ciencias, Universidad National de Colombia, Sede
  Bogot{\'a}, Colombia}
\affil[7]{Santa Fe Insitute, 1399 Hyde Park Rd., Santa Fe NM 87501,
  USA}
\affil[8]{Swarm Intelligence and Complex Systems Group, 
  Department of Computer Science, Leipzig University,
  Augustusplatz 10, D-04109 Leipzig, Germany}
\affil[9]{Department of Mathematics, Faculty of Science, Stockholm University,
  SE - 106 91 Stockholm,   Sweden \newline \texttt{marc.hellmuth@math.su.se}}
\affil[*]{corresponding authors}
\date{\ }
\begin{document}
\sloppy

\maketitle

\abstract{ 
  Evolutionary scenarios describing the evolution of a family of
  genes within a collection of species comprise the mapping of the vertices
  of a gene tree $T$ to vertices and edges of a species tree $S$. The
  relative timing of the last common ancestors of two extant genes (leaves
  of $T$) and the last common ancestors of the two species (leaves of $S$)
  in which they reside is indicative of horizontal gene transfers (HGT) and
  ancient duplications. Orthologous gene pairs, on the other hand, require
  that their last common ancestors coincides with a corresponding
  speciation event. The relative timing information of gene and species
  divergences is captured by three colored graphs that have the extant
  genes as vertices and the species in which the genes are found as vertex
  colors: the equal-divergence-time (EDT) graph, the later-divergence-time
  (LDT) graph and the prior-divergence-time (PDT) graph, which together
  form an edge partition of the complete graph. 
  
  Here we give a complete
  characterization in terms of informative and forbidden triples that can
  be read off the three graphs and provide a polynomial time algorithm for
  constructing an evolutionary scenario that explains the graphs, provided
  such a scenario exists. While both LDT and PDT graphs are cographs, this
  is not true for the EDT graph in general. We show that every EDT graph is
  perfect.  While the information about LDT and PDT graphs is necessary to
  recognize EDT graphs in polynomial-time for general scenarios, this extra
  information can be dropped in the HGT-free case.  However, recognition of
  EDT graphs without knowledge of putative LDT and PDT graphs is
  NP-complete for general scenarios. In contrast, PDT graphs can be
  recognized in polynomial-time.  We finally connect the EDT graph to the
  alternative definitions of orthology that have been proposed for
  scenarios with horizontal gene transfer. With one exception, the
  corresponding graphs are shown to be colored cographs.
}
\smallskip

\noindent 
\keywords{gene tree, species tree, cograph, perfect graph, orthology,
  xenology, horizontal gene transfer, informative and forbidden triples,
  relative timing, NP-hardness
}

\sloppy

\maketitle

\section{Introduction}

An \emph{evolutionary scenario} describes the history of a gene family
relative to the phylogeny of a set of species. Formally, it comprises a
mapping $\mu$ of the gene tree $T$ into the species tree $S$, usually
called the \emph{reconciliation} of $S$ and $T$. The conceptual relevance
of scenarios in evolutionary biology derives from the fact that they define
key relationships between genes, in particular orthology, paralogy, and
xenology \cite{Fitch:00}. On the practical side, scenarios also imply
relations on the set of genes that can be inferred directly from sequence
similarity data, such as the \emph{best match} relation
\cite{Geiss:20b,Stadler:20a} or the \emph{later divergence time} (LDT)
relation \cite{Schaller:21f}, which is closely related to the inference of
horizontal gene transfer (HGT) events.

In the absence of horizontal transfer, orthology is characterized by the
fact that the last common ancestor of two genes $x$ and $y$ is exactly the
speciation event that separated the two species $\sigma(x)$ and $\sigma(y)$
in which $x$ and $y$, resp., reside \cite{Fitch:00}. A necessary condition
for orthology, therefore, is that the last common ancestor of the genes $x$
and $y$ and the last common ancestor of the species $\sigma(x)$ and
$\sigma(y)$ have the same evolutionary age. Whether or not $x,y$ and
$\sigma(x),\sigma(y)$ have \emph{equal divergence time} (EDT) can be
decided (at least at some level of accuracy) directly from sequence
data. The graph $\Gg$ whose vertices are the genes and whose edges are the
pairs of genes with equal divergence time of $x,y$ and
$\sigma(x),\sigma(y)$ thus is an empirically accessible datum. By
construction, furthermore, the EDT graph contains the orthology graph as a
subgraph.

The LDT and EDT relations can be complemented with a ``prior
  divergence time'' relation (PDT). Together, the EDT, LDT and PDT
  relations then define a 3-partition $\graphs$ of the edge set of a
  complete graph with the genes as vertices.  Since the EDT relation has
  some connection with orthology and the LDT relation with xenology, it
  seems intuitive that the PDT relation might be connected with paralogy.
  However, for none of the three relations this connection is strict in the
  sense that it would enforce a particular type of evolutionary event at
  the corresponding last common ancestor. Figure~\ref{fig:simple-scenarios}
  shows examples of evolutionary scenarios with genes in EDT relation (top
  row), LDT relation (middle row) and PDT relation (bottom row) with the
  corresponding last common ancestor being any of the event types
  speciation, HGT, and duplication.  The EDT, LDT, and PDT relations are
  therefore distinct from the orthology, xenology, and paralogy relations
  considered in \cite{Hellmuth:16a}.  Nevertheless, the relative timing
  information from the last common ancestors of pairs of extant genes can
  be used to construct the topologies of the underlying gene and species
  tree as well as a reconciliation between them. The reconciliation then
  determines the orthology, xenology, and paralogy relations.  The
  reconciliation, however, is in general not uniquely determined by the
  3-partition $\graphs$.
    
We show here that a collection of \emph{informative} and \emph{forbidden
triples} defined by $\graphs$ are the key criteria to determine whether or
not $\graphs$ derives from a scenario $\scen$. While both LDT and PDT
graphs are cographs, this is not always the case for the EDT graph. We
shall see, however, that it is a cograph if both $T$ and $S$ are binary
(fully resolved) trees. In Section ``Explanation of $\graphs$ by Relaxed
Scenarios'' we derive a quartic time algorithm for the recognition of
edge-tripartitions that derive from a corresponding scenario. This
construction is then used to give a triple-based characterization.  We then
show that the existence of an explaining scenario is sufficient to
guarantee that $\graphs$ can also be explained by scenarios with several
additional desirable properties. Importantly, these restricted scenarios
have properties that are often assumed for valid reconciliations of $T$ and
$S$ in the literature.  For instance, it is possible to choose the
  scenarios such that each event (inner node of $T$) has at least one
  purely vertical descendant; this is the case for all scenarios in
  Fig.~\ref{fig:simple-scenarios}.  In Section ``Orthology and
Quasi-Orthology'', EDT graphs are connected with several competing notions
of ``orthology'' proposed by different authors
\cite{Fitch:70,Gray:83,Fitch:00,Darby:17}.

\begin{figure}[htbp]
  \centering
  \includegraphics[width=0.25\textwidth]{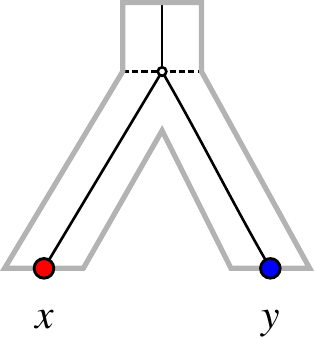}
  \hfill
  \includegraphics[width=0.25\textwidth]{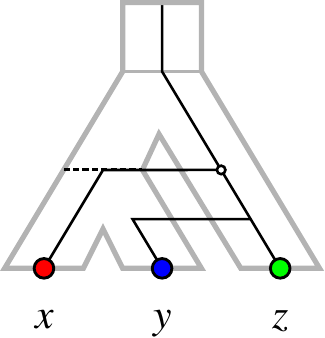}
  \hfill
  \includegraphics[width=0.25\textwidth]{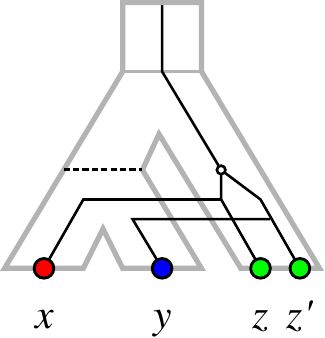}
  \smallskip\hrule\smallskip
  \includegraphics[width=0.25\textwidth]{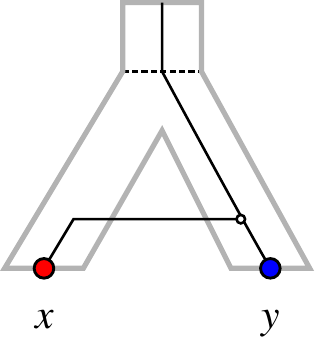}
  \hfill
  \includegraphics[width=0.25\textwidth]{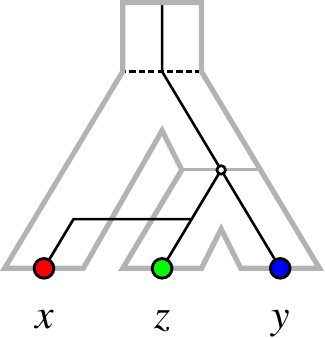}
  \hfill
  \includegraphics[width=0.25\textwidth]{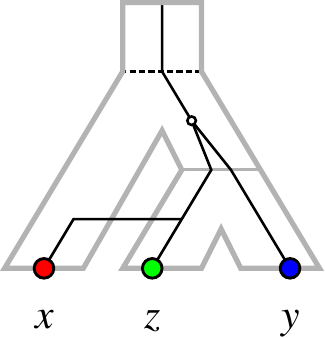}
  \smallskip\hrule\smallskip
  \includegraphics[width=0.25\textwidth]{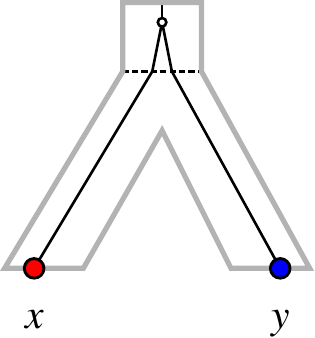}
  \hfill
  \includegraphics[width=0.25\textwidth]{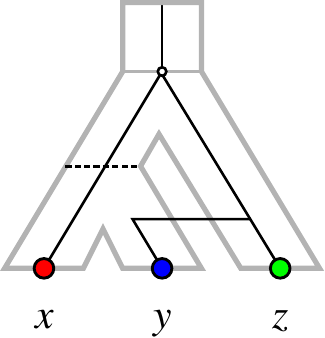}
  \hfill
  \includegraphics[width=0.25\textwidth]{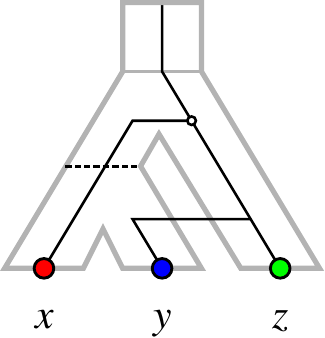}
  \caption{Examples of evolutionary scenarios depicted as gene trees
      (black inline trees) embedded into species trees (gray outline
      trees).  In all cases, the ancestral gene $\lca_T(x,y)$ of $x$ and
      $y$ is highlighted as white circle while the corresponding species
      $\lca_S(\sigma(x),\sigma(y))$ is highlighted as dashed line.
      \emph{Top row:} scenario with $x$ and $y$ in EDT relation, i.e., the
      ancestral gene $\lca_T(x,y)$ diverged concurrently with the
      corresponding species $\lca_S(\sigma(x),\sigma(y))$.  The
      evolutionary event at $\lca_T(x,y)$ is either a speciation (left),
      a  horizontal gene transfer (center), or a duplication (right).
      \emph{Middle row:} scenario with $x$ and $y$ in
      LDT relation, i.e., the ancestral gene $\lca_T(x,y)$ diverged after
      the corresponding species $\lca_S(\sigma(x),\sigma(y))$.  The
      evolutionary event at $\lca_T(x,y)$ is either a horizontal gene
      transfer (left), a speciation (center), or a duplication (right).
      \emph{Bottom row:} scenario with $x$ and $y$ in PDT relation, i.e.,
      the ancestral gene $\lca_T(x,y)$ diverged before the corresponding
      species $\lca_S(\sigma(x),\sigma(y))$.  The evolutionary event at
      $\lca_T(x,y)$ is either a duplication (left), a speciation (center),
      or a horizontal gene transfer (right).}
  \label{fig:simple-scenarios}
\end{figure}

\section{Notation}

\paragraph{Graphs} 
We consider undirected simple graphs $G=(V,E)$ with vertex set
$V(G)\coloneqq V$ and edge set $E(G)\coloneqq E$.  We write $G\subseteq H$
if $G=(V,E)$ is a subgraph of $H=(V', E')$, i.e., if $V\subseteq V'$ and
$E\subseteq E'$. The subgraph of $G$ that is induced by the subset $X
  \subseteq V$ will be denoted by $G[X]$. A connected component $C$ of 
  $G$ is an inclusion-maximal subset $C\subseteq V$ such that $G[C]$ is 
  connected.
The complement of a graph
$G=(V,E)$ is the graph $\overline{G}=(V,\overline{E})$ with vertex set $V$
and an edge $xy\in\overline{E}$ for $x\ne y$ precisely if $xy\notin E$.
We denote by $K_n$ the graph on $n$ vertices in which every possible
  edge is present, hereafter called a \emph{complete graph}.  A graph
property $\Pi$ is a subset of the set of all graphs. A graph property $\Pi$
is \emph{closed under complementation} if $G\in\Pi$ implies
$\overline{G}\in\Pi$.
  
\paragraph{Rooted Trees} 
Trees are connected and acyclic graphs.  All trees in this contribution
have a distinguished vertex $\rho$, called the \emph{root} of the tree. For
two vertices $x,y\in V(T)$, we write $y \preceq_{T} x$ if $x$ lies on the
unique path from the root to $y$, in which case $x$ is called an
\emph{ancestor} of $y$, and $y$ is called a \emph{descendant} of $x$.  If,
in addition, $x$ and $y$ are adjacent in $T$, then $x$ is the \emph{parent}
of $y$ (denoted by $\parent_T(y)$), and $y$ is a \emph{child} of $x$. The
set of children of $x$ is denoted by $\child_T(x)$.  We write edges $e=xy$
indicating that $y\preceq_T x$.  It will be convenient to extend the
relation $\preceq_{T}$ to the union $V(T)\cup E(T)$ as follows: For a
vertex $x\in V(T)$ and an edge $e=uv\in E(T)$, we set $x \preceq_T e$ if
and only if $x\preceq_T v$; and $e \preceq_T x$ if and only if $u\preceq_T
x$.  In addition, for edges $e=uv$ and $f=ab$ in $T$, we put $e\preceq_T f$
if and only if $v \preceq_T b$ (note that under this definition, $uv
  \preceq_T uv$). For $x,y\in V(T)\cup E(T)$, we may also write $x
\succeq_{T} y$ instead of $y \preceq_{T} x$. We use $y \prec_T x$ for $y
\preceq_{T} x$ and $x \neq y$. Moreover, we say that $x$ and $y$ are
\emph{comparable} if $y\preceq_{T} x$ or $x\preceq_{T} y$ holds and,
otherwise, $x$ and $y$ are \emph{incomparable}.  Note that $\preceq_{T}$ is
a partial order with a unique maximal element $\rho$. The \emph{leaves}
$L=L(T)\subseteq V(T)$ of $T$ are precisely the $\preceq_{T}$-minimal
elements.

From here on, we assume that the root $\rho$ as well as every non-leaf
vertex of a tree have always at least two children. Moreover, we write
$T(u)$ for the \emph{subtree of $T$ rooted at $u$}, i.e, the tree that is
induced by $u$ and all its descendants.

For a set of leaves $A\subseteq L$, we write $\lca_T(A)$ for the
\emph{last common ancestor} of $A$, i.e., the unique $\preceq_T$-minimal
vertex in $V(T)$ such that $x\preceq \lca_T(A)$ for all $x\in A$. For
simplicity, we write $\lca_T(x,y)$ instead of $\lca_T(\{x,y\})$. The
\emph{restriction of $T$} to a subset $L'\subseteq L$, in symbols $T_{\vert
  L'}$, is obtained from the minimal subtree of $T$ that connects all
leaves in $L'$ by suppressing all vertices with degree two except possibly
the root $\rho_{T_{\vert L'}}$. We often write $T_{\vert x_1\dots x_k}$
instead of $T_{\vert \{x_1,\dots,x_k\}}$.  A tree $T$ \emph{displays}
a tree $T'$ with $L(T')\subseteq L(T)$ if $T'$ is isomorphic to $T_{\vert
  L(T')}$.

\paragraph{Planted Trees} 
In order to accommodate evolutionary events pre-dating
$\rho\coloneqq\lca(L)$, we consider \emph{planted trees}, i.e., we assume
an additional planted root $0_T$ with degree $1$ that is the parent of the
``root'' $\rho$. The \emph{inner vertices} of $T$ are $V^0(T) \coloneqq
V(T) \setminus (L(T) \cup \{0_T\})$.  In particular, a planted tree $T$
always displays the rooted tree $T_{\vert L(T)}$ obtained by removing $0_T$
and its incident edge $0_T\rho$.  
\begin{remark}
  Unless explicitly stated otherwise, the trees that appear in this
  contribution are planted phylogenetic trees, i.e., $0_T$ is the only
  vertex with exactly one child. All other vertices are either leaves or
  have at least two children.
\end{remark}

\paragraph{Triples and Fan Triples}
A (rooted) triple is a binary rooted tree on three vertices.  We denote by
$xy\vert z$ the rooted triple $t$ with leaf set $\{x,y,z\}$ and
$\lca_t(x,y) \prec_T \lca_t (x,z) = \lca_t(y,z)$. A tree $T$ displays
  $xy\vert z$ if $\lca_T(x,y) \prec_T \lca_T (x,z) = \lca_T(y,z)$.  A
\emph{fan triple} $x\vert y\vert z$ on leaves $x,y,z$ is the tree
$(x,y,z)$.  A tree $T$ displays the fan triple $x\vert y\vert
z$ if $\lca_T(x,y) = \lca_T(x,z) = \lca_T(y,z)$.

As usual, we say that a set $\R$ of triples is \emph{consistent} if there
is a tree $T$ that displays all of the triples in $\R$. If $(\R,\F)$ is a
pair of two triple sets, we say that $(\R,\F)$ is consistent if there is a
tree $T$ that displays all of the triples in $\R$ but none of the triples
in $\F$. In this case, we say that $T$ \emph{agrees with} $(\R,\F)$.  We
will frequently make use of the following simple observation that
collects the structures of the subtree $T_{\vert L'\cup L''}$ on $\vert
L'\cup L''\vert =4$ leaves implied by two subtrees $T_{\vert L'}$ and
$T_{\vert L''}$ on three leaves (triples) sharing $\vert L'\cap L''\vert=2$
common leaves.  The statements are closely related to the so-called
``inference rules'' for rooted triples, see in particular
\cite{Dekker86,Bryant:95}. We leave the elementary proofs to the interested
reader.  We use Newick notation for rooted trees, i.e., inner vertices
correspond to matching parentheses, leaves are given by their labels, and
commas are used to separate sibling to increase readability. For example,
the triple $ab\vert c$ is equivalently represented as $((a,b),c)$.

\begin{fact}
  \label{obs:triples}
  Let $T$ be a tree and $a,b,c,d\in L(T)$ be pairwise distinct leaves. 
  Suppose $T$ displays $ab\vert c$.
  \begin{enumerate}[nolistsep,noitemsep]
  \item[(i)] If $T$ displays $cd\vert a$, then 
    $T_{\vert abcd}=((a,b),(c,d))$. 
  \item[(ii)] If $T$ displays $ac\vert d$,
    then $T_{\vert abcd} = (((a,b),c),d)$.
  \item[(iii)] If $T$ displays $ad\vert c$, then
    $T$ displays $bd\vert c$
    and $T_{\vert abcd}$ is one of the trees
    $(((a,d),b),c)$,
    $(((b,d),a),c)$,
    $(((a,b),d),c)$,
    or $((a,b,d),c)$.
  \item[(iv)] If $T$ displays $ab\vert d$, then $T_{\vert abcd}$ is
   one of the trees $(((a,b),c),d)$,
    $(((a,b),d),c)$, $((a,b),c,d)$, or $((a,b),(c,d))$.
  \item[(v)] If $T_{\vert bcd}=(b,c,d)$, then $T_{\vert abcd}=((a,b),c,d)$.
  \end{enumerate}
  \medskip  
  \noindent
  Suppose that $T$ does not display any of the triples on
  $\{a,b,c\}$, i.e., $T_{\vert abc}=(a,b,c)$. 	
  \begin{enumerate}[nolistsep,noitemsep]

  \item[(vi)] If $T_{\vert bcd}=(b,c,d)$, then $T_{\vert abcd}=(a,b,c,d)$
    or $T_{\vert abcd}=((a,d),b,c)$.
  \end{enumerate}
\end{fact}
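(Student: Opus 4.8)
The plan is to prove every item by determining the position of the fourth leaf relative to the \emph{already known} subtree on the first three leaves, computing the six pairwise last common ancestors, and observing that these comparabilities and coincidences pin down the shape of $T_{\vert abcd}$ (equivalently, the rooted and fan triples displayed on its four three-element leaf subsets). The single recurring tool is the elementary fact that if two leaves $x,y$ lie in a common subtree $T(v)$ and a leaf $z$ does not, then $\lca_T(x,z)=\lca_T(y,z)$: the lineage of $z$ is incomparable to $v$, so it meets the lineage of each of $x,y$ at the same vertex $\lca_T(v,z)\succ_T v$. I would prove this once and invoke it throughout. For items (i)--(v) I set $p\coloneqq\lca_T(a,b)$ and $m\coloneqq\lca_T(a,b,c)$, so that the hypothesis $ab\vert c$ reads $p\prec_T m=\lca_T(a,c)=\lca_T(b,c)$.

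The three single-conclusion items are quick. For (i), $cd\vert a$ gives $\lca_T(a,d)=m$, whence $d\npreceq_T p$; the recurring fact (with $v=p$) yields $\lca_T(b,d)=m$, while $\lca_T(c,d)\prec_T m$ is incomparable to $p$ (else $\lca_T(a,c)\prec_T m$), so $T_{\vert abcd}=((a,b),(c,d))$. For (ii), $ac\vert d$ forces $\lca_T(a,d)=\lca_T(c,d)\succ_T m$, so $d$ attaches strictly above $m$ and the recurring fact (with $v=m$) gives $\lca_T(b,d)=\lca_T(a,d)$ as well, yielding $(((a,b),c),d)$. For (v), $T_{\vert bcd}=(b,c,d)$ means $\lca_T(b,d)=\lca_T(c,d)=\lca_T(b,c)=m$, so $d$ attaches at the vertex $m$ while $p\prec_T m$ survives, giving $((a,b),c,d)$.

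The real work is in the three items whose conclusion is a \emph{list} of admissible trees, some carrying a degree-three vertex, and the main obstacle is to verify that each list is simultaneously complete and mutually exclusive. For (iii), $ad\vert c$ together with $ab\vert c$ places $a,b,d$ all inside the child subtree $T(w)$ of $m$ on the $\{a,b\}$ side while $c\npreceq_T w$; this immediately gives the asserted $bd\vert c$ and shows $T_{\vert abcd}=(T_{\vert abd},c)$, so the four listed trees are exactly the images of the four possible shapes of $T_{\vert abd}$ (its three resolved triples and its fan). Item (iv) I would handle dually by contracting the clade $\{a,b\}$ to a single leaf $e$: the hypothesis $ab\vert d$ keeps $\{a,b\}$ a clade avoiding both $c$ and $d$, and the four listed trees are precisely the preimages of the four shapes of $T_{\vert ecd}$. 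Finally, in (vi) the two fan hypotheses collapse all of $\lca_T(a,b),\lca_T(a,c),\lca_T(b,c),\lca_T(b,d),\lca_T(c,d)$ to the single vertex $m$ and leave only $\lca_T(a,d)\preceq_T m$ undetermined; the two cases $\lca_T(a,d)=m$ and $\lca_T(a,d)\prec_T m$ give exactly $(a,b,c,d)$ and $((a,d),b,c)$. Presenting (iii), (iv) and (vi) as a single exhaustive split on the vertex of $T_{\vert abc}$ at which the lineage of $d$ first meets it is what makes completeness of each list transparent; the only care needed is the bookkeeping that distinguishes a coincidence of last common ancestors (a fan) from a strict inequality.
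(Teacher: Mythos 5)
The paper offers no proof of Observation~\ref{obs:triples} to compare against: the authors explicitly ``leave the elementary proofs to the interested reader,'' remarking only that the statements are related to the classical inference rules for rooted triples. So your proposal must stand on its own, and it does. Your recurring lemma (if $x,y\preceq_T v$ and $z\not\preceq_T v$, then $\lca_T(x,z)=\lca_T(y,z)$) is correct and does carry all six items: I checked each case. In (i), (ii), (v) the six pairwise last common ancestors are indeed forced as you say (in (i) the incomparability of $\lca_T(c,d)$ and $\lca_T(a,b)$ follows exactly by your parenthetical argument; in (v) note that $\lca_T(a,d)=m$ also needs the recurring lemma with $v=\lca_T(a,b)$, which your general plan covers but the sentence elides). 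In (iii) the key step that $a,b,d$ all lie in the child subtree $T(w)$ of $m$ containing $a$ rests on the chain argument that $\lca_T(a,d)\prec_T m$ is an ancestor of $a$ and hence $\preceq_T w$ --- worth one explicit line in a write-up, but sound --- after which $T_{\vert abcd}=(T_{\vert abd},c)$ and the four listed trees are precisely the images of the four shapes of a three-leaf tree, so the list is complete. In (iv) the contraction of the cherry $\{a,b\}$ to a leaf $e$ is legitimate because $\lca_T(a,c)=\lca_T(b,c)$ and $\lca_T(a,d)=\lca_T(b,d)$ (recurring lemma with $v=\lca_T(a,b)$), so $T_{\vert abcd}$ is determined by the four shapes of $T_{\vert ecd}$, again making completeness transparent. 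In (vi) the two fans collapse five of the six last common ancestors to a single vertex $m$, and $\lca_T(b,d)=m$ forces $d\preceq_T m$, so the dichotomy $\lca_T(a,d)=m$ versus $\lca_T(a,d)\prec_T m$ exhausts all cases and yields exactly $(a,b,c,d)$ and $((a,d),b,c)$. This is a correct, complete, and appropriately elementary proof of the kind the paper intends.
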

We will make use of Obs.~\ref{obs:triples} throughout the subsequent proofs
without explicit reference.

\paragraph{Cographs}
The \emph{join} of two graphs $G=(V,E)$ and $H=(W,F)$ with disjoint vertex
sets $V\cap W=\emptyset$ is the graph $G\gjoin H$ with vertex set
$V\cupdot W$ and edge set $E\cupdot F\cupdot \{xy\mid x\in V,y\in
W\}$. Similarly, their \emph{disjoint union} $G\cupdot H$ has vertex set
$V\cupdot W$ and edge set $E\cupdot F$.  \emph{Cographs} are recursively
defined as the graphs that either are $K_1$s or can be obtained from the
join or disjoint union of two cographs.  Cographs have been studied
extensively. We summarize some basic results in the next proposition.
\begin{proposition}{\cite{Corneil:81}}
  Given an undirected graph $G$, the following statements are equivalent:
  \begin{enumerate}[noitemsep,nolistsep]
    \item $G$ is a cograph.
    \item $G$ is explained by a \emph{cotree} $(T,t)$, i.e., a rooted tree $T$
      with $L(T)=V(G)$ and $t\colon V^0(T)\to\{0,1\}$ such that $xy\in E(G)$
      precisely if $t(\lca_T(x,y))=1$.
    \item The complement graph $\overline G$ of $G$ is a cograph.
    \item $G$ does not contain a $P_4$, i.e., a path on four vertices, as an
    induced subgraph.
    \item Every induced subgraph $H$ of $G$ is a cograph.
  \end{enumerate}
  \label{prop:cograph}
\end{proposition}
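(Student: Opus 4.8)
The plan is to prove a cycle of implications linking the five statements, taking the recursive definition of cographs (either $K_1$, or a join $G_1\gjoin G_2$, or a disjoint union $G_1\cupdot G_2$ of cographs) as the starting point. Concretely, I would establish $(1)\Leftrightarrow(2)$, $(1)\Leftrightarrow(3)$, and the chain $(1)\Rightarrow(5)\Rightarrow(4)\Rightarrow(1)$ together with the trivial $(5)\Rightarrow(1)$, which jointly yield all equivalences. The implication $(1)\Leftrightarrow(2)$ follows by structural induction: a $K_1$ is explained by a one-leaf cotree; if $G=G_1\gjoin G_2$ one joins the two cotrees under a fresh root labelled $1$, and if $G=G_1\cupdot G_2$ under a fresh root labelled $0$, after which one checks that a cross-pair $x\in V(G_1),y\in V(G_2)$ has $\lca_T(x,y)=\rho$ with the correct label while pairs inside a part keep their old last common ancestor. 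Conversely, given a cotree $(T,t)$ whose root $\rho$ has children $u_1,\dots,u_k$, the graphs $G_i\coloneqq G[L(T(u_i))]$ are cographs by induction on the number of leaves, and $G$ is their iterated join if $t(\rho)=1$ and their iterated disjoint union if $t(\rho)=0$.

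The equivalence $(1)\Leftrightarrow(3)$ is immediate from the identities $\overline{G_1\gjoin G_2}=\overline{G_1}\cupdot\overline{G_2}$ and $\overline{G_1\cupdot G_2}=\overline{G_1}\gjoin\overline{G_2}$ together with $\overline{K_1}=K_1$, applied inductively along the construction; the symmetry $\overline{\overline G}=G$ gives the converse. For $(1)\Rightarrow(5)$ I would again induct on the construction: for $G=G_1\star G_2$ with $\star\in\{\gjoin,\cupdot\}$ and any $X\subseteq V(G)$, the induced subgraph $G[X]$ equals $G_1[X\cap V(G_1)]\star G_2[X\cap V(G_2)]$, a $\star$ of cographs (or one of them alone when $X$ misses a part). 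Then $(5)\Rightarrow(4)$ reduces to the single observation that $P_4$ is not a cograph: it is connected and not $K_1$, hence not a non-trivial disjoint union, and $\overline{P_4}\cong P_4$ is connected, hence $P_4$ is not a non-trivial join either, so it cannot arise from the recursion; thus a graph all of whose induced subgraphs are cographs has no induced $P_4$. Finally $(5)\Rightarrow(1)$ is trivial by taking $H=G$.

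The crux is $(4)\Rightarrow(1)$, which I would prove by induction on $n=|V(G)|$ using the key lemma that a $P_4$-free graph on $n\ge 2$ vertices is disconnected or has disconnected complement. Granting the lemma: if $G$ is disconnected, its components induce $P_4$-free graphs on fewer vertices, cographs by induction, so $G$ is their disjoint union; if instead $\overline G$ is disconnected, then $\overline G$ is also $P_4$-free (again because $\overline{P_4}\cong P_4$) and its components yield, by induction, a representation of $\overline G$ as a disjoint union of cographs, whence $G$ is a join of cographs via $(1)\Leftrightarrow(3)$. In both cases $G$ is a cograph.

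It remains to prove the lemma, and this is where I expect the main obstacle. Suppose for contradiction that both $G$ and $\overline G$ are connected with $n\ge 2$; checking $n\le 3$ by hand (e.g.\ $\overline{P_3}$ is disconnected) forces $n\ge 4$. If $\operatorname{diam}(G)\ge 3$, a shortest path of length three is an induced $P_4$, and the same applies to $\overline G$, so one may assume $\operatorname{diam}(G)=\operatorname{diam}(\overline G)=2$. The residual diameter-two case is the hard part: starting from an induced $P_3$ on vertices $a,c,b$ with centre $c$ (which exists since $\operatorname{diam}(G)=2$), I would run a finite case analysis on the adjacencies of each remaining vertex to $\{a,b,c\}$, repeatedly invoking $P_4$-freeness to exclude the patterns that would complete an induced $P_4$, and using connectivity of $\overline G$ to locate a vertex non-adjacent to $c$; this bookkeeping produces an induced $P_4$, the desired contradiction. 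An equivalent route runs the argument through a longest induced path $x_1\cdots x_k$: the case $k=2$ forces $G$ to be $P_3$-free and hence complete, so $\overline G$ is edgeless and disconnected, while $k\ge 4$ exhibits a $P_4$ outright, leaving precisely the same $k=3$ adjacency analysis as the only nontrivial case.
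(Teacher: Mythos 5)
The paper offers no proof of this proposition: it is quoted from Corneil, Lerchs and Stewart~Burlingham \cite{Corneil:81}, so there is no in-paper argument to compare against, and your proposal must be judged on its own. Its architecture is the standard one and is sound: $(1)\Leftrightarrow(2)$ by structural induction along the cograph recursion, $(1)\Leftrightarrow(3)$ from $\overline{G_1\gjoin G_2}=\overline{G_1}\cupdot\overline{G_2}$ and its dual, $(1)\Rightarrow(5)$ because induced subgraphs distribute over join and disjoint union, $(5)\Rightarrow(4)$ from the self-complementarity of $P_4$ (a cograph on at least two vertices is disconnected or co-disconnected, and $P_4$ is neither), and $(4)\Rightarrow(1)$ by induction from the key lemma that a $P_4$-free graph on $n\ge 2$ vertices is disconnected or has disconnected complement --- this is Seinsche's theorem, and your reduction of everything to it is exactly the textbook route.

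The one genuine soft spot is where you yourself flag it: your treatment of the key lemma ends with ``a finite case analysis \dots\ produces an induced $P_4$'' in the diameter-$2$ case, which is an outline, not a proof. The lemma is true and your route can be completed, but the bookkeeping is avoidable. Prove instead, by induction on $n$, that if $G$ and $\overline{G}$ are both connected ($n\ge 2$), then $G$ contains an induced $P_4$; the lemma follows contrapositively. For $n\le 3$ both-connectedness is impossible by inspection. For $n\ge 4$, pick any vertex $v$: if $G-v$ and $\overline{G}-v$ are both connected, the induction hypothesis yields an induced $P_4$ inside $G-v$ and we are done; otherwise, since $P_4$ is self-complementary, we may assume $G-v$ is disconnected with components $C_1,\dots,C_k$, $k\ge 2$. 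Connectivity of $G$ forces $v$ to have a neighbor in every $C_i$; connectivity of $\overline{G}$ yields some $u\in C_i$ with $uv\notin E(G)$. Walking within $C_i$ from $u$ to a neighbor of $v$ and taking the first vertex $w$ on the walk adjacent to $v$, with predecessor $u'$, gives $u'w\in E(G)$, $vw\in E(G)$, $vu'\notin E(G)$; choosing any neighbor $x$ of $v$ in a different component $C_j$ gives the induced path $u'-w-v-x$, since $u'x,wx\notin E(G)$ (distinct components) and $u'v\notin E(G)$. This closes the only gap in your argument.
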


\section{Equal Divergence Time Graphs}

\subsection{Evolutionary Scenarios}

The vertices in phylogenetic trees designate evolutionary events such as 
speciations, gene duplications, or horizontal gene transfers. Conceptually, any 
such event $x$ is associated with a specific point in time $\tT(x)$.
\begin{definition}
  \label{def:timemap}
Let $T$ be a rooted or planted tree. Then $\tT:V(T)\to\mathbb{R}$ is
a time map for $T$ if $x\prec_T y$ implies $\tT(x)<\tT(y)$. The tuple
  $(T,\tT)$ is called \emph{dated tree}.
\end{definition}
Definition~\ref{def:timemap} ensures that the ancestor relation $x\prec_T
y$ and the timing of the vertices are not in conflict. It also
  pertains to arbitrary rooted trees since these can be seen as
  restrictions of planted trees to $V\setminus\{0_T\}$. Note that
  for an edge $uv$ of $T$, the convention that $uv$ implies $v\prec_T u$,
  also implies $\tT(v)<\tT(u)$.
Below we will make use of the fact that time maps are
easily constructed for rooted trees: 

\begin{lemma}
  \textnormal{\cite[Lemma~1]{Schaller:21f}} Given a tree $T$ (planted
    or not), a time map $\tT$ for $T$ satisfying $\tT(x)=\tau_0(x)$ with
    arbitrary choices of $\tau_0(x)$ for all $x\in L(T)$ can be constructed
    in linear time.
  \label{lem:arbitrary-tT}
\end{lemma}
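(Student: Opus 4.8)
The plan is to give an explicit construction by a single bottom-up pass over $T$ and then check the two things the statement demands: the defining inequality of a time map and the linear running time. First I would fix $\tT(x)\coloneqq\tau_0(x)$ for every leaf $x\in L(T)$ as prescribed, and then visit the inner vertices in post-order (each vertex only after all of its children have been processed), assigning
\[
  \tT(v)\coloneqq \max_{c\in\child_T(v)}\tT(c)+1 .
\]
Because the root $\rho$ and, in the planted case, the planted root $0_T$ are ordinary inner vertices with at least one child, this rule assigns a value to every vertex of $T$, so no special handling of $0_T$ is required.

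For correctness I would first record the immediate local consequence of the rule: for every inner vertex $v$ and every child $c\in\child_T(v)$ we have $\tT(c)<\tT(v)$, since $\tT(v)$ exceeds the maximum over its children by $1$. The next step is to lift this parent--child inequality to arbitrary comparable pairs. Given $x\prec_T y$, I would write the path from $y$ down to $x$ as $y=w_0\succ w_1\succ\dots\succ w_k=x$ with $k\ge 1$, where each $w_{i+1}$ is a child of $w_i$ (so each $w_i$ with $i<k$ is an inner vertex), and then chain the local inequalities along this path to obtain $\tT(x)=\tT(w_k)<\dots<\tT(w_0)=\tT(y)$. This yields exactly $x\prec_T y\Rightarrow\tT(x)<\tT(y)$, so $\tT$ is a time map that agrees with $\tau_0$ on the leaves.

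The running time is then immediate: a post-order traversal touches each vertex once, and the max-computations inspect each edge $vc$ once, for total work $O(|V(T)|+|E(T)|)=O(|V(T)|)$ since $|E(T)|=|V(T)|-1$ for a tree. I do not expect a genuine obstacle here. The only points that need a moment's care are that the prescribed leaf values $\tau_0(x)$ are arbitrary reals, which is harmless because a leaf is $\preceq_T$-minimal and therefore appears only on the smaller side of the constraints (every ancestor is forced strictly above the maximum of the subtree it roots), and that strictness must hold globally and not merely locally — which is precisely what the additive $+1$ together with the path-chaining argument guarantees, even when some prescribed leaf value is very large.
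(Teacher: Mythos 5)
Your construction is correct: fixing the leaves at $\tau_0$, assigning $\tT(v)\coloneqq\max_{c\in\child_T(v)}\tT(c)+1$ in post-order, and chaining the parent--child inequalities along root-to-leaf paths gives exactly the defining property $x\prec_T y\Rightarrow\tT(x)<\tT(y)$ in $O(|V(T)|)$ time, with the planted root correctly handled as just another inner vertex. The paper itself gives no proof but defers to \cite[Lemma~1]{Schaller:21f}, whose argument is essentially this same bottom-up assignment, so your proposal matches the intended proof.
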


It is usually difficult and often impossible to obtain reliable, accurate
``time stamps'' $\tT(x)$ for evolutionary relevant events
\cite{Rutschmann:06,Sauquet:13}. Although the time map $\tT$ turns out to
be a convenient formal tool, we will never need to make use of the absolute
values of $\tT(x)$. Instead, we will only need \emph{relative} timing
information, i.e., it will be sufficient to know whether an event
pre-dates, post-dates, or is concurrent with another one. This information
is often much easier to extract \cite{Ford:09,Szollosi:22}.  For the
  sake of concreteness, one may imagine that $\tau_0(x) = 0$ for all $x \in
  L(T)$, although this is not a requirement.
 
\begin{definition}
  \label{def:rs}
  A \emph{relaxed scenario} $\scen=(T,S,\sigma,\mu,\tT,\tS)$ consists of a
  dated gene tree $(T,\tT)$, a dated species tree $(S,\tS)$, a leaf
  coloring $\sigma\colon L(T)\to M$ with $M\subseteq L(S)$, and a
  \emph{reconciliation map} $\mu \colon V(T)\to V(S)\cup E(S)$ such that
  \begin{itemize}
    \item[\AX{S0}] $\mu(x)=0_S$ if and only if $x=0_T$.
    \item[\AX{S1}] $\mu(x)\in L(S)$ if and only if $x\in L(T)$ and, in
      particular, $\mu(x)=\sigma(x)$ in this case.
    \item[\AX{S2}]     If $\mu(x)\in  V(S)$, then
    $\tS(\mu(x))=\tT(x)$.
    \item[\AX{S3}] 
      If $\mu(x)=uv\in E(S)$, then  $\tS(v)<\tT(x)<\tS(u)$.
  \end{itemize}
\end{definition}
The axioms \AX{S2} and \AX{S3} specify \emph{time consistency}. Note
  that we impose no (direct) restrictions on ancestrality relationships,
  hence the \emph{relaxed} nature of our scenarios.  In particular, for
  vertices $x, y \in V(T)$, it is possible that $x$ is a descendant of $y$,
  but that $\mu(x)$ is \emph{not} a descendant of $\mu(y)$.  This may occur
  if $\mu(x)$ and $\mu(y)$ are incomparable because of the presence of
  horizontal gene transfers on the path from $y$ to $x$.  This contrasts
  with traditional reconciliation models that only support gene
  duplications and forbid this type of map.  By minimizing the amount of
  constraints imposed on the model, we aim to characterize the broadest
  class of divergence time patterns that could be explained in some way.
  Conversely, this means that divergence times that cannot be explained by
  our scenarios can be deemed erroneous with more confidence, as they
  cannot even meet a relaxed set of requirements.  In the later sections,
  however, we focus on more restrictive scenarios.  As we shall see,
  relaxed scenarios allow ``unobservable'' transfers, for which the
  ancestral gene in the origin species has no direct extant descendants (in
  the sense that they were not transmitted by any transfer).  We will study
  \emph{restricted scenarios} in which such unobservable transfers are
  forbidden, and then later on we look at scenarios in which transfers are
  entirely forbidden. The scenarios considered in \cite{Tofigh:11} as
  well as the H-trees \cite{Gorecki:10} admit the assignment of unique
  event type (duplication, speciation, etc.) to a vertex $x$ in the gene
  tree $T$ depending on its reconciliation and the reconciliation of its
  children. This is not the case in relaxed scenarios. Here a vertex in $T$
  may simultaneously represent multiple event types. For example a
  ``speciation'' vertex with $\mu(x)\in V(S)$ may still have multiple
  direct descendants in the same lineage, hence sharing properties of of a
  duplication.  We first consider a few simple properties of
reconciliation maps. In fact, these are well-known properties for more
restrictive definitions of reconciliation.

\begin{lemma}
  Let $\scen=(T,S,\sigma,\mu,\tT,\tS)$ be a relaxed scenario. If $v,w\in
  V(T)$ such that $v\preceq_T w$ and $\mu(v)=\mu(w)\in V^0(S)$, then $v=w$.
  \label{lem:v=w}
\end{lemma}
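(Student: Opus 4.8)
The plan is to derive a contradiction from the strict monotonicity of the time map $\tT$, combined with the time-consistency axiom \AX{S2}. Since $v\preceq_T w$ is given, it suffices to rule out the strict case $v\prec_T w$. So I would suppose $v\neq w$, which together with $v\preceq_T w$ forces $v\prec_T w$.

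From $v\prec_T w$ and Definition~\ref{def:timemap} I immediately obtain the strict inequality $\tT(v)<\tT(w)$. On the other hand, the common image $\mu(v)=\mu(w)$ lies in $V^0(S)\subseteq V(S)$, so axiom \AX{S2} applies to both $v$ and $w$ and yields $\tT(v)=\tS(\mu(v))$ as well as $\tT(w)=\tS(\mu(w))$. As $\mu(v)=\mu(w)$, the two right-hand sides are identical, whence $\tT(v)=\tT(w)$. This contradicts $\tT(v)<\tT(w)$, so the assumption $v\neq w$ is untenable and $v=w$.

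I do not expect any genuine obstacle here: the whole argument is a one-line consequence of combining the strictness of $\tT$ on comparable vertices with the fact that \AX{S2} pins the time stamp of a vertex to that of its image whenever that image is a vertex of $S$. The only point worth flagging is why the hypothesis $\mu(v),\mu(w)\in V^0(S)$ cannot be weakened to images lying on \emph{edges} of $S$: there only axiom \AX{S3} would be available, giving for $\mu(x)=ab\in E(S)$ the open-interval bounds $\tS(b)<\tT(x)<\tS(a)$, which constrain $\tT(v)$ and $\tT(w)$ to the same open interval but do not force them to coincide. The restriction to $V^0(S)$ is thus exactly what makes the equality of time stamps, and hence the conclusion, go through.
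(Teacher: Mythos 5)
Your proof is correct and takes essentially the same route as the paper: both arguments use \AX{S2} to pin $\tT(v)=\tS(\mu(v))=\tS(\mu(w))=\tT(w)$ and then contradict the strict inequality $\tT(v)<\tT(w)$ that Definition~\ref{def:timemap} forces from $v\prec_T w$. Your closing observation that the hypothesis cannot be weakened to images on edges of $S$ (where \AX{S3} only confines both time stamps to the same open interval) is accurate, though it is extra commentary beyond the paper's proof.
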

\begin{proof}
  Set $U\coloneqq \mu(v)=\mu(w)\in V^0(S)$. Then $\tT(w)=\tT(v)=\tS(U)$.
  However, if $v\preceq_T w$ and $v\ne w$, i.e., $v\prec_T w$, then
  $\tT(v)<\tT(w)$ by Def.~\ref{def:timemap}; a contradiction.
\end{proof}

\begin{lemma}
  \label{lem:mu-timing}
  If $\scen=(T,S,\sigma,\mu,\tT,\tS)$ is a relaxed scenario then
  $x\preceq_T y$ implies $\mu(x)\not\succ_S \mu(y)$ for all $x,y\in V(T)$.
\end{lemma}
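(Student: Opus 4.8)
The plan is to argue by contradiction. Suppose $x\preceq_T y$ but $\mu(x)\succ_S\mu(y)$, i.e.\ $\mu(y)\prec_S\mu(x)$. Since $\prec_S$ is strict we have $\mu(x)\ne\mu(y)$, and therefore $x\ne y$ (equal vertices of $T$ have equal images); thus $x\prec_T y$, and Definition~\ref{def:timemap} gives $\tT(x)<\tT(y)$. I will derive the opposite inequality $\tT(y)<\tT(x)$ purely from $\mu(y)\prec_S\mu(x)$ together with the time-consistency axioms \AX{S2} and \AX{S3}, which contradicts $\tT(x)<\tT(y)$ and proves the claim. Note that the case $x=y$ never arises under the assumption $\mu(y)\prec_S\mu(x)$, so no separate trivial case is needed.

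The core step is the implication ``$\mu(y)\prec_S\mu(x)$ implies $\tT(y)<\tT(x)$'', which I would prove by distinguishing the four cases according to whether each of $\mu(x),\mu(y)$ lies in $V(S)$ or in $E(S)$. In each case I rewrite $\mu(y)\prec_S\mu(x)$ via the extension of $\preceq_S$ to $V(S)\cup E(S)$: if $\mu(y)=ab\in E(S)$ with $a$ the parent endpoint, then $\mu(y)\preceq_S\mu(x)$ is governed by $a$, and if $\mu(x)=uv\in E(S)$ with $v$ the child endpoint, it is governed by $v$. Writing $\overline t$ for the time of the older (ancestral) endpoint of $\mu(y)$ (namely $\tT(y)$ if $\mu(y)\in V(S)$, and $\tS(a)$ if $\mu(y)=ab$) and $\underline t$ for the time of the younger (descendant) endpoint of $\mu(x)$ (namely $\tT(x)$ if $\mu(x)\in V(S)$, and $\tS(v)$ if $\mu(x)=uv$), axioms \AX{S2}/\AX{S3} yield $\tT(y)\le\overline t$ and $\underline t\le\tT(x)$, with at least one of these strict whenever an edge is involved. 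The relation $\mu(y)\prec_S\mu(x)$, unwound as above, together with the monotonicity of $\tS$ along $\preceq_S$ (Definition~\ref{def:timemap} applied to $S$), gives $\overline t\le\underline t$. Chaining, $\tT(y)\le\overline t\le\underline t\le\tT(x)$ with at least one strict inequality, hence $\tT(y)<\tT(x)$.

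The only case requiring more than a direct substitution—and the one I expect to be the main (minor) obstacle—is when both $\mu(x)=uv$ and $\mu(y)=ab$ are edges. Here $\mu(y)\prec_S\mu(x)$ unwinds to the comparison of child endpoints $b\prec_S v$, so I cannot read off $\overline t\le\underline t$ directly from the edge definition. To resolve this I use that $a=\parent_S(b)$: since $v$ is then a strict ancestor of $b$ and $a$ is its immediate ancestor, necessarily $a\preceq_S v$, whence $\tS(a)\le\tS(v)$ and so $\tT(y)<\tS(a)\le\tS(v)<\tT(x)$ by \AX{S3}. The remaining three cases (vertex/vertex, edge/vertex, vertex/edge) are immediate from \AX{S2}/\AX{S3} and the monotonicity of $\tS$. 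Assembling the four cases delivers $\tT(y)<\tT(x)$, the required contradiction.
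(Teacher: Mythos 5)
Your proof is correct and follows essentially the same route as the paper's: both arguments reduce $\mu(x)$ and $\mu(y)$ to their lower and upper delimiting vertices respectively, apply the time-consistency axioms \AX{S2}/\AX{S3} together with monotonicity of the time maps, and derive a contradiction between the two resulting strict inequalities. The only difference is cosmetic --- you chain the inequalities to contradict $\tT(x)<\tT(y)$ where the paper contradicts $\tS(u)<\tS(v)$, and you spell out (via the parent argument in the edge/edge case) the step the paper dismisses with ``one easily verifies.''
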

\begin{proof}
  If $x=y$, then there is nothing to show. Otherwise, $x\prec_T y$ and
  Def.\ \ref{def:timemap} implies that $\tT(x)<\tT(y)$.
  If $\mu(x)\in V(S)$ set $u\coloneqq \mu(x)$,
  otherwise let $u$ be the lower delimiting vertex of the edge
  $\mu(x)\in E(S)$. Similarly, set $v\coloneqq \mu(y)$ if $\mu(y)\in
  V(S)$,
  otherwise choose $v$ as the upper delimiting vertex of the edge
  $\mu(y)\in E(S)$.
  By time consistency, we have $\tS(u)\le\tT(x)$ and $\tT(y)\le\tS(v)$.
  Together with $\tT(x)<\tT(y)$, this yields $\tS(u)<\tS(v)$.
  Now assume, for contradiction, that $\mu(x)\succ_S\mu(y)$.
  One easily verifies that this implies $v\preceq_S u$ and thus $\tS(v)\le
  \tS(u)$; a contradiction.
\end{proof}

\begin{definition}
  The \emph{HGT-labeling} of a relaxed scenario $\scen$ is the map
  $\lambda:E(T)\to\{0,1\}$ such that $\lambda(uv)=1$ if and only if
  $\mu(u)$ and $\mu(v)$ are incomparable in $S$.
\end{definition}
We call an edge $e\in E(T)$ with $\lambda(e)=1$ an \emph{HGT edge}.

\begin{definition}
  For a relaxed scenario $\scen=(T,S,\sigma,\mu,\tT,\tS)$, we define the
  \emph{equal-divergence-time} (\emph{EDT}) graph $(\Gg(\scen),\sigma)$,
  the \emph{later-divergence-time} (\emph{LDT}) graph $(\Gu(\scen),\sigma)$
  and the \emph{prior-divergence-time} (\emph{PDT}) graph
  $(\Ga(\scen),\sigma)$ as follows: all graphs have as vertex set $L(T)$
  and are equipped with vertex coloring $\sigma\colon L(T)\to L(S)$.
  However, they differ in their edge sets defined as
  \begin{equation}
    \begin{split}
    E(\Gg(\scen)) &\coloneqq
          \left\{xy \mid x\ne y \textrm{ and }
          \tT(\lca_T(x,y))=\tS(\lca_S(\sigma(x),\sigma(y))\right\},\\
    E(\Gu(\scen)) &\coloneqq
          \left\{xy \mid x\ne y \textrm{ and }
          \tT(\lca_T(x,y)) < \tS(\lca_S(\sigma(x),\sigma(y))\right\},\\
    E(\Ga(\scen)) &\coloneqq
          \left\{xy \mid x\ne y \textrm{ and }
          \tT(\lca_T(x,y)) > \tS(\lca_S(\sigma(x),\sigma(y))\right\}.\\
    \end{split}
  \end{equation}
  Moreover, we write
  $\graphs(\scen)=(\Gu(\scen),\Gg(\scen),\Ga(\scen),\sigma)$.
\end{definition}
A vertex-colored graph $(G,\sigma)$ is an equal-divergence-time (EDT)
graph, if there is a relaxed scenario $\scen=(T,S,\sigma,\mu,\tT,\tS)$ such
that $G = \Gg(\scen)$. In this case, we say that $\scen$ \emph{explains}
$(G,\sigma)$.  By construction, the edge sets of $\Gg(\scen)$,
$\Gu(\scen)$, and $\Ga(\scen)$ are pairwise disjoint and their union is the
edge set of the complete graph on $L(T)$.  This motivates the definition of
the following tuple of vertex-colored graphs.
\begin{definition}
  A \emph{(colored) graph $3$-partition}, denoted by $\graphs = (\Gu, \Gg,
  \Ga, \sigma)$, is an ordered tuple of three edge-disjoint graphs on the
  same vertex set $L$ and with coloring $\sigma\colon L\to M$ such that
  $E(\Gu)\cupdot E(\Gg) \cupdot E(\Ga) = \binom{L}{2}$ (i.e. every
    unordered pair of $L$ is an edge of exactly one of the three
    graphs).\\ 
    We say that $\graphs$ is \emph{explained} by a scenario
  $\scen$ if $\Gu=\Gu(\scen)$, $\Gg=\Gg(\scen)$, and $\Ga=\Ga(\scen)$.
\end{definition}
An example for a graph $3$-partition and a relaxed scenario that explains
it is shown in Figure~\ref{fig:graphs-example}.
\begin{figure}[t]
  \centering
  \includegraphics[width=0.55\textwidth]{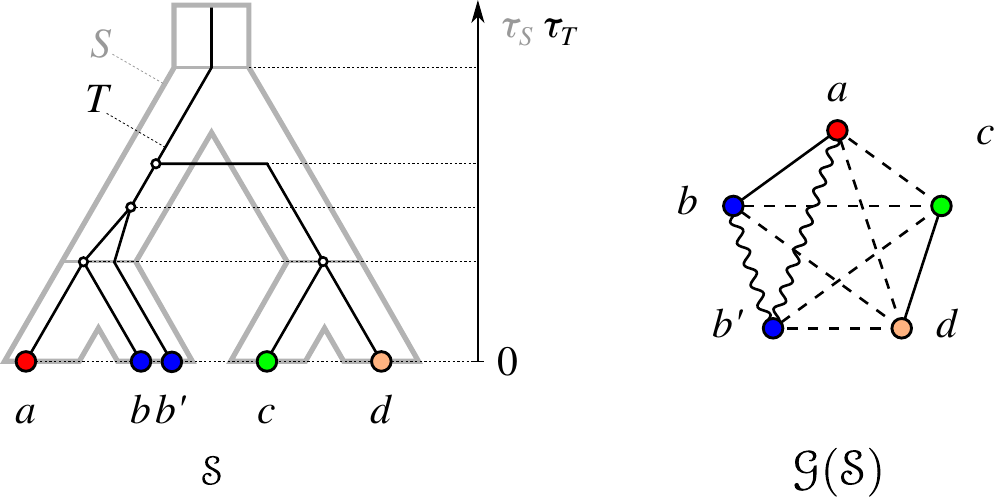}
  \caption{Left: A relaxed scenario $\scen=(T,S,\sigma,\mu,\tT,\tS)$. The
    maps $\mu$ and $\sigma$ are shown implicitly by the embedding of $T$
    into $S$ and the colors of the leaves of $T$, respectively. If a vertex
    $x$ is drawn higher than a vertex $y$, this means that $\tau(y)
    <\tau'(x)$, $\tau,\tau'\in\{\tT,\tS\}$. In the remainder of the paper,
    we will omit drawing the time axis explicitly. Right: The graph
    $3$-partition $\graphs(\scen)$ that is explained by
    $\scen$. Throughout, the edges of the LDT graph $\Gu(\scen)$, EDT graph
    $\Gg(\scen)$, and PDT graph $\Ga(\scen)$ will always be drawn as dashed,
    solid straight, and wavy lines, respectively.}
  \label{fig:graphs-example}
\end{figure}

The \emph{restriction} $\graphs_{\vert L'}$ of a graph $3$-partition $\graphs =
(\Gu, \Gg, \Ga, \sigma)$ to a subset $L'\subseteq L$ of vertices is given
by $(\Gu[L'], \Gg[L'], \Ga[L'], \sigma_{\vert L'})$.

\begin{lemma}
  \label{lem:gt-samecol}
  Let $\scen=(T,S,\sigma,\mu,\tT,\tS)$ be a relaxed scenario.  For all
  distinct vertices $x,y\in L(T)$ with $\sigma(x)=\sigma(y)$, it holds
  $xy\in E(\Ga(\scen))$.
\end{lemma}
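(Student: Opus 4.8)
The plan is to verify directly the inequality that defines membership in $E(\Ga(\scen))$, namely $\tT(\lca_T(x,y)) > \tS(\lca_S(\sigma(x),\sigma(y)))$, by unwinding the relevant definitions and scenario axioms; the statement should follow from a short chain of elementary observations.

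First I would collapse the right-hand side. Since $\sigma(x)=\sigma(y)$, the set $\{\sigma(x),\sigma(y)\}$ is the singleton $\{\sigma(x)\}$, so $\lca_S(\sigma(x),\sigma(y)) = \sigma(x)$ and hence $\tS(\lca_S(\sigma(x),\sigma(y))) = \tS(\sigma(x))$. Next I would handle the left-hand side: because $x$ and $y$ are distinct leaves, $\lca_T(x,y)$ is a strict ancestor of each, in particular $x \prec_T \lca_T(x,y)$, so Definition~\ref{def:timemap} yields $\tT(x) < \tT(\lca_T(x,y))$.

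It then remains to tie $\tT(x)$ to $\tS(\sigma(x))$ using the axioms of a relaxed scenario. Since $x\in L(T)$, axiom \AX{S1} gives $\mu(x)=\sigma(x)\in L(S)\subseteq V(S)$; as $\mu(x)\in V(S)$, axiom \AX{S2} then forces $\tS(\mu(x))=\tT(x)$, that is, $\tS(\sigma(x))=\tT(x)$. Chaining the three observations gives $\tT(\lca_T(x,y)) > \tT(x) = \tS(\sigma(x)) = \tS(\lca_S(\sigma(x),\sigma(y)))$, which is precisely the condition for $xy\in E(\Ga(\scen))$.

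This argument is a straightforward traversal of the definitions, so I anticipate no substantive obstacle. The only points that warrant a moment's care are recognizing that the least common ancestor of a single species is that species itself (so its time is simply $\tS(\sigma(x))$), and noting that \AX{S2} is applicable to the leaf $x$ precisely because \AX{S1} places $\mu(x)$ inside $V(S)$ rather than on an edge.
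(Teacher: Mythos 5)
Your proof is correct and follows the same route as the paper's: collapse $\lca_S(\sigma(x),\sigma(y))$ to the leaf $\sigma(x)$, use the time-map definition to get $\tT(x)<\tT(\lca_T(x,y))$, and use \AX{S1} together with \AX{S2} to identify $\tT(x)$ with $\tS(\sigma(x))$. Your explicit justification of why \AX{S2} applies (via \AX{S1} placing $\mu(x)$ in $V(S)$) is a slightly more careful spelling-out of what the paper compresses into ``by the definition of scenarios,'' but the argument is identical in substance.
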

\begin{proof}
  Since $x\ne y$, $u\coloneqq\lca_T(x,y)$ is not a leaf.  In particular,
  therefore, we have $\tT(x),\tT(y) < \tT(u)$ by the definition of time
  maps.  Moreover, we have $\tT(x)=\tS(\sigma(x))$ by the definition of
  scenarios.  If $\sigma(x)=\sigma(y)$, then
  $\lca_S(\sigma(x),\sigma(y))=\sigma(x)$ is a leaf and thus
  $\tS(\lca_S(\sigma(x),\sigma(y)))=\tS(\sigma(x))=\tT(x)<\tT(u)$.  Hence,
  $xy\in E(\Ga(\scen))$.
\end{proof}

The edge set of $\Gg(\scen)$, $\Gu(\scen)$, and $\Ga(\scen)$ are
disjoint. Lemma~\ref{lem:gt-samecol} therefore implies
\begin{corollary}
  \label{cor:equal-colors}
  Let $\scen=(T,S,\sigma,\mu,\tT,\tS)$ be a relaxed scenario.  If $xy\in
  E(\Gg(\scen))$ or $xy\in E(\Gu(\scen))$, then $\sigma(x)\ne\sigma(y)$,
  i.e., $\Gg(\scen)$ and $\Gu(\scen)$ are always \emph{properly colored}.
\end{corollary}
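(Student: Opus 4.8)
The plan is to derive the statement directly from Lemma~\ref{lem:gt-samecol} via a short argument by contradiction, exploiting that the three relations partition the edges of the complete graph on $L(T)$. First I would observe that whenever $xy\in E(\Gg(\scen))$ or $xy\in E(\Gu(\scen))$, the two endpoints are necessarily distinct, since both graphs are simple and indeed every edge set in the defining equations explicitly requires $x\ne y$. Thus $x$ and $y$ are two distinct leaves of $T$, which is exactly the hypothesis needed to invoke the preceding lemma.

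Next, I would assume for contradiction that $\sigma(x)=\sigma(y)$. Applying Lemma~\ref{lem:gt-samecol} to the distinct leaves $x,y$ with equal color then yields $xy\in E(\Ga(\scen))$. The crux of the argument is the disjointness of the three edge sets: by construction, for each pair of distinct leaves exactly one of $\tT(\lca_T(x,y))<\tS(\lca_S(\sigma(x),\sigma(y)))$, $\tT(\lca_T(x,y))=\tS(\lca_S(\sigma(x),\sigma(y)))$, or $\tT(\lca_T(x,y))>\tS(\lca_S(\sigma(x),\sigma(y)))$ holds, so $E(\Gu(\scen))$, $E(\Gg(\scen))$, and $E(\Ga(\scen))$ are pairwise disjoint. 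Consequently $xy$ cannot lie simultaneously in $E(\Ga(\scen))$ and in $E(\Gg(\scen))$ (resp.\ $E(\Gu(\scen))$), a contradiction. Hence $\sigma(x)\ne\sigma(y)$, which is precisely the assertion that $\Gg(\scen)$ and $\Gu(\scen)$ are properly colored.

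I do not expect any genuine obstacle here, since the statement is an immediate corollary: the only ingredient that must be in place is the trichotomy guaranteeing edge-disjointness, and this is already recorded both in the definition of $\graphs(\scen)$ and in the sentence immediately preceding the corollary. In particular, the argument uses no structural properties of the reconciliation map $\mu$ beyond what is already packaged into Lemma~\ref{lem:gt-samecol}.
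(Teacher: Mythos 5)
Your proof is correct and follows essentially the same route as the paper: the paper also deduces the corollary by combining Lemma~\ref{lem:gt-samecol} with the pairwise disjointness of $E(\Gu(\scen))$, $E(\Gg(\scen))$, and $E(\Ga(\scen))$; you merely spell out the trichotomy argument that the paper leaves implicit.
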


Hence, neither the class of EDT graphs nor the class of LDT graphs  is
closed under complementation because the complements of $\Gg(\scen)$ and
$\Gu(\scen)$ may contain edges between vertices with same color.

\subsection{Scenarios without HGT edges}

In order to connect our discussion to the ample literature on
DL-scenarios mentioned in the introduction, we briefly consider the
case of HGT-free scenarios.

\begin{lemma}
  If $\scen=(T,S,\sigma,\mu,\tT,\tS)$ is a relaxed scenario without
  HGT-edges, then $x\preceq_T y$ implies $\mu(x)\preceq_S \mu(y)$ for all
  $x,y\in V(T)$.
  \label{lem:no-hgt-compare}
\end{lemma}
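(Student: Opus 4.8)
The plan is to reduce the general claim to the single-edge case and then chain along the unique path from $y$ down to $x$, using transitivity of the partial order $\preceq_S$. First I would dispose of the trivial case $x=y$. Otherwise $x\prec_T y$, and since $x\preceq_T y$ there is a unique path $y=w_0,w_1,\dots,w_k=x$ in $T$ in which each $w_{i+1}$ is a child of $w_i$; thus $w_iw_{i+1}\in E(T)$ and $w_{i+1}\prec_T w_i$ for every $i$.

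The heart of the argument is the single-edge statement: for each edge $w_iw_{i+1}$ one has $\mu(w_{i+1})\preceq_S\mu(w_i)$. Since $\scen$ has no HGT-edges, the definition of the HGT-labeling gives $\lambda(w_iw_{i+1})=0$, which by definition means $\mu(w_i)$ and $\mu(w_{i+1})$ are \emph{comparable} in $S$. Hence either $\mu(w_{i+1})\preceq_S\mu(w_i)$, or $\mu(w_i)\prec_S\mu(w_{i+1})$, i.e. $\mu(w_{i+1})\succ_S\mu(w_i)$ (the equality case $\mu(w_i)=\mu(w_{i+1})$ already lies in the first alternative). I would rule out the second alternative using Lemma~\ref{lem:mu-timing}: applied to $w_{i+1}\preceq_T w_i$ it yields $\mu(w_{i+1})\not\succ_S\mu(w_i)$, a contradiction. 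Therefore $\mu(w_{i+1})\preceq_S\mu(w_i)$ holds for every $i$.

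It then remains to assemble these local comparabilities. Since $\preceq_S$ is a partial order, transitivity gives $\mu(x)=\mu(w_k)\preceq_S\mu(w_{k-1})\preceq_S\cdots\preceq_S\mu(w_0)=\mu(y)$, which is the desired conclusion $\mu(x)\preceq_S\mu(y)$.

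I do not anticipate a genuine obstacle here; the only point requiring care is the precise combination of two pieces of information of different types. HGT-freeness supplies \emph{comparability} of $\mu(w_i)$ and $\mu(w_{i+1})$ but no direction, while Lemma~\ref{lem:mu-timing} supplies only the one-sided bound $\mu(w_{i+1})\not\succ_S\mu(w_i)$; neither alone gives the result, and one must also keep in mind that the degenerate possibility $\mu(w_{i+1})=\mu(w_i)$ is permitted in a relaxed scenario and is consistent with the claim.
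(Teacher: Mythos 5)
Your proposal is correct and follows essentially the same route as the paper's proof: reduce to single edges along the unique path from $y$ to $x$, combine the comparability of $\mu$-images (from HGT-freeness) with the one-sided exclusion $\mu(x)\not\succ_S\mu(y)$ from Lemma~\ref{lem:mu-timing} to orient each edge, and chain by transitivity of $\preceq_S$. The subtlety you flag -- that neither comparability nor the timing bound alone suffices -- is precisely the combination the paper uses.
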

\begin{proof}
  Suppose $\scen$ does not contain HGT-edges, i.e., $\mu(x)$ and $\mu(y)$
  are comparable in $S$ for all edges $yx\in E(T)$.  Two vertices $x,y\in
  V(T)$ with $x\preceq_T y$ are either equal, implying $\mu(x)=
  \mu(y)$, or they lie on a directed path $v_1\coloneqq y, v_2, \dots
  v_k\coloneqq x$ with $k\ge 2$.  If $yx\in E(T)$, then $x\prec_{T} y$
  implies $\mu(x)\preceq_S \mu(y)$ due to Lemma~\ref{lem:mu-timing}.  The
  vertices along a path in $T$ therefore satisfy $\mu(x)\preceq_S\dots
  \preceq_S \mu(v_2)\preceq_S \mu(y)$.  By transitivity of $\preceq_S$, we
  conclude that $x\prec_T y$ implies $\mu(x)\preceq_S\mu(y)$.
\end{proof}

\begin{lemma}
  If $\scen$ is a relaxed scenario without HGT-edges, then any pair of
  distinct leaves $x,y\in L(T)$ satisfies
  $\lca_S(\sigma(x),\sigma(y))\preceq_S \mu(\lca_T(x,y))$ and
  $\tS(\lca_S(\sigma(x),\sigma(y))) \leq \tT(\lca_T(x,y))$.  In particular,
  we have $\lca_S(\sigma(x),\sigma(y)) = \mu(\lca_T(x,y))$ if and only if
  $\tS(\lca_S(\sigma(x),\sigma(y))) = \tT(\lca_T(x,y))$, i.e., $xy\in
  E(\Gg)$.
  \label{lem:order}
\end{lemma}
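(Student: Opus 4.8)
The plan is to fix $u\coloneqq\lca_T(x,y)$ and $W\coloneqq\lca_S(\sigma(x),\sigma(y))$ and first establish the structural containment $W\preceq_S\mu(u)$, from which both the timing inequality and the characterization of EDT-edges follow by bookkeeping with the axioms \AX{S2} and \AX{S3}. Since $x,y\in L(T)$ satisfy $x\preceq_T u$ and $y\preceq_T u$, and $\scen$ is HGT-free, Lemma~\ref{lem:no-hgt-compare} gives $\mu(x)\preceq_S\mu(u)$ and $\mu(y)\preceq_S\mu(u)$; by \AX{S1} we have $\mu(x)=\sigma(x)$ and $\mu(y)=\sigma(y)$. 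Thus both $\sigma(x)$ and $\sigma(y)$ lie below $\mu(u)$ in the extended order on $V(S)\cup E(S)$.

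Next I would split on whether $\mu(u)\in V(S)$ or $\mu(u)\in E(S)$. If $\mu(u)\in V(S)$, then $\mu(u)$ is a common ancestor of $\sigma(x)$ and $\sigma(y)$, so minimality of the last common ancestor yields $W\preceq_S\mu(u)$. If $\mu(u)=ab\in E(S)$ with $b\preceq_S a$, then the definition of $\preceq_S$ on $V(S)\cup E(S)$ turns $\sigma(x)\preceq_S ab$ and $\sigma(y)\preceq_S ab$ into $\sigma(x)\preceq_S b$ and $\sigma(y)\preceq_S b$; hence $W\preceq_S b\preceq_S\mu(u)$. Either way $W\preceq_S\mu(u)$, which is the first claim.

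The timing inequality then falls out of the same case split. When $\mu(u)\in V(S)$, either $W=\mu(u)$, giving $\tS(W)=\tS(\mu(u))=\tT(u)$ by \AX{S2}, or $W\prec_S\mu(u)$, giving $\tS(W)<\tT(u)$ by Def.~\ref{def:timemap}; in both subcases $\tS(W)\le\tT(u)$. When $\mu(u)=ab\in E(S)$, axiom \AX{S3} gives $\tS(b)<\tT(u)$, and $W\preceq_S b$ yields $\tS(W)\le\tS(b)<\tT(u)$. Note that this latter case produces the \emph{strict} inequality $\tS(W)<\tT(u)$, which is exactly what is needed for the final equivalence.

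For the characterization, the forward direction is immediate: if $W=\mu(u)$ then $\mu(u)=W\in V(S)$, so \AX{S2} forces $\tS(W)=\tT(u)$. For the converse I would assume $\tS(W)=\tT(u)$ and rule out the two ways equality can fail. The edge case $\mu(u)\in E(S)$ is excluded because it gave the strict inequality $\tS(W)<\tT(u)$; hence $\mu(u)\in V(S)$. Then $W\prec_S\mu(u)$ is excluded because it would again give $\tS(W)<\tS(\mu(u))=\tT(u)$ by the time map; so $W=\mu(u)$. Finally, $\tS(W)=\tT(u)$ is by definition the statement $xy\in E(\Gg)$, completing the equivalence. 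This argument is essentially routine; the only point that needs genuine care is the handling of the extended relation $\preceq_S$ when $\mu(u)$ is an edge rather than a vertex, since that is where the strict inequality separating the EDT and PDT cases is produced.
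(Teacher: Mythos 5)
Your proof is correct and follows essentially the same route as the paper's: both invoke Lemma~\ref{lem:no-hgt-compare} together with \AX{S1} to obtain $\lca_S(\sigma(x),\sigma(y))\preceq_S\mu(\lca_T(x,y))$, then split on whether $\mu(\lca_T(x,y))$ is a vertex or an edge of $S$ and use \AX{S2}/\AX{S3} to show the strict timing inequality whenever the containment is strict, which yields the equivalence by contraposition. Your explicit case split when establishing the order containment and your handling of the edge case via the extended relation $\preceq_S$ merely spell out what the paper leaves implicit.
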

\begin{proof}
  Consider an arbitrary pair of distinct vertices $x,y$ and
  $u\coloneqq\lca_T(x,y)\in V(T)$.  Then $x,y\preceq_T u$ and by Lemma
  \ref{lem:no-hgt-compare} we have $\mu(x)\preceq_S \mu(u)$ and
  $\mu(y)\preceq_S \mu(u)$. Since $x$ and $y$ are leaves, we have
  $\sigma(x)=\mu(x)$ and $\sigma(y)=\mu(y)$. The definition of the ancestor
  order and the last common ancestor now imply
  $\lca_S(\sigma(x),\sigma(y))\preceq_S \mu(u)$.  If
  $\lca_S(\sigma(x),\sigma(y)) = \mu(u)$, then time consistency
  implies $\tS(\lca_S(\sigma(x),\sigma(y))) = \tT(u)$.  Conversely,
  suppose $\lca_S(\sigma(x),\sigma(y))\prec_S \mu(u)$.  If
  $\mu(u)$ is a vertex $v$ of $S$, then we have
  $\tS(\lca_S(\sigma(x),\sigma(y))) < \tS(v)=\tT(u)$.  If
  $\mu(u)$ is an edge $vw$ of $S$ (with $w\prec_S v$), then we have
  $\tS(\lca_S(\sigma(x),\sigma(y))) \le\tS(w) <\tT(u) < \tS(v)$. In
  either case we therefore obtain $\tS(\lca_S(\sigma(x),\sigma(y)))
  <\tT(u)$.
\end{proof}

As an immediate consequence of Lemma~\ref{lem:order}, we recover
\cite[Cor.~6]{Schaller:21f}:
\begin{corollary}
  \label{cor:Gu-edgeless}
  If $\scen$ is a relaxed scenario without HGT-edges, then $\Gu(\scen)$ has
  no edges.
\end{corollary}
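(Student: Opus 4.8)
The plan is to derive the statement as a direct consequence of Lemma~\ref{lem:order}, which has already established the essential inequality. Recall that $xy$ is an edge of $\Gu(\scen)$ precisely when $\tT(\lca_T(x,y)) < \tS(\lca_S(\sigma(x),\sigma(y)))$, i.e., when the gene divergence strictly precedes the corresponding species divergence in time. The strategy is simply to observe that this strict inequality can never be met in an HGT-free scenario.

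First I would fix an arbitrary pair of distinct leaves $x,y \in L(T)$. Since $\scen$ is assumed to be free of HGT-edges, Lemma~\ref{lem:order} applies and yields $\tS(\lca_S(\sigma(x),\sigma(y))) \leq \tT(\lca_T(x,y))$. This is exactly the negation of the defining condition for an edge of $\Gu(\scen)$: it asserts $\tT(\lca_T(x,y)) \geq \tS(\lca_S(\sigma(x),\sigma(y)))$, whence $xy \notin E(\Gu(\scen))$. As the pair $x,y$ was arbitrary, no edge of $\Gu(\scen)$ can exist, and the graph is edgeless.

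The substantive work is already dispatched inside Lemma~\ref{lem:order}, where the absence of HGT-edges is used (via Lemma~\ref{lem:no-hgt-compare}) to force $\mu$ to respect ancestrality, and thereby to push $\lca_S(\sigma(x),\sigma(y))$ weakly below the image $\mu(\lca_T(x,y))$ in $S$; time consistency then converts this $\preceq_S$-comparison into the weak inequality on the time stamps. Consequently, there is no genuine obstacle at the level of the corollary itself: it requires no case analysis and no additional construction beyond juxtaposing the inequality of Lemma~\ref{lem:order} with the definition of the LDT edge set.
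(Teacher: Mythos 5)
Your proof is correct and follows exactly the paper's route: the paper states Corollary~\ref{cor:Gu-edgeless} as an immediate consequence of Lemma~\ref{lem:order}, whose weak inequality $\tS(\lca_S(\sigma(x),\sigma(y))) \leq \tT(\lca_T(x,y))$ directly negates the defining condition for an edge of $\Gu(\scen)$, just as you argue. Nothing is missing; the corollary indeed requires no work beyond this juxtaposition.
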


\subsection{Informative Triples}

If a graph 3-partition $\graphs=(\Gu, \Gg, \Ga, \sigma)$ is explained
  by some relaxed scenario ${\scen=(T,S,\sigma,\mu,\tT,\tS)}$, several
  structural constraints on $T$ and $S$ can be deduced directly from
  $\graphs$.  In particular, we show in this section that many subgraphs of
  $\graphs$ on three vertices enforce rooted triples that are either
  required or forbidden in $T$ or $S$.

\begin{lemma}
  \label{lem:species-triple-nohgt}
  Let $\scen=(T,S,\sigma,\mu,\tT,\tS)$ be a relaxed scenario without
  HGT-edges, suppose $\sigma(x)$, $\sigma(y)$, and $\sigma(z)$ are pairwise
  distinct, the triple $xy\vert z$ is displayed by $T$, and
  $\mu(\lca_T(x,z))=\lca_S(\sigma(x),\sigma(z))$. Then $S$ displays
  $\sigma(x)\sigma(y)\vert \sigma(z)$.
\end{lemma}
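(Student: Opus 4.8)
The plan is to reduce the statement to a single strict comparability between two last common ancestors in $S$. Write $u_{xy}\coloneqq\lca_T(x,y)$ and $u\coloneqq\lca_T(x,z)=\lca_T(y,z)$, where the last equality holds because $T$ displays $xy\vert z$; similarly set $v_{xy}\coloneqq\lca_S(\sigma(x),\sigma(y))$ and $v\coloneqq\lca_S(\sigma(x),\sigma(z))$. In this notation the hypothesis is $\mu(u)=v$, the fact that $T$ displays $xy\vert z$ gives $u_{xy}\prec_T u$, and the conclusion ``$S$ displays $\sigma(x)\sigma(y)\vert\sigma(z)$'' amounts to establishing $v_{xy}\prec_S v$.

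First I would obtain the comparability $v_{xy}\preceq_S v$. Applying Lemma~\ref{lem:order} to the pair $x,y$ yields $v_{xy}\preceq_S\mu(u_{xy})$; since $u_{xy}\prec_T u$ and $\scen$ is HGT-free, Lemma~\ref{lem:no-hgt-compare} gives $\mu(u_{xy})\preceq_S\mu(u)=v$. Chaining these with transitivity of $\preceq_S$ produces $v_{xy}\preceq_S v$. Note that this step crucially uses the absence of HGT edges, since in general $\mu$ need not be order-preserving.

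Next I would upgrade this to a strict relation using the time maps. Because $\sigma(x)\neq\sigma(z)$, the vertex $v=\mu(u)$ is an inner vertex of $S$, so time consistency \AX{S2} gives $\tS(v)=\tT(u)$. The timing part of Lemma~\ref{lem:order} applied to $x,y$ gives $\tS(v_{xy})\leq\tT(u_{xy})$, and $u_{xy}\prec_T u$ forces $\tT(u_{xy})<\tT(u)$ by Definition~\ref{def:timemap}. Hence $\tS(v_{xy})<\tS(v)$, so in particular $v_{xy}\neq v$. Combining this with $v_{xy}\preceq_S v$ yields $v_{xy}\prec_S v$, i.e.\ $\lca_S(\sigma(x),\sigma(y))\prec_S\lca_S(\sigma(x),\sigma(z))$. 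For three distinct leaves of $S$ the pairwise last common ancestor lying strictly below the others determines the displayed triple: since $v_{xy}$ sits strictly below $v$, the species $\sigma(x),\sigma(y)$ form the cherry and $\sigma(z)$ is the outgroup, so $\lca_S(\sigma(x),\sigma(z))=\lca_S(\sigma(y),\sigma(z))$ and $S$ displays $\sigma(x)\sigma(y)\vert\sigma(z)$.

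I expect the only delicate point to be keeping the two ingredients distinct and correctly sourced. Neither alone suffices: the $\preceq_S$-chain coming from the monotonicity of $\mu$ in HGT-free scenarios establishes comparability but cannot by itself rule out $v_{xy}=v$, whereas the strict inequality $\tS(v_{xy})<\tS(v)$ coming from \AX{S2} separates the two vertices in time but never yields $\preceq_S$-comparability on its own. The proof hinges precisely on marrying these two facts; everything else is bookkeeping with the definition of ``displays a triple''.
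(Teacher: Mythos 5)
Your proof is correct and follows essentially the same route as the paper's: both establish $\lca_S(\sigma(x),\sigma(y))\preceq_S \mu(\lca_T(x,y))\preceq_S \mu(\lca_T(x,z))=\lca_S(\sigma(x),\sigma(z))$ via Lemma~\ref{lem:order} and Lemma~\ref{lem:no-hgt-compare}, and then break the tie to get strictness. The only point of divergence is that tie-breaking step: the paper invokes Lemma~\ref{lem:v=w} to conclude $\mu(\lca_T(x,y))\prec_S\mu(\lca_T(x,z))$, whereas you derive $\tS(\lca_S(\sigma(x),\sigma(y)))<\tS(\lca_S(\sigma(x),\sigma(z)))$ directly from the timing half of Lemma~\ref{lem:order}, axiom \AX{S2}, and Definition~\ref{def:timemap} --- in effect inlining the very time-map argument by which Lemma~\ref{lem:v=w} is proved.
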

\begin{proof}
  By assumption $\lca_T(x,y)\prec_T \lca_T(x,z)=\lca_T(y,z)$.
  Lemma~\ref{lem:no-hgt-compare} implies
  $\mu(\lca_T(x,y))\preceq_S\mu(\lca_T(x,z))$ and Lemma~\ref{lem:v=w}
  implies $\mu(\lca_T(x,y))\ne\mu(\lca_T(x,z))$ and thus
  $\mu(\lca_T(x,y))\prec_T\mu(\lca_T(x,z))$. Moreover, by
  Lemma~\ref{lem:order} we have $\lca_S(\sigma(x),\sigma(y))\preceq_S
  \mu(\lca_T(x,y))$. We therefore conclude
  $\lca_S(\sigma(x),\sigma(y))\preceq_S \mu(\lca_T(x,y))\prec_S
  \mu(\lca_T(x,z)) = \lca_S(\sigma(x),\sigma(z))$.  Therefore, $S$ displays
  $\sigma(x)\sigma(y)\vert \sigma(z)$.
\end{proof}
Lemma~\ref{lem:species-triple-nohgt} defines the ``informative species
triples'' \cite{HernandezRosales:12a,Hellmuth:17,lafond2020reconstruction} that play a key role for
the characterization of feasible reconciliation maps in a slightly
different setting.

We recall two results that link triples in $T$ with the LDT graph:
\begin{lemma}
  \label{lem:Gu-T-triples}
  \textnormal{\cite[Lemma~7]{Schaller:21f}} Let
  $\scen=(T,S,\sigma,\mu,\tT,\tS)$ be a relaxed scenario with pairwise
  distinct leaves $x,y,z\in L(T)$. If $xy \in E(\Gu(\scen))$ and $xz, yz
  \notin E(\Gu(\scen))$, then $T$ displays $xy\vert z$.
\end{lemma}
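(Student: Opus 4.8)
The plan is to reduce the statement to inequalities between the time stamps and then to read the displayed triple off the relative order of the three gene-divergence times. Write $a\coloneqq\lca_T(x,y)$, $b\coloneqq\lca_T(x,z)$, $c\coloneqq\lca_T(y,z)$ for the pairwise last common ancestors in the gene tree and $a'\coloneqq\lca_S(\sigma(x),\sigma(y))$, $b'\coloneqq\lca_S(\sigma(x),\sigma(z))$, $c'\coloneqq\lca_S(\sigma(y),\sigma(z))$ for their species-tree counterparts. Unravelling the definition of the LDT graph, the three hypotheses $xy\in E(\Gu(\scen))$, $xz\notin E(\Gu(\scen))$, and $yz\notin E(\Gu(\scen))$ translate verbatim into $\tT(a)<\tS(a')$, $\tT(b)\ge\tS(b')$, and $\tT(c)\ge\tS(c')$, respectively. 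Note that only the time maps enter here, so the argument should not require HGT-freeness, consistent with the lemma being stated for arbitrary relaxed scenarios.

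The elementary ingredient I would isolate is the ``three-point condition'' for a time map: for any three leaves of a dated tree, the maximum of the three pairwise $\lca$-times is attained at least twice, equivalently each of the three times is bounded above by the maximum of the other two. This is immediate from the fact that $\lca$ of all three leaves coincides with (at least) two of the three pairwise last common ancestors and dominates the third in $\preceq$, together with the monotonicity of the time map. Applying this in $S$ yields $\tS(a')\le\max\{\tS(b'),\tS(c')\}$. Chaining this with the hypotheses gives $\tT(a)<\tS(a')\le\max\{\tS(b'),\tS(c')\}\le\max\{\tT(b),\tT(c)\}$, i.e.\ $\tT(a)<\max\{\tT(b),\tT(c)\}$.

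To finish, I would invoke the purely tree-theoretic structure of the triple in $T$: among the vertices $a,b,c$, two coincide with $m\coloneqq\lca_T(x,y,z)$ while the third lies $\preceq_T m$. If we had $a=m$, then $\tT(a)=\tT(m)\ge\max\{\tT(b),\tT(c)\}$, contradicting the strict inequality just derived; hence $a\ne m$, which forces $b=c=m$ and $a\prec_T m$. This is precisely the statement that $T$ displays $xy\vert z$. I expect the only place needing care to be the bookkeeping of the inequality directions and the verification that the species-tree three-point inequality remains valid when the colors $\sigma(x),\sigma(y),\sigma(z)$ are not pairwise distinct, so that some of $a',b',c'$ degenerate to leaves; since the three-point condition is a statement about arbitrary (possibly repeated) leaves of $S$, this case causes no difficulty, and no deep step is involved beyond packaging this observation.
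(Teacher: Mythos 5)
Your proof is correct. Note that this paper does not prove the lemma itself but imports it from \cite{Schaller:21f} (Lemma~7 there); the proof given there, like the closely analogous in-paper proofs of Lemmas~\ref{lem:Ga-T-triples} and~\ref{lem:Ge-T-triples}, runs by contradiction: assume $T$ displays $xz\vert y$, $yz\vert x$, or the fan, observe that in each case $\lca_T(x,y)\succeq_T\lca_T(x,z)$ and $\lca_T(x,y)\succeq_T\lca_T(y,z)$, turn the edge/non-edge hypotheses into time inequalities, and use comparability of $\lca_S(X,Y)$, $\lca_S(X,Z)$, $\lca_S(Y,Z)$ (each pair shares a descendant leaf) to conclude that $S$ would have to display both $XZ\vert Y$ and $YZ\vert X$, which is impossible. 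Your route is genuinely different in structure: you isolate the three-point condition (among the three pairwise $\lca$-times of any three leaves, the maximum is attained at least twice), apply it once in $S$ to get $\tS(a')\le\max\{\tS(b'),\tS(c')\}$, chain the hypotheses into the single inequality $\tT(a)<\max\{\tT(b),\tT(c)\}$, and apply the condition again in $T$ to force $b=c=\lca_T(x,y,z)\succ_T a$. All steps check out, including the degenerate-color bookkeeping: the species-side three-point condition does hold verbatim for repeated leaves as you claim (and in fact $\sigma(x)\ne\sigma(y)$ is automatic from $xy\in E(\Gu(\scen))$ by Cor.~\ref{cor:equal-colors}, while $\sigma(x)=\sigma(z)$ or $\sigma(y)=\sigma(z)$ are harmless). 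What your version buys: a single symmetric inequality replaces the case split, no contradiction in $S$ and no notion of displayed species triples is needed, and the same skeleton re-proves Lemma~\ref{lem:Ga-T-triples} essentially unchanged, with the strictness simply moved to the other links of the chain ($\tT(a)\le\tS(a')\le\max\{\tS(b'),\tS(c')\}<\max\{\tT(b),\tT(c)\}$). What the contradiction-style argument buys: it exhibits the species triples $XZ\vert Y$ and $YZ\vert X$ explicitly, which is precisely the currency that Def.~\ref{def:inf-forb-triples} and Prop.~\ref{prop:triples} later trade in, so it meshes more directly with the paper's informative/forbidden-triple machinery.
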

\begin{lemma}
  \label{lem:Gu-S-triples}
  \textnormal{\cite[Lemma~6]{Schaller:21f}} Let
  $\scen=(T,S,\sigma,\mu,\tT,\tS)$ be a relaxed scenario with leaves
  $x,y,z\in L(T)$ and pairwise distinct colors $X\coloneqq\sigma(x)$,
  $Y\coloneqq\sigma(y)$, and $Z\coloneqq\sigma(z)$.  If $xz, yz \in
  E(\Gu(\scen))$ and $xy \notin E(\Gu(\scen))$, then $S$ displays $XY\vert Z$.
\end{lemma}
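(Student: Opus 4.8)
The plan is to reduce the statement to elementary inequalities among the divergence times $\tT(\lca_T(\cdot,\cdot))$ and $\tS(\lca_S(\cdot,\cdot))$, without appealing to any ancestrality of $\mu$. This is forced upon us because, unlike in the HGT-free setting of Lemma~\ref{lem:order}, in a general relaxed scenario $\mu(\lca_T(x,y))$ need not be comparable to $\lca_S(X,Y)$, so only the raw time values are available. Spelling out the hypotheses, $xz,yz\in E(\Gu(\scen))$ and $xy\notin E(\Gu(\scen))$ read
\[
  \tT(\lca_T(x,z)) < \tS(\lca_S(X,Z)),\quad
  \tT(\lca_T(y,z)) < \tS(\lca_S(Y,Z)),\quad
  \tT(\lca_T(x,y)) \ge \tS(\lca_S(X,Y)).
\]

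First I would establish a purely tree-theoretic inequality valid in any dated tree: for pairwise distinct leaves $x,y,z$,
\[
  \tT(\lca_T(x,y)) \le \max\{\tT(\lca_T(x,z)),\,\tT(\lca_T(y,z))\}.
\]
To prove it, set $M\coloneqq\lca_T(x,y,z)$, so that $\lca_T(x,y)\preceq_T M$. A short case check on the shape of $T_{\vert xyz}$ (one of the triples or the fan) shows that at least one of $\lca_T(x,z)$, $\lca_T(y,z)$ equals $M$; indeed the ``odd leaf out'' of a displayed triple is the only one whose pairwise lca's with the remaining two can drop below $M$. Since $\tT$ is order preserving, $\tT(\lca_T(x,y))\le\tT(M)=\max\{\tT(\lca_T(x,z)),\tT(\lca_T(y,z))\}$. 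This is just the ultrametric three-point property of lca-times.

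The bridge to the conclusion is the analogous equivalence on the species side. Since $X,Y,Z$ are pairwise distinct, $S_{\vert XYZ}$ is a well-defined triple or fan, and comparing the four possibilities one checks directly that $S$ displays $XY\vert Z$ if and only if $\tS(\lca_S(X,Y)) < \max\{\tS(\lca_S(X,Z)),\tS(\lca_S(Y,Z))\}$. Now assume for contradiction that $S$ does not display $XY\vert Z$; then $\tS(\lca_S(X,Y))\ge\tS(\lca_S(X,Z))$ and $\tS(\lca_S(X,Y))\ge\tS(\lca_S(Y,Z))$. Chaining with the hypotheses gives
\[
  \tT(\lca_T(x,y)) \ge \tS(\lca_S(X,Y)) \ge \tS(\lca_S(X,Z)) > \tT(\lca_T(x,z)),
\]
and symmetrically $\tT(\lca_T(x,y)) > \tT(\lca_T(y,z))$, so $\tT(\lca_T(x,y))$ is the strict maximum of the three gene lca-times, contradicting the tree-theoretic inequality. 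Hence $S$ displays $XY\vert Z$.

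I expect the only delicate points to be the two order-of-lca characterizations of a triple's topology; both are routine ``triple versus fan'' case analyses. The genuinely new content is the single observation that each of the three excluded species topologies ($XZ\vert Y$, $YZ\vert X$, and the fan) forces $\lca_T(x,y)$ to be the strict time-maximum among $\lca_T(x,y),\lca_T(x,z),\lca_T(y,z)$, which no tree admits.
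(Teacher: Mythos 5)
Your proof is correct, and it is worth noting that the paper itself does not prove this lemma: it imports it verbatim from \cite[Lemma~6]{Schaller:21f}. Your argument is a valid self-contained alternative that runs on the same fuel as the proofs the paper does give for the companion statements (Lemmas~\ref{lem:Ga-T-triples}, \ref{lem:Ge-S-triples}, and~\ref{lem:Ga-T-triples-mixed}), namely the three time inequalities extracted from the (non-)edges of $\Gu(\scen)$, but it packages the tree-theoretic input differently. The house style is local and direct: since $\lca_T(x,z)$ and $\lca_T(y,z)$ are both ancestors of $z$, they are comparable, so w.l.o.g.\ $\lca_T(x,z)\preceq_T\lca_T(y,z)$, whence $\lca_T(y,z)=\lca_T(\{x,y,z\})\succeq_T\lca_T(x,y)$ and $\tS(\lca_S(X,Y))\le\tT(\lca_T(x,y))\le\tT(\lca_T(y,z))<\tS(\lca_S(Y,Z))$; comparability of $\lca_S(X,Y)$ and $\lca_S(Y,Z)$ as ancestors of $Y$ together with the time-map definition then gives $\lca_S(X,Y)\prec_S\lca_S(Y,Z)$, i.e., $XY\vert Z$ directly, with no contradiction and no case analysis over the topologies of $S_{\vert XYZ}$. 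You instead isolate two reusable facts --- the ultrametric-type bound $\tT(\lca_T(x,y))\le\max\{\tT(\lca_T(x,z)),\tT(\lca_T(y,z))\}$ and the strict-max characterization of displaying $XY\vert Z$ --- and finish by contraposition; this buys symmetry (no w.l.o.g.\ choice) and makes explicit that the statement never touches $\mu$, so it holds for any pair of dated trees linked only by the leaf coloring, a small but genuine generalization. One slip in an aside: in a displayed triple it is the cherry pair's lca (the pairwise lca \emph{avoiding} the odd leaf out) that can drop below $M=\lca_T(\{x,y,z\})$, not the odd leaf's lcas with the remaining two --- your parenthetical asserts the opposite, but the four-case check you actually invoke, and hence the max inequality, are unaffected.
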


For example, Lemma~\ref{lem:Gu-S-triples} applies to $b, c, d$ in
  Figure~\ref{fig:graphs-example}: $bc, bd \in E(\Gu(\scen))$, $cd \notin
  E(\Gu(\scen))$, and $\sigma(c) \sigma(d) \vert \sigma(b)$ is a triple of
  the species tree.  We next show a statement similar to
Lemma~\ref{lem:Gu-T-triples} for the corresponding PDT $\Ga(\scen)$:
\begin{lemma}
  \label{lem:Ga-T-triples}
  Let $\scen=(T,S,\sigma,\mu,\tT,\tS)$ be a relaxed scenario with pairwise
  distinct leaves $x,y,z\in L(T)$. If $xz, yz \in E(\Ga(\scen))$ and $xy
  \notin E(\Ga(\scen))$, then $T$ displays $xy\vert z$.
\end{lemma}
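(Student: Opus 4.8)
The plan is to argue by contradiction, ruling out each of the three possible shapes of $T_{\vert xyz}$ other than $xy\vert z$. Since $x,y,z$ are pairwise distinct leaves, $T_{\vert xyz}$ is either the triple $xy\vert z$ (the desired conclusion), one of the triples $xz\vert y$, $yz\vert x$, or the fan $x\vert y\vert z$. First I would translate the hypotheses into inequalities on the time maps via the definition of $\Ga(\scen)$: from $xz,yz\in E(\Ga(\scen))$ one obtains $\tT(\lca_T(x,z))>\tS(\lca_S(\sigma(x),\sigma(z)))$ and $\tT(\lca_T(y,z))>\tS(\lca_S(\sigma(y),\sigma(z)))$, whereas $xy\notin E(\Ga(\scen))$ gives $\tT(\lca_T(x,y))\le\tS(\lca_S(\sigma(x),\sigma(y)))$. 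For brevity I would write $A$, $B$, $C$ for the species last common ancestors $\lca_S(\sigma(x),\sigma(y))$, $\lca_S(\sigma(x),\sigma(z))$, $\lca_S(\sigma(y),\sigma(z))$, respectively.

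The engine of the proof is an elementary fact about the species tree that I would isolate first: among $\tS(A),\tS(B),\tS(C)$ the maximum is attained by at least two of them, so in particular $A$ cannot satisfy both $\tS(A)>\tS(B)$ and $\tS(A)>\tS(C)$. Setting $m\coloneqq\lca_S(\{\sigma(x),\sigma(y),\sigma(z)\})$, at least two of $A,B,C$ coincide with $m$; hence if $A=m$ it already equals one of the others, and if $A\ne m$ then $B=C=m$, so in either case $A$ is not strictly above both. I would phrase this so that it holds even when the colors are not pairwise distinct (the lemma does not assume distinctness; one only gets $\sigma(x)\ne\sigma(y)$ for free from $xy\notin E(\Ga(\scen))$ via Lemma~\ref{lem:gt-samecol}): if two colors coincide, the relevant pairwise $\lca_S$ is a leaf strictly below $m$ while the remaining two equal $m$, so the ``attained at least twice'' statement still holds.

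Next I would dispatch the three forbidden shapes, using that $\tT$ is monotone along $\prec_T$ by Definition~\ref{def:timemap}. In each case the two larger of $\lca_T(x,y),\lca_T(x,z),\lca_T(y,z)$ coincide and carry a common time $t$, and the goal is to derive $\tS(B)<\tS(A)$ and $\tS(C)<\tS(A)$, contradicting the isolated fact. For the fan all three gene last common ancestors equal one vertex of time $t$, and combining $t>\tS(B)$, $t>\tS(C)$ with $t\le\tS(A)$ yields $\tS(B),\tS(C)<\tS(A)$. For $xz\vert y$ we have $\lca_T(x,z)\prec_T\lca_T(x,y)=\lca_T(y,z)$, so with $t\coloneqq\tT(\lca_T(x,y))=\tT(\lca_T(y,z))$ we get $\tS(B)<\tT(\lca_T(x,z))<t$, then $\tS(C)<t$, and $t\le\tS(A)$, again forcing $\tS(B),\tS(C)<\tS(A)$; the case $yz\vert x$ is symmetric. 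Since each alternative contradicts the species-tree fact, $T_{\vert xyz}=xy\vert z$, i.e.\ $T$ displays $xy\vert z$.

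The main obstacle I anticipate is precisely the species-tree fact in the presence of repeated colors, which is where the argument could silently break if one assumed distinct colors; this is the one step I would write out carefully. Once ``the maximum of $\tS(A),\tS(B),\tS(C)$ is attained at least twice'' is established in full generality, the three case computations are routine consequences of time consistency and the monotonicity of $\tT$ along $\prec_T$.
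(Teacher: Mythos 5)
Your proof is correct, and it reaches the paper's central inequalities by the same mechanism while closing the argument with a different key fact. Like the paper, you argue by contradiction over the three alternative shapes of $T_{\vert xyz}$ and use time consistency plus $xz,yz\in E(\Ga(\scen))$, $xy\notin E(\Ga(\scen))$ to derive $\tS(\lca_S(\sigma(x),\sigma(z)))<\tS(\lca_S(\sigma(x),\sigma(y)))$ and $\tS(\lca_S(\sigma(y),\sigma(z)))<\tS(\lca_S(\sigma(x),\sigma(y)))$; the paper gets these in one unified computation by observing that in all three excluded shapes $\lca_T(x,y)\succeq_T\lca_T(x,z),\lca_T(y,z)$, whereas you run three explicit case calculations, which amounts to the same thing. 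The genuine divergence is the endgame. The paper first deduces from the strict inequalities that $X,Y,Z$ are pairwise distinct, then exploits comparability of the species lcas through a shared leaf to convert the time inequalities into the two rooted triples $XZ\vert Y$ and $YZ\vert X$, which no tree can simultaneously display. You instead invoke the three-point fact that among the times of the three pairwise species lcas the maximum is attained at least twice (the timing analogue of ``at least two of the three pairwise lcas coincide with the overall lca $m$'') and note that $\tS(B),\tS(C)<\tS(A)$ violates it. The two closing arguments are mathematically equivalent, but yours is somewhat more self-contained and more uniform: it needs no detour through color distinctness or triple consistency, and your explicit check that the max-attained-twice fact survives repeated colors (the coinciding pair makes one pairwise lca a leaf strictly below $m$ while the other two equal $m$) is exactly the point at which a sloppier version would break, and it is correct as you state it. Indeed, with that fact in hand you do not even strictly need Lemma~\ref{lem:gt-samecol}, since the fact holds for arbitrary (even all-equal) color patterns.
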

\begin{proof}
  Suppose $xz, yz \in E(\Ga(\scen))$ and $xy \notin E(\Ga(\scen))$.  Put
  $X\coloneqq\sigma(x)$, $Y\coloneqq\sigma(y)$, and $Z\coloneqq\sigma(z)$
  and observe that $X\ne Y$ by Cor.~\ref{cor:equal-colors}.  Assume for
  contradiction that $xy\vert z$ is not displayed by $T$. Hence, the tree $T$
  displays either $xz\vert y$ or $yz\vert x$ or
  $\lca_T(x,y)=\lca_T(x,z)=\lca_T(y,z)$.  One easily verifies that, in all
  three cases, it holds $\lca_T(x,y)\succeq_{T}\lca_T(x,z)$ and
  $\lca_T(x,y)\succeq_{T}\lca_T(y,z)$.  This together with the assumption
  that $xz, yz \in E(\Ga(\scen))$ and $xy \notin E(\Ga(\scen))$ and time
  consistency implies
  \begin{align*}
 	 &\tS(\lca_S(X,Y))\ge \tT(\lca_T(x,y)) \ge \tT(\lca_T(x,z)) >
  	 \tS(\lca_S(X,Z)) \text{ and} \\
  	 & \tS(\lca_S(X,Y))\ge \tT(\lca_T(x,y)) \ge \tT(\lca_T(y,z)) >
  \tS(\lca_S(Y,Z)).
  \end{align*}
  In particular, this implies that $Y\ne Z$ and $X\ne Z$, resp., and thus
  $X$, $Y$, and $Z$ are pairwise distinct.  Since $\lca_S(X,Y)$ and
  $\lca_S(X,Z)$ are both ancestors of $X$, they are comparable in
  $S$. Together with $\tS(\lca_S(X,Y)) > \tS(\lca_S(X,Z))$ and the
  definition of time maps, this implies $\lca_S(X,Y) \succ_S \lca_S(X,Z)$.
  Thus, $S$ displays the triple $XZ\vert Y$. By similar arguments, we obtain
  that $S$ also displays the triple $YZ\vert X$; a contradiction.  Hence, $T$
  must display $xy\vert z$.
\end{proof}

Again using Figure~\ref{fig:graphs-example} as an example, one can
  check that $T$ must display $ab|b'$ because of
  Lemma~\ref{lem:Ga-T-triples}.  Let us now consider the EDT graph:
\begin{lemma}
  \label{lem:Ge-T-triples}
  Let $\scen=(T,S,\sigma,\mu,\tT,\tS)$ be a relaxed scenario with pairwise
  distinct leaves $x,y,z\in L(T)$ and suppose that $xz, yz \in
  E(\Gg(\scen))$.  If $xy \notin E(\Gg(\scen))$, then $T$ displays neither
  $xz\vert y$ nor $yz\vert x$.  In particular, if $xy \in E(\Gu(\scen))$,
  then $T$ displays $xy\vert z$.
\end{lemma}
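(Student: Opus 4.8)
The plan is to treat the two assertions separately, reducing everything to the defining time (in)equalities of $\Gg$ and $\Gu$ together with the strict monotonicity of the time maps $\tT,\tS$ along $\preceq_T,\preceq_S$. Throughout I abbreviate $X\coloneqq\sigma(x)$, $Y\coloneqq\sigma(y)$, $Z\coloneqq\sigma(z)$, and I record at the outset that $xz,yz\in E(\Gg(\scen))$ already force $X\ne Z$ and $Y\ne Z$ by Cor.~\ref{cor:equal-colors}.

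First I would show that $T$ does not display $xz\vert y$; the case $yz\vert x$ then follows by the symmetry of the hypotheses under swapping $x\leftrightarrow y$. Assume for contradiction that $T$ displays $xz\vert y$, so $\lca_T(x,z)\prec_T\lca_T(x,y)=\lca_T(y,z)$, whence $\tT(\lca_T(x,z))<\tT(\lca_T(x,y))=\tT(\lca_T(y,z))$. Translating the first and last terms into the species tree via $xz,yz\in E(\Gg(\scen))$ gives $\tS(\lca_S(X,Z))<\tS(\lca_S(Y,Z))$. If $X=Y$ this is absurd, since then $\lca_S(X,Z)=\lca_S(Y,Z)$; so I may assume $X,Y,Z$ pairwise distinct. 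As $\lca_S(X,Z)$ and $\lca_S(Y,Z)$ are both ancestors of $Z$ they are $\preceq_S$-comparable, and the strict time inequality upgrades this to $\lca_S(X,Z)\prec_S\lca_S(Y,Z)$. In a tree, exactly one of the three pairwise last common ancestors of $X,Y,Z$ lies strictly below the other two (the cherry), so this pins down $S$ to display $XZ\vert Y$ with $\lca_S(X,Y)=\lca_S(Y,Z)$. Then $\tS(\lca_S(X,Y))=\tS(\lca_S(Y,Z))=\tT(\lca_T(y,z))=\tT(\lca_T(x,y))$, i.e.\ $xy\in E(\Gg(\scen))$, contradicting the hypothesis.

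For the ``in particular'' claim, observe that the first part leaves only two possibilities for $T_{\vert xyz}$: the triple $xy\vert z$ or the fan $x\vert y\vert z$. It remains to exclude the fan under the additional assumption $xy\in E(\Gu(\scen))$. If $T$ displayed the fan, then $u\coloneqq\lca_T(x,y)=\lca_T(x,z)=\lca_T(y,z)$, and $xz,yz\in E(\Gg(\scen))$ would give $\tS(\lca_S(X,Z))=\tT(u)=\tS(\lca_S(Y,Z))$; as above, comparability of these two ancestors of $Z$ promotes equal times to equality of vertices, so $c\coloneqq\lca_S(X,Z)=\lca_S(Y,Z)$ is a common ancestor of $X,Y,Z$ and hence $c\succeq_S\lca_S(X,Y)$, yielding $\tS(\lca_S(X,Y))\le\tT(u)$. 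But $xy\in E(\Gu(\scen))$ means precisely $\tT(u)<\tS(\lca_S(X,Y))$, a contradiction. Hence the fan is impossible and $T$ displays $xy\vert z$.

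I expect the only delicate step to be the tree-combinatorial claim that a strict $\preceq_S$-relation between two of the three pairwise last common ancestors determines the displayed species triple and identifies the two coinciding ``high'' ancestors; everything else is a direct substitution into the definitions of $\Gg$ and $\Gu$ combined with monotonicity of time maps. This step is really just the observation underlying Obs.~\ref{obs:triples}, applied on the species side, so no genuinely new machinery is needed.
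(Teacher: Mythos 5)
Your proof is correct, but it is organized differently from the paper's. The paper splits on where the non-edge $xy$ lands in the $3$-partition: if $xy\in E(\Gu(\scen))$ it simply cites Lemma~\ref{lem:Gu-T-triples} to get $xy\vert z$ at once (which settles both assertions in that case), and only for $xy\in E(\Ga(\scen))$ does it run the contradiction argument, ending with a violation of $\tT(\lca_T(x,y))>\tS(\lca_S(X,Y))$. You avoid the case split entirely: assuming $xz\vert y$ is displayed, you derive the equality $\tT(\lca_T(x,y))=\tS(\lca_S(X,Y))$, i.e.\ $xy\in E(\Gg(\scen))$, which contradicts the hypothesis directly no matter whether $xy$ belongs to $\Gu(\scen)$ or $\Ga(\scen)$ -- the same chain of deductions (comparability of $\lca_S(X,Z)$ and $\lca_S(Y,Z)$ as ancestors of $Z$, upgrading the strict time inequality to $\lca_S(X,Z)\prec_S\lca_S(Y,Z)$, hence $S$ displays $XZ\vert Y$ and $\lca_S(X,Y)=\lca_S(Y,Z)$) that the paper uses in its $\Ga$-case, but aimed at a cleaner target. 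For the ``in particular'' claim you again do not invoke Lemma~\ref{lem:Gu-T-triples}; instead you exclude the fan $x\vert y\vert z$ by an ad hoc time argument (equal times on the comparable vertices $\lca_S(X,Z)$ and $\lca_S(Y,Z)$ force vertex equality, giving a common ancestor of $X,Y$ at time $\tT(u)$, contradicting $\tT(u)<\tS(\lca_S(X,Y))$), which is sound. What each route buys: the paper's version is shorter because it reuses the LDT machinery already imported from \cite{Schaller:21f}; yours is self-contained, needs only the definitions of $\Gg$, $\Gu$ and monotonicity of time maps, and makes visible that the first assertion depends only on $xy\notin E(\Gg(\scen))$ and not on the finer location of $xy$. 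All the individual steps you flag as delicate (in particular that $\lca_S(X,Z)\prec_S\lca_S(Y,Z)$ forces $S$ to display $XZ\vert Y$ with $\lca_S(X,Y)=\lca_S(Y,Z)$) check out.
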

\begin{proof}
  Suppose that $xz, yz \in E(\Gg(\scen))$ and $xy \notin E(\Gg(\scen))$.
  Recall that $\Gg(\scen)$, $\Gu(\scen)$, and $\Ga(\scen)$ are pairwise
  edge-disjoint. Put $X\coloneqq\sigma(x)$, $Y\coloneqq\sigma(y)$, and
  $Z\coloneqq\sigma(z)$ and observe that $X\ne Z$ and $Y\ne Z$ by
  Cor.~\ref{cor:equal-colors}.  If $xy\in E(\Gu(\scen))$, then Lemma
  \ref{lem:Gu-T-triples} implies that $T$ displays $xy\vert z$ and thus,
  none of $xz\vert y$ or $yz\vert x$.  Now suppose $xy\in E(\Ga(\scen))$
  and assume, for contradiction that $T$ displays $xz\vert y$ and thus
  $\lca_T(x,z)\prec_T \lca_T(x,y)=\lca_T(y,z)$.  By assumption and time
  consistency, this implies $\tS(\lca_S(X,Z))=\tT(\lca_T(x,z)) <
  \tT(\lca_T(y,z)) = \tS(\lca_S(Y,Z))$.  The latter implies that $X\ne Y$
  and thus $X$, $Y$, and $Z$ are pairwise distinct.  Since $\lca_S(X,Z)$
  and $\lca_S(Y,Z)$ are both ancestors of $Z$, they are comparable in
  $S$. Together with $\tS(\lca_S(X,Z)) < \tS(\lca_S(Y,Z))$ and the
  definition of time maps, this implies $\lca_S(X,Z) \prec_S \lca_S(Y,Z)$.
  Thus, $S$ displays the triple $XZ\vert Y$.  Therefore, we have
  $\lca_S(X,Y)=\lca_S(Y,Z)$.  In summary, we obtain $\tS(\lca_S(X,Y)) =
  \tS(\lca_S(Y,Z)) = \tT(\lca_T(y,z)) = \tT(\lca_T(x,y))$; a contradiction
  to $xy\in E(\Ga(\scen))$.  Hence, $T$ does not display $xz\vert y$.  For
  similar reasons, $T$ does not display $yz\vert x$, which concludes the
  proof.
\end{proof}

The case that $xz, yz \in E(\Gg(\scen))$, $xy \in E(\Ga(\scen))$ and
$xy\vert z$ is not displayed by $T$ is not covered by Lemma
\ref{lem:Ge-T-triples}. To see that this situation is possible, consider
the trees $S=((X,Y),Z)$ and $T=(x,y,z)$ with $\sigma(x)=X$, $\sigma(y)=Y$
and $\sigma(z)=Z$.  Now choose $\mu$ such that $\mu(\rho_T) = \rho_S$.  One
easily verifies that $xz, yz \in E(\Gg(\scen))$ and $xy \in E(\Ga(\scen))$
while $T$ by construction does not displayed $xy\vert z$.

\begin{lemma}
  \label{lem:Ge-S-triples}
  Let $\scen=(T,S,\sigma,\mu,\tT,\tS)$ be a relaxed scenario with leaves
  $x,y,z\in L(T)$ and pairwise distinct colors $X\coloneqq\sigma(x)$,
  $Y\coloneqq\sigma(y)$, and $Z\coloneqq\sigma(z)$. Suppose that
  $xz, yz \in E(\Gg(\scen))$. If $xy \notin E(\Gg(\scen))$, then $S$
  displays neither $XZ\vert Y$ nor $YZ\vert X$.  If, in particular,
  $xy \in E(\Ga(\scen))$ then $S$ displays $XY\vert Z$.
\end{lemma}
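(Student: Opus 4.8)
The plan is to treat this as the species-tree analogue of Lemma~\ref{lem:Ge-T-triples}: the new ingredient is to convert the gene-tree structure supplied by that lemma, together with the \emph{exact} time equalities encoded by the EDT edges $xz$ and $yz$, into statements about the three last common ancestors $\lca_S(X,Y)$, $\lca_S(X,Z)$, and $\lca_S(Y,Z)$ in $S$. Throughout I would lean on the elementary fact that two vertices of $S$ that are both ancestors of a common leaf are comparable, so that an equality or a strict inequality between their $\tS$-values (via Def.~\ref{def:timemap}) pins down whether they coincide or are strictly ordered.

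First I would record the gene-tree consequence. Since $xz, yz \in E(\Gg(\scen))$ and $xy \notin E(\Gg(\scen))$ (note that $xy \in E(\Ga(\scen))$ also gives $xy \notin E(\Gg(\scen))$ by edge-disjointness), Lemma~\ref{lem:Ge-T-triples} implies that $T$ displays neither $xz\vert y$ nor $yz\vert x$. Hence $\lca_T(x,y)\preceq_T \lca_T(x,z)=\lca_T(y,z)$; in particular $\tT(\lca_T(x,z))=\tT(\lca_T(y,z))$.

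Next, using the defining equalities of the EDT edges, namely $\tS(\lca_S(X,Z))=\tT(\lca_T(x,z))$ and $\tS(\lca_S(Y,Z))=\tT(\lca_T(y,z))$, the previous step yields $\tS(\lca_S(X,Z))=\tS(\lca_S(Y,Z))$. Both $\lca_S(X,Z)$ and $\lca_S(Y,Z)$ are ancestors of $Z$, hence comparable, so equality of their $\tS$-values forces $\lca_S(X,Z)=\lca_S(Y,Z)$. This immediately rules out both $XZ\vert Y$ and $YZ\vert X$, since either triple would require $\lca_S(X,Z)\neq\lca_S(Y,Z)$, establishing the first assertion.

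Finally, for the second assertion I would add the hypothesis $xy\in E(\Ga(\scen))$, i.e.\ $\tS(\lca_S(X,Y))<\tT(\lca_T(x,y))$. Chaining this with $\tT(\lca_T(x,y))\leq\tT(\lca_T(x,z))$ from the first step and the EDT equality $\tT(\lca_T(x,z))=\tS(\lca_S(X,Z))$ gives $\tS(\lca_S(X,Y))<\tS(\lca_S(X,Z))$. Since $\lca_S(X,Y)$ and $\lca_S(X,Z)$ are both ancestors of $X$ they are comparable, so the strict time inequality forces $\lca_S(X,Y)\prec_S\lca_S(X,Z)=\lca_S(Y,Z)$ (the last equality coming from the previous paragraph), which is precisely the statement that $S$ displays $XY\vert Z$. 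I do not anticipate a genuine obstacle: the only point demanding care is that EDT edges furnish \emph{exact} time equalities (in contrast to the strict inequalities available for LDT and PDT edges in Lemmas~\ref{lem:Gu-S-triples} and~\ref{lem:Ga-T-triples}), and it is exactly these equalities that let the comparability-plus-timing argument collapse the species-tree last common ancestors as required.
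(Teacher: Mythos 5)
Your proof is correct and follows essentially the same route as the paper's: both rest on Lemma~\ref{lem:Ge-T-triples}, the exact time equalities $\tS(\lca_S(X,Z))=\tT(\lca_T(x,z))$ and $\tS(\lca_S(Y,Z))=\tT(\lca_T(y,z))$ supplied by the EDT edges, and the comparability of two ancestors of a common leaf combined with Def.~\ref{def:timemap}. The only difference is organizational: the paper excludes $XZ\vert Y$ and $YZ\vert X$ via two symmetric proofs by contradiction, whereas you argue the contrapositive once, directly deriving $\lca_T(x,z)=\lca_T(y,z)$ and hence $\lca_S(X,Z)=\lca_S(Y,Z)$; your second part coincides with the paper's chain $\tS(\lca_S(X,Y))<\tT(\lca_T(x,y))\le\tT(\lca_T(x,z))=\tS(\lca_S(X,Z))$.
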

\begin{proof}
  Suppose that $xz, yz \in E(\Gg(\scen))$ and $xy \notin E(\Gg(\scen))$.
  By Lemma~\ref{lem:Ge-T-triples}, $T$ does not display $xz\vert y$ or
  $yz\vert x$.  Suppose for contradiction that $S$ displays $XZ\vert Y$,
  i.e., $\lca_S(X,Z)\prec_S \lca_S(Y,Z)$.  This together with the
  assumption that $xz, yz \in E(\Gg(\scen))$ and time consistency implies
  $\tT(\lca_T(x,z)) = \tS(\lca_S(X,Z)) < \tS(\lca_S(Y,Z)) =
  \tT(\lca_T(y,z))$.  Since $\lca_T(x,z)$ and $\lca_T(y,z)$ are both
  ancestors of $z$, they must be comparable. This together with
  $\tT(\lca_T(x,z)) < \tT(\lca_T(y,z))$ yields $\lca_T(x,z)\prec_T
  \lca_T(y,z)$ and thus $T$ displays $xz\vert y$; a contradiction.
  Therefore, $S$ does not display $XZ\vert Y$. For similar reasons,
  $YZ\vert X$ is not displayed.

  Now assume in addition that $xy \in E(\Ga(\scen))$.  Since $T$ does not
  display $xz\vert y$ and $\lca_T(x,y)$ and $\lca_T(x,z)$ are both
  ancestors of $x$ and thus comparable, we have
  $\lca(x,y)\preceq_T\lca_T(x,z)$.  Now this together with time
  consistency, $xy \in E(\Ga(\scen))$, and $xz \in E(\Gg(\scen))$ yields
  $\tS(\lca_S(X,Y)) < \tT(\lca_T(x,y)) \le \tT(\lca_T(x,z)) =
  \tS(\lca_S(X,Z))$.  Since $\lca_S(X,Y)$ and $\lca_S(X,Z)$ are both
  ancestors of $X$, they are comparable in $S$. Together with
  $\tS(\lca_S(X,Y)) < \tS(\lca_S(X,Z))$ and the definition of time maps,
  this implies $\lca_S(X,Y) \prec_S \lca_S(X,Z)$.  Thus, $S$ displays the
  triple $XY\vert Z$.
\end{proof}

Finally, we consider the species triples implied by the PDT graph.
The following result in particular generalizes the last statement
  in Lemma~\ref{lem:Ge-S-triples} above.
\begin{lemma}
  Let $\scen=(T,S,\sigma,\mu,\tT,\tS)$ be a relaxed scenario with leaves
  $x,y,z\in L(T)$ and pairwise distinct colors $X\coloneqq\sigma(x)$,
  $Y\coloneqq\sigma(y)$, and $Z\coloneqq\sigma(z)$.  If $xy \in
  E(\Ga(\scen))$ and $xz, yz \notin E(\Ga(\scen))$, then $S$ displays
  $XY\vert Z$.
  \label{lem:Ga-T-triples-mixed}
\end{lemma}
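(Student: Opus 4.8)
The plan is to translate the three edge-membership hypotheses into time inequalities and then to rule out, by contradiction, every configuration of $S$ on $\{X,Y,Z\}$ other than $XY\vert Z$. Write $u\coloneqq\lca_T(x,y)$, $v\coloneqq\lca_T(x,z)$, $w\coloneqq\lca_T(y,z)$ and $U\coloneqq\lca_S(X,Y)$, $V\coloneqq\lca_S(X,Z)$, $W\coloneqq\lca_S(Y,Z)$. Since the three graphs are pairwise edge-disjoint, the assumption $xy\in E(\Ga(\scen))$ reads $\tT(u)>\tS(U)$, while $xz,yz\notin E(\Ga(\scen))$ give $\tT(v)\le\tS(V)$ and $\tT(w)\le\tS(W)$.

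Next I would assume for contradiction that $S$ does not display $XY\vert Z$. As $X,Y,Z$ are pairwise distinct, the restriction $S_{\vert XYZ}$ is a tree on three leaves, so $S$ must then display one of $XZ\vert Y$, $YZ\vert X$, or the fan triple $X\vert Y\vert Z$. In each of these three cases a direct inspection of the definition of \emph{displays} shows $U\succeq_S V$ and $U\succeq_S W$ (in words, $U$ is the $\preceq_S$-top of the three pairwise species-lca's, or tied with it), so that $\tS(U)\ge\tS(V)$ and $\tS(U)\ge\tS(W)$ by monotonicity of the time map. This is exactly the step used in the proof of Lemma~\ref{lem:Ga-T-triples}, now carried out on $S$ instead of on $T$.

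Combining these with the inequalities of the first paragraph yields $\tT(v)\le\tS(V)\le\tS(U)<\tT(u)$ and $\tT(w)\le\tS(W)\le\tS(U)<\tT(u)$, hence $\tT(u)>\tT(v)$ and $\tT(u)>\tT(w)$ hold simultaneously. The final step is to see that this is impossible in the gene tree. Indeed, $v=\lca_T(x,z)$ and $w=\lca_T(y,z)$ are both ancestors of $z$ and hence comparable; their $\preceq_T$-maximum equals $\lca_T(x,y,z)$, which in turn lies weakly above $u=\lca_T(x,y)$. Therefore $\tT(u)\le\max\{\tT(v),\tT(w)\}$, contradicting $\tT(u)>\tT(v)$ and $\tT(u)>\tT(w)$. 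Hence $S$ does display $XY\vert Z$, which gives the required conclusion $U\prec_S V=W$ once the competing configurations have been excluded.

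I expect the only genuinely delicate point to be this last one: making precise that the $\preceq_T$-maximum of $\{u,v,w\}$ is attained by the common ancestor $\lca_T(x,y,z)$, so that no single pairwise lca can be strictly time-above the other two. Everything else is bookkeeping with Definition~\ref{def:timemap} and the edge sets of $\graphs(\scen)$. Note also that this statement indeed subsumes the last assertion of Lemma~\ref{lem:Ge-S-triples}, since $xz,yz\in E(\Gg(\scen))$ implies $xz,yz\notin E(\Ga(\scen))$ by edge-disjointness.
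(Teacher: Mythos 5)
Your proof is correct, but it takes a genuinely different route from the paper's. The paper splits into cases according to where the non-$\Ga$ edges $xz$ and $yz$ land in the $3$-partition: if both lie in $\Gu$ or both in $\Gg$, it delegates to Lemma~\ref{lem:Gu-S-triples} and Lemma~\ref{lem:Ge-S-triples}, respectively, and only the mixed case ($xz\in E(\Gu(\scen))$, $yz\in E(\Gg(\scen))$, and symmetrically) is argued from scratch, via Lemma~\ref{lem:Gu-T-triples} to force the gene-tree triple $xz\vert y$ and hence $\lca_T(x,y)=\lca_T(y,z)$, after which comparability of $\lca_S(X,Y)$ and $\lca_S(Y,Z)$ as ancestors of $Y$ yields $XY\vert Z$. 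You instead avoid any case split on edge types: you extract only the uniform time inequalities $\tT(\lca_T(x,y))>\tS(\lca_S(X,Y))$ and $\tT(\lca_T(x,z))\le\tS(\lca_S(X,Z))$, $\tT(\lca_T(y,z))\le\tS(\lca_S(Y,Z))$ (valid since a non-edge of $\Ga(\scen)$ is an edge of $\Gu(\scen)$ or $\Gg(\scen)$ by the partition property), and then exclude the three competing topologies of $S_{\vert XYZ}$ at once. Your ``delicate point'' is indeed sound: $\lca_T(x,z)$ and $\lca_T(y,z)$ are comparable as ancestors of $z$, their $\preceq_T$-maximum is $\lca_T(\{x,y,z\})\succeq_T\lca_T(x,y)$, so $\tT(\lca_T(x,y))\le\max\{\tT(\lca_T(x,z)),\tT(\lca_T(y,z))\}$, contradicting the strict inequalities your case assumption produces; and in each of the cases $XZ\vert Y$, $YZ\vert X$, and $X\vert Y\vert Z$, one does have $\lca_S(X,Y)\succeq_S\lca_S(X,Z)$ and $\lca_S(X,Y)\succeq_S\lca_S(Y,Z)$, as you claim. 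What your approach buys is self-containedness and uniformity: it needs none of the earlier triple lemmas, and it isolates the structural reason the statement holds (no pairwise lca can be strictly time-above the other two). What the paper's approach buys is modularity -- two of the three cases are free, the argument style matches the surrounding lemmas, and the mixed case yields the extra fact $\lca_T(x,y)=\lca_T(y,z)$ along the way. One could even phrase your argument without contradiction: from $\tT(\lca_T(x,y))\le\max\{\tT(\lca_T(x,z)),\tT(\lca_T(y,z))\}$ and your inequalities, $\tS(\lca_S(X,Y))<\tS(\lca_S(\{X,Y,Z\}))$, whence $\lca_S(X,Y)\prec_S\lca_S(\{X,Y,Z\})$, which is exactly the statement that $S$ displays $XY\vert Z$.
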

\begin{proof}
  Recall that by construction $\Gu(\scen)$, $\Gg(\scen)$, and $\Ga(\scen)$
  are edge-disjoint.  If $xz, yz \in E(\Gu(\scen))$ or $xz, yz \in
  E(\Gg(\scen))$, the statement follows immediately from
  Lemma~\ref{lem:Gu-S-triples} and~\ref{lem:Ge-S-triples}, respectively.
  Now consider the case that $xz \in E(\Gu(\scen))$ and $yz \in
  E(\Gg(\scen))$.  Hence, we have $\tT(\lca_T(x,y)) > \tS(\lca_S(X,Y))$ and
  $\tT(\lca_T(y,z)) = \tS(\lca_S(Y,Z))$.  Moreover, $T$ displays $xz\vert y$ by
  Lemma~\ref{lem:Gu-T-triples} and thus $\lca_T(x,y)=\lca_T(y,z)$.  To
  summarize, we have $\tS(\lca_S(Y,Z)) = \tT(\lca_T(y,z)) =
  \tT(\lca_T(x,y)) > \tS(\lca_S(X,Y))$.  Since $\lca_S(X,Y)$ and
  $\lca_S(Y,Z)$ are both ancestors of $Y$, they are comparable in
  $S$. Together with $\tS(\lca_S(Y,Z)) > \tS(\lca_S(X,Y))$ and the
  definition of time maps, this implies $\lca_S(X,Y) \prec_S \lca_S(Y,Z)$.
  Thus, $S$ displays the triple $XY\vert Z$.  One proceeds similarly if $yz \in
  E(\Gu(\scen))$ and $xz \in E(\Gg(\scen))$.
\end{proof}

See $a, b', c$ in Figure~\ref{fig:graphs-example}, which enforce
  $\sigma(a) \sigma(b') \vert \sigma(c)$ in the species tree by
  Lemma~\ref{lem:Ga-T-triples-mixed}.  With the facts that we have
  gathered, we can now define our set of required and forbidden triples.

\begin{definition}
  \label{def:inf-forb-triples}
  Let $\graphs=(\Gu, \Gg, \Ga, \sigma)$ be a tuple of three graphs on the same
  vertex set $L$ and with vertex coloring $\sigma$.\\[5pt]
  The set of \emph{informative triples on $L$}, denoted by $\R_T(\graphs)$,
  contains a triple $xy\vert z$ if $x,y,z\in L$ and one of the following
  conditions holds
  \begin{enumerate}[noitemsep]%,nolistsep]
    \item[(a)] $xy \in E(\Gu)$ and $xz, yz\notin E(\Gu)$,
    \item[(b)] $xz, yz \in E(\Ga)$ and $xy \notin E(\Ga)$.
  \end{enumerate}
  The set of \emph{forbidden triples on $L$}, denoted by $\F_T(\graphs)$,
  contains a triple $xz\vert y$ (and by symmetry also $yz\vert x$) if $x,y,z\in L$
  and $xz, yz \in E(\Gg)$ and $xy \notin E(\Gg)$.\\[5pt]
  The set of \emph{informative triples on $\sigma(L)$}, denoted by
  $\R_S(\graphs)$, contains a triple $XY\vert Z$ if there are $x,y,z\in L$ with
  pairwise distinct colors $X\coloneqq\sigma(x)$, $Y\coloneqq\sigma(y)$,
  and $Z\coloneqq\sigma(z)$ and one of the following conditions holds
  \begin{enumerate}[noitemsep]%,nolistsep]
    \item[(a')] $xz, yz \in E(\Gu)$ and $xy \notin E(\Gu)$,
    \item[(b')] $xy \in E(\Ga)$ and $xz, yz \notin E(\Ga)$.
  \end{enumerate}
  The set of \emph{forbidden triples on $L(S)$}, denoted by
  $\F_S(\graphs)$, contains a triple $XZ\vert Y$ (and by symmetry also
  $YZ\vert X$) if there are $x,y,z\in L$ with pairwise distinct colors
  $X\coloneqq\sigma(x)$, $Y\coloneqq\sigma(y)$, $Z\coloneqq \sigma(z)$, and
  $xz, yz \in E(\Gg)$ and $xy \notin E(\Gg)$.
\end{definition}

The notation $\R_T$, $\F_T$, $\R_S$, and $\F_S$ in
Definition~\ref{def:inf-forb-triples} is motivated by
Proposition~\ref{prop:triples} below, which shows that the triples on $L$
and $L(S)$, resp., provide information of the gene tree $T$ and species
tree $S$ explaining $\graphs$, provided such trees exists.  Summarizing
Lemmas~\ref{lem:Gu-T-triples} to~\ref{lem:Ga-T-triples-mixed}, we obtain:
\begin{proposition}
  \label{prop:triples}
  Let $\scen=(T,S,\sigma,\mu,\tT,\tS)$ be a relaxed scenario and
  $\graphs=(\Gu(\scen), \Gg(\scen), \Ga(\scen), \sigma)$.  Then $T$ agrees
  with $(\R_T(\graphs), \F_T(\graphs))$ and $S$ agrees with
  $(\R_S(\graphs), \F_S(\graphs))$.
\end{proposition}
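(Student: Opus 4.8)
This proposition is a consolidation of Lemmas~\ref{lem:Gu-T-triples} through~\ref{lem:Ga-T-triples-mixed}, so the plan is to match each clause of Definition~\ref{def:inf-forb-triples} against the lemma from which it was extracted and to check that the required triples are displayed and the forbidden ones are not. Concretely, to establish that $T$ agrees with $(\R_T(\graphs),\F_T(\graphs))$, I would show separately that $T$ displays every triple of $\R_T(\graphs)$ and no triple of $\F_T(\graphs)$; the argument for $S$ and $(\R_S(\graphs),\F_S(\graphs))$ is entirely parallel.

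For the required gene-tree triples, I would split according to which clause of the definition of $\R_T(\graphs)$ puts $xy\vert z$ into the set. Clause~(a), namely $xy\in E(\Gu)$ and $xz,yz\notin E(\Gu)$, is exactly the hypothesis of Lemma~\ref{lem:Gu-T-triples}, which yields that $T$ displays $xy\vert z$; clause~(b), namely $xz,yz\in E(\Ga)$ and $xy\notin E(\Ga)$, is the hypothesis of Lemma~\ref{lem:Ga-T-triples}, which again gives $xy\vert z$. For the forbidden gene-tree triples, the defining condition $xz,yz\in E(\Gg)$ and $xy\notin E(\Gg)$ is precisely the premise of the first assertion of Lemma~\ref{lem:Ge-T-triples}, which guarantees that $T$ displays neither $xz\vert y$ nor $yz\vert x$; hence no triple of $\F_T(\graphs)$ is displayed. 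On the species side, clause~(a') matches Lemma~\ref{lem:Gu-S-triples} and clause~(b') matches Lemma~\ref{lem:Ga-T-triples-mixed}, so $S$ displays every triple of $\R_S(\graphs)$, while the condition defining $\F_S(\graphs)$ coincides with the premise of the first part of Lemma~\ref{lem:Ge-S-triples}, ensuring that $S$ displays neither $XZ\vert Y$ nor $YZ\vert X$.

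The only point requiring care is bookkeeping: in each application one must verify that the roles of $x,y,z$ (and of the colors $X,Y,Z$) in the cited lemma align with their roles in Definition~\ref{def:inf-forb-triples}, and that the pairwise-distinctness of colors demanded by the species lemmas (Lemmas~\ref{lem:Gu-S-triples}, \ref{lem:Ge-S-triples}, and~\ref{lem:Ga-T-triples-mixed}) is met. The latter is automatic, since Definition~\ref{def:inf-forb-triples} only places a color triple into $\R_S(\graphs)$ or $\F_S(\graphs)$ when $X,Y,Z$ are already pairwise distinct. I do not anticipate any genuine obstacle: the proposition is essentially a repackaging of the preceding case analysis into a single $(\R,\F)$-agreement statement, and the proof is a short verification with no new ideas.
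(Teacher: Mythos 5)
Your proposal is correct and coincides with the paper's own treatment: the paper gives no separate argument for Proposition~\ref{prop:triples} beyond the phrase ``Summarizing Lemmas~\ref{lem:Gu-T-triples} to~\ref{lem:Ga-T-triples-mixed}, we obtain,'' and your clause-by-clause matching -- (a) with Lemma~\ref{lem:Gu-T-triples}, (b) with Lemma~\ref{lem:Ga-T-triples}, $\F_T$ with Lemma~\ref{lem:Ge-T-triples}, (a') with Lemma~\ref{lem:Gu-S-triples}, (b') with Lemma~\ref{lem:Ga-T-triples-mixed}, and $\F_S$ with Lemma~\ref{lem:Ge-S-triples} -- is exactly the intended correspondence. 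Your observation that the pairwise-distinctness of colors is built into Definition~\ref{def:inf-forb-triples}, so the hypotheses of the species-tree lemmas are automatically satisfied, is the only bookkeeping point, and you handled it correctly.
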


\section{The Cograph Structure}
\label{sect:cograph}

Cographs naturally appear as graph structures associated with
vertex-labeled trees and more generally in the context of binary relations
associated with reconciliations of gene trees and species trees. For
example, orthology graphs in scenarios without horizontal gene transfer are
cographs \cite{Hellmuth:13a}. As we shall see below, both $\Gu(\scen)$
  and $\Ga(\scen)$ are cographs for all relaxed scenarios $\scen$. In
  contrast, $\Gg(\scen)$ is a cograph only under some additional
  constraints.  It is, however, always a so-called perfect graph.
\begin{lemma}
  \label{lem:Gu-cograph}
  \textnormal{\cite[Lemma~8]{Schaller:21f}} Let
  $\scen=(T,S,\sigma,\mu,\tT,\tS)$ be a relaxed scenario. Then $\Gu(\scen)$
  is a cograph.
\end{lemma}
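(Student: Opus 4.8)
The plan is to use the characterization of cographs as exactly the $P_4$-free graphs (Proposition~\ref{prop:cograph}, item~4). Suppose for contradiction that $\Gu(\scen)$ contains an induced path $a\,b\,c\,d$ on four (necessarily distinct) leaves, so that $ab,bc,cd\in E(\Gu(\scen))$ while $ac,ad,bd\notin E(\Gu(\scen))$. First I would pin down the shape of $T$ on these four leaves. Applying Lemma~\ref{lem:Gu-T-triples} to the triple $\{a,b,d\}$ (using $ab\in E(\Gu(\scen))$ and $ad,bd\notin E(\Gu(\scen))$) shows that $T$ displays $ab\vert d$; applying it to $\{a,c,d\}$ (using $cd\in E(\Gu(\scen))$ and $ac,ad\notin E(\Gu(\scen))$) shows that $T$ displays $cd\vert a$. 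By Observation~\ref{obs:triples}(i) these two triples force $T_{\vert abcd}=((a,b),(c,d))$. In particular, writing $m\coloneqq\lca_T(a,b,c,d)$, all four ``cross'' pairs satisfy $\lca_T(a,c)=\lca_T(a,d)=\lca_T(b,c)=\lca_T(b,d)=m$.

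Next I would convert the edge information into timing inequalities evaluated at $t_m\coloneqq\tT(m)$. Writing $A\coloneqq\sigma(a)$ and so on, the edge $bc\in E(\Gu(\scen))$ gives $t_m=\tT(\lca_T(b,c))<\tS(\lca_S(B,C))$, whereas the three non-edges $ac,ad,bd\notin E(\Gu(\scen))$ give $\tS(\lca_S(A,C)),\tS(\lca_S(A,D)),\tS(\lca_S(B,D))\le t_m$, each because the corresponding gene last common ancestor equals $m$. Hence $\lca_S(B,C)$ is strictly later (in $\tS$) than each of $\lca_S(A,C),\lca_S(A,D),\lca_S(B,D)$.

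Finally I would extract a contradiction purely from the structure of $S$. Since $bc\in E(\Gu(\scen))$ forces $B\ne C$ (Corollary~\ref{cor:equal-colors}), the node $\lca_S(B,C)$ has two distinct children $w$ and $w'$ with $B\preceq_S w$ and $C\preceq_S w'$. Comparability of ancestors of a common leaf together with the time bounds gives $\lca_S(A,C)\prec_S\lca_S(B,C)$ and $\lca_S(B,D)\prec_S\lca_S(B,C)$; following these strict descents up the tree yields $A\preceq_S\lca_S(A,C)\preceq_S w'$ and $D\preceq_S\lca_S(B,D)\preceq_S w$. But then $A$ and $D$ lie below distinct children of $\lca_S(B,C)$, so $\lca_S(A,D)=\lca_S(B,C)$ and therefore $\tS(\lca_S(A,D))=\tS(\lca_S(B,C))>t_m$, contradicting $\tS(\lca_S(A,D))\le t_m$.

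I expect the main subtlety to be the possibility that the four colors $A,B,C,D$ are not pairwise distinct (only $A\ne B$, $B\ne C$, $C\ne D$ are guaranteed by Corollary~\ref{cor:equal-colors}). The child-subtree argument in the last step is designed precisely to sidestep this: it never passes through a triple on $\{A,B,C,D\}$ (which would require distinctness) and instead argues directly about which child of $\lca_S(B,C)$ each leaf descends from, so it remains valid even when colors coincide (indeed $A=D$ is then immediately impossible, since a single leaf cannot sit below both $w$ and $w'$).
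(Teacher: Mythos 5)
Your proof is correct. Note first that this paper does not actually prove the lemma itself: it imports it verbatim from \cite[Lemma~8]{Schaller:21f}, so the natural in-paper comparison is the analogous Lemma~\ref{lem:Ga-cograph} for the PDT graph. That proof begins exactly as yours does, pinning down $T_{\vert abcd}$ from two informative triples, but then it diverges: it invokes the \emph{species} triples of Prop.~\ref{prop:triples} and runs a five-case analysis on $\vert\{A,B,C,D\}\vert$ (needed because the informative species triples of Def.~\ref{def:inf-forb-triples} are only defined for pairwise distinct colors) to conclude $\lca_S(B,C)=\lca_S(A,D)$, and only then derives the timing contradiction. Your third step replaces that case split entirely: from $\tS(\lca_S(A,C)),\tS(\lca_S(B,D))\le t_m<\tS(\lca_S(B,C))$ and comparability of ancestors of a common leaf you get $\lca_S(A,C)\prec_S\lca_S(B,C)$ and $\lca_S(B,D)\prec_S\lca_S(B,C)$, place $A$ and $D$ below the two distinct children $w'$ and $w$ of $\lca_S(B,C)$, and conclude $\lca_S(A,D)=\lca_S(B,C)$ directly --- an argument that, as you correctly point out, is insensitive to coincidences among $A,B,C,D$ (only $B\ne C$ is used, and that follows from Cor.~\ref{cor:equal-colors}; $A=D$ is excluded automatically since $w$ and $w'$ are incomparable). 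All individual steps check out: the applications of Lemma~\ref{lem:Gu-T-triples} to $\{a,b,d\}$ and $\{a,c,d\}$ need no color-distinctness, Obs.~\ref{obs:triples}(i) applies after the harmless relabeling $c\leftrightarrow d$, and the inequalities $\tS(\lca_S(\cdot,\cdot))\le t_m$ for the three non-edges follow from the edge partition (a non-edge of $\Gu(\scen)$ lies in $\Gg(\scen)$ or $\Ga(\scen)$) together with $\lca_T$ of every cross pair equalling $m$. What the two routes buy: the triple-plus-case-analysis style reuses the machinery (Prop.~\ref{prop:triples}) that drives the rest of the paper and makes the forced topologies of $S$ explicit, which the paper later exploits (e.g.\ in Lemma~\ref{lem:4col-P4}); your version is more elementary and self-contained, needing only one $T$-triple lemma, proper coloring of $\Gu(\scen)$, and time consistency, and it would shorten the $\Ga$ proof as well if adapted there.
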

It may not come as a surprise, therefore, that an analogous result holds for
$\Ga(\scen)$:
\begin{lemma}
  \label{lem:Ga-cograph}
  Let $\scen=(T,S,\sigma,\mu,\tT,\tS)$ be a relaxed scenario. Then $\Ga(\scen)$
  is a cograph.
\end{lemma}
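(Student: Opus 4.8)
The plan is to prove $P_4$-freeness and invoke the characterization of cographs in Proposition~\ref{prop:cograph}(4): I will assume that $\Ga(\scen)$ contains an induced path on four vertices and derive a contradiction from the timing information. So suppose $a,b,c,d\in L(T)$ induce a $P_4$ with edges $ab,bc,cd\in E(\Ga(\scen))$ and non-edges $ac,ad,bd\notin E(\Ga(\scen))$. Throughout I abbreviate $\delta(x,y)\coloneqq\tS(\lca_S(\sigma(x),\sigma(y)))$, and I will repeatedly use that $\delta$ behaves like an ultrametric: for any three leaves $x,y,z$ the maximum of $\delta(x,y),\delta(x,z),\delta(y,z)$ is attained at least twice (two of the three species-lca's equal $\lca_S(\sigma(x),\sigma(y),\sigma(z))$ and carry the larger $\tS$-value, a fact that remains valid when some colors coincide, with $\lca_S(X,X)=X$).

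First I would pin down the shape of $T$ on these four leaves. Applying Lemma~\ref{lem:Ga-T-triples} to $\{a,b,c\}$ (where $ab,bc\in E(\Ga(\scen))$ and $ac\notin E(\Ga(\scen))$) shows that $T$ displays $ac\vert b$, and applying it to $\{b,c,d\}$ shows that $T$ displays $bd\vert c$. From $ac\vert b$ we get $\lca_T(a,c)\prec_T \lca_T(a,b)=\lca_T(b,c)$, and from $bd\vert c$ we get $\lca_T(b,d)\prec_T \lca_T(b,c)=\lca_T(c,d)$. Since $\lca_T(b,c)$ is common to both, all of $\lca_T(a,b)$, $\lca_T(b,c)$, $\lca_T(c,d)$ coincide; writing $m$ for this vertex and inspecting its children ($a,c$ in one child-subtree since $\lca_T(a,c)\prec_T m$, and $b,d$ in another since $\lca_T(b,d)\prec_T m$ while $\lca_T(a,b)=m$) one checks $\lca_T(a,d)=m$ as well. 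Setting $t\coloneqq\tT(m)$, all four crossing pairs satisfy $\tT(\lca_T(\cdot,\cdot))=t$.

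Next I translate the edge/non-edge information to the species side. For $xy\in\{ab,bc,cd\}$, membership in $E(\Ga(\scen))$ means $\tT(\lca_T(x,y))>\delta(x,y)$; since the left-hand side is $t$, this gives $\delta(a,b),\delta(b,c),\delta(c,d)<t$. For the non-edge $ad$ we instead have $\tT(\lca_T(a,d))\le\delta(a,d)$, and as $\tT(\lca_T(a,d))=t$ this yields $\delta(a,d)\ge t$. The contradiction now follows from $\delta$ alone: applying the ``maximum attained twice'' property to $\{a,b,d\}$, the value $\delta(a,b)<t\le\delta(a,d)$ is strictly below $\delta(a,d)$, so the two largest among $\delta(a,b),\delta(a,d),\delta(b,d)$ must be $\delta(a,d)$ and $\delta(b,d)$, forcing $\delta(b,d)=\delta(a,d)\ge t$. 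But then $\{b,c,d\}$ is impossible, since $\delta(b,d)\ge t$ is strictly larger than both $\delta(b,c)<t$ and $\delta(c,d)<t$, so its maximum is attained only once. Hence $\Ga(\scen)$ has no induced $P_4$ and is a cograph.

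I expect the main obstacle to be the possibility of repeated colors among the path vertices (e.g.\ $\sigma(a)=\sigma(b)$, which is allowed on an edge of $\Ga(\scen)$ by Lemma~\ref{lem:gt-samecol}), since this can collapse some $\lca_S$'s to leaves and invalidate naive ``strict triple'' reasoning. The clean way around this is exactly the ultrametric phrasing above: the ``maximum attained twice'' property of $\delta$ holds verbatim even with coinciding colors, so the final two steps need no case distinction. Moreover, Lemma~\ref{lem:gt-samecol} guarantees $\sigma(a)\ne\sigma(c)$, $\sigma(a)\ne\sigma(d)$, and $\sigma(b)\ne\sigma(d)$ because $ac,ad,bd$ are non-edges of $\Ga(\scen)$, which already supplies all the distinctness the argument genuinely relies on.
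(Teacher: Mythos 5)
Your proof is correct, and while it opens exactly as the paper does, its second half takes a genuinely different route. Both arguments start from the induced $P_4$ $a-b-c-d$ and use the PDT-informative triples (your direct applications of Lemma~\ref{lem:Ga-T-triples} to $\{a,b,c\}$ and $\{b,c,d\}$ are precisely what the paper extracts via Prop.~\ref{prop:triples}) to force $T_{\vert abcd}=((a,c),(b,d))$, so that all four crossing pairs share the common ancestor $m$ with time $t=\tT(m)$. From there the paper argues on the \emph{topology} of the species tree: it splits into five cases according to which of the colors $A,B,C,D$ coincide, invokes the informative species triples $AB\vert D$ and $CD\vert A$ (Def.~\ref{def:inf-forb-triples}(b')) to determine $S_{\vert ABCD}$ or exclude a case, and in every surviving case concludes $\lca_S(B,C)=\lca_S(A,D)$, whence $bc\in E(\Ga(\scen))$ contradicts $ad\notin E(\Ga(\scen))$. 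You instead bypass the species-tree topology and its color case analysis entirely by exploiting that $\delta(x,y)=\tS(\lca_S(\sigma(x),\sigma(y)))$ satisfies the ultrametric three-point condition (maximum attained at least twice), which, as you correctly note, holds verbatim even with coinciding colors since two of the three pairwise $\lca_S$'s always equal the overall $\lca_S$ and $\tS$ is monotone along root paths. The purely numerical chain $\delta(a,b)<t\le\delta(a,d)$ forcing $\delta(b,d)=\delta(a,d)\ge t$ on $\{a,b,d\}$, followed by the once-attained maximum on $\{b,c,d\}$, is a valid contradiction, and your color bookkeeping (distinctness of $\sigma$ exactly on the non-edges $ac,ad,bd$, via Lemma~\ref{lem:gt-samecol}) is sound even though the ultrametric phrasing renders it inessential. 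Your version buys brevity and uniformity — no case split, and no species triples needed at all; the paper's version buys structural byproducts (the restricted topologies $S_{\vert ABCD}$ and the identity $\lca_S(B,C)=\lca_S(A,D)$) that mesh with the informative/forbidden-triple machinery reused throughout the rest of the paper, e.g.\ in Lemmas~\ref{lem:rainbow-triangles} and~\ref{lem:4col-P4}.
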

\begin{proof}
  Set $A\coloneqq \sigma(a)$, $B\coloneqq \sigma(b)$, $C\coloneqq
  \sigma(c)$, and $D\coloneqq \sigma(d)$.  Suppose for contradiction that
  $\Ga(\scen)$ is not a cograph, i.e., it contains an induced $P_4$
  $a-b-c-d$.  By Prop.~\ref{prop:triples}, $T$ displays the informative
  triples $ac\vert b$ and $bd\vert c$.  Hence, 
  $T_{\vert abcd}=((a,c),(b,d))$ and, therefore, $\lca_T(a,d) = \lca_T(b,c)$.
  Moreover, by Cor.~\ref{cor:equal-colors}, we know that $A\ne C$, $A\ne
  D$, and $B \ne D$.
  Therefore, we have to consider the cases
  (i) $\vert \{A,B,C,D\}\vert =4$;
  (ii) $\vert \{A,B,C,D\}\vert =2$;
  (iii) $\vert \{A,B,C,D\}\vert =3$ and $A=B$, $C$, and $D$ are pairwise
  distinct;
  (iv) $\vert \{A,B,C,D\}\vert =3$ and $A$, $B$, and $C=D$ are pairwise
  distinct; and
  (v) $\vert \{A,B,C,D\}\vert =3$ and $A$, $B=C$, and $D$ are pairwise
  distinct.
  
  In Case~(i), $A$, $B$, $C$, and $D$ are pairwise distinct.  By
  Prop.~\ref{prop:triples}, $S$ displays the informative triples $AB\vert
  D$ and $CD\vert A$ (see Definition~\ref{def:inf-forb-triples}.b').
  Thus, $S_{\vert ABCD} = ((A,B),(C,D))$ and we have $\lca_S(B,C) =
  \lca_S(A,D)$.  In Case~(ii), we must have $A=B$ and $C=D$. Thus, we again
  obtain $\lca_S(B,C)=\lca_S(A,D)$.

  In Case~(iii), Prop.~\ref{prop:triples} implies that $S$ displays the
  informative triple $CD\vert A (=CD\vert B)$. Thus, we have $\lca_S(B,C) =
  \lca_S(A,D)$.  In Case~(iv), Prop.~\ref{prop:triples} implies that $S$
  displays the informative triple $AB\vert D (=AB\vert C)$. Thus, we have
  $\lca_S(B,C) = \lca_S(A,D)$.  In Case~(v), Prop.~\ref{prop:triples}
  implies that $S$ displays the informative triples $AB\vert D$ and
  $CD\vert A (=BD\vert A)$. Since $S$ cannot displays both of these
  triples, Case~(v) can be immediately excluded.

  In Cases~(i)--(iv), we have $\lca_T(a,d) = \lca_T(b,c)$ and $\lca_S(B,C) =
  \lca_S(A,D)$. Together with $bc\in E(\Ga(\scen))$, it follows
  $\tT(\lca_T(a,d)) = \tT(\lca_T(b,c)) > \tS(\lca_S(B,C)) = \tS(\lca_S(A,D))$;
  a contradiction to $ad\notin E(\Ga(\scen))$.

  In summary, $\Ga(\scen)$ does not contain an induced $P_4$ and thus it is
  a cograph.
\end{proof}

Lemmas~\ref{lem:Gu-cograph} and \ref{lem:Ga-cograph} naturally suggest to
ask whether an analogous result holds for $\Gg(\scen)$, i.e., whether
the EDT graph is always a cograph. If this is the case,
$\{\Gu(\scen),\Gg(\scen),\Ga(\scen)\}$ form a ``cograph 3-partition'' in
the sense of \cite{Hellmuth:15q,hellmuth2018tree}.  As illustrated in
Fig.\ \ref{fig:2-colP4}, this is not the case in general. Therefore, we
investigate in the following conditions under which $\Gg(\scen)$ may or may
not be a cograph and their implications for the underlying tree structure.

\begin{lemma}
  \label{lem:2-colP4}
  Let $\scen=(T,S,\sigma,\mu,\tT,\tS)$ be a relaxed scenario.  If
  $(\Gg(\scen),\sigma)$ contains an induced $P_4$ $a-b-c-d$ on two colors,
  then $T_{\vert {abcd}} = ((a,d),b,c)$.
\end{lemma}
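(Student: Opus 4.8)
The plan is to normalize the coloring, pin down the two three-leaf restrictions $T_{\vert abc}$ and $T_{\vert bcd}$, and then assemble $T_{\vert abcd}$ via Obs.~\ref{obs:triples}(vi). First I would record the consequences of the hypotheses. Since $\Gg(\scen)$ is properly colored (Cor.~\ref{cor:equal-colors}) and the three $P_4$-edges $ab,bc,cd$ use only two colors, the induced path must be $2$-colored as $\sigma(a)=\sigma(c)=:X$ and $\sigma(b)=\sigma(d)=:Y$ with $X\ne Y$. Writing $\tau^\ast\coloneqq\tS(\lca_S(X,Y))$, the three EDT edges $ab,bc,cd\in E(\Gg(\scen))$ give the crucial timing identities $\tT(\lca_T(a,b))=\tT(\lca_T(b,c))=\tT(\lca_T(c,d))=\tau^\ast$, whereas the induced \emph{non}-edge $ad$ forces $\tT(\lca_T(a,d))\ne\tau^\ast$. (The remaining non-edges $ac,bd$ are pushed into $\Ga(\scen)$ by Lemma~\ref{lem:gt-samecol}, but these will not be needed.)

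The core claim is that $T_{\vert abc}=(a,b,c)$ is a fan. From the non-edge $ac$ with common $\Gg(\scen)$-neighbor $b$, Lemma~\ref{lem:Ge-T-triples} already forbids $ab\vert c$ and $bc\vert a$, so only $ac\vert b$ must be excluded. I would argue by contradiction: if $T$ displays $ac\vert b$, then $\lca_T(a,b)=\lca_T(b,c)=:w$ and $\lca_T(a,c)\prec_T w$. The key device — which I expect to be the main subtlety to phrase cleanly — is that two $\preceq_T$-comparable vertices of equal time must coincide (immediate from Def.~\ref{def:timemap}). Applying this to $\lca_T(c,d)$ and $\lca_T(b,c)=w$, both ancestors of $c$ and both of time $\tau^\ast$, forces $\lca_T(c,d)=w$. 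Now $\lca_T(a,c)\prec_T w$ means $a$ and $c$ lie below a single child of $w$, while $\lca_T(b,c)=\lca_T(c,d)=w$ shows that neither $b$ nor $d$ does; hence $\lca_T(a,d)=w$ and $\tT(\lca_T(a,d))=\tau^\ast$, contradicting $ad\notin E(\Gg(\scen))$. Thus $T_{\vert abc}$ is the fan, and the symmetric argument — interchanging the roles of $a,b$ with $d,c$ (which also swaps $X$ and $Y$) and using the non-edge $bd$ — shows $T_{\vert bcd}$ is a fan as well.

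With both $T_{\vert abc}$ and $T_{\vert bcd}$ fans, Obs.~\ref{obs:triples}(vi) leaves only $T_{\vert abcd}=(a,b,c,d)$ or $T_{\vert abcd}=((a,d),b,c)$. The full fan is excluded by the same mechanism: there $\lca_T(a,d)=\lca_T(a,b)$, whose time is $\tau^\ast$, which would place $ad$ in $\Gg(\scen)$. Hence $T_{\vert abcd}=((a,d),b,c)$, as claimed. The only genuinely delicate point throughout is the repeated use of the ``comparable $+$ equal time $\Rightarrow$ equal'' collapse together with the constraint coming from the endpoint non-edge $ad$; everything else is bookkeeping on the three-leaf restrictions.
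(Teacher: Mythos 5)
Your proof is correct, but it takes a different route than the paper's. The paper first applies the ``comparable $+$ equal time $\Rightarrow$ equal'' collapse along the path to get $\lca_T(a,b)=\lca_T(b,c)=\lca_T(c,d)$ directly, then uses $ad\notin E(\Gg(\scen))$ only to conclude $\lca_T(a,d)\neq\lca_T(a,b)$, so that $T$ displays $ab\vert d$ or $ad\vert b$, and $cd\vert a$ or $ad\vert c$; it then runs a four-way case analysis over these pairs via Obs.~\ref{obs:triples}, contradicting the lca equalities in every quartet except $((a,d),b,c)$. You instead never form this lca chain explicitly: you pin down the three-leaf restrictions, invoking Lemma~\ref{lem:Ge-T-triples} (which the paper proves but does not use in this proof) to eliminate two of the three triples on each of $\{a,b,c\}$ and $\{b,c,d\}$ in one stroke, then kill the remaining triples $ac\vert b$ and $bd\vert c$ with the collapse device plus the stronger consequence that $ad$ would land in $E(\Gg(\scen))$, and finish with the fan-triple rule Obs.~\ref{obs:triples}(vi) and one more $ad$-contradiction to exclude the full fan. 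Your version buys a more structured argument --- the intermediate claims $T_{\vert abc}=(a,b,c)$ and $T_{\vert bcd}=(b,c,d)$ make the appeal to Obs.~\ref{obs:triples}(vi) immediate, and reusing Lemma~\ref{lem:Ge-T-triples} shrinks the bespoke case work to three short contradictions --- at the cost of importing that lemma's species-tree machinery; the paper's version stays self-contained in the gene-tree timing facts but pays with a longer enumeration of quartet topologies. All steps in your write-up check out, including the child-of-$w$ bookkeeping that forces $\lca_T(a,d)=w$ and the observation that Lemma~\ref{lem:Ge-T-triples} needs only distinct leaves, not distinct colors, so it applies despite $\sigma(a)=\sigma(c)$ and $\sigma(b)=\sigma(d)$.
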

\begin{proof}
  By assumption and by Cor.~\ref{cor:equal-colors}, $A\coloneqq\sigma(a) =
  \sigma(c)$, $B\coloneqq\sigma(b) = \sigma(d)$, and $A\ne B$. Therefore,
  and since $ab, bc, cd\in E(\Gg(\scen))$, we have $\tT(\lca_T(a,b)) =
  \tT(\lca_T(b,c))=\tT(\lca_T(c,d))=\tS(\lca_S(A,B))$.
  Def.~\ref{def:timemap} together with $\tT(\lca_T(a,b)) =
  \tT(\lca_T(b,c))$ implies that we can have neither $\lca_T(a,b) \prec_T
  \lca_T(b,c)$ nor $\lca_T(b,c) \prec_T \lca_T(a,b)$.  Since $\lca_T(a,b)$
  and $\lca_T(b,c)$ are both ancestors of $b$ and thus comparable in $T$,
  we conclude $\lca_T(a,b) =\lca_T(b,c)$.  Similarly, we obtain
  $\lca_T(b,c) =\lca_T(c,d)$.  Moreover, since $ad\notin E(\Gg(\scen))$, we
  have $\tT(\lca_T(a,d))\ne\tS(\lca_S(A,B))=\tT(\lca_T(a,b))$ and thus
  $\lca_T(a,d) \ne \lca_T(a,b)$, which implies that $T$ displays one of the
  triples $t_1 = ab\vert d$ or $t'_1=ad\vert b$.  By similar arguments, $T$
  displays one of the triples $t_2 = cd\vert a$ or $t'_2=ad\vert c$.  We
  next examine the possible combination of these triples.
 
  If $T$ displays $t_1$ and $t_2$, then $T_{\vert abcd} = ((a,b),(c,d))$, in
  which case $\lca_T(a,b) \neq \lca_T(b,c)$; a contradiction.  If $T$
  displays $t_1$ and $t'_2$, then $T_{\vert abcd} = (((a,b),d),c)$.  Again
  $\lca_T(a,b) \neq \lca_T(b,c)$; again a contradiction.  If $T$ displays
  $t'_1$ and $t_2$, then $T_{\vert abcd} = (((c,d),a),b)$. Hence $\lca_T(a,b)
  \neq \lca_T(c,d)$; a contradiction.  If $T$ displays $t'_1$ and $t'_2$,
  then $T_{\vert abcd}$ is either of the form $(((a,d),c),b)$,
  $(((a,d),b),c)$, $((a,d),b,c)$, or $((a,d),(b,c))$.  For the first two
  cases, we obtain $\lca_T(a,b)\neq \lca_T(c,d)$, while for the latter case
  we obtain $\lca_T(b,c)\neq \lca_T(c,d)$. Thus we reach a contradiction in all 
  three cases, leaving $T_{\vert abcd} = ((a,d),b,c)$ as the only possibility.
\end{proof}

\begin{figure}[t]
  \centering
  \includegraphics[width=0.4\textwidth]{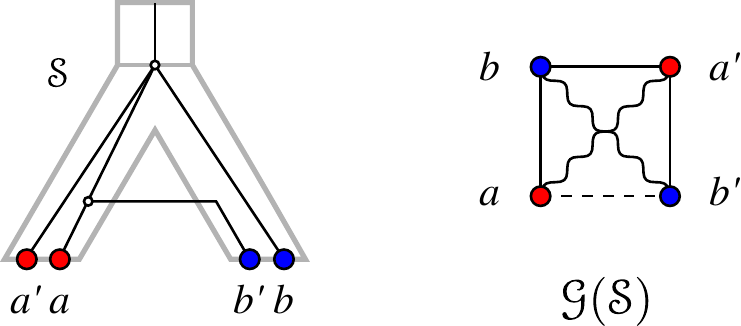}
  \caption{$(\Gg(\scen),\sigma)$ can contain a 2-colored $P_4 = a-b-a'-b'$.
    However, due to Cor.~\ref{cor:2colP4-binary}, $T$ cannot be a binary
    tree in this case.}
  \label{fig:2-colP4}
\end{figure}

Note that the tree $T_{\vert {abcd}} = ((a,d),b,c)$ in
Lemma~\ref{lem:2-colP4} is displayed by $T$ but not binary. Hence, we
obtain
\begin{corollary}
  Let $\scen=(T,S,\sigma,\mu,\tT,\tS)$ be a relaxed scenario. If $(\Gg(\scen),\sigma)$ contains a
  2-colored $P_4$, then $T$ is not a binary tree.
  \label{cor:2colP4-binary}
\end{corollary}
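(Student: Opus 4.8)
The plan is to obtain the corollary as an immediate consequence of Lemma~\ref{lem:2-colP4}, combined with the elementary fact that restrictions of binary trees stay binary. First I would assume that $(\Gg(\scen),\sigma)$ contains a $2$-colored induced $P_4$ and fix such a path, say $a-b-c-d$. Lemma~\ref{lem:2-colP4} then pins down the induced topology on these four leaves completely: we must have $T_{\vert abcd} = ((a,d),b,c)$. The key observation is that this displayed four-leaf tree is \emph{not} binary, since its root has three children, namely the cherry $(a,d)$ and the two leaves $b$ and $c$; thus $T_{\vert abcd}$ contains an inner vertex with three children.

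Next I would invoke the contrapositive of the standard fact that every restriction $T_{\vert L'}$ of a binary tree is again binary. This is the only point that requires any justification beyond Lemma~\ref{lem:2-colP4}, and I expect it to be routine: if $T$ were binary, then every inner vertex of $T$ would have exactly two children, so in the minimal subtree of $T$ connecting $\{a,b,c,d\}$ each inner vertex would have at most two children, and suppressing the resulting degree-two vertices (as required by the definition of $T_{\vert L'}$) leaves every remaining branching vertex with exactly two children. Consequently a restriction of a binary tree can never exhibit a vertex with three children.

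Finally, combining the two observations closes the argument: $T$ displays the non-binary tree $T_{\vert abcd}=((a,d),b,c)$, whereas a binary $T$ could only have binary restrictions to four leaves. Hence $T$ cannot be binary, which is precisely the claim. I do not anticipate a genuine obstacle here, as all of the real work is carried by Lemma~\ref{lem:2-colP4}; what remains is merely the verification that forming $T_{\vert abcd}$ preserves binarity, i.e.\ that the fan $((a,d),b,c)$ can only arise from a non-binary $T$.
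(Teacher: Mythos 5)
Your proposal is correct and follows essentially the same route as the paper, which likewise deduces the corollary directly from Lemma~\ref{lem:2-colP4} by observing that the displayed tree $T_{\vert abcd}=((a,d),b,c)$ is not binary. Your extra verification that restrictions of binary trees remain binary is the standard fact the paper leaves implicit, so the argument is complete.
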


\begin{lemma}
  Let $\scen=(T,S,\sigma,\mu,\tT,\tS)$ be a relaxed scenario.  If $(\Gg(\scen),\sigma)$ contains an
  induced $P_4$ $a-b-c-a'$ on three distinct colors with
  $A=\sigma(a)=\sigma(a')$, $B=\sigma(b)$, and $C=\sigma(c)$, then
  $S_{\vert ABC} = (A,B,C)$. In particular, $S$ is not a binary tree.  Moreover,
  we have $T_{\vert {abca'}} = ((a,c),(b,a'))$.
  \label{lem:3-colP4-1}
\end{lemma}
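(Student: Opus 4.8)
The plan is to determine the species‑tree restriction $S_{\vert ABC}$ first, using the forbidden species triples coming from the two ``two‑edge'' triangles of the $P_4$, and then to read off $T_{\vert abca'}$ from the resulting time equalities. I first record the structure of the induced $P_4$ $a-b-c-a'$: its edges in $\Gg(\scen)$ are $ab$, $bc$, $ca'$, while $ac$, $ba'$, and $aa'$ are non‑edges. By the edge condition, $\tT(\lca_T(a,b))=\tS(\lca_S(A,B))$, $\tT(\lca_T(b,c))=\tS(\lca_S(B,C))$, and $\tT(\lca_T(c,a'))=\tS(\lca_S(A,C))$, whereas the two non‑edges on distinct colors give $\tT(\lca_T(a,c))\neq\tS(\lca_S(A,C))$ and $\tT(\lca_T(b,a'))\neq\tS(\lca_S(A,B))$. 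The third non‑edge $aa'$ carries no information here, since $a$ and $a'$ share the color $A$; by Cor.~\ref{cor:equal-colors} the three edges indeed force pairwise distinct colors on their endpoints, consistent with the hypothesis.

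For the species tree I would apply Lemma~\ref{lem:Ge-S-triples} to the triangle $\{a,b,c\}$ (where $ab,bc\in E(\Gg(\scen))$ and $ac\notin E(\Gg(\scen))$), which forbids $S$ from displaying $AB\vert C$ or $BC\vert A$; and to the triangle $\{b,c,a'\}$ (where $bc,ca'\in E(\Gg(\scen))$ and $ba'\notin E(\Gg(\scen))$), which forbids $BC\vert A$ and $AC\vert B$. Together these exclude all three resolved triples on $\{A,B,C\}$, forcing $S_{\vert ABC}$ to be the fan triple $(A,B,C)$; in particular $\lca_S(A,B,C)$ has three distinct children in $S$, so $S$ is not binary.

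The fan triple means $\lca_S(A,B)=\lca_S(B,C)=\lca_S(A,C)=:s$, so all three gene‑tree lca times above equal $\tS(s)$; hence $\tT(\lca_T(a,b))=\tT(\lca_T(b,c))=\tT(\lca_T(c,a'))$. Since $\lca_T(a,b)$ and $\lca_T(b,c)$ are both ancestors of $b$, they are comparable, and strict monotonicity of $\tT$ along $\prec_T$ (Def.~\ref{def:timemap}) forces them equal; the same argument through $c$ identifies $\lca_T(b,c)=\lca_T(c,a')$. Thus $\lca_T(a,b)=\lca_T(b,c)=\lca_T(c,a')=:u$, and since $u$ is a common ancestor of all four leaves while also being a pairwise lca, $u=\lca_T(a,b,c,a')$ is the root of $T_{\vert abca'}$. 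Now the two non‑edges together with $\tS(s)=\tT(u)$ give $\tT(\lca_T(a,c))\neq\tT(u)$ and $\tT(\lca_T(b,a'))\neq\tT(u)$; as both lcas satisfy $\lca_T(a,c)\preceq_T u$ and $\lca_T(b,a')\preceq_T u$, this yields $\lca_T(a,c)\prec_T u$ and $\lca_T(b,a')\prec_T u$. Consequently $a,c$ lie in one child subtree of $u$ and $b,a'$ in another; these children are distinct because $\lca_T(a,b)=u$. This partitions the four leaves and forces $T_{\vert abca'}=((a,c),(b,a'))$.

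The argument is largely bookkeeping; the one point requiring care is the order of the two halves. The gene‑tree topology cannot be pinned down before knowing that $S_{\vert ABC}$ is a fan, because only then do the three species lca times coincide and collapse $\lca_T(a,b)$, $\lca_T(b,c)$, $\lca_T(c,a')$ to a single vertex $u$. I would therefore be careful to derive the species fan first and only afterwards feed $\tS(s)=\tT(u)$ into the time comparisons for $T$.
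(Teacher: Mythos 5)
Your proof is correct. The species-tree half coincides with the paper's: both apply Lemma~\ref{lem:Ge-S-triples} to the induced $P_3$s $a-b-c$ and $b-c-a'$ to forbid all three resolved triples on $\{A,B,C\}$ and force the fan $S_{\vert ABC}=(A,B,C)$. For the gene-tree half, however, you take a genuinely different route. The paper first pins down the membership of the non-edges: by the second statement of Lemma~\ref{lem:Ge-S-triples}, $ac\in E(\Ga(\scen))$ or $ba'\in E(\Ga(\scen))$ would force $S$ to display $AC\vert B$ or $AB\vert C$, contradicting the fan, so $ac,ba'\in E(\Gu(\scen))$; Lemma~\ref{lem:Gu-T-triples} then yields the informative triples $ac\vert b$ and $ba'\vert c$, which determine $T_{\vert abca'}=((a,c),(b,a'))$. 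You instead argue directly from time consistency: the fan collapses the three species lcas to a single vertex $s$, so the three edge conditions give equal time stamps, and comparability of lcas through $b$ and through $c$ together with strict monotonicity of $\tT$ (Def.~\ref{def:timemap}) identifies $\lca_T(a,b)=\lca_T(b,c)=\lca_T(c,a')=:u$; the two non-edge inequalities then force $\lca_T(a,c)\prec_T u$ and $\lca_T(b,a')\prec_T u$, and distinctness of the two child subtrees follows from $\lca_T(a,b)=u$. Your version is more elementary and self-contained, needing only the definitions rather than the classification of $ac$ and $ba'$; the paper's version buys the slightly stronger byproduct $ac,ba'\in E(\Gu(\scen))$ (visible in the scenarios of Fig.~\ref{fig:EDT-P4-3col}) and reuses machinery already established. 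Your closing remark on the order of the two halves — that the species fan must be derived before the time equalities can collapse the gene lcas — is exactly the right point of care, and is implicit in the paper's ordering as well.
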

\begin{proof}
  By assumption $P_3 = a-b-c$ is an induced path.
  Lemma~\ref{lem:Ge-S-triples} thus imply that $S$ does not display $AB\vert C$
  and $BC\vert A$.  Similarly, the induced $P_3 = b-c-a'$ implies that $S$ does
  not display $BC\vert A$ and $AC\vert B$. This leaves $S_{\vert ABC} = (A,B,C)$ as the
  only possibility.  By Lemma~\ref{lem:Ge-S-triples}, we immediately see
  that $ac, ba'\in \Gu(\scen)$ since otherwise $S$ would display $AC\vert B$ or
  $AB\vert C$.  This, together with Lemma~\ref{lem:Gu-T-triples} and $ab,bc,ca'
  \notin \Gu(\scen)$, implies that $T$ displays $ac\vert b$ and $ba'\vert c$ and, 
  therefore, $T_{\vert {abca'}} = ((a,c),(b,a'))$.
\end{proof}

\begin{figure}[t]
  \centering
  \includegraphics[width=0.8\textwidth]{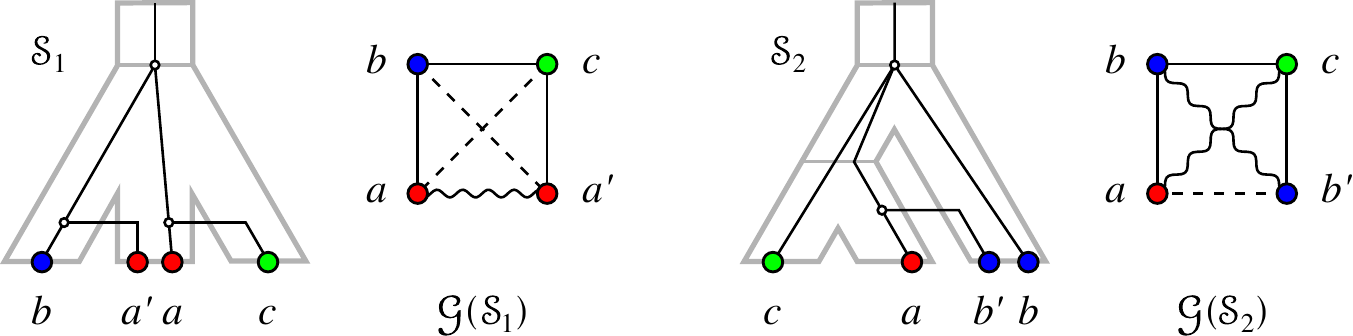}
  \caption{Left: $(\Gg(\scen_1),\sigma)$ contains an induced path $P_4 =
    a-b-c-a'$ on three colors with $\sigma(a)= \sigma(a')$ as in
    Lemma~\ref{lem:3-colP4-1}. Right: $(\Gg(\scen_2),\sigma')$ contains an
    induced path $P_4 = a-b-c-b'$ on three colors with $\sigma'(b)=
    \sigma'(b')$ as in Lemma~\ref{lem:3-colP4-2}.}
  \label{fig:EDT-P4-3col}
\end{figure}

\begin{lemma}
  Let $\scen=(T,S,\sigma,\mu,\tT,\tS)$ be a relaxed scenario.  If $E(\Gg(\scen))$ contains an
  induced $P_4$ $a-b-c-b'$ on three distinct colors with $A=\sigma(a)$,
  $B=\sigma(b)=\sigma(b')$ and $C=\sigma(c)$, then $S_{\vert ABC} = ((A,C),B)$
  and $T_{\vert abcb'}=((a,b'),b,c)$.
  \label{lem:3-colP4-2}
\end{lemma}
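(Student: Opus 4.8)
The plan is to pin down the gene-tree topology $T_{\vert abcb'}$ first, by combining the triple constraints coming from the two induced subpaths $a-b-c$ and $b-c-b'$ with the timing identities forced by the $\Gg$-edges, and then to read off the species triple from the resulting last-common-ancestor structure. Throughout I would write $u$ for the relevant gene-tree vertex and use Def.~\ref{def:timemap} in the form ``two comparable vertices with equal time coincide''.

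First I would establish that $\lca_T(a,b)=\lca_T(b,c)=\lca_T(c,b')=:u$. Since $bc,cb'\in E(\Gg)$ and $\sigma(b)=\sigma(b')=B$, the defining identities of the EDT graph give $\tT(\lca_T(b,c))=\tS(\lca_S(B,C))=\tT(\lca_T(c,b'))$; as $\lca_T(b,c)$ and $\lca_T(c,b')$ are both ancestors of $c$, hence comparable, they coincide. Applying Lemma~\ref{lem:Ge-T-triples} to the center $b$ of the induced path $a-b-c$ shows that $T$ displays neither $ab\vert c$ nor $bc\vert a$, so $T_{\vert abc}$ is $ac\vert b$ or the fan $(a,b,c)$; in both cases $\lca_T(a,b)=\lca_T(b,c)$. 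Consequently $u$ is an ancestor of $a,b,c,b'$ and equals $\lca_T(abcb')$, and from $ab\in E(\Gg)$ we obtain $\tT(u)=\tS(\lca_S(A,B))$.

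The decisive step---and where this lemma genuinely differs from Lemma~\ref{lem:3-colP4-1}---is that the repeated color forces the cherry $(a,b')$ rather than a fan. Because $ab'\notin E(\Gg)$ while $\sigma(a)=A$ and $\sigma(b')=B$, we have $\tT(\lca_T(a,b'))\neq\tS(\lca_S(A,B))=\tT(u)$; since $a,b'\preceq_T u$ forces $\lca_T(a,b')\preceq_T u$, the time inequality upgrades this to $\lca_T(a,b')\prec_T u$, so $a$ and $b'$ lie in a common child subtree of $u$. The three identities $\lca_T(a,b)=\lca_T(b,c)=\lca_T(c,b')=u$ say that $b$ is separated at $u$ from $a$ and from $c$, and that $c$ is separated at $u$ from $b'$; together with the cherry, the children of $u$ must partition $\{a,b,c,b'\}$ into the blocks $\{a,b'\}$, $\{b\}$, $\{c\}$, yielding $T_{\vert abcb'}=((a,b'),b,c)$. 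Any competing topology (a fan, or a grouping other than $\{a,b'\}$) would place $a$ and $b'$ in distinct children of $u$ and thus force $\lca_T(a,b')=u$, which is excluded. I expect this coupling of the same-color non-edge $ab'$ with the timing of $u$ to be the main obstacle, since it supplies exactly the information that Lemma~\ref{lem:Ge-S-triples}, being restricted to pairwise distinct colors, cannot provide on $b-c-b'$.

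For the species tree I would use the topology just obtained. From $\tS(\lca_S(A,B))=\tT(u)=\tS(\lca_S(B,C))$ and the comparability of $\lca_S(A,B)$ and $\lca_S(B,C)$ as ancestors of $B$, we get $\lca_S(A,B)=\lca_S(B,C)=:V=\lca_S(A,B,C)$. In $T_{\vert abcb'}=((a,b'),b,c)$ the leaves $a$ and $c$ lie in distinct children of $u$, so $\lca_T(a,c)=u$ and $\tT(\lca_T(a,c))=\tT(u)=\tS(V)$. Since $ac\notin E(\Gg)$, this time must differ from $\tS(\lca_S(A,C))$, and as $\lca_S(A,C)\preceq_S V$ we conclude $\lca_S(A,C)\prec_S V=\lca_S(A,B)=\lca_S(B,C)$, which is exactly $S_{\vert ABC}=((A,C),B)$.
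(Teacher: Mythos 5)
Your proof is correct, but it takes a genuinely different route from the paper's. The paper proceeds by case enumeration: it applies Lemma~\ref{lem:Ge-T-triples} to \emph{both} induced $P_3$s to narrow $T_{\vert bcb'}$ and $T_{\vert abc}$ to two options each, combines these via Obs.~\ref{obs:triples} into five candidate topologies for $T_{\vert abcb'}$, and eliminates all but $((a,b'),b,c)$ by noting that every other candidate gives $\lca_T(a,b)=\lca_T(a,b')$ and hence, via $ab\in E(\Gg(\scen))$ and $\sigma(b)=\sigma(b')$, the forbidden edge $ab'\in E(\Gg(\scen))$. You instead argue synthetically: you exploit the repeated color directly to get $\lca_T(b,c)=\lca_T(c,b')$ (equal time stamps at comparable ancestors of $c$), invoke Lemma~\ref{lem:Ge-T-triples} only once (on $a-b-c$) to get $\lca_T(a,b)=\lca_T(b,c)\eqqcolon u$, and then use the same-color non-edge $ab'$ to force $\lca_T(a,b')\prec_T u$, which pins down the partition of $\{a,b,c,b'\}$ into the child blocks $\{a,b'\},\{b\},\{c\}$ and hence the topology with no enumeration at all. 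The species-tree halves also diverge: the paper narrows $S_{\vert ABC}$ to the fan or $AC\vert B$ via Lemma~\ref{lem:Ge-S-triples} and then excludes the fan by deriving the forbidden edge $ac\in E(\Gg(\scen))$, whereas you never invoke Lemma~\ref{lem:Ge-S-triples}: you derive $\lca_S(A,B)=\lca_S(B,C)$ directly from the two timing identities $\tT(u)=\tS(\lca_S(A,B))=\tS(\lca_S(B,C))$ at comparable ancestors of $B$, and then $\lca_S(A,C)\prec_S\lca_S(A,B)$ from $ac\notin E(\Gg(\scen))$ together with $\lca_T(a,c)=u$. Your version is more self-contained and makes the role of the repeated color $B$ explicit at both decisive steps; the paper's version buys uniformity, reusing its standing triple machinery and the inference rules of Obs.~\ref{obs:triples} in the same style as the proofs of Lemmas~\ref{lem:2-colP4} and~\ref{lem:3-colP4-1}.
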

\begin{proof}
  Suppose that $\Gg(\scen)$ contains an induced $P_4$ $a-b-c-b'$ on three
  distinct colors $A=\sigma(a)$, $B=\sigma(b)=\sigma(b')$, and
  $C=\sigma(c)=C$. By Lemma~\ref{lem:Ge-T-triples}, $T$ displays
  neither $bc\vert b'$ nor $b'c\vert b$. Hence, we have to consider two cases: (1)
  $T_{\vert bcb'}=(b,c,b')$, or (2) $T_{\vert bcb'}=bb'\vert c$. By similar arguments,
  we have either (I) $T_{\vert abc}=(a,b,c)$ or
  (II) $T_{\vert abc}=ac\vert b$. We proceed by combining these alternatives:

  Case (1,I) yields (i) $T_{\vert abcb'}=(a,b,c,b')$ or (ii)
  $T_{\vert abcb'}=((a,b'),b,c)$, Case (1,II) yields $T_{\vert abcb'}=((a,c),b,b')$,
  Case (2,I) yields $T_{\vert abcb'}=((b,b'),a,c)$, and Case (2,II) yields
  $T_{\vert abcb'} = ((b,b'),(a,c))$. In all cases except Case (1,I,ii), we have
  $\lca_T(a,b)=\lca_T(a,b')$ and $ab\in E(\Gg(\scen))$ thus
  implies $\tS(\lca_S(A,B)) = \tT(\lca_T(a,b))=\tT(\lca_T(a,b'))$
  and $ab'\in E(\Gg(\scen))$; a contradiction. This leaves Case
  (1,I,ii), $T_{\vert abcb'}=((a,b'),b,c)$, as the only possibility.
  Lemma~\ref{lem:Ge-S-triples} together with $ab,bc\in E(\Gg(\scen))$ and
  $ac\notin E(\Gg(\scen))$ implies that either $S_{\vert ABC} = (A,B,C)$ or
  $S_{\vert ABC} = AC\vert B$. In the first case, we have $\lca_S(A,C)=\lca_S(B,C)$.
  Together with $T_{\vert abcb'}=((a,b'),b,c)$ (and thus
  $\lca_T(b,c)=\lca_T(a,c)$) and $bc\in E(\Gg(\scen))$, we obtain
  $\tS(\lca_S(A,C)) = \tS(\lca_S(B,C)) = \tT(\lca_T(b,c)) =
  \tT(\lca_T(a,c))$.  Therefore, we must have $ac\in E(\Gg)$; a
  contradiction. In summary, therefore, we have $S_{\vert ABC} = ((A,C),B)$ and
  $T_{\vert abcb'}=((a,b'),b,c)$.
\end{proof}
Fig.~\ref{fig:EDT-P4-3col} shows two examples of scenarios that realize EDT
graphs containing $P_4$s on three colors as described in
Lemma~\ref{lem:3-colP4-1} and Lemma~\ref{lem:3-colP4-2}, respectively.

Instead of considering the three graphs $\Gu$, $\Gg$, and $\Ga$ in
isolation, we can alternatively think of a graph $3$-partition
$\graphs=\{\Gu,\Gg,\Ga,\sigma\}$ as a complete graph $K_n$ whose edges are
colored with three different colors depending on whether they are contained
in $E(\Gu)$, $E(\Gg)$, or $E(\Ga)$. This links our results to the
literature on edge-colored graphs.  Complete edge-colored permutation
graphs are characterized \cite{Hartmann:22a} as the edge-partitions of
$K_n$ such that (i) each color class induces a permutation graph in the
usual sense \cite{Bose:98}, and (ii) the edge coloring is a Gallai
coloring, i.e., it contains no ``rainbow triangle'' with three distinct
colors.  While every cograph is also a permutation graph \cite{Bose:98},
rainbow triangles may appear in the edge-coloring defined by
$\{\Gu,\Gg,\Ga\}$ that is explained by a relaxed scenario. In fact, induced
$P_4$s in $\Gg$ are always associated with rainbow triangles.
\begin{lemma}
  \label{lem:rainbow-triangles}
  Let $\scen=(T,S,\sigma,\mu,\tT,\tS)$ be a relaxed scenario.  If
  $\Gg(\scen)$ contains an induced $P_4 = a-b-c-d$, then either $ad\in
  E(\Gu(\scen))$ and $ac,bd \in E(\Ga(\scen))$ or $ad\in E(\Ga(\scen))$ and
  $ac,bd \in E(\Gu(\scen))$. In either case, both $\{a,b,d\}$ and
  $\{a,c,d\}$ are rainbow triangles.
\end{lemma}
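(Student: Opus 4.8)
The plan is to decide, for each of the three non-edges $ac$, $bd$, $ad$, whether it lies in $\Gu(\scen)$ or $\Ga(\scen)$ by comparing the gene-tree divergence time $t_{xy}\coloneqq\tT(\lca_T(x,y))$ with the species-tree divergence time $s_{XY}\coloneqq\tS(\lca_S(X,Y))$, where $A\coloneqq\sigma(a)$, $B\coloneqq\sigma(b)$, $C\coloneqq\sigma(c)$, $D\coloneqq\sigma(d)$. The hypotheses $ab,bc,cd\in E(\Gg(\scen))$ read $t_{ab}=s_{AB}$, $t_{bc}=s_{BC}$, $t_{cd}=s_{CD}$, while $ac,bd,ad\notin E(\Gg(\scen))$, together with the fact that $\Gu,\Gg,\Ga$ partition the edges, mean each of these three pairs sits in $\Gu$ or $\Ga$ according to the sign of $t_{xy}-s_{XY}$; in particular $t_{ac}\neq s_{AC}$, $t_{bd}\neq s_{BD}$, $t_{ad}\neq s_{AD}$.

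First I would build a common ``apex'' in each tree. Applying Lemma~\ref{lem:Ge-T-triples} to $\{a,b,c\}$ and to $\{b,c,d\}$ shows $T$ displays none of $ab\vert c$, $bc\vert a$, $bc\vert d$, $cd\vert b$; since $\lca_T(a,b)$ and $\lca_T(b,c)$ are comparable (both are ancestors of $b$) and neither can be a strict ancestor of the other, they coincide, and symmetrically $\lca_T(b,c)=\lca_T(c,d)$. Call this vertex $r$ and put $t_r\coloneqq\tT(r)$; then $a,b,c,d\preceq_T r$, so $t_{ac},t_{bd},t_{ad}\le t_r$. Feeding the EDT-equalities into $t_{ab}=t_{bc}=t_{cd}=t_r$ gives $s_{AB}=s_{BC}=s_{CD}=t_r$; as $\lca_S(A,B),\lca_S(B,C),\lca_S(C,D)$ are pairwise comparable (common descendants $B$, resp.\ $C$) and share the time $t_r$, Definition~\ref{def:timemap} forces them to be one vertex $v$ with $\tS(v)=t_r$. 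Hence $A,B,C,D\preceq_S v$ and $s_{AC},s_{BD},s_{AD}\le t_r$ as well.

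The crux is a pair of dual \emph{separation} facts. In $T$: if $\lca_T(a,d)\prec_T r$ then $a,d$ lie in a single child-subtree of $r$ containing neither $b$ (as $\lca_T(a,b)=r$) nor $c$ (as $\lca_T(c,d)=r$), so $\lca_T(a,c)=\lca_T(b,d)=r$; equivalently $t_{ad}<t_r$ implies $t_{ac}=t_{bd}=t_r$. The identical argument in $S$ (the degenerate case $A=D$ being immediate) gives $s_{AD}<t_r\Rightarrow s_{AC}=s_{BD}=t_r$. I pair these with the \emph{covering} observation that for each $xy\in\{ac,bd,ad\}$ one cannot have $t_{xy}=t_r=s_{XY}$, as that would put $xy$ in $E(\Gg(\scen))$; since both are $\le t_r$, at least one is strictly below $t_r$. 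Now split on whether $\lca_T(a,d)\prec_T r$. If so, the gene separation fact gives $t_{ac}=t_{bd}=t_r$, covering forces $s_{AC},s_{BD}<t_r$, and the species separation fact then pins $s_{AD}=t_r$; reading signs, $t_{ad}<t_r=s_{AD}$, $t_{ac}=t_r>s_{AC}$, $t_{bd}=t_r>s_{BD}$, i.e.\ $ad\in E(\Gu(\scen))$ and $ac,bd\in E(\Ga(\scen))$. Otherwise $t_{ad}=t_r$, covering forces $s_{AD}<t_r$, the species fact gives $s_{AC}=s_{BD}=t_r$, covering forces $t_{ac},t_{bd}<t_r$, and the signs flip to $ad\in E(\Ga(\scen))$, $ac,bd\in E(\Gu(\scen))$. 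In both branches the triangle $\{a,b,d\}$ (containing the $\Gg$-edge $ab$) and $\{a,c,d\}$ (containing the $\Gg$-edge $cd$) carry exactly one edge from each of $\Gu,\Gg,\Ga$, hence are rainbow.

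The step I expect to be delicate is ensuring I never have to compare two divergence times that are \emph{both} strictly below $t_r$, whose order the tree shapes do not determine; this is exactly what would otherwise force a cumbersome split over how many of $A,B,C,D$ coincide. The separation facts are what rescue it: as soon as one side (say $\lca_T(a,d)\prec_T r$) dips below $t_r$, it pins that tree's two remaining apices to $t_r$, and covering then drags the opposite tree's apices below $t_r$ in a complementary pattern, so every comparison that actually decides an edge is between a value equal to $t_r$ and one strictly below it. Verifying that this ``at most one tree dips below $t_r$, and in a complementary way'' phenomenon holds uniformly --- in particular that the degenerate equalities $A=D$, $A=C$, or $B=D$ do not break it --- is the real work; once it is in place, the dichotomy and the rainbow-triangle conclusion are immediate.
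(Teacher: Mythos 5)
Your proposal is correct, but it reaches the dichotomy by a genuinely different route than the paper. The paper's proof first invokes the cograph structure of $\Gu(\scen)$ and $\Ga(\scen)$ (Lemmas~\ref{lem:Gu-cograph} and~\ref{lem:Ga-cograph}) to rule out the configurations with all three of $ac,ad,bd$ in one graph (which would create the induced $P_4$ $b-d-a-c$), and then eliminates the remaining mixed assignments case by case via the informative and forbidden triples of Prop.~\ref{prop:triples} (e.g., $ad\vert c$ and $bd\vert c$ forcing the forbidden $ab\vert c$). You instead work directly with time stamps: you use Lemma~\ref{lem:Ge-T-triples} only to collapse $\lca_T(a,b)=\lca_T(b,c)=\lca_T(c,d)$ to a single apex $r$, transfer the common time $t_r$ to a single species apex $v$ via comparability of ancestors and injectivity of time maps on chains, and then decide all three non-edges simultaneously through the separation/covering dichotomy on whether $\lca_T(a,d)\prec_T r$. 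I checked the points you flagged as delicate and they hold: in the species separation fact, the hypothesis $s_{AD}<t_r$ automatically excludes $A=C$ and $B=D$ (either identification would place that color both inside and outside the child subtree of $v$ containing $A$ and $D$), the degenerate case $A=D$ goes through as you say, and your use of the fact in the first branch is the valid contrapositive ($s_{AC}<t_r$ forces $s_{AD}=t_r$ since $s_{AD}\le t_r$). Your approach buys independence from the cograph lemmas entirely, handles color coincidences uniformly, and recovers as a by-product the tree-shape information ($T_{\vert abcd}$ and the species apex) that the paper only derives afterwards in Lemma~\ref{lem:4col-P4}; the paper's proof is shorter given its established triple machinery and fits the uniform style of Section~\ref{sect:cograph}, but yours is more self-contained.
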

\begin{proof}
  Suppose $\Gg\coloneqq\Gg(\scen)$ contains an induced $P_4 = a-b-c-d$ and,
  therefore, $ac,ad,bd \notin E(\Gg)$.  Since $\Gg$, $\Gu\coloneqq
  \Gu(\scen)$ and $\Ga\coloneqq \Ga(\scen)$ are edge-disjoint, and $\Gu$
  and $\Ga$ are cographs (cf.\ Lemmas~\ref{lem:Gu-cograph}
  and~\ref{lem:Ga-cograph}), the cases $ac,ad,bd \in E(\Gu)$ and $ac,ad,bd
  \in E(\Ga)$ are not possible because otherwise $b-d-a-c$ is an induced
  $P_4$.  Moreover, $ab\vert c, bc\vert a \in \F_T(\graphs(\scen))$ as well
  as $cd\vert b,bc\vert d \in \F_T(\graphs(\scen))$ and thus $T$ displays
  neither of these two triples by Prop.~\ref{prop:triples}.  We consider
  two cases:
  
  If $ad\in E(\Gu)$ then at most one of the edges $ac$ and $bd$ can be
  contained in $\Gu$.  Suppose, for contradiction, that $ac \in E(\Ga)$ and
  $bd \in E(\Gu)$. Then $ad, bd\in E(\Gu)$ and $ac,bc,cd\notin E(\Gu)$.
  Prop.~\ref{prop:triples} implies that $T$ displays the informative
  triples $ad\vert c$ and $bd\vert c$.  Hence, $T$ also displays $ab\vert
  c$; a contradiction to $ab\vert c \in \F_T(\graphs(\scen))$.  By similar
  arguments, $ac \in E(\Gu)$ and $bd \in E(\Ga)$ implies that $T$ displays
  $cd\vert b$; a contradiction to $cd\vert b \in \F_T(\graphs(\scen))$.
  This leaves $ac,bd \in E(\Ga)$ as the only possible case.
  
  If $ad\in E(\Ga)$ then at most one of the edges $ac$ and $bd$ can be
  contained in $\Ga$.  Suppose, for contradiction, that $ac \in E(\Ga)$ and
  $bd \in E(\Gu)$. Then $bd\in E(\Gu)$ and $ab, ad\notin E(\Gu)$.
  Prop.~\ref{prop:triples} implies that $T$ displays $bd\vert a$. Moreover,
  $ac,ad\in E(\Ga)$ and $cd\notin E(\Ga)$ imply that $T$ displays $cd\vert
  a$.  Thus, $T$ displays $bc\vert a$; a contradiction.  By similar
  arguments, $ac \in E(\Gu)$ and $bd \in E(\Ga)$ implies that $T$ displays
  $bc\vert d$; a contradiction to $bc\vert d \in \F_T(\graphs(\scen))$.
  Again, we are left with $ac,bd \in E(\Gu)$ as the only possibility.
  
  In summary, we have $ad\in E(\Gu)$ and $ac,bd \in E(\Ga)$ or $ad\in
  E(\Ga)$ and $ac,bd \in E(\Gu)$, and thus both $\{a,b,d\}$ and $\{a,c,d\}$
  form a rainbow triangle in the edge coloring defined by $\graphs(\scen)$.
\end{proof}
As an immediate consequence, we obtain
\begin{corollary}
  \label{cor:no-rainbow->cograph}
  If the edge-coloring defined by $\graphs(\scen)$ does not contain a
  rainbow triangle, then $\Gg(\scen)$ is a cograph.
\end{corollary}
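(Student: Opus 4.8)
The plan is to argue by contraposition, leveraging the $P_4$-free characterization of cographs from Prop.~\ref{prop:cograph} together with the structural constraint already established in Lemma~\ref{lem:rainbow-triangles}. The entire content of the corollary is packaged in that lemma; the corollary is essentially its contrapositive reading.

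First I would suppose, for contradiction, that $\Gg(\scen)$ is \emph{not} a cograph. By the equivalence of statements (1) and (4) in Prop.~\ref{prop:cograph}, a graph fails to be a cograph precisely when it contains an induced $P_4$. Hence there exist four (necessarily distinct) vertices $a,b,c,d\in L(T)$ inducing a path $a-b-c-d$ in $\Gg(\scen)$.

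Next I would invoke Lemma~\ref{lem:rainbow-triangles} directly on this induced $P_4$. The lemma asserts that, regardless of which of its two described colorings occurs (namely $ad\in E(\Gu(\scen))$ with $ac,bd\in E(\Ga(\scen))$, or $ad\in E(\Ga(\scen))$ with $ac,bd\in E(\Gu(\scen))$), both $\{a,b,d\}$ and $\{a,c,d\}$ are rainbow triangles in the edge-coloring defined by $\graphs(\scen)$. In particular, at least one rainbow triangle exists, contradicting the hypothesis that $\graphs(\scen)$ contains no rainbow triangle. Therefore the assumption that $\Gg(\scen)$ is not a cograph is untenable, and $\Gg(\scen)$ must be a cograph.

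I do not anticipate any genuine obstacle here, since all the work has been done in Lemma~\ref{lem:rainbow-triangles}. The only point requiring care is to observe that the induced $P_4$ furnished by Prop.~\ref{prop:cograph} is exactly the hypothesis of Lemma~\ref{lem:rainbow-triangles}, so that the lemma applies verbatim and its conclusion immediately yields the desired rainbow triangle.
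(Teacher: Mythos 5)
Your proposal is correct and is exactly the paper's intended argument: the paper derives this corollary as an immediate consequence of Lemma~\ref{lem:rainbow-triangles}, using precisely the contrapositive reading via the $P_4$-free characterization of cographs in Prop.~\ref{prop:cograph}. No gaps to report.
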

The converse of Cor.~\ref{cor:no-rainbow->cograph}, however, is not true in
general. A counterexample is given in Fig.~\ref{fig:rdt-triangle}.

\begin{figure}[t]
  \centering
  \includegraphics[width=0.4\textwidth]{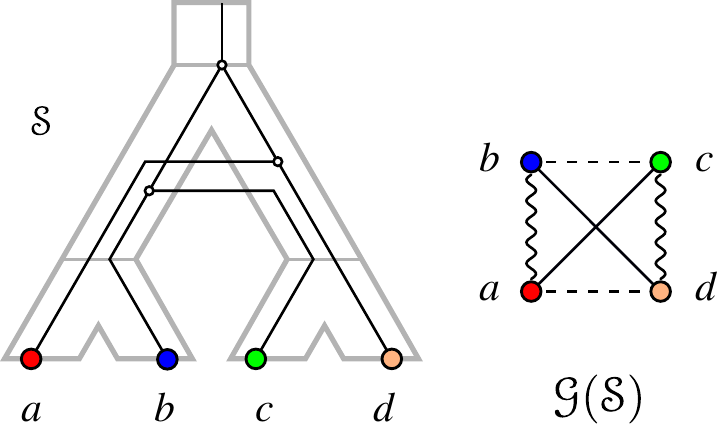}
  \caption{Example of a relaxed scenario $\scen$ and corresponding graph
    3-partition $\graphs(\scen)$ with $\graphs(\scen)$ containing rainbow
    triangles and $\Gg(\scen)$ being a cograph.}
  \label{fig:rdt-triangle}
\end{figure}

\begin{lemma}
  Let $\scen=(T,S,\sigma,\mu,\tT,\tS)$ be a relaxed scenario.  Suppose that $(\Gg(\scen),\sigma)$
  contains an induced $P_4 = a-b-c-d$ on four distinct colors
  $\sigma(a)=A$, $\sigma(b)=B$, $\sigma(c)=C$, and $\sigma(d)=D$. Then,
  exactly one of the following alternatives holds:
  \begin{enumerate}[noitemsep,nolistsep]
  \item[(i)] $ad\in E(\Gu(\scen))$, $ac,bd \in E(\Ga(\scen))$,
    $S_{\vert ABCD} = 
    ((A,C),(B,D))$ and $T_{\vert abcd} = ((a,d),b,c)$ or 
  \item[(ii)] $ad\in E(\Ga(\scen))$, $ac,bd \in E(\Gu(\scen))$, $S_{\vert ABCD} 
    = ((A,D),B,C)$ and $T_{\vert abcd} = ((a,c),(b,d))$.
  \end{enumerate}
  \label{lem:4col-P4}
\end{lemma}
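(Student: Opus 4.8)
The plan is to invoke Lemma~\ref{lem:rainbow-triangles} to reduce to two edge-configurations, and then in each configuration to pin down the two trees: one of them will be forced outright by the informative triples of Prop.~\ref{prop:triples}, while the other only becomes the claimed multifurcation after a quantitative time-collapse argument using the $\Gg$-edge equalities and the strictness of the time map.

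First, since $a-b-c-d$ is an induced $P_4$ in $\Gg(\scen)$, we have $ab,bc,cd\in E(\Gg(\scen))$ and $ac,ad,bd\notin E(\Gg(\scen))$, and as the four colors are pairwise distinct the colored triple statements apply. Lemma~\ref{lem:rainbow-triangles} yields exactly the two alternatives for the placement of $ad,ac,bd$ among $\Gu(\scen)$ and $\Ga(\scen)$. Because $E(\Gu(\scen))$ and $E(\Ga(\scen))$ are disjoint, the edge $ad$ lies in at most one of them, so the two alternatives are mutually exclusive; this establishes the ``exactly one'' clause and leaves only the tree shapes to be determined.

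Consider alternative (i), i.e.\ $ad\in E(\Gu(\scen))$ and $ac,bd\in E(\Ga(\scen))$. Reading off Def.~\ref{def:inf-forb-triples}, condition (b') (using $ac,bd\in E(\Ga)$ together with $ad\in E(\Gu)$ and $ab,bc,cd\in E(\Gg)$) places $AC\vert B$, $AC\vert D$, $BD\vert A$, $BD\vert C$ into $\R_S(\graphs(\scen))$, so by Prop.~\ref{prop:triples} both $\{A,C\}$ and $\{B,D\}$ are clusters of $S$, whence $S_{\vert ABCD}=((A,C),(B,D))$ (e.g.\ via Fact~\ref{obs:triples}(i)). On the gene-tree side, only condition (a) fires, giving $ad\vert b,ad\vert c\in\R_T(\graphs(\scen))$, i.e.\ $\lca_T(a,b)=\lca_T(b,d)$ and $\lca_T(a,c)=\lca_T(c,d)$; this alone does \emph{not} determine $T_{\vert abcd}$. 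To finish, I push the equal-divergence-time equalities through the now-known species tree: since $S_{\vert ABCD}=((A,C),(B,D))$, the pairs $AB$, $BC$, $CD$ are cross-pairs and share one last common ancestor, so $\tS(\lca_S(A,B))=\tS(\lca_S(B,C))=\tS(\lca_S(C,D))$; as $ab,bc,cd\in E(\Gg(\scen))$, this gives $\tT(\lca_T(a,b))=\tT(\lca_T(b,c))=\tT(\lca_T(c,d))$, and combining with the two triple equalities shows that $\lca_T(a,b)$, $\lca_T(b,c)$, $\lca_T(c,d)$, $\lca_T(a,c)$, $\lca_T(b,d)$ share a common $\tT$-value. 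Since each of these is a common ancestor of a leaf shared with a neighbour in the list (hence $\preceq_T$-comparable to it), the strict monotonicity in Def.~\ref{def:timemap} forces all five to coincide in a single node $q$, while $ad\vert b$ gives $\lca_T(a,d)\prec_T q$. Thus $q$ pairwise separates $b$, $c$, and the cluster $\{a,d\}$, yielding $T_{\vert abcd}=((a,d),b,c)$.

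Alternative (ii) follows by the symmetric argument, interchanging the roles of $\Gu(\scen)$ and $\Ga(\scen)$ and of the gene and species trees: here condition (a) for $\R_T(\graphs(\scen))$ (from $ac,bd\in E(\Gu)$) forces $ac\vert b, ac\vert d, bd\vert a, bd\vert c$ and hence $T_{\vert abcd}=((a,c),(b,d))$ outright, whereas condition (b') for $\R_S(\graphs(\scen))$ (from $ad\in E(\Ga)$) only yields $AD\vert B, AD\vert C$; pushing the $\Gg$-equalities through the known $T$ collapses $\lca_S(A,B)$, $\lca_S(B,C)$, $\lca_S(C,D)$, $\lca_S(A,C)$, $\lca_S(B,D)$ to one node and gives $S_{\vert ABCD}=((A,D),B,C)$. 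The main obstacle is precisely this collapse: the informative and forbidden triples alone leave the ``other'' tree underdetermined (several refinements of the claimed multifurcation satisfy every triple constraint), so the quantitative step using equal divergence times and the strictness of $\tT$ (resp.\ $\tS$) is essential and cannot be replaced by triple reasoning.
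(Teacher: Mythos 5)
Your proof is correct, and it follows the paper's strategy for the first half (Lemma~\ref{lem:rainbow-triangles} to get the two edge-configurations, mutual exclusivity from edge-disjointness, and the informative triples of Prop.~\ref{prop:triples} to pin down one of the two trees outright), but it diverges genuinely in the second half. Where the paper determines the remaining, non-binary tree by purely combinatorial triple reasoning --- invoking the \emph{forbidden} triples ($ab\vert c, bc\vert a, bc\vert d, cd\vert b \in \F_T(\graphs)$ in case~(i), and the analogous $\F_S$ triples in case~(ii)) and a short inference step (e.g., $ac\vert b$ together with $ad\vert b$ would force the forbidden $cd\vert b$, so $T_{\vert abc}=(a,b,c)$, whence $T_{\vert abcd}=((a,d),b,c)$) --- you never touch $\F_T$ or $\F_S$ and instead run a quantitative time-collapse: the three $\Gg$-edges transfer the single time stamp of the now-known root of $S_{\vert ABCD}$ (resp.\ of $T_{\vert abcd}$) to the relevant last common ancestors, and pairwise comparability through shared leaves plus strictness of the time map (Def.~\ref{def:timemap}) collapses them to one node, exactly mirroring the technique of Lemma~\ref{lem:2-colP4}. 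Both routes are sound; yours buys independence from the forbidden-triple machinery at the cost of using the metric structure, while the paper's stays entirely within the triple calculus already set up in Def.~\ref{def:inf-forb-triples}.

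One remark in your closing paragraph is wrong, though it does not affect the validity of your proof: you assert that the informative \emph{and forbidden} triples leave the second tree underdetermined, so that the quantitative step ``cannot be replaced by triple reasoning.'' In fact the paper's own proof does exactly this replacement: in case~(i) every proper refinement of $((a,d),b,c)$ compatible with $ad\vert b$ and $ad\vert c$ displays one of the forbidden triples $ab\vert c$, $cd\vert b$, $bc\vert a$, or $bc\vert d$ (e.g., $(((a,d),b),c)$ displays $ab\vert c$ and $((a,d),(b,c))$ displays $bc\vert a$), so the forbidden triples \emph{do} determine the multifurcation uniquely. What is true is only the weaker statement you use implicitly: the informative triples alone do not suffice.
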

\begin{proof}
  Set $\graphs\coloneqq\graphs(\scen)$, $\Gu\coloneqq \Gu(\scen)$,
  $\Gg\coloneqq\Gg(\scen)$, and $\Ga\coloneqq\Ga(\scen)$.  By
  Lemma~\ref{lem:rainbow-triangles}, we have exactly one of the following
  two alternatives (i') $ad\in E(\Gu)$ and $ac,bd \in E(\Ga)$ or (ii')
  $ad\in E(\Ga)$, $ac,bd \in E(\Gu)$.
  
  \textit{Case~(i'):} Since $ac,bd \in E(\Ga)$ and $ab,bc,cd \notin
  E(\Ga)$, $S$ displays the informative triples $AC\vert B, BD\vert C \in
  \R_S(\graphs)$ by Prop~\ref{prop:triples}. Hence, $S_{\vert ABCD} =
  ((A,C),(B,D))$.  Furthermore, by
  Prop.~\ref{prop:triples}, $T$ displays $ad\vert b, ad\vert c\in \R_T(\graphs)$ and
  none of $ab\vert c, bc\vert a, bc\vert d, cd\vert b \in \F_T(\graphs)$.  If $T$ displays
  $ac\vert b$, then this together with $T$ displaying $ad\vert b$ 
   implies that $T$ also displays $cd\vert b$; a
  contradiction. Thus, it holds $T_{\vert abc}=(a,b,c)$.  Together with the
  fact that $T$ displays $ad\vert b$, this implies $T_{\vert abcd} = ((a,d),b,c)$.
  In summary, Case~(i) is satisfied.
  
  \textit{Case~(ii'):} Since $ac,bd \in E(\Gu)$ and $ab,bc,cd \notin
  E(\Ga)$, $T$ displays the informative triples $ac\vert b, bd\vert c \in
  \R_T(\graphs)$ by Prop.~\ref{prop:triples}. Hence, $T_{\vert abcd} =
  ((a,c),(b,d))$.  Furthermore, by
  Prop~\ref{prop:triples}, $S$ displays $AD\vert B, AD\vert C\in \R_S(\graphs)$ and
  none of $AB\vert C, BC\vert A, BC\vert D, CD\vert B \in \F_S(\graphs)$.  Re-using analogous
  arguments as for $T$ in Case~(i'), we conclude that $S_{\vert ABCD} =
  ((A,D),B,C)$.  In summary, Case~(ii) is satisfied.
\end{proof}

\begin{figure}[t]
  \centering
  \includegraphics[width=0.9\textwidth]{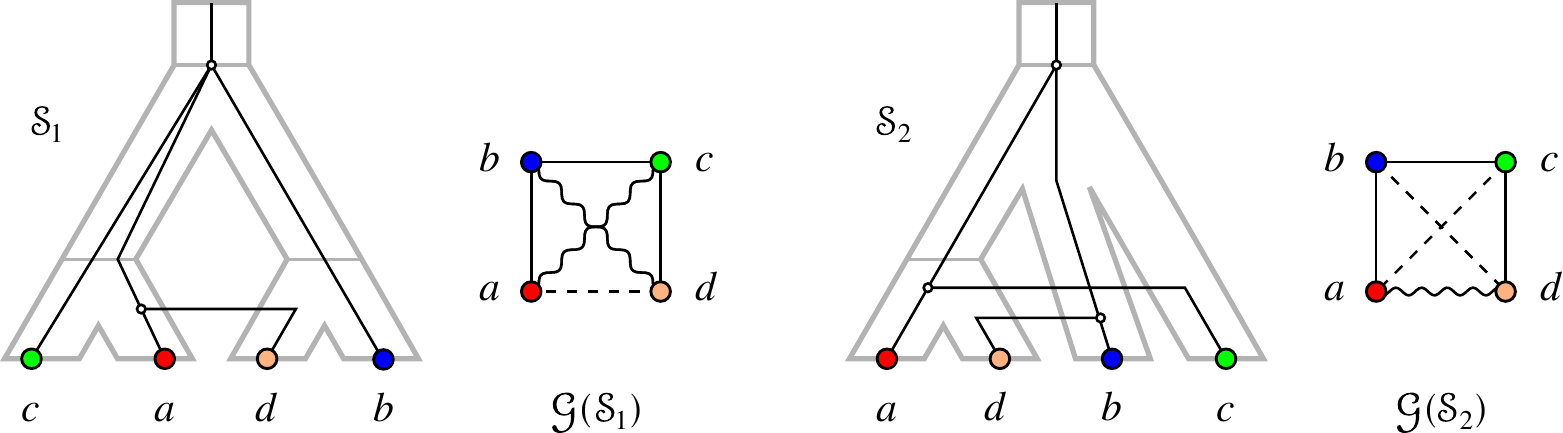}
  \caption{$(\Gg(\scen_1),\sigma)=(\Gg(\scen_2),\sigma)$ contains an
    induced path $P_4 = a-b-c-d$ on four colors as in
    Lemma~\ref{lem:4col-P4}.}
  \label{fig:EDT-P4-4col}
\end{figure}

Cor.~\ref{cor:equal-colors} implies that two adjacent vertices in
$\Gg(\scen)$ cannot have the same color. The $2-$, $3-$ and $4$-colored
$P_4$s considered in Lemmas~\ref{lem:2-colP4}, \ref{lem:3-colP4-1},
\ref{lem:3-colP4-2}, and \ref{lem:4col-P4} therefore cover all possible
colorings of an induced $P_4$ in $(\Gg(\scen),\sigma)$. Moreover, in each
case, the existence of a $P_4$ in $(\Gg(\scen),\sigma)$ implies that at
least one of $S$ and $T$ is non-binary.  We summarize this discussion and
Lemmas~\ref{lem:Gu-cograph} and~\ref{lem:Ga-cograph} in the following
\begin{theorem}\label{thm:binary->Gg-cograph}
  Let $\scen=(T,S,\sigma,\mu,\tT,\tS)$ be a relaxed scenario.  Then
  $\Gu(\scen)$ and $\Ga(\scen)$ are cographs. If both $S$ and $T$ are
  binary trees, then $\Gg(\scen)$ is also a cograph.
\end{theorem}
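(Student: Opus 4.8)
The plan is to dispatch the three assertions separately. The claims that $\Gu(\scen)$ and $\Ga(\scen)$ are cographs require no new work: they are precisely Lemma~\ref{lem:Gu-cograph} and Lemma~\ref{lem:Ga-cograph}, so I would simply invoke these two results and move on.

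The content lies in the binary case, and here my approach is to use the forbidden-subgraph characterization of cographs from Proposition~\ref{prop:cograph}, namely that a graph is a cograph if and only if it contains no induced $P_4$. I would argue by contraposition: assuming $\Gg(\scen)$ is not a cograph, I extract an induced $P_4$ $a-b-c-d$ and show that its existence forces at least one of $S$ and $T$ to be non-binary, contradicting the hypothesis that both are binary.

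To organize the argument I would classify the induced $P_4$ by the number of distinct colors on its vertices. The governing constraint is Corollary~\ref{cor:equal-colors}: since every edge of $\Gg(\scen)$ joins vertices of distinct colors, the restriction of $\sigma$ to $\{a,b,c,d\}$ is a proper coloring of the path $a-b-c-d$, so adjacent vertices never share a color. This already excludes a monochromatic $P_4$ and shows that exactly $2$, $3$, or $4$ colors can appear. I would then route each coloring to its structural lemma: the $2$-colored case is Lemma~\ref{lem:2-colP4}, which yields $T_{\vert abcd}=((a,d),b,c)$ and hence, by Corollary~\ref{cor:2colP4-binary}, a non-binary $T$; the $3$-colored cases are Lemmas~\ref{lem:3-colP4-1} and~\ref{lem:3-colP4-2}, forcing $S_{\vert ABC}=(A,B,C)$ (non-binary $S$) and $T_{\vert abcb'}=((a,b'),b,c)$ (non-binary $T$), respectively; and the $4$-colored case is Lemma~\ref{lem:4col-P4}, whose two alternatives each produce a non-binary $S$ or $T$. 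In every case the hypothesis of binary $S$ and $T$ is contradicted, so $\Gg(\scen)$ admits no induced $P_4$ and is therefore a cograph.

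The one delicate point I anticipate is verifying that these four cited lemmas genuinely exhaust all admissible colorings. A proper $2$-coloring of $P_4$ is unique up to a swap, giving $\sigma(a)=\sigma(c)$ and $\sigma(b)=\sigma(d)$; with three colors exactly one non-adjacent pair among $\{a,c\}$, $\{a,d\}$, $\{b,d\}$ coincides. Lemmas~\ref{lem:3-colP4-1} and~\ref{lem:3-colP4-2} treat the pairs $\{a,d\}$ and $\{b,d\}$ directly, and the remaining pair $\{a,c\}$ is covered by the reversal symmetry $a\leftrightarrow d$, $b\leftrightarrow c$ of the path, under which it maps to the $\{b,d\}$ case while preserving both the induced-$P_4$ property and the (non-)binarity of $S$ and $T$. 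Checking that this symmetry leaves no gap is the step I would scrutinize most carefully; beyond it, the proof is a direct appeal to the already-established lemmas.
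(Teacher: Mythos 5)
Your proposal is correct and follows essentially the same route as the paper, which proves the theorem exactly by combining Lemmas~\ref{lem:Gu-cograph} and~\ref{lem:Ga-cograph} with the case analysis of properly colored induced $P_4$s in Lemmas~\ref{lem:2-colP4}, \ref{lem:3-colP4-1}, \ref{lem:3-colP4-2}, and~\ref{lem:4col-P4}, each of which forces $S$ or $T$ to be non-binary. Your explicit check that the three-color case with $\sigma(a)=\sigma(c)$ reduces to Lemma~\ref{lem:3-colP4-2} via the reversal symmetry $a\leftrightarrow d$, $b\leftrightarrow c$ is sound and makes precise the exhaustiveness claim that the paper states without elaboration.
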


In the case of HGT-free scenarios, the condition that $S$ and $T$ are
binary is no longer necessary:
\begin{lemma}
  \label{lem:HGT-free-Gg-cograph}
  Let $\scen$ be a relaxed scenario without HGT-edges. Then $\Gg(\scen)$ is
  a cograph.
\end{lemma}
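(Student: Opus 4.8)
The plan is to show that $\Gg(\scen)$ contains no induced $P_4$, hence is a cograph by Proposition~\ref{prop:cograph}(4). I would argue by contradiction: suppose $\Gg(\scen)$ contains an induced $P_4$ on vertices $a-b-c-d$. The key leverage comes from Corollary~\ref{cor:Gu-edgeless}, which tells us that in an HGT-free scenario the LDT graph $\Gu(\scen)$ is \emph{edgeless}. Consequently the complement of $\Gg(\scen)$ coincides with $\Ga(\scen)$ on all non-edges of $\Gg(\scen)$; that is, every non-edge of the $P_4$ --- namely $ac$, $ad$, and $bd$ --- must lie in $E(\Ga(\scen))$. This already rules out the rainbow-triangle alternatives of Lemma~\ref{lem:rainbow-triangles}, since that lemma requires one of the non-edges to sit in $\Gu(\scen)$, which is now impossible. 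So the structural dichotomy from Lemma~\ref{lem:rainbow-triangles} degenerates, and I expect this collapse to be the crux that forces a contradiction.

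More concretely, I would exploit Lemma~\ref{lem:order}, which is available precisely because the scenario is HGT-free: for any two leaves, $\lca_S(\sigma(x),\sigma(y)) = \mu(\lca_T(x,y))$ \emph{if and only if} $xy \in E(\Gg)$, while $\tS(\lca_S(\sigma(x),\sigma(y))) \le \tT(\lca_T(x,y))$ always holds. Thus in the HGT-free case the EDT graph records exactly the pairs whose gene-tree $\lca$ maps \emph{onto} the corresponding species-tree $\lca$, and every other pair lies in $\Ga(\scen)$ with a \emph{strict} inequality. This gives a clean monotone structure to work with. From the assumed induced $P_4$ I would extract the triples forced on $T$ and $S$ via Proposition~\ref{prop:triples}: the informative triples $ac\vert b$ and $bd\vert c$ (from $ac,bd \in E(\Ga)$ together with the appropriate non-edges) pin down $T_{\vert abcd} = ((a,c),(b,d))$, so $\lca_T(a,d)=\lca_T(b,c)$.

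The main obstacle, and the step I would spend the most care on, is deriving the numerical contradiction from the timing inequalities, mirroring the end of the proof of Lemma~\ref{lem:Ga-cograph} but now with the added rigidity of Lemma~\ref{lem:order}. Having $\lca_T(a,d)=\lca_T(b,c)$ and $bc \in E(\Gg)$ gives $\tT(\lca_T(a,d)) = \tT(\lca_T(b,c)) = \tS(\lca_S(B,C))$. I would then show that the species-tree $\lca$ structure forces $\tS(\lca_S(A,D)) = \tS(\lca_S(B,C))$ --- either by a direct case analysis on the colors (as in Lemma~\ref{lem:Ga-cograph}, where $S_{\vert ABCD}=((A,C),(B,D))$ yields $\lca_S(A,D)=\lca_S(B,C)$), or, more uniformly, by invoking Lemma~\ref{lem:order} in the direction that $ad \in E(\Ga)$ requires $\tS(\lca_S(A,D)) < \tT(\lca_T(a,d))$. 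Combining these would yield $\tT(\lca_T(a,d)) = \tS(\lca_S(A,D))$, i.e.\ $ad \in E(\Gg)$, contradicting that $ad$ is a non-edge of the $P_4$. Since the $P_4$ was arbitrary, $\Gg(\scen)$ is $P_4$-free and therefore a cograph.
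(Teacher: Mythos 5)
Your overall strategy is sound, and in fact your first paragraph alone already constitutes a complete proof: once Cor.~\ref{cor:Gu-edgeless} makes $\Gu(\scen)$ edgeless, an induced $P_4$ in $\Gg(\scen)$ would contradict Lemma~\ref{lem:rainbow-triangles}, which forces at least one of the non-edges $ac,ad,bd$ into $\Gu(\scen)$ in either of its two alternatives. You could stop there. The paper instead takes a two-line shortcut that you circle around but never state: with $\Gu(\scen)$ edgeless, $\Gg(\scen)$ \emph{is} the complement of $\Ga(\scen)$, which is a cograph by Lemma~\ref{lem:Ga-cograph}, and cographs are closed under complementation by Prop.~\ref{prop:cograph}. (Equivalently, in your setup the three non-edges of the $P_4$ form the induced path $c-a-d-b$ in $\Ga(\scen)$, contradicting Lemma~\ref{lem:Ga-cograph} directly.) So your route buys nothing over the paper's, but it is legitimate: Lemmas~\ref{lem:Ga-cograph} and~\ref{lem:rainbow-triangles} are established earlier and independently, so there is no circularity.

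The ``more concrete'' elaboration, however, contains a genuine slip, even though it is moot given the above. The triples $ac\vert b$ and $bd\vert c$ are \emph{not} informative here: by Def.~\ref{def:inf-forb-triples}, $ac\vert b\in\R_T$ would require either $ac\in E(\Gu)$ (impossible, $\Gu$ is edgeless) or $ab,bc\in E(\Ga)$ (impossible, these are $\Gg$-edges). You have transplanted the triples from Lemma~\ref{lem:4col-P4}(ii), whose hypothesis is $ac,bd\in E(\Gu)$ --- the opposite of your situation. From $ac,ad,bd\in E(\Ga)$, Lemma~\ref{lem:Ga-T-triples} instead yields $cd\vert a$ (from $ca,da\in E(\Ga)$, $cd\notin E(\Ga)$) and $ab\vert d$ (from $ad,bd\in E(\Ga)$, $ab\notin E(\Ga)$), so $T_{\vert abcd}=((a,b),(c,d))$, not $((a,c),(b,d))$. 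Fortunately the only fact you use downstream, $\lca_T(a,d)=\lca_T(b,c)$, holds for either topology, and the species-side triples one actually gets ($AC\vert B$ and $BD\vert C$ via Def.~\ref{def:inf-forb-triples}(b')) do give $\lca_S(A,D)=\lca_S(B,C)$ in the four-color case, so the timing contradiction $\tT(\lca_T(a,d))=\tS(\lca_S(A,D))$, i.e.\ $ad\in E(\Gg(\scen))$, is repairable (with the color-coincidence cases handled as in Lemma~\ref{lem:Ga-cograph}). But as written, that intermediate step is wrong and should either be corrected or, better, deleted in favor of the one-step complementation argument.
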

\begin{proof}
  By Cor.~\ref{cor:Gu-edgeless}, $\Gu(\scen)$ is edge-less. Therefore,
  $\Gg(\scen)$ is the complement of the cograph $\Ga(\scen)$
  (cf.\ Lemma~\ref{lem:Ga-cograph}) and thus, by Prop.\ \ref{prop:cograph},
  also a cograph.
\end{proof}

The similarities of $\graphs$ and edge-colored permutations graphs noted
above naturally lead to the question whether $\Gg$ is a permutation
graph. The example in Figure~\ref{fig:EDT-P5} shows that this is not the
case, however: The cycle on six vertices, $C_6$, is not a permutation graph
\cite{Gallai:67}.

\begin{figure}[t]
  \centering
  \includegraphics[width=0.9\textwidth]{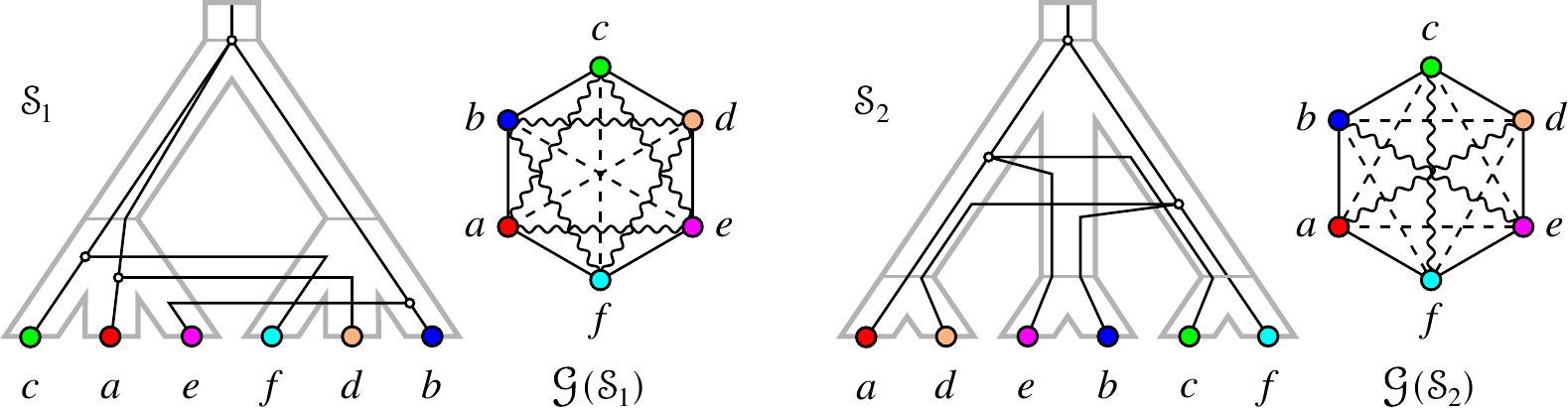}
  \caption{The EDT graph may contain an induced $C_6$, i.e, a cycle on six
    vertices. In this case, the EDT graph also contains induced $P_5$s.}
  \label{fig:EDT-P5}
\end{figure}

\begin{lemma}
  \label{lem:no-P6}
  If $\scen$ is a relaxed scenario, then $\Gg(\scen)$ does not contain an
  induced $P_6$.
\end{lemma}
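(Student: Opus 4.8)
The plan is to argue by contradiction: suppose $\Gg(\scen)$ contains an induced $P_6$, which I label $v_1 - v_2 - v_3 - v_4 - v_5 - v_6$. The engine of the proof is Lemma~\ref{lem:rainbow-triangles} applied to the three ``sliding windows'' of four consecutive vertices, namely $v_1 v_2 v_3 v_4$, $v_2 v_3 v_4 v_5$, and $v_3 v_4 v_5 v_6$, each of which is an induced $P_4$ in $\Gg(\scen)$. Throughout I will use that, since the $P_6$ is induced, the only edges of $\Gg(\scen)$ among $\{v_1,\dots,v_6\}$ are the five path edges, so every other pair lies in $\Gu(\scen)\cup\Ga(\scen)$.

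For a window $v_i - v_{i+1} - v_{i+2} - v_{i+3}$, Lemma~\ref{lem:rainbow-triangles} places the two ``distance-$2$'' pairs $v_i v_{i+2}$ and $v_{i+1} v_{i+3}$ in a common graph $X_i\in\{\Gu(\scen),\Ga(\scen)\}$ and the ``distance-$3$'' pair $v_i v_{i+3}$ in the other graph. First I would observe that consecutive windows overlap in a distance-$2$ pair: the pair $v_2 v_4$ lies in both $X_1$ and $X_2$, and $v_3 v_5$ lies in both $X_2$ and $X_3$. Hence $X_1=X_2=X_3=:X$, so all four distance-$2$ pairs $v_1 v_3, v_2 v_4, v_3 v_5, v_4 v_6$ lie in $X$, while the three distance-$3$ pairs $v_1 v_4, v_2 v_5, v_3 v_6$ lie in the complementary graph, which I denote $\overline{X}$. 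Since $\Gu(\scen)$ and $\Ga(\scen)$ play symmetric roles (both are cographs by Lemmas~\ref{lem:Gu-cograph} and~\ref{lem:Ga-cograph}), I may assume $X=\Ga(\scen)$ and $\overline{X}=\Gu(\scen)$.

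Next I would pin down the status of the end-to-end pair $v_1 v_6$. Inducedness gives $v_1 v_6\notin E(\Gg(\scen))$, so $v_1 v_6$ lies in $\Gu(\scen)$ or $\Ga(\scen)$. It cannot lie in $\Ga(\scen)$: together with $v_1 v_3, v_4 v_6\in E(\Ga(\scen))$ this would make $v_3 - v_1 - v_6 - v_4$ an induced $P_4$ in $\Ga(\scen)$, since the three non-path pairs $v_3 v_6, v_3 v_4, v_1 v_4$ all avoid $\Ga(\scen)$ (they lie in $\Gu(\scen)$, $\Gg(\scen)$, and $\Gu(\scen)$, respectively); this contradicts that $\Ga(\scen)$ is a cograph. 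Hence $v_1 v_6\in E(\Gu(\scen))$. But now $v_1 v_4, v_3 v_6\in E(\Gu(\scen))$ together with $v_1 v_6$ yield the induced path $v_4 - v_1 - v_6 - v_3$ in $\Gu(\scen)$, whose non-path pairs $v_4 v_6\in E(\Ga(\scen))$, $v_3 v_4\in E(\Gg(\scen))$, and $v_1 v_3\in E(\Ga(\scen))$ all lie outside $\Gu(\scen)$, contradicting that $\Gu(\scen)$ is a cograph.

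I expect the main obstacle to be bookkeeping rather than a deep idea: the crux is recognizing that the overlap of consecutive windows forces a single global choice of $X$, after which the $P_4$-freeness of the cographs $\Gu(\scen)$ and $\Ga(\scen)$ does all the work. Care is needed to invoke inducedness of the $P_6$ when placing $v_1 v_6$ outside $\Gg(\scen)$, and to verify in each asserted induced $P_4$ that all three ``missing'' pairs genuinely avoid the relevant graph; pleasantly, the distance-$4$ and distance-$5$ pairs $v_1 v_5$ and $v_2 v_6$ never need to be resolved.
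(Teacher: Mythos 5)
Your proof is correct and takes essentially the same route as the paper's: both propagate Lemma~\ref{lem:rainbow-triangles} across the three overlapping induced $P_4$ windows (using the shared distance-$2$ pair to force one global choice $X\in\{\Gu(\scen),\Ga(\scen)\}$) and then derive a contradiction from the end-to-end pair $v_1v_6$ via an induced $P_4$ in the cograph $\Gu(\scen)$ or $\Ga(\scen)$. Your WLOG on $X$ and your explicit exclusion of $v_1v_6\in E(\Gg(\scen))$ up front are just minor reorganizations of the paper's ``Case~(ii) is analogous'' and its closing step.
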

\begin{proof}
  Set $\Gu\coloneqq \Gu(\scen)$, $\Gg\coloneqq\Gg(\scen)$, and
  $\Ga\coloneqq\Ga(\scen)$.  Suppose, for contradiction, that $\Gg$
  contains an induced $P_6=a-b-c-d-e-f$ (where the colors of these six
  vertices are not necessarily all distinct).  Since $a-b-c-d$ is an
  induced $P_4$ in $\Gg$ in this case, Lemma~\ref{lem:rainbow-triangles}
  implies that either (i) $ad\in E(\Gu)$ and $ac,bd \in E(\Ga)$ or (ii)
  $ad\in E(\Ga)$ and $ac,bd\in E(\Gu)$. Consider Case~(i).  Since $b-c-d-e$
  is an induced $P_4$ in $\Gg$ and $bd\in E(\Ga)$,
  Lemma~\ref{lem:rainbow-triangles} implies $be\in E(\Gu)$ and $ce \in
  E(\Ga)$. Repeating this argument for the induced $P_4$ $c-d-e-f$ in $\Gg$
  now yields $cf\in E(\Gu)$ and $df \in E(\Ga)$. Consider the pair $af$. If
  $af\in E(\Gu)$, then $\Gu$ contains the induced $P_4$ $d-a-f-c$, a
    contradiction to Lemma~\ref{lem:Gu-cograph}. Similarly, if $af\in
    E(\Ga)$, then $\Gu$ contains the induced $P_4$ $d-f-a-c$, a
    contradiction to Lemma~\ref{lem:Ga-cograph}.  Thus, only $af\in
  E(\Gg)$ remains, which contradicts that $a-b-c-d-e-f$ is an induced $P_6$
  in $\Gg$.  Case~(ii) is not possible for analogous reasons.  Hence,
  $\Gg(\scen)$ cannot contain an induced $P_6$.
\end{proof}
$P_6$-free graphs have been characterized in \cite{Liu:07,vantHof:10}.
Since any induced $P_k$ with $k\ge 6$ also contains an induced $P_6$,
Lemma~\ref{lem:no-P6} implies that the longest possible induced path in an
EDT graph has $5$ vertices. Figure~\ref{fig:EDT-P5} shows that this
situation can indeed be realized. In particular, the $P_5$s in these
examples are part of induced cycles on six vertices. Using
Lemma~\ref{lem:rainbow-triangles} and the arguments in the proof of
Lemma~\ref{lem:no-P6}, we can conclude that $\graphs(\scen_1)$ and
$\graphs(\scen_2)$, as shown in Fig.~\ref{fig:EDT-P5}, are the only two
configurations for an induced $C_6$ that can appear in an EDT graph.

A graph is \emph{odd-hole free} it it does not contain an induced cycle of
odd length greater than three \cite{Conforti:04}.
\begin{proposition} 
  \label{prop:odd-hole-free}
  If $\scen$ is a relaxed scenario, then $\Gg(\scen)$ does not contain an
  induced $C_5$ and induced cycles $C_\ell$ on $\ell \geq 7$ vertices.  In
  particular, EDT graphs are odd-hole free.
\end{proposition}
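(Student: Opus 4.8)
The plan is to treat the cycle length $\ell$ in two regimes and reduce each to machinery already available, namely the $P_6$-freeness of Lemma~\ref{lem:no-P6} and the rainbow-triangle dichotomy of Lemma~\ref{lem:rainbow-triangles}. For $\ell\geq 7$ the argument is immediate: in any induced cycle $C_\ell = v_1 v_2\cdots v_\ell$, the six consecutive vertices $v_1,\dots,v_6$ span only the cycle edges $v_1v_2,\dots,v_5v_6$ together with the possible chord $\{v_1,v_6\}$, and this chord is a cycle edge only when $\ell=6$. Hence for $\ell\geq 7$ these six vertices induce a $P_6$, contradicting Lemma~\ref{lem:no-P6}. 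This rules out every $C_\ell$ with $\ell\geq 7$ (of either parity) in one stroke, leaving only the $C_5$ to be excluded.

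For the remaining case, suppose $\Gg(\scen)$ contains an induced five-cycle $a-b-c-d-e-a$. Its five chords are exactly the distance-two pairs $ac,bd,ce,ad,be$, and because the cycle is induced each chord is a non-edge of $\Gg(\scen)$, hence belongs to $E(\Gu(\scen))$ or $E(\Ga(\scen))$; write $\kappa(\cdot)\in\{\Gu(\scen),\Ga(\scen)\}$ for this membership. The key observation is that every four consecutive vertices of the pentagon induce a $P_4$ in $\Gg(\scen)$, so I would apply Lemma~\ref{lem:rainbow-triangles} to each of the five rotations. For the $P_4$ $a-b-c-d$ the lemma places the two inner chords $ac,bd$ in one common class and the endpoint chord $ad$ in the other, so $\kappa(ac)=\kappa(bd)$ and $\kappa(ad)\neq\kappa(ac)$. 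Recording only the inner-chord coincidence across all five $P_4$s $abcd,bcde,cdea,deab,eabc$ produces the chain $\kappa(ac)=\kappa(bd)=\kappa(ce)=\kappa(ad)=\kappa(be)=\kappa(ac)$, forcing all five chords into a single class; this collides with the endpoint inequality $\kappa(ad)\neq\kappa(ac)$ supplied by the very first $P_4$. Hence no induced $C_5$ exists.

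Together the two cases show that $\Gg(\scen)$ has no induced cycle of odd length at least five, which is precisely odd-hole freeness, while the exclusion of all $C_\ell$ with $\ell\geq 7$ is the stated strengthening. I expect the $\ell\geq 7$ regime to be routine. The delicate part is the $C_5$ argument: one must track which chord plays the endpoint role in each of the five $P_4$s so that the inner-chord equalities genuinely close into a cycle and meet a single endpoint inequality. A clean way to organize this bookkeeping is to observe that the five chords themselves form the complementary pentagon $a-c-e-b-d-a$, along which the rainbow-triangle constraints propagate, making the ``all equal versus one unequal'' contradiction transparent.
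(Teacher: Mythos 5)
Your proof is correct and follows essentially the same route as the paper: the $\ell\geq 7$ case via $P_6$-freeness (Lemma~\ref{lem:no-P6}) and the $C_5$ case by propagating the dichotomy of Lemma~\ref{lem:rainbow-triangles} around the rotations of the pentagon's induced $P_4$s. The only difference is bookkeeping: the paper propagates through three rotations with an explicit two-case split, while you close an equality chain over all five rotations, which is a harmless repackaging of the same contradiction.
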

\begin{proof}
  Set $\Gu\coloneqq \Gu(\scen)$, $\Gg\coloneqq\Gg(\scen)$, and
  $\Ga\coloneqq\Ga(\scen)$.  Suppose, for contradiction, that $\Gg$
  contains an induced $C_5$ on vertices $a,b,c,d,e$ with $ab,bc,cd,de,ea\in
  E(\Gg)$. Thus, $a-b-c-d$ is an induced $P_4$ in $\Gg$ and
  Lemma~\ref{lem:rainbow-triangles} implies that either (i) $ad\in E(\Gu)$
  and $ac,bd \in E(\Ga)$ or (ii) $ad\in E(\Ga)$ and $ac,bd\in E(\Gu)$.  In
  Case~(i), we have $ad\in E(\Gu)$ and $ac,bd \in E(\Ga)$.  Since $b-c-d-e$
  is an induced $P_4$ in $\Gg$ and $bd\in E(\Ga)$,
  Lemma~\ref{lem:rainbow-triangles} implies $be\in E(\Gu)$ and $ce \in
  E(\Ga)$. Repeating this argument for the induced $P_4$ $c-d-e-a$ in $\Gg$
  now yields $ac\in E(\Gu)$; a contradiction.  Case~(ii) is not possible
  for analogous reasons.  Hence, $\Gg(\scen)$ cannot contain an induced
  $C_5$. Moreover, by Lemma \ref{lem:no-P6}, $\Gg(\scen)$ does not contain
  induced $P_6$s. Since every induced $C_{\ell}$ with $\ell\ge 7$ contains
  an induced $P_6$, such induced cycles cannot be part of an EDT graph. In
  particular, this implies that EDT graphs are odd-hole free.
\end{proof}

Prop.~\ref{prop:odd-hole-free} implies that not every $P_6$-free graph
$(G,\sigma)$ is an EDT graph, even if we restrict ourselves to
properly-colored graphs.  In particular, the cycle on $5$ vertices with
pairwise distinct colors is a properly colored $P_6$-free graph that is not
an EDT graph.  Moreover, the example in Fig.~\ref{fig:EDT-P5} shows that an
EDT graph may contain induced $C_6$s, i.e., they are in general not
even-hole free. Moreover, EDT graphs may contain induced $C_4$s. To see
this, consider the trees $T=((a_1,a_2),(b_1.b_2))$, $S=(A,B)$ and assume
that $\sigma(a_i)=A$ and $\sigma(b_i)=B$, $1\leq i\leq 2$. Now put
$\mu(\rho_T)=\rho_S$ and $\mu(\lca_T(a_1,a_2))=\rho_SA$ and
$\mu(\lca_T(b_1,b_2))=\rho_SB$.  It is now an easy exercise to verify that
$a_1, b_1, a_2, b_2$ form an induced $C_4$ in $\Gg$.

A graph $G$ is \emph{perfect}, if the chromatic number of every induced
subgraph equals the order of the largest clique of that subgraph
\cite{Berge:61}. A \emph{Berge graph} is a graph that contains no odd-hole
and no odd-antihole (the complement of an odd-hole) \cite{Chudnovsky:05}.
The strong perfect graph theorem \cite{Chudnovsky:06} asserts that a graph
is perfect iff it is a Berge graph. 

\begin{proposition}
  \label{prop:EDT-perfect-graph}
  If $\scen$ is a relaxed scenario, then $\Gg(\scen)$ is a perfect graph.
\end{proposition}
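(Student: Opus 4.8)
The plan is to invoke the strong perfect graph theorem, already quoted in the excerpt: a graph is perfect if and only if it is Berge, so it suffices to show that $\Gg(\scen)$ contains neither an odd hole nor an odd antihole. The first half is exactly Prop.~\ref{prop:odd-hole-free}, which rules out an induced $C_5$ and induced $C_\ell$ for odd $\ell\ge 7$ (triangles are not holes). Hence the whole task reduces to proving that $\Gg(\scen)$ contains no induced odd antihole $\overline{C_\ell}$. The case $\ell=5$ is free, since $\overline{C_5}=C_5$ is itself an odd hole and was already excluded; so the real work concerns antiholes $\overline{C_\ell}$ with odd $\ell\ge 7$.

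A first instinct is to mimic Lemma~\ref{lem:no-P6} and locate a forbidden induced $P_6$ inside the antihole, but this is doomed: $\overline{C_\ell}$ contains no induced $P_6$ at all, because an induced $P_6$ in $\overline{C_\ell}$ would force $C_\ell$ to contain an induced $\overline{P_6}$, whereas $\overline{P_6}$ has a vertex of degree $4$ and every induced subgraph of the $2$-regular graph $C_\ell$ has maximum degree at most $2$. I therefore plan a direct cyclic argument driven by Lemma~\ref{lem:rainbow-triangles}. Label the antihole vertices $v_0,\dots,v_{\ell-1}$ so that the \emph{non}-edges of $\Gg(\scen)$ are exactly the consecutive pairs $\{v_i,v_{i+1}\}$ (indices mod $\ell$). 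Since these pairs are non-edges of $\Gg(\scen)$, each lies in $\Gu(\scen)$ or $\Ga(\scen)$, and I record the color $c_i\in\{\Gu,\Ga\}$ of $\{v_i,v_{i+1}\}$.

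The key observation is that, for every $i$, the four vertices $v_i,v_{i+1},v_{i+2},v_{i+3}$ induce in $\Gg(\scen)$ precisely the path $v_{i+2}-v_i-v_{i+3}-v_{i+1}$: among these four, the three non-consecutive pairs $\{v_i,v_{i+2}\}$, $\{v_i,v_{i+3}\}$, $\{v_{i+1},v_{i+3}\}$ are $\Gg$-edges while the three consecutive pairs are non-edges, which is valid for $\ell\ge 5$. Applying Lemma~\ref{lem:rainbow-triangles} to this induced $P_4=a\!-\!b\!-\!c\!-\!d$ with $a=v_{i+2}$, $b=v_i$, $c=v_{i+3}$, $d=v_{i+1}$, the edges $ad$, $ac$, $bd$ of the lemma correspond to the consecutive pairs $\{v_{i+1},v_{i+2}\}$, $\{v_{i+2},v_{i+3}\}$, $\{v_i,v_{i+1}\}$, i.e.\ to $c_{i+1}$, $c_{i+2}$, $c_i$. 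Both alternatives of the lemma then give $c_i=c_{i+2}\ne c_{i+1}$, and in particular $c_i\ne c_{i+1}$ for every $i$.

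Consequently $i\mapsto c_i$ is a proper $2$-coloring of the cycle on the $\ell$ consecutive pairs, which is impossible because $\ell$ is odd; this is the desired contradiction. I expect the main obstacle to be purely the bookkeeping of the previous paragraph: correctly identifying the induced $P_4$ and matching the roles of $ad$, $ac$, $bd$ in Lemma~\ref{lem:rainbow-triangles} to the right consecutive pairs, so that both of its cases collapse to the single relation $c_i=c_{i+2}\ne c_{i+1}$. Once that is pinned down, the parity obstruction closes the argument, and together with the odd-hole-freeness of Prop.~\ref{prop:odd-hole-free} this shows that $\Gg(\scen)$ is Berge, hence perfect.
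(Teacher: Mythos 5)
Your proof is correct and takes essentially the same route as the paper's: both reduce via the strong perfect graph theorem and Prop.~\ref{prop:odd-hole-free} to excluding odd antiholes, and both apply Lemma~\ref{lem:rainbow-triangles} to the induced $P_4$ formed by four consecutive vertices of the complementary odd cycle, closing with the parity obstruction on its $\{\Gu,\Ga\}$-edge-coloring. The only difference is organizational: the paper first uses parity to locate two incident same-colored cycle edges and then contradicts the rainbow-triangle conclusion, whereas you first derive the alternation $c_i\ne c_{i+1}$ from the lemma and then contradict parity --- contrapositive arrangements of the same argument.
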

\begin{proof}
  By Prop.~\ref{prop:odd-hole-free}, $\Gg(\scen)$ is odd-hole free. By the
  strong perfect graph theorem, it suffices, therefore, to show that
  $\Gg(\scen)$ does not contain an odd-antihole.  Assume, for
  contradiction, that $\Gg(\scen)$ contains an odd-antihole $K$. Its
  complement $\overline K$ is, thus, an odd cycle that is entirely composed
  of edges of $\Gu(\scen)$ and $\Ga(\scen)$. Since $\overline K$ is a cycle
  of odd length $\ge5$, the edges along this cycle cannot be alternatingly
  taken from $\Gu(\scen)$ and $\Ga(\scen)$. In other words, in $\overline
  K$ there are at least two incident edges $ab,bc$ that are either both
  contained in $\Gu(\scen)$ or $\Ga(\scen)$. In addition, $\overline K$
  must contain an edge $cd$ and thus, $cd\notin E(\Gg(\scen))$. This,
  however, implies that $\Gg(\scen)$ contains an induced $P_4$
  $c-a-d-b$. By Lemma~\ref{lem:rainbow-triangles}, $\{c, a, b\}$
    should induce a rainbow triangle, which is a contradiction since $ab$
    and $bc$ are both either in the graph $\Gu(\scen)$ or $\Ga(\scen)$.
\end{proof}
Since perfect graphs are closed under complementation we obtain
\begin{corollary}
  If $\scen$ is a relaxed scenario, then $\Gu(\scen)\cup \Ga(\scen)$ is a
  perfect graph.
\end{corollary}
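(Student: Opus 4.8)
The plan is to reduce the statement to Proposition~\ref{prop:EDT-perfect-graph} via a single structural observation: the graph $\Gu(\scen)\cup\Ga(\scen)$ is nothing but the complement of the EDT graph $\Gg(\scen)$. Indeed, by construction the edge sets of $\Gu(\scen)$, $\Gg(\scen)$, and $\Ga(\scen)$ are pairwise disjoint and their union is $\binom{L(T)}{2}$, i.e., they form a graph $3$-partition. Consequently the edges not belonging to $\Gg(\scen)$ are precisely those of $\Gu(\scen)$ together with those of $\Ga(\scen)$, so that $\Gu(\scen)\cup\Ga(\scen)$ and $\overline{\Gg(\scen)}$ share the same vertex set $L(T)$ and the same edge set. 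Hence $\Gu(\scen)\cup\Ga(\scen)=\overline{\Gg(\scen)}$.

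With this identity in hand, I would invoke Proposition~\ref{prop:EDT-perfect-graph}, which guarantees that $\Gg(\scen)$ is perfect, together with the fact that the class of perfect graphs is closed under complementation. The latter is immediate within the framework already set up in the paper: the Strong Perfect Graph Theorem characterizes perfect graphs as Berge graphs, and the Berge condition (containing neither an odd-hole nor an odd-antihole) is manifestly self-complementary, since complementation interchanges holes and antiholes while preserving their length and hence their parity. Therefore $\overline{\Gg(\scen)}=\Gu(\scen)\cup\Ga(\scen)$ is again perfect, which is the claim.

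There is essentially no obstacle to overcome here; the entire content lies in spotting the complement identity, which in turn rests only on the defining property of a graph $3$-partition. The one point that deserves a moment's care is to confirm that $\Gu(\scen)\cup\Ga(\scen)$ is taken on the full vertex set $L(T)$ (rather than merely on the vertices incident to such edges), so that its edge complement within the complete graph on $L(T)$ really coincides with $\Gg(\scen)$; this is guaranteed by the fact that all three graphs in $\graphs(\scen)$ share the vertex set $L(T)$ by definition.
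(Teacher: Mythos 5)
Your proposal is correct and matches the paper's proof exactly: the paper also observes that $\Gu(\scen)\cup\Ga(\scen)=\overline{\Gg(\scen)}$ by the $3$-partition property and concludes from Prop.~\ref{prop:EDT-perfect-graph} together with the closure of perfect graphs under complementation. Your justification of that closure via the self-complementarity of the Berge condition is a valid instantiation of the fact the paper simply cites.
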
	
The converse of Prop.~\ref{prop:EDT-perfect-graph} does not hold as shown
by the examples in Fig.~\ref{fig:no-EDT}, even under the restriction to
properly-colored graphs. Suppose the graph $(G,\sigma)$ in
Fig.~\ref{fig:no-EDT}(A) is explained by a relaxed scenario $\scen$. Put
$A\coloneqq \sigma(a)=\sigma(a')$, $B\coloneqq \sigma(b)=\sigma(b')$,
$C\coloneqq \sigma(c)=\sigma(c')$, and $D\coloneqq \sigma(d)=\sigma(d')$.
By Lemma~\ref{lem:4col-P4}, the induced $P_4 = a-b-c-d$ implies that
$S_{\vert ABCD} = ((A,C),(B,D))$ or $S_{\vert ABCD} = ((A,D),B,C)$, and the induced
$P_4=c'-a'-d'-b'$ implies that $S_{\vert ABCD} = ((A,B),(C,D))$ or $S_{\vert ABCD} =
((B,C),A,D)$; a contradiction. Clearly, $G$ contains no odd hole and no odd
antihole and, thus, it is a perfect graph.  Moreover, it is not
sufficient to require that $(G',\sigma)$ is a properly colored cograph.
To see this, suppose that the cograph $(G',\sigma')$ in
Fig.~\ref{fig:no-EDT}(B) is explained by a relaxed scenario $\scen$.  All
possible assignments for the edges $ac$ and $ad$ are shown on the
right-hand side, i.e., we have $ac\in E(\Ga(\scen))$, $ad\in
E(\Ga(\scen))$, or $ac,ad\in E(\Gu(\scen))$ yielding the informative
triples (for the species tree $S$) $AC\vert B$, $AD\vert B$, and $CD\vert A$,
respectively. However, all of these three triples are forbidden triples for
$S$ as a consequence of the three smaller connected components of
$(G',\sigma')$; a contradiction.
\begin{figure}[t]
  \centering
  \includegraphics[width=0.85\textwidth]{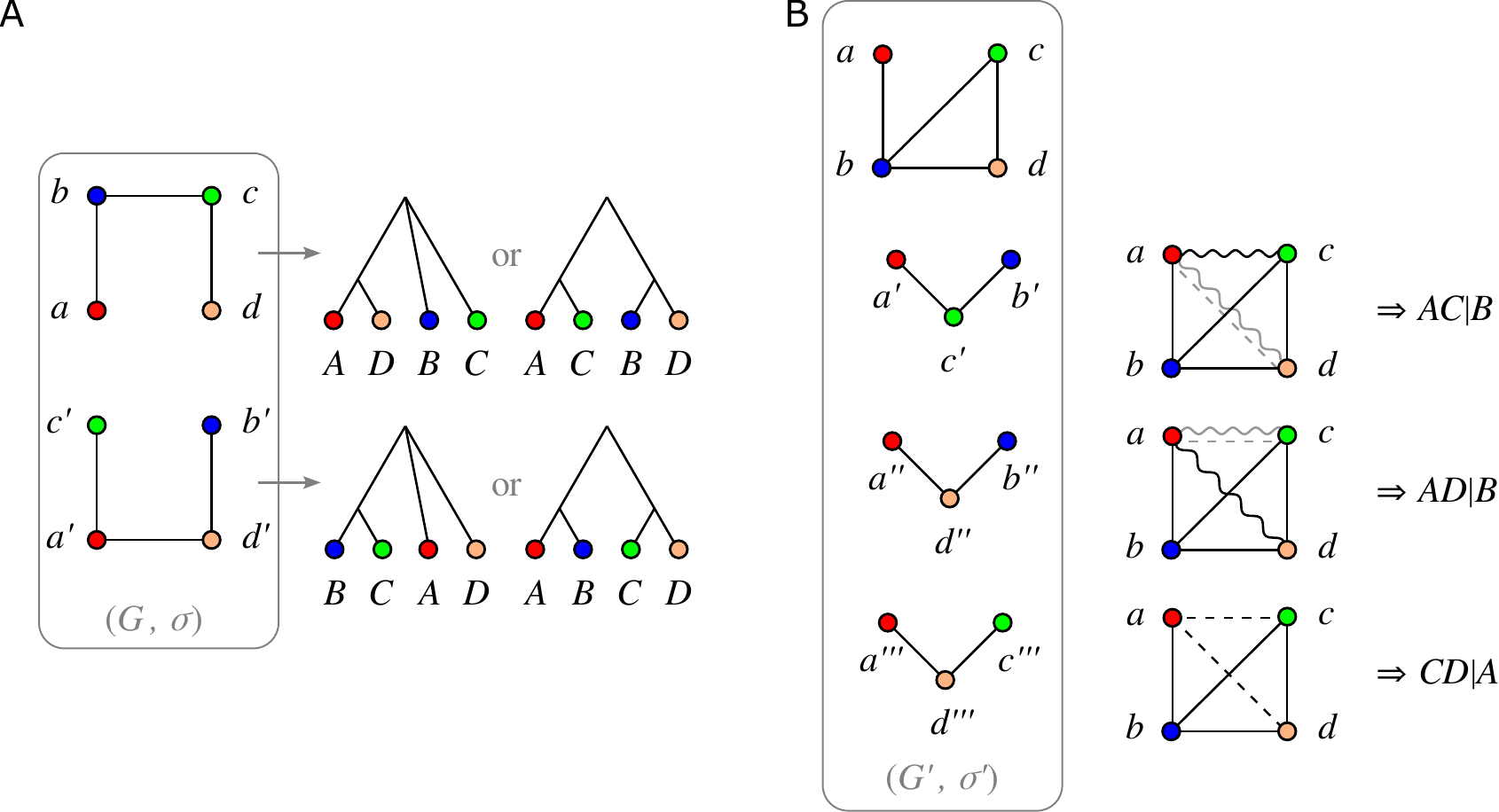}
  \caption{(A) A properly-colored perfect graph $(G, \sigma)$ on $8$
    vertices that is not an EDT graph. Next to the graph, the possible
    topologies of the species tree that are implied by the induced $P_4$
    according to Lemma~\ref{lem:4col-P4} are shown.  (B) A properly-colored
    cograph $(G',\sigma')$ that is not an EDT graph.  All possible
    assignments for the edges $ac$ and $ad$ are shown on the right-hand
    side together with the informative triples that they imply for the
    species tree according to Prop.~\ref{prop:triples}. The assignment of
    the gray edges do not affect the respective triple. }
  \label{fig:no-EDT}
\end{figure}

\section{Explanation of $\boldsymbol{\graphs}$ by Relaxed Scenarios}
\label{sect:main}

In \cite{Schaller:21f}, we derived an algorithmic approach that recognizes
LDT graphs and constructs a relaxed scenario $\scen$ for $(\Gu,\sigma)$ in
the positive case.  Here, we adapt the algorithmic idea to the case that,
instead of $(\Gu,\sigma)$, the graph 3-partition
$\graphs = (\Gu, \Gg, \Ga, \sigma)$ is given, see
Algorithm~\ref{alg:recognition}, which is illustrated in
  Figure~\ref{fig:algo-1-example}. As we shall see, the additional
information can be leveraged to separate the construction of $S$ and $T$ in
such a way that a suitable species tree can be computed first using a
well-known approach. This then considerably simplifies the construction of
a corresponding gene tree $T$.  More precisely, we construct the gene tree
and its reconciliation with $S$ in a top-down fashion via a recursive
decomposition of $L$ into subsets that is guided by $\graphs$ and $S$.  We
first introduce three auxiliary graphs that we will use for this purpose.

\begin{algorithm}[t]
  \small

  \caption{Construction of a relaxed scenario $\scen$ for a graph
    $3$-partition $\graphs = (\Gu, \Gg, \Ga, \sigma)$ with coloring
    $\sigma\colon L\to M$. \newline
    Input is a graph $3$-partition
    $\graphs = (\Gu, \Gg, \Ga, \sigma)$ with
    coloring $\sigma\colon L\to M$ such that $\Gu$ and $\Gg$ are properly
    colored, $\Gu$ and $\Ga$ are cographs, and $(\R_S(\graphs), \F_S(\graphs))$
    is consistent.\newline
    Output is a relaxed scenario $\scen=(T,S,\sigma,\mu,\tT,\tS)$ explaining
    $\graphs$.}    
  \label{alg:recognition}

\begin{algorithmic}[1]
  \State  $S\leftarrow$ tree on $M$ with planted root
    $0_S$ and agreeing with $(\R_S(\graphs), \F_S(\graphs))$\label{line:S}
  \State  $\tS\leftarrow$ time map for $S$ satisfying $\tS(x)=0$ for all
    $x\in L(S)$\label{line:tS}
  \State $\epsilon \leftarrow \frac{1}{3} \min\{\tS(y)-\tS(x) \mid yx\in
           E(S)\}$\label{line:epsilon}
  \State initialize empty maps $\mu, \tT$\;

  \Procedure{BuildGeneTree}{$L',u_{S}$} 
    \State create vertex $\rho'$\label{line:create-rho}
    \State $\tT(\rho')\leftarrow \tS(u_{S}) + \epsilon$ and
        $\mu(\rho')\leftarrow \parent_S(u_S) u_S$\label{line:mu-tT-inner1}
    \If{$u_S$ is a leaf\label{line:species-leaf}}
      \For{\textbf{each} $x\in L'$\label{line:loop-u_S-leaf}}
         \State connect $x$ as a child of $\rho'$\label{line:attach-leaf}
         \State $\tT(x)\leftarrow 0$ and $\mu(x)\leftarrow
                   \sigma(x)$\label{line:mu-tT-leaves}
      \EndFor 
    \Else \label{line:else}
      \State compute $H_1$, $H_2$, and $H_3$ for $L'$ and $u_S$
          \label{line:aux-graphs}\;
      \For{\textbf{each} connected component $C_i$ of $H_1$} \label{line:forH1}
        \State create vertex $u_i$ as a child of $\rho'$
               \label{line:create-u_i}
        \State $\tT(u_i)\leftarrow \tS(u_{S})$ and $\mu(u_i)\leftarrow u_S$
               \label{line:mu-tT-inner2}
        \For{\textbf{each} connected component $C_j$ of $H_2$ such
             that $C_j\subseteq C_i$}\label{line:iterate-C_j}
           \State create vertex $v_j$ as a child of $u_i$
                  \label{line:create-v_j}
           \State choose $v^*_S\in \child_S(u_{S})$ such that $\sigma(C_j)\cap
                  L(S(v^*_S))\ne\emptyset$\label{line:choose-v-S}
           \State $\tT(v_j)\leftarrow \tS(u_{S}) - \epsilon$ and
               $\mu(v_j)\leftarrow u_S v^*_S$\label{line:mu-tT-inner3}

          \For{\textbf{each} connected component $C_k$ of $H_3$ such that
            $C_k\subseteq C_j$} \label{line:forH3}
            \State identify $v_S\in \child_S(u_{S})$ such that
               $\sigma(C_k)\subseteq L(S(v_S))$
               \label{line:choose-v-S-for-class}
            \State connect \textsc{BuildGeneTree}$(C_k, v_S)$ as a child of
               $v_j$\label{line:recursive-call}
          \EndFor 
        \EndFor 
      \EndFor 
    \EndIf 
    \State \textbf{return} $\rho'$
  \EndProcedure

  \State $T' \leftarrow$ tree with root \textsc{BuildGeneTree}$(L,\rho_S)$ \label{line:callBGT}
  \State $T\leftarrow T'$ with (i) a planted root $0_T$ added, and (ii) all
     inner degree-2 vertices (except $0_T$) suppressed \label{line:Tphylo}
  \State $\tT(0_T)\leftarrow \tS(0_S)$ and $\mu(0_T)\leftarrow 0_S$
    \label{line:mu-tT-planted-root}
  \State \textbf{return}
    $(T,S,\sigma,\mu_{\vert V(T)},\tau_{T\vert V(T)},\tS)$\
\end{algorithmic}
\end{algorithm}

\begin{figure}[htbp]
  \centering
  \includegraphics[width=.95\textwidth]{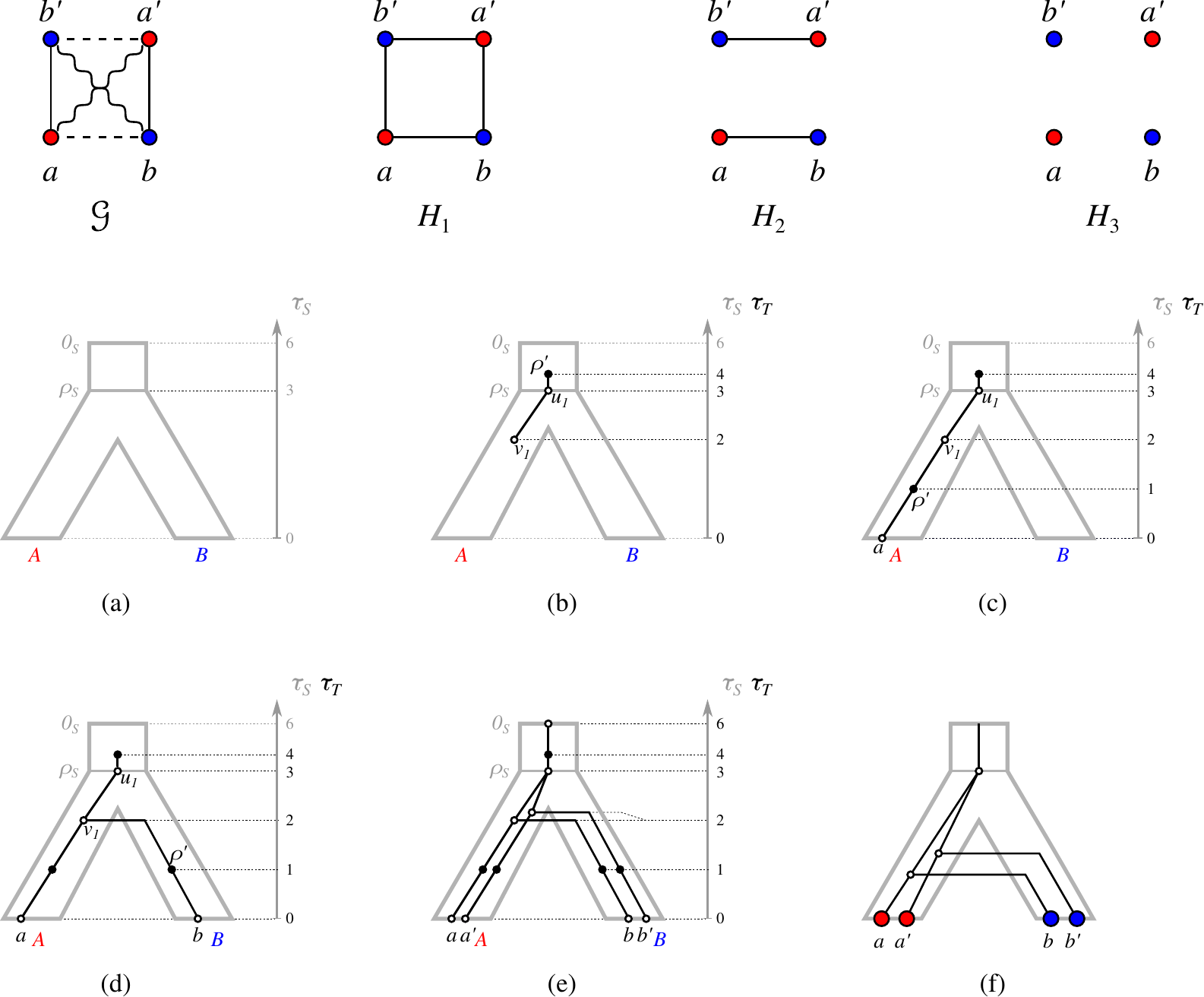}
  \caption{Illustration of Algorithm~\ref{alg:recognition} with a
      valid input $\graphs = (\Gu, \Gg, \Ga, \sigma)$. We have $\sigma(a) =
      \sigma(a')\eqqcolon A$ and $\sigma(b) = \sigma(b')\eqqcolon
      B$. Line~\ref{line:S} constructs species tree $S$ that agrees with
      $(\R_S(\graphs), \F_S(\graphs))$.  Here, $\R_S(\graphs) =
      \F_S(\graphs) = \emptyset$ and $S$ is unique. In Line~\ref{line:tS},
      a time map $\tS$ for $S$ such that $\tS(x)=0$ for all $x\in L(S)$ is
      initialized. We choose $\tS(0_S)=6$ and $\tS(\rho_S)=3$, see panel
      (a). Hence, $\epsilon=1$ (Line~\ref{line:epsilon}). \newline
      Line~\ref{line:callBGT} then calls
      \textsc{BuildGeneTree}$(\{a,a',b,b'\},\rho_S)$ for the first time,
      hence $u_S = \rho_S$. In Line~\ref{line:create-rho} a vertex $\rho'$
      is created. Its time map is set to $\tT(\rho') = \tS(\rho_{S}) +
      \epsilon = 3+1=4$ and the reconciliation is set to $\mu(\rho') = 0_S
      \rho_S$ in Line~\ref{line:mu-tT-inner1}. Since $u_S=\rho_S$ is not a
      leaf, we proceed with computing $H_1$, $H_2$, and $H_3$ for $L=
      \{a,a',b,b'\}$ and $\rho_S$ in Line~\ref{line:aux-graphs},
      illustrated in the top row.  Since $H_1$ has only one connected
      component $C$, the for-loop in Line~\ref{line:forH1} runs only
      once. In Line~\ref{line:create-u_i}, we thus create a single vertex
      $u_1$ as a child of $\rho'$. We then consider the two connected
      components $C_1$ and $C_2$ of $H_2$ as both satisfy $C_j\subseteq C$,
      $j\in \{1,2\}$. Here, we start with considering the component $C_1$
      that is induced by the vertices $a$ and $b$ and create a vertex $v_1$
      as a child of $u_1$ in Line~\ref{line:create-v_j}. We choose $v^*_S =
      A$ in Line \ref{line:choose-v-S} (note that we also could have chosen
      $v^*_S = B$) and set $\tT(v_1) = \tS(\rho_{S}) - \epsilon = 2$ and
      $\mu(v_1) = \rho_S A$ in Line~\ref{line:mu-tT-inner3}. These steps
      are illustrated in panel (b). \newline
      Line~\ref{line:forH3} then considers the connected components $C_k$
      of $H_3$ that satisfy $C_k\subseteq C_1 = \{a,b\}$; both of these
      connected components are the single vertex graphs induced by $a$ and
      $b$, respectively.  Starting with $C'=\{a\}$, 
      Line~\ref{line:choose-v-S-for-class} identifies $v_S\in\child_S(\rho_{S})$
      such that $\sigma(C')=\{A\}\subseteq L(S(v_S))$, i.e., $v_S = A$ and
      calls \textsc{BuildGeneTree}$(\{a\}, A)$; the subtree returned by
      this call is attached as a child of $v_1$ in 
      Line~\ref{line:recursive-call}.  Hence, we are now back in 
      Line~\ref{line:create-rho} where $u_S = A$.  In 
      Line~\ref{line:create-rho}, a further (new) vertex $\rho'$ is
      created. Line~\ref{line:mu-tT-inner1} computes
      $\tT(\rho') = \tS(A) + \epsilon = 0+1=1$ and $\mu(\rho') = \rho_S A$.
      Now $u_S=A$ is a leaf of $S$, hence we proceed in Line
      \ref{line:species-leaf} and connect each $x\in L' = \{a\}$ as a child
      of $\rho'$ in Line~\ref{line:attach-leaf}.  In 
      Line~\ref{line:mu-tT-leaves}, we put $\tT(a) = 0$ and
      $\mu(a)= \sigma(a)=A$.  These steps are illustrated in panel
      (c). \newline
      Then \textsc{BuildGeneTree}$(\{b\}, B)$ is executed and we obtain the
      ``partial'' gene tree and reconciliation shown in panel (d).  The
      algorithm proceeds on component $C_2$ of $H_2$, which is induced by the
      vertices $a'$ and $b'$ and creates a vertex $v_2$ as a child of $u_1$
      in Line~\ref{line:create-v_j}. Again, we chose $v^*_S = A$ in 
      Line~\ref{line:choose-v-S}. By similar arguments as in the previous part,
      we obtain the ``partial'' gene tree and reconciliation shown in panel
      (e). The tree $T'$ returned in Line~\ref{line:callBGT} is the gene
      tree shown in panel (e) except for the planted root $0_T$, which is
      added in Line~\ref{line:Tphylo}.  In addition, all resulting inner
      degree-2 vertices (highlighted as black circuits) are suppressed in
      Line~\ref{line:Tphylo}. The resulting gene tree (without specified
      time map) and the resulting relaxed scenario is shown in
      panel (f). Note, if we choose $v^*_S = B$ in 
      Line~\ref{line:choose-v-S} when proceeding on the connected component 
      $C_2$ of $H_2$ induced by $a'$ and $b'$, we would obtain the restricted 
      scenario $\scen_2$ as shown in Figure~\ref{fig:algo-S6-violated}.
  }
  \label{fig:algo-1-example}
\end{figure}

\begin{definition}
  Let $\graphs = (\Gu, \Gg, \Ga, \sigma)$ be a graph $3$-partition on vertex
  set $L$ with coloring $\sigma\colon L\to M$ and $S$ be a tree on $M$.\\
  For $L'\subseteq L$ and $u\in V^0(S)$ such that
  $\sigma(L')\subseteq L(S(u))$, we define the auxiliary graphs on $L'$:
  \begin{flalign*}
    H_1 \coloneqq\ &   (L', E(\Gu[L']) \cup E(\Gg[L']))&\\
    H_2 \coloneqq\ & (L', E(\Gu[L']) \cup \{xy \in E(\Gg[L']) \mid
    \sigma(x), \sigma(y)\prec_S v \text{ for some } v\in\child_S(u)\} )&\\
    H_3 \coloneqq\ & (L', \{xy \mid x \text{ and } y \text{ are in the same
      connected component of } H_2 \text{ and }  &\\ 
    	& \sigma(x), \sigma(y)\preceq_S 
    	v \text{ for some }
    v\in\child_S(u)\}) &
  \end{flalign*} 
\end{definition}

By construction, $H_2$ is a subgraph of $H_1$. In particular, therefore,
every connected component of $H_2$ is entirely included in some connected
component of $H_1$.  In turn, one easily verifies that the connected
components of $H_3$ are complete graphs. Moreover, $H_3$ contains all edges
of $H_2\cap \Gg$ while there might be edges of $\Gu[L']$ that are not
contained in $H_3$. This implies that every connected
component of $H_3$ is entirely included in some connected
component of $H_2$.

We use the inclusion relation of the connected components to construct the
local topology of $T$ in a recursive manner, see
Figure~\ref{fig:generalized-algo} for an illustration of the following
description.  In each step, i.e., for some $L'\subseteq L$ and $u_S\in
V(S)$, we first construct a ``local root'' $\rho'$
(cf.\ Algorithm~\ref{alg:recognition}, Line~\ref{line:create-rho}).  If
$u_S$ is a leaf of $S$ (the base case of the recursion), we directly attach
the elements of $L'$ as children of $\rho'$
(Lines~\ref{line:species-leaf}-\ref{line:mu-tT-leaves}).  On the other
hand, if $u_S$ is an inner vertex, we create a new child of $\rho'$ for
each connected component of $H_1$ in Line \ref{line:create-u_i}. For a
specific connected component $C_i$ of $H_1$ (corresponding to child $u_i$
of $\rho'$), we then add a new child $v_j$ of $u_i$ for each connected
component $C_j$ of $H_2$ such that $C_j\subseteq C_i$ in Line
\ref{line:create-v_j}.  We proceed similarly for the connected components
$C_k$ of $H_3$, which necessarily are subsets of a specific connected
component $C_j$ of $H_2$.  The vertex corresponding to $C_k$ is the ``local
root'' created in a recursive call operating on $C_k$ as new subset of $L$
and $v_S\in\child_S(u_S)$ as new vertex of $S$, which is chosen such that
$\sigma(C_k)\subseteq L(S(u_S))$ in Line
\ref{line:choose-v-S-for-class}. If $C_j=C_i$ or $C_k=C_j$, then the
corresponding vertices $v_i$ and $v_j$, respectively, have a single child.
As a consequence, the resulting tree $T'$ is in general not
phylogenetic. The final gene tree $T$ is then obtained by suppressing all
vertices with a single child (Line \ref{line:Tphylo}).

By definition, two vertices $x$ and $y$ are in the same connected component
$C_k$ of the auxiliary graph $H_3$ only if $\sigma(x)$ and $\sigma(y)$ are
descendants of the same child $v_S$ of the species tree vertex $u_S$.  In
particular, we therefore can always find $v_S\in \child_S(u_{S})$ such that
$\sigma(C_k)\subseteq L(S(v_S))$ in Line~\ref{line:choose-v-S-for-class} of
Algorithm~\ref{alg:recognition}.  This guarantees that all colors appearing on
the vertices in $L'$ are descendants of the species tree vertex $u_S$ in
each recursion step:
\begin{fact}
  \label{obs:color-subset}
  In every recursion step of Algorithm~\ref{alg:recognition}, it holds
  $\sigma(L')\subseteq L(S(u_s))$.  In particular, the auxiliary graphs
  $H_1$, $H_2$, and $H_3$ are always well-defined.
\end{fact}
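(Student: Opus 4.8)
The plan is to argue by induction on the depth of the recursion in Algorithm~\ref{alg:recognition}, maintaining the single invariant that $\sigma(L')\subseteq L(S(u_S))$ for every call \textsc{BuildGeneTree}$(L',u_S)$. Once this invariant is in place, the ``in particular'' part is immediate: the auxiliary graphs $H_1$, $H_2$, $H_3$ are defined only for a pair $(L',u)$ with $u\in V^0(S)$ and $\sigma(L')\subseteq L(S(u))$, and they are computed exclusively in Line~\ref{line:aux-graphs}, i.e.\ inside the \textbf{else}-branch where $u_S$ is an inner vertex. Hence the invariant together with $u_S\in V^0(S)$ guarantees well-definedness.

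For the base case, the first invocation (Line~\ref{line:callBGT}) is \textsc{BuildGeneTree}$(L,\rho_S)$, and since $\sigma(L)\subseteq M = L(S) = L(S(\rho_S))$ the invariant holds trivially. For the inductive step, I assume $\sigma(L')\subseteq L(S(u_S))$ for the current call and verify it for every recursive call it spawns. If $u_S$ is a leaf, the base case of the recursion is taken and no recursive call is made, so there is nothing to check. Otherwise $u_S$ is an inner vertex, the graphs $H_1$, $H_2$, $H_3$ are computed, and each recursive call has the form \textsc{BuildGeneTree}$(C_k,v_S)$, where $C_k$ is a connected component of $H_3$ and $v_S\in\child_S(u_S)$ is the vertex selected in Line~\ref{line:choose-v-S-for-class}. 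The crux is therefore to show that for each connected component $C_k$ of $H_3$ there is indeed a single child $v_S$ of $u_S$ with $\sigma(C_k)\subseteq L(S(v_S))$; this is what makes Line~\ref{line:choose-v-S-for-class} executable and simultaneously re-establishes the invariant one level down.

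This crux is where the definition of $H_3$ does the work, and I expect it to be the only non-routine point. Since $u_S$ is an inner vertex and every color $\sigma(x)$ with $x\in L'$ is a leaf of $S(u_S)$ by the induction hypothesis, and since the leaf sets $L(S(v))$ of the subtrees rooted at distinct children $v\in\child_S(u_S)$ partition $L(S(u_S))$, each $x\in L'$ determines a \emph{unique} child $\phi(x)\in\child_S(u_S)$ with $\sigma(x)\preceq_S\phi(x)$. By construction, an edge $xy\in E(H_3)$ can exist only if $\sigma(x),\sigma(y)\preceq_S v$ for a common child $v$, which forces $\phi(x)=\phi(y)=v$. Thus $\phi$ is constant along every edge of $H_3$ and hence constant on each connected component $C_k$; writing $v_S$ for this common value yields $\sigma(C_k)\subseteq L(S(v_S))$ with $v_S\in\child_S(u_S)$. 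The degenerate case $|C_k|=1$ is covered as well, since a single color still lies below a unique child. This $v_S$ is exactly the vertex chosen in Line~\ref{line:choose-v-S-for-class}, so the recursive call \textsc{BuildGeneTree}$(C_k,v_S)$ satisfies the invariant, completing the induction.
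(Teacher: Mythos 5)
Your proof is correct and follows essentially the same route as the paper: the paper also justifies the observation by noting that any two vertices in the same connected component of $H_3$ have their colors below a common child $v_S\in\child_S(u_S)$, so that the vertex chosen in Line~\ref{line:choose-v-S-for-class} satisfies $\sigma(C_k)\subseteq L(S(v_S))$, which propagates the containment $\sigma(L')\subseteq L(S(u_S))$ through the recursion starting from the top-level call on $(L,\rho_S)$. Your explicit induction on recursion depth, with the map $\phi$ being constant along $H_3$-edges and hence on components (including singletons), merely formalizes the recursion invariant the paper states informally.
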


\begin{figure}[t]
  \centering
  \includegraphics[width=0.65\textwidth]{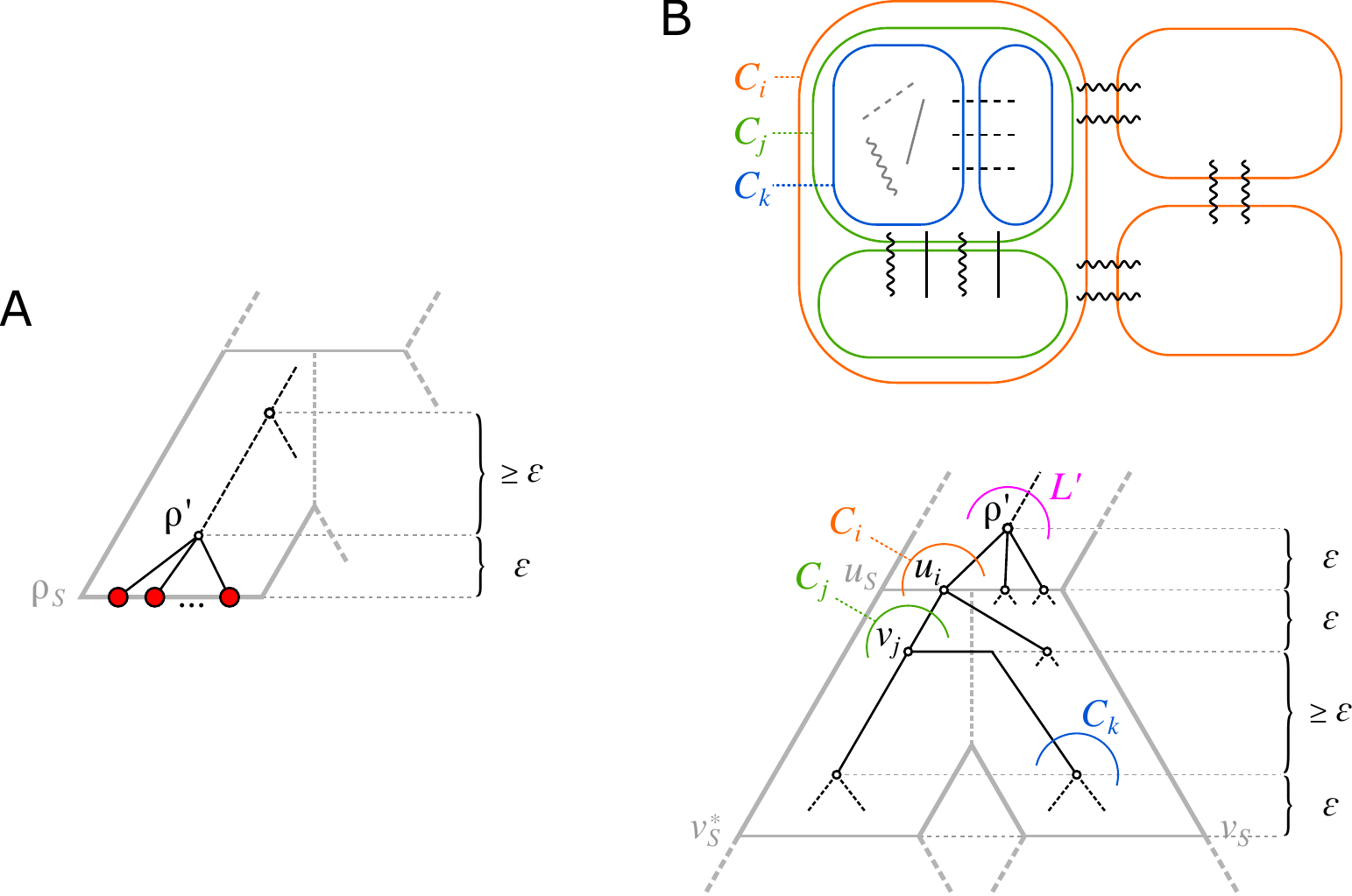}
  \caption{Illustration of a recursion step in \texttt{BuildGeneTree}
    (Algorithm~\ref{alg:recognition}). A: The current vertex in the
    species tree, $u_S$, is a leaf. B: The current vertex in the
    species tree, $u_S$, is an inner vertex. The connected components of
    $H_1$, $H_2$, and $H_3$ are represented by the orange, green, and blue
    boxes, respectively. For simplicity, only those connected components of
    $H_2$ and $H_3$ are shown that are included in $C_i$ and $C_j$,
    respectively. Two vertices $x$ and $y$ must form an edge in (a) $\Ga$
    (wavy lines) if they are in distinct components of $H_1$, (b) $\Ga$ or
    $\Gg$ (solid straight lines) if they are in the same component of $H_1$ but
    distinct components of $H_2$, and (c) $\Gu$ (dashed lines) if they are
    in the same component of $H_2$ but distinct components of $H_3$. Below,
    the construction of the reconciliation map and the time map is
    illustrated. The half circles indicate that $L'=L(T(\rho'))$,
    $C_i=L(T(u_i))$, etc.\ if the respective vertex is not suppressed.}
  \label{fig:generalized-algo}
\end{figure}

The recursion in Algorithm~\ref{alg:recognition} can be thought of as a
tree with the root being the top-level call of \texttt{BuildGeneTree} on
$L$ and $\rho_S$ and leaves being the calls in which $u_S$ is a leaf of
$S$.  Note that, for some recursion steps $R$ on $L'$ and $u_S$, all of its
``descendant recursion steps'' have input $L''$ and $u'_S$ satisfying
$L''\subseteq L'$ and $u'_S\prec_{S} u_S$.  Therefore, and because all
leaves that are descendants of $\rho'$ (created in $R'$) must have been
attached in some descendant recursion step of $R$, we have
$L(T'(\rho'))\subseteq L'$. In turn, all elements $x\in L'$ are either
directly attached to $\rho'$ if $u_S$ is a leaf, or will eventually be
passed down to a recursion step on a leaf $l\prec_{S} u_S$ because each
$x\in L'$ is in some connected component $C_k$ of $H_3$, $C_k$ is entirely
included in a connected component $C_j$ of $H_2$, and $C_j$ is entirely
included in a connected component $C_i$ of $H_1$. In this ``leaf recursion
step'', $x$ is therefore attached to some descendant of $\rho'$, implying
$L'\subseteq L(T'(\rho'))$. Therefore, we have $L'=L(T'(\rho'))$.  We can
apply very similar arguments to see that $L(T'(u_i))=C_i$ and
$L(T'(v_j))=C_j$ hold for each connected component $C_i$ of $H_1$ and $C_j$
of $H_2$ with corresponding vertices $u_i$ and $v_j$ created in
Lines~\ref{line:create-u_i} and~\ref{line:create-v_j}, respectively.
Clearly, contraction of the redundant vertices to obtain the final tree $T$
does not change these relationship. We summarize these considerations as
follows:
\begin{fact}
  \label{obs:leaf-sets}
  Let $T$ be a tree returned by Algorithm~\ref{alg:recognition} and $u\in 
  V^0(T)$ be an inner vertex created in a recursion step on $L'$ and $u_S$.	
  \begin{enumerate}[noitemsep,nolistsep]
  \item If $u$ is a vertex $\rho'$ created in Line~\ref{line:create-rho}, 
    then $L(T(u))=L'$.
  \item If $u$ is a vertex $u_i$ created in Line~\ref{line:create-u_i}, 
    then $L(T(u))=C_i$ where $C_i$ is the connected component of $H_1$ 
    corresponding to $u_i$.
  \item If $u$ is a vertex $v_j$ created in Line~\ref{line:create-v_j}, 
    then $L(T(u))=C_j$ where $C_j$ is the connected component of $H_2$ 
    corresponding to $v_j$.
  \end{enumerate}
  In particular, every $x\in L(T(u))$ satisfies $x\in L'$.
\end{fact}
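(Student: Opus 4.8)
The plan is to establish statements (1)--(3) first for the intermediate tree $T'$ (the tree built by the recursion, \emph{before} the degree-two suppression performed in Line~\ref{line:Tphylo}) and only afterwards transfer them to $T$, exploiting that suppressing degree-two vertices does not change the set of leaf descendants below any vertex that survives. For $T'$ I would argue by induction along the recursion tree of \texttt{BuildGeneTree}. Since every recursive call in Line~\ref{line:recursive-call} is issued on some $v_S\in\child_S(u_S)$, hence with $v_S\prec_S u_S$, the recursion is well-founded, and I can induct on the height of $u_S$ in $S$, i.e.\ the length of a longest descending path from $u_S$ to a leaf of $S$. The induction hypothesis is exactly that every recursive call \texttt{BuildGeneTree}$(C_k,v_S)$ returns a vertex $\rho''$ with $L(T'(\rho''))=C_k$.

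For the base case $u_S$ is a leaf of $S$, and Lines~\ref{line:loop-u_S-leaf}--\ref{line:mu-tT-leaves} attach precisely the elements of $L'$ as children of the local root $\rho'$ created in Line~\ref{line:create-rho}, so $L(T'(\rho'))=L'$ immediately. In the inductive step $u_S$ is an inner vertex, and I would read off the local structure produced by the step: the children of $\rho'$ are the vertices $u_i$ (one per connected component $C_i$ of $H_1$), the children of each $u_i$ are the vertices $v_j$ (one per component $C_j$ of $H_2$ with $C_j\subseteq C_i$), and the children of each $v_j$ are the roots $\rho''$ returned by the recursive calls (one per component $C_k$ of $H_3$ with $C_k\subseteq C_j$). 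Working outward from the leaves, the induction hypothesis gives $L(T'(v_j))=\bigcup_{C_k\subseteq C_j}C_k$; because the components of $H_3$ partition $L'$ and each lies inside a unique component of $H_2$, those with $C_k\subseteq C_j$ exactly cover $C_j$, so $L(T'(v_j))=C_j$, establishing (3). The identical ``union of a partition'' argument, using that the $H_2$-components nest inside the $H_1$-components and that the $H_1$-components partition $L'$, then yields $L(T'(u_i))=\bigcup_{C_j\subseteq C_i}C_j=C_i$ (statement (2)) and $L(T'(\rho'))=\bigcup_i C_i=L'$ (statement (1)).

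To pass from $T'$ to $T$, I would invoke that contracting degree-two vertices leaves every remaining vertex with the same set of leaf descendants. As the statement quantifies only over $u\in V^0(T)$, i.e.\ vertices surviving the suppression, each such $u$ coincides with the corresponding $\rho'$, $u_i$, or $v_j$ and satisfies $L(T(u))=L(T'(u))$; hence (1)--(3) carry over verbatim. The ``in particular'' clause is then immediate from the chain of inclusions $C_j\subseteq C_i\subseteq L'$ together with $L'\subseteq L'$.

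I expect the main obstacle to be not any single deep step but making the nesting-and-partition bookkeeping airtight: I must justify that the components $C_k$ of $H_3$ with $C_k\subseteq C_j$ genuinely partition $C_j$, and likewise one level up. This rests entirely on the structural facts recorded just after the definition of the auxiliary graphs---that each $H_3$-component is contained in an $H_2$-component, which in turn is contained in an $H_1$-component, all three being vertex partitions of $L'$---together with the well-definedness of each recursive call guaranteed by Observation~\ref{obs:color-subset}; I would cite these rather than re-derive them. A minor care point is to handle the degenerate layers $C_j=C_i$ or $C_k=C_j$ correctly, but these merely create degree-one internal vertices that are suppressed in Line~\ref{line:Tphylo} and so do not affect the leaf-set identities asserted for the surviving vertices.
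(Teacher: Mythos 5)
Your proposal is correct and follows essentially the same route as the paper: both arguments track the recursion of \texttt{BuildGeneTree}, exploit the recorded nesting facts (each $H_3$-component lies in an $H_2$-component, each $H_2$-component in an $H_1$-component, all partitioning $L'$), and finish by noting that suppressing single-child vertices in Line~\ref{line:Tphylo} leaves the leaf-descendant set of every surviving vertex unchanged. The only difference is presentational: the paper argues the two inclusions $L(T'(\rho'))\subseteq L'$ and $L'\subseteq L(T'(\rho'))$ informally by following where leaves are attached, whereas you make the implicit induction explicit (on the height of $u_S$ in $S$) and obtain all three identities in one bottom-up pass as unions over partitions -- a valid and, if anything, tighter formalization of the same idea.
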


Algorithm~\ref{alg:recognition} is a generalization of the
algorithm presented in \cite{Schaller:21f} for the construction of a
relaxed scenario $\scen$ for a given LDT graph $(G,\sigma)$. A key property
of the algorithm is that the restriction of $\scen$ to $S(u_S)$, i.e., the
incomplete scenarios obtained for given $u_S$ satisfies the time
consistency constraints \AX{S2} and \AX{S3}.  The construction of $\scen$
in Algorithm~\ref{alg:recognition} differs from the procedure described in
\cite{Schaller:21f} only by including in $V(T) \setminus (L(T) \cup
\{0_T\})$ the additional vertices $u_i$ created in
Line~\ref{line:create-u_i}.  These satisfy $\mu(u_i)=u_S$. In the following
line, we set $\tT(u_i)\leftarrow \tS(u_S)$. Hence, constraint \AX{S2}
remains satisfied and \AX{S3} is void because $\mu(u_i)\in V(S)$. One
easily checks, furthermore, that the reconciliation map $\mu$ constructed
in Algorithm~\ref{alg:recognition} satisfies \AX{S0}
(Line~\ref{line:mu-tT-planted-root}) and \AX{S1}
(Line~\ref{line:mu-tT-leaves}).

\begin{definition}
$\graphs = (\Gu, \Gg, \Ga, \sigma)$ is a \emph{valid input} for
Algorithm~\ref{alg:recognition} if $(\Gu, \Gg, \Ga, \sigma)$ is a
3-partition, $\Gu$ and $\Gg$ are properly colored, $\Gu$ and $\Ga$ are
cographs, and $(\R_S(\graphs), \F_S(\graphs))$ is consistent.
\end{definition}
\begin{lemma}
  \label{lem:algo-rs}
  Given a valid input $\graphs = (\Gu, \Gg, \Ga, \sigma)$ with vertex set
  $L$, Algorithm~\ref{alg:recognition} returns a relaxed scenario
  $\scen=(T,S,\sigma,\mu,\tT,\tS)$ such that $L(T)=L$.
\end{lemma}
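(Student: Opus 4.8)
The plan is to verify, in order, that (1) the algorithm never gets stuck, so that $S$, $\tS$, $T$, $\mu$, and $\tT$ are all produced; (2) $L(T)=L$ and $T$ is a planted phylogenetic tree; and (3) the output meets every requirement of Definition~\ref{def:rs}, i.e.\ that $(T,\tT)$ and $(S,\tS)$ are dated trees and $\mu$ obeys \AX{S0}--\AX{S3}.

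For well-definedness I would first observe that validity of $\graphs$ guarantees consistency of $(\R_S(\graphs),\F_S(\graphs))$, so Line~\ref{line:S} produces a planted tree $S$ on $M$ agreeing with these triples, Lemma~\ref{lem:arbitrary-tT} supplies the time map $\tS$ in Line~\ref{line:tS}, and $\epsilon>0$ in Line~\ref{line:epsilon} because $S$ has at least one edge. By Observation~\ref{obs:color-subset} the invariant $\sigma(L')\subseteq L(S(u_S))$ holds in every recursion step, so $H_1,H_2,H_3$ are always defined. It then remains to check that the choices in Lines~\ref{line:choose-v-S} and~\ref{line:choose-v-S-for-class} are possible: a child $v^*_S$ with $\sigma(C_j)\cap L(S(v^*_S))\ne\emptyset$ exists because $L(S(u_S))$ is the disjoint union of the $L(S(v))$ over $v\in\child_S(u_S)$ and $C_j\ne\emptyset$; and a child $v_S$ with $\sigma(C_k)\subseteq L(S(v_S))$ exists because, by the definition of $H_3$, any two colors in a common component lie below a common child of $u_S$, and distinct children have incomparable (hence leaf-disjoint) subtrees, so an entire $H_3$-component has all its colors below one child.

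For $L(T)=L$ I would invoke Observation~\ref{obs:leaf-sets} and the preceding leaf-set discussion: the top-level call on $(L,\rho_S)$ gives $L(T'(\rho'))=L$, and the recursion terminates since each recursive call replaces $u_S$ by a child $v_S\prec_S u_S$. Each gene of $L$ is attached exactly once as a leaf because the components of $H_3$ partition $L'$ and refine those of $H_2$, which in turn refine those of $H_1$; hence the nested loops in Lines~\ref{line:forH1},~\ref{line:iterate-C_j},~\ref{line:forH3} enumerate every $H_3$-component exactly once and pass it down a single branch. Adding $0_T$ and suppressing inner degree-$2$ vertices in Line~\ref{line:Tphylo} leaves the leaf set unchanged and makes $T$ a planted phylogenetic tree, since only $0_T$ keeps a single child while every other inner vertex ends up with at least two.

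The crux of the scenario axioms is the time map. Within a single recursion step the assigned values satisfy $\tT(\rho')=\tS(u_S)+\epsilon > \tT(u_i)=\tS(u_S) > \tT(v_j)=\tS(u_S)-\epsilon$, while the local root created in the recursive call on $(C_k,v_S)$ receives $\tS(v_S)+\epsilon$; since $v_S\in\child_S(u_S)$ yields $\tS(u_S)-\tS(v_S)\ge 3\epsilon$, one gets $\tS(u_S)-\epsilon>\tS(v_S)+\epsilon$, so $\tT$ strictly decreases along every edge of $T'$, and leaves receive $0$, below every inner time. Suppressing degree-$2$ vertices preserves strict monotonicity along ancestor chains, so $\tT$ is a time map and $(T,\tT)$ is a dated tree. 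Axioms \AX{S0} and \AX{S1} follow from Lines~\ref{line:mu-tT-planted-root} and~\ref{line:mu-tT-leaves} together with the fact that every inner image lies in $V^0(S)\cup E(S)$; \AX{S2} holds because $\tT$ equals $\tS\circ\mu$ exactly at the vertices $0_T$, $u_i$, and the leaves, where $\mu$ lands on a vertex of $S$; and \AX{S3} follows from the same $\epsilon$-bounds applied to the edges $\parent_S(u_S)u_S$ and $u_S v^*_S$ that are the images of $\rho'$ and $v_j$. I expect the main obstacle to be precisely this bookkeeping: arguing that a whole $H_3$-component sits below a single child (well-definedness) and that the factor $\tfrac{1}{3}$ in $\epsilon$ prevents the time intervals of consecutive recursion levels from overlapping (the time-map property).
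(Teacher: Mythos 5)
Your proposal is correct and follows essentially the same route as the paper's own proof in the appendix: well-definedness of all assignments (via the consistency of $(\R_S(\graphs),\F_S(\graphs))$, Lemma~\ref{lem:arbitrary-tT}, and Obs.~\ref{obs:color-subset}), termination and the leaf-set bookkeeping for $L(T)=L$, the same $\epsilon$-arithmetic case analysis establishing that $\tT$ decreases along every edge of $T'$ and that \AX{S2}/\AX{S3} hold at each created vertex, and finally the transfer to $T$ after adding $0_T$ and suppressing degree-two vertices. The only cosmetic difference is that you spell out why each $H_3$-component lies below a single child of $u_S$ (which the paper asserts in the main text), while the paper is slightly more explicit than you about checking the new edge $0_T\rho_T$ against the time map, a check your $3\epsilon$ bound covers anyway.
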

\begin{proof}
  In order to keep this contribution self-contained, a detailed proof
  of Lemma~\ref{lem:algo-rs}, which largely parallels the material in
  \cite{Schaller:21f}, is given in Appendix~\ref{appx:additional-proofs}.
\end{proof}

We continue with a number of intermediate results that we will need to
establish the correctness of Algorithm~\ref{alg:recognition}.

\begin{lemma}
  \label{lem:exit-Ga-edge}
  Let $\graphs = (\Gu,\Gg,\Ga,\sigma)$ with vertex set $L$ be a valid input
  for Algorithm~\ref{alg:recognition}.  Consider a recursion step on
  $L'\subseteq L$ and $u_S\in V^0(S)$ of
  Algorithm~\ref{alg:recognition}. Then there are no $x,y\in L'$ in the
  same connected component of $H_1$ such that $xy\in E(\Ga)$ and
  $\lca_S(\sigma(x),\sigma(y))=u_S$.
\end{lemma}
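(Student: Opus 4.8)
The plan is to reduce the ``same connected component of $H_1$'' hypothesis to the much more usable statement that $x$ and $y$ have a \emph{common} $H_1$-neighbour, and then to invoke the informative species triples. The crucial observation is that, restricted to $L'$, the graph $H_1$ is exactly the complement of $\Ga[L']$. Indeed, since $(\Gu,\Gg,\Ga,\sigma)$ is a $3$-partition, every pair $\{p,q\}\subseteq L'$ lies in exactly one of the three graphs, so $pq\in E(H_1)=E(\Gu[L'])\cup E(\Gg[L'])$ holds precisely when $pq\notin E(\Ga[L'])$; that is, $H_1=\overline{\Ga[L']}$. Because $\Ga$ is a cograph (a valid-input hypothesis) and cographs are closed under induced subgraphs and complementation (Prop.~\ref{prop:cograph}), $H_1$ is itself a cograph. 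A connected cograph has diameter at most $2$: a shortest path between two vertices is induced, so a distance of at least $3$ would exhibit an induced $P_4$, which a cograph forbids. Hence any two vertices in one connected component of $H_1$ are at $H_1$-distance at most $2$.

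Next I would argue by contradiction. Suppose $x,y\in L'$ lie in the same connected component of $H_1$ with $xy\in E(\Ga)$ and $\lca_S(\sigma(x),\sigma(y))=u_S$; set $X\coloneqq\sigma(x)$ and $Y\coloneqq\sigma(y)$. Since $xy\in E(\Ga)$ the pair $x,y$ is non-adjacent in $H_1$, so their $H_1$-distance is at least $2$; by the diameter bound it equals exactly $2$, and therefore they admit a common $H_1$-neighbour $z\in L'$. Then $xz,yz\in E(\Gu)\cup E(\Gg)$, i.e.\ $xz,yz\notin E(\Ga)$. As $\Gu$ and $\Gg$ are properly coloured (valid input), $Z\coloneqq\sigma(z)$ satisfies $Z\ne X$ and $Z\ne Y$; and because $u_S=\lca_S(X,Y)\in V^0(S)$ is an inner vertex we must have $X\ne Y$. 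So $X$, $Y$, $Z$ are pairwise distinct.

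Finally, with $X,Y,Z$ pairwise distinct, $xy\in E(\Ga)$ and $xz,yz\notin E(\Ga)$, condition~(b') of Def.~\ref{def:inf-forb-triples} puts $XY\vert Z\in\R_S(\graphs)$. Since $S$ is constructed to agree with $(\R_S(\graphs),\F_S(\graphs))$ (Line~\ref{line:S} together with the valid-input hypothesis), $S$ displays $XY\vert Z$, i.e.\ $\lca_S(X,Y)\prec_S\lca_S(X,Z)$. On the other hand, $z\in L'$ gives $Z\in\sigma(L')\subseteq L(S(u_S))$ by Observation~\ref{obs:color-subset}, so $Z\preceq_S u_S$; combined with $X\preceq_S u_S$ this forces $\lca_S(X,Z)\preceq_S u_S=\lca_S(X,Y)$, contradicting $\lca_S(X,Y)\prec_S\lca_S(X,Z)$. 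Hence no such pair $x,y$ exists.

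The only genuinely non-routine step is the first: recognizing $H_1=\overline{\Ga[L']}$, which turns the topological ``same component'' assumption into the quantitative existence of a common neighbour. This is what makes the argument short; a direct induction on the length of a connecting $H_1$-path instead gets stuck on a length-$3$ configuration (three ``crossing'' $H_1$-edges whose chords are all $\Ga$-edges) that one would then have to rule out by hand. Everything after the common neighbour is obtained is a single application of rule~(b') played against the colour bound $Z\preceq_S u_S$ from Observation~\ref{obs:color-subset}; the only edge cases to verify are that $X,Y,Z$ are indeed pairwise distinct, which proper colouring of $\Gu,\Gg$ and the fact that $u_S$ is inner guarantee.
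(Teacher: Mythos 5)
Your proof is correct and follows essentially the same route as the paper's: the paper also observes that $H_1=\Gu[L']\cup\Gg[L']$ is the complement of $\Ga[L']$ and hence a cograph, argues via a shortest $H_1$-path that $x$ and $y$ must have a common $H_1$-neighbour $z$ (the paper phrases your diameter-$2$ bound as ``the shortest path has at most $3$ vertices since a cograph contains no induced $P_4$''), and then derives the informative triple $XY\vert Z\in\R_S(\graphs)$ via rule~(b'), contradicting $\lca_S(X,Y)=u_S$ through $Z\preceq_S u_S$ from Obs.~\ref{obs:color-subset}. The only difference is presentational (you state the cograph-diameter fact abstractly, the paper argues it inline), so there is nothing to fix.
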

\begin{proof}
  Assume for contradiction that, for some $L'$ and $u_S\in V^0(S)$
  appearing in the recursion, there is a connected component $C_i$ of $H_1$
  with vertices $x,y\in C_i$ and colors $X\coloneqq \sigma(x)$ and
  $Y\coloneqq\sigma(y)$ such that $xy\in E(\Ga)$ and $\lca_S(X,Y)=u_S$.  By
  assumption, $u_S$ is an interior vertex and thus $X\ne Y$.  Since the
  input $\Ga$ is a cograph, the induced subgraph $\Ga[L']$ and its
  complement, which by construction equals $H_1=\Gu[L'] \cup \Gg[L']$, are
  also cographs (cf.\ Prop.\ \ref{prop:cograph}).

  Consider a shortest path $P$ in $H_1$ connecting $x$ and $y$, which
  exists since $x,y\in C_i$. Since $\Ga[L']$ and $H_1$ are edge-disjoint
  and $xy\in E(\Ga[L'])$, $P$ contains at least $3$ vertices.  Since $H_1$
  is a cograph and thus does not contain induced $P_4$s, $P$ contains at
  most $3$ vertices. Hence, $P$ is of the form $x-z-y$ and we have $xy\in
  E(\Ga)$ and $xz, yz\notin E(\Ga)$.  Therefore, and since $\Gu$ and $\Gg$
  are properly colored, we have $Z\coloneqq\sigma(z)\notin \{X,Y\}$, and
  thus $X,Y,Z$ are pairwise distinct colors.  By Prop.\ \ref{prop:triples},
  $XY\vert Z\in \R_S(\graphs)$.  Taken together, the latter arguments and the
  construction of $S$ in Line~\ref{line:S} imply that $S$ displays the
  informative triple $XY\vert Z$.  Since $x,y,z\in L'$, we have $X,Y,Z\in
  L(S(u_s))$ by Obs.~\ref{obs:color-subset}. In particular, therefore,
  $Z\preceq_S u_S$. Thus $XY\vert Z$ implies that $\lca_S(X,Y) \prec_S u_S$; a
  contradiction.
\end{proof}

\begin{lemma}
  \label{lem:exit-not-Gu-edge}
  Let $\graphs = (\Gu,\Gg,\Ga,\sigma)$ with vertex set $L$ be a valid input
  for Algorithm~\ref{alg:recognition}.  Consider a recursion step on
  $L'\subseteq L$ and $u_S\in V^0(S)$ of
  Algorithm~\ref{alg:recognition}. Then, for all $x,y \in L'$ that are
  contained in the same connected component of $H_2$ but in distinct
  connected components of $H_3$, it holds $xy\in E(\Gu)$.
\end{lemma}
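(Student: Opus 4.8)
The plan is to translate the combinatorial hypothesis into a statement about the species tree $S$ and then argue by contradiction using the consistency of $S$ with $(\R_S(\graphs),\F_S(\graphs))$. The first step is a reformulation. Within a single $H_2$-component, two vertices $x,y$ lie in the \emph{same} $H_3$-component if and only if $\sigma(x),\sigma(y)\preceq_S v$ for a common child $v\in\child_S(u_S)$: every $H_3$-component carries all of its colors below one child (children of $u_S$ are pairwise incomparable), and conversely a pair with colors below a common child that already sits in one $H_2$-component is by definition an edge of $H_3$. Since $\sigma(x),\sigma(y)\preceq_S u_S$ by Obs.~\ref{obs:color-subset}, lying in \emph{distinct} $H_3$-components is equivalent to $\lca_S(\sigma(x),\sigma(y))=u_S$. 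Hence the claim is equivalent to: if $x,y$ share an $H_2$-component and $\lca_S(\sigma(x),\sigma(y))=u_S$, then $xy\in E(\Gu)$.

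Because $(\Gu,\Gg,\Ga)$ is a $3$-partition, $xy$ lies in exactly one class, so it suffices to exclude $xy\in E(\Ga)$ and $xy\in E(\Gg)$. The $\Ga$-case is immediate: as $H_2\subseteq H_1$, the vertices $x,y$ also share an $H_1$-component, and then Lemma~\ref{lem:exit-Ga-edge} forbids $xy\in E(\Ga)$ together with $\lca_S(\sigma(x),\sigma(y))=u_S$. For the $\Gg$-case I would assume $xy\in E(\Gg)$ and take a shortest path $x=w_0,\dots,w_d=y$ in $H_2$; here $d\ge 2$, since an $H_2$-edge is either a $\Gu$-edge or a $\Gg$-edge whose endpoints have colors below a common child, and the latter is ruled out by $\lca_S(\sigma(x),\sigma(y))=u_S$. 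Writing $X\coloneqq\sigma(x)$, $Y\coloneqq\sigma(y)$, the whole argument rests on the observation that $\lca_S(X,Y)=u_S$ is $\succeq_S$-maximal among all $\lca_S$ of colors occurring below $u_S$.

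The core of the proof is then a triple analysis of a triangle $\{x,z,y\}$ formed by an endpoint and a suitable neighbour $z$ on (or reachable along) the path, using Prop.~\ref{prop:triples} and Definition~\ref{def:inf-forb-triples}. Two benign configurations are eliminated cleanly. If the two edges incident to a common vertex are both $\Gu$-edges, then condition (a$'$) puts $XY\vert\sigma(z)\in\R_S(\graphs)$, so $S$ would display it; but $XY\vert\sigma(z)$ requires $\lca_S(X,Y)\prec_S\lca_S(X,\sigma(z))\preceq_S u_S$, contradicting $\lca_S(X,Y)=u_S$. If instead $xy$ and a second $\Gg$-edge share a vertex while the third edge is not a $\Gg$-edge, then the forbidden set $\F_S(\graphs)$ contains a triple that $S$ is nonetheless forced to display (e.g. a triple $\sigma(z)Y\vert X$ whenever $z$ falls in the same child-fiber as $y$, since then $\lca_S(\sigma(z),Y)\prec_S u_S=\lca_S(X,Y)=\lca_S(X,\sigma(z))$), again contradicting that $S$ agrees with $(\R_S(\graphs),\F_S(\graphs))$. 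Combining these with an induction on the $H_2$-distance $d$ and a minimal-counterexample choice should discharge every case in which the relevant chord is a $\Gu$- or $\Gg$-edge.

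The step I expect to be the main obstacle is the \emph{intra-fiber $\Ga$-chord}: when the neighbour $z$ lies in the \emph{same} child-fiber as $x$ and $xz\in E(\Ga)$, the triangle $x,z,y$ is a rainbow triangle that is \emph{consistent} with $S$ (condition (b$'$) only forces the genuinely true triple $XZ\vert Y$, since $\lca_S(X,\sigma(z))\prec_S u_S$), so no local contradiction appears and the induction does not obviously shrink. My plan to overcome this is to exploit the cograph structure of $H_1=\overline{\Ga[L']}$ (Prop.~\ref{prop:cograph}): a shortest $H_1$-path between $x$ and $z$ has exactly three vertices, exactly as in the proof of Lemma~\ref{lem:exit-Ga-edge}, which supplies a mediating vertex carrying a $\Gu$- or $\Gg$-chord to which the triple machinery of the previous paragraph applies. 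If a single pass does not suffice, I would layer a secondary induction on the height of $S(u_S)$, the child-fibers being strictly smaller species subtrees, so that the offending $\Ga$-chord is pushed down to a level where it is excluded. I anticipate that verifying this reduction terminates, and that no new rainbow configuration escapes it, is the only genuinely delicate point.
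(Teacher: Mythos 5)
Your setup and your ``easy'' cases coincide with the paper's proof: the reformulation of ``distinct $H_3$-components inside one $H_2$-component'' as $\lca_S(\sigma(x),\sigma(y))=u_S$, the exclusion of $xy\in E(\Ga)$ via Lemma~\ref{lem:exit-Ga-edge} and $H_2\subseteq H_1$, the minimal-counterexample choice with respect to $H_2$-distance, and the two triple eliminations (two incident $\Gu$-edges force $XY\vert Z\in\R_S(\graphs)$, contradicting $\lca_S(X,Y)=u_S$; a $\Gg$-path of length two whose genuinely displayed triple lies in $\F_S(\graphs)$, or else a new closer pair in distinct $H_3$-components) are exactly the paper's analysis of the neighbour $z$ of $x$ on the shortest $H_2$-path. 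One slip in your description of the hard case: the offending $\Ga$-edge cannot be a path edge $xz$, since path edges lie in $E(H_2)\subseteq E(\Gu)\cup E(\Gg)$; it is the \emph{chord} $yz$, and Lemma~\ref{lem:exit-Ga-edge} then places $\sigma(z)$ in the same child-fibre as $Y$, not as $X$. Modulo this relabeling you have correctly isolated the genuinely hard configuration: $xy\in E(\Gg)$, $xz\in E(\Gu)$, $yz\in E(\Ga)$, a rainbow triangle whose informative triple $YZ\vert X$ is genuinely displayed by $S$, so no local contradiction arises.

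The gap is that you never close this case, and neither of your proposed remedies works as stated. The mediating vertex from a shortest $H_1$-path $y-m-z$ (three vertices, by cograph-ness of $H_1=\overline{\Ga[L']}$) only produces the informative triple $YZ\vert\sigma(m)$, i.e.\ $\lca_S(Y,Z)\prec_S\lca_S(Y,\sigma(m))$ --- which is true and unhelpful, because $\lca_S(Y,Z)$ already lies strictly below $u_S$; nothing contradicts $S$ and no induction measure decreases. The secondary induction on the height of $S(u_S)$ is also unavailable: the graphs $H_1$, $H_2$, $H_3$ are defined relative to the fixed pair $(L',u_S)$ of the recursion step, and descending to a child fibre replaces all three graphs, so the statement does not transfer downward. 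The paper closes the case with a different, concrete idea: symmetrize to the other endpoint. It takes the neighbour $w$ of $y$ on the same shortest path, repeats the analysis to force $yw\in E(\Gu)$ and $xw\in E(\Ga)$, then uses that $\Gu$ and $\Ga$ themselves (not $H_1$) are cographs to force $zw\in E(\Gg)$ --- otherwise $x-z-w-y$ or $x-w-z-y$ would be an induced $P_4$ in $\Gu$ or $\Ga$ --- and finally applies the minimal-distance choice three times: $(x,w)$ and $(y,z)$ must each lie in a common $H_3$-component (else they would be closer counterexamples), whence $z$ and $w$ lie in \emph{distinct} $H_3$-components with $zw\notin E(\Gu)$ and strictly smaller $H_2$-distance, the final contradiction. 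This double-neighbour step together with the cograph-forced edge $zw\in E(\Gg)$ is the missing ingredient; without it your induction has no mechanism to shrink in the rainbow configuration, as you yourself concede.
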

\begin{proof}
  Suppose that, for some $L'$ and $u_S\in V^0(S)$ appearing in the
  recursion, there is a connected component $C_j$ of $H_2$ with $x,y\in
  C_j$ such that $x$ and $y$ are in distinct connected components of $H_3$.
  In addition, suppose for contradiction that $xy\notin E(\Gu)$.  We may
  assume w.l.o.g.\ that $x$ and $y$ have minimal distance in $H_2$, i.e.,
  there are no vertices $x',y'\in C_j$ such that $x'$ and $y'$ are in
  distinct connected components of $H_3$, $x'y'\notin E(\Gu)$, and in
  addition the distance of $x'$ and $y'$ in $H_2$ is smaller than that of
  $x$ and $y$.  Set $X\coloneqq \sigma(x)$ and $Y\coloneqq \sigma(y)$ and
  let $C_x$ and $C_y$ be the connected components of $H_3$ that contain $x$
  and $y$, respectively. By Obs.~\ref{obs:color-subset}, we have
  $\sigma(L')\subseteq L(S(u_S))$. This and the fact that $x$ and $y$ are
  in distinct connected components of $H_3$ but in the same connected
  component $C_j$ of $H_2$ implies that $X\preceq_S v_X$ and $Y \preceq_S
  v_Y$ for two distinct children $v_X,v_Y \in \child_S(u_S)$.  In
  particular, we have $X\ne Y$ and $\lca_S(X,Y)=u_S$. Moreover, by
  construction, every connected component of $H_2$ is contained in a
  connected component of $H_1$ and thus, $x$ and $y$ are in the same
  connected component of $H_1$.  The latter two arguments together with
  Lemma~\ref{lem:exit-Ga-edge} imply $xy\notin E(\Ga)$.  In summary, we
  therefore have $xy\in E(\Gg)$.

  Consider a shortest path $P$ connecting $x$ and $y$ in $H_2$, which
  exists since $x,y\in C_j$.  By construction, $xy\in E(\Gg)$ and
  $\lca_S(X,Y)=u_S$ imply that $xy\notin E(H_2)$ and thus $P$ contains at
  least $3$ vertices.  Let $z\in C_j\setminus \{x,y\}$ be the neighbor of
  $x$ in $P$.  We consider the two possibilities (a) $xz\in E(\Gg)$ and (b)
  $xz\in E(\Gu)$.  Note that $X\ne \sigma(z) \eqqcolon Z$ holds in both
  cases since $\Gg$ and $\Gu$ are properly colored.

  In Case~(a), we must have $Z\preceq_S v_X$ since $xz$ is an edge in
  $H_2$.  This implies that $Z\ne Y$ (and thus $X,Y,Z$ are pairwise
  distinct) and $\lca_S(Y,Z)=u_S$.  Based on the latter arguments, $S$ must
  display the triple $XZ\vert Y$.  Suppose that $yz\notin E(\Gg)$. Together
  with $xy,xz \in E(\Gg)$, we have $XZ\vert Y\in \F_S$ and thus, by
  construction of $S$ in Line~\ref{line:S}, $S$ cannot display $XZ\vert Y$;
  a contradiction.  Hence, $yz\in E(\Gg)$ must hold.  Since $Z\preceq_S
  v_X$ and $Y\preceq_S v_Y$, we have $yz\notin E(H_3)$. Note that connected
  components in $H_3$ are complete graphs. Hence, $yz\notin E(H_3)$ implies
  that $y$ and $z$ are in distinct connected components of $H_3$.  However,
  the distance of $y$ and $z$ in $H_2$ is strictly smaller than that of $x$
  and $y$ (because $z$ is closer to $y$ than $x$ in the shortest path $P$);
  a contradiction to our choice of $x$ and $y$.  In summary, Case~(a)
  therefore cannot occur.

  In Case~(b) we have $xz\in E(\Gu)$.  If $yz\in E(\Gu)$, then $Y\ne Z$
  (because $\Gu$ is properly colored) and, by definition, $XY\vert Z\in
  \R_S$.  By construction in Line~\ref{line:S}, the species tree $S$
  displays $XY\vert Z$. Together with $X,Y,Z\in L(S(u_S))$ by
  Obs.~\ref{obs:color-subset}, this contradicts that $\lca_S(X,Y)=u_S$.
  Similarly, if $yz\in E(\Gg)$, then $S$ displays neither of the forbidden
  triples $XY\vert Z$ and $YZ\vert X$.  Hence, $S$ displays $XZ\vert Y$ or
  $S_{\vert XYZ}$ is the star tree on the three colors.  In both cases, we
  have $\lca_S(Y,Z)=\lca_S(X,Y)=u_S$.  In particular, therefore $y$ and $z$
  are in distinct connected components of $H_3$. As argued before, the
  distance of $y$ and $z$ is smaller than that of $x$ and $y$. Taken
  together the latter arguments again contradict our choice of $x$ and $y$,
  and thus $yz\in E(\Ga)$ is left as the only remaining choice.
  
  In summary, only case (b) $xz\in E(\Gu)$ is possible, which in particular
  implies $yz\in E(\Ga)$. Therefore, we have $yz\notin E(H_2)$ and thus the
  path $P$ contains at least $4$ vertices. Thus, consider the neighbor
  $w\in C_j\setminus \{x,y,z\}$ of $y$ in $P$ and set $W\coloneqq
  \sigma(w)$.  We can apply analogous arguments for $x,y,w$ as we have used
  for $x,y,z$ to exclude the case (a') $yw\in E(\Gg)$ and, in case (b')
  $yw\in E(\Gu)$, we obtain $xw\in E(\Ga)$ as the only possibility.
  
  Taking the latter arguments together, it remains to consider the case $xy
  \in E(\Gg)$, $xz, yw\in E(\Gu)$, and $xw, yz\in E(\Ga)$.  Since $\Gu$ and
  $\Ga$ are cographs, we have $zw\in E(\Gg)$ because otherwise $x-z-w-y$ or
  $x-w-z-y$ would be an induced $P_4$ in $\Gu$ and $\Ga$, respectively.

  Now, $x$ and $w$ must be in the same connected component of $H_3$, as
  otherwise $xw \notin E(G_<)$ and the fact that $x$ and $w$ are at a
  shorter distance than $x$ and $y$ in $H_2$ would contradict our choice of
  $x$ and $y$.  Likewise, $y$ and $z$ are in the same connected component
  of $H_3$ since $yz \notin E(G_<)$ and they are closer than $x$ and $y$ in
  $H_2$.  It follows that $w$ and $z$ are in distinct connected components
  of $H_3$, again yielding a contradiction since they are closer than $x$
  and $y$ in $H_2$ and $wz \notin E(G_<)$.  In summary, therefore, we have
  $xy\in E(\Gu)$.
\end{proof}

The following result is a consequence of Lemma~\ref{lem:exit-not-Gu-edge}
and will be helpful later on.
\begin{corollary}
  \label{cor:H2-cc}
  Let $\graphs = (\Gu, \Gg, \Ga, \sigma)$ with vertex set $L$ be a valid
  input for Algorithm~\ref{alg:recognition}.  Consider a recursion step on
  $L'\subseteq L$ and $u_S\in V^0(S)$ of
  Algorithm~\ref{alg:recognition}. If $xy \in E(H_1)\setminus E(H_2)$, then
  $x$ and $y$ are in distinct connected components of $H_2$.
\end{corollary}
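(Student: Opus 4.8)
The plan is to first decode the hypothesis $xy\in E(H_1)\setminus E(H_2)$ in terms of the species tree $S$, and then to invoke Lemma~\ref{lem:exit-not-Gu-edge} to rule out the possibility that $x$ and $y$ share a connected component of $H_2$. Since $E(\Gu[L'])\subseteq E(H_2)$ by construction, the assumption $xy\notin E(H_2)$ forces $xy\notin E(\Gu[L'])$; as $xy\in E(H_1)=E(\Gu[L'])\cup E(\Gg[L'])$, this leaves $xy\in E(\Gg[L'])$. I would then argue that $\lca_S(\sigma(x),\sigma(y))=u_S$. Indeed, by Obs.~\ref{obs:color-subset} we have $\sigma(x),\sigma(y)\preceq_S u_S$, and since $u_S$ is an inner vertex while $\sigma(x),\sigma(y)$ are leaves, both colors lie strictly below $u_S$ and hence each descends from a unique child of $u_S$.

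Here lies the one subtlety I would check carefully: if both colors descended from a common child $v\in\child_S(u_S)$, then I must verify $\sigma(x),\sigma(y)\prec_S v$ (strict inclusion), since then the defining condition of $H_2$ would give $xy\in E(H_2)$, contradicting the hypothesis. Strictness is where the proper coloring of $\Gg$ enters: because $xy\in E(\Gg)$ and $\Gg$ is properly colored (valid input), we have $\sigma(x)\neq\sigma(y)$, so neither color can coincide with the (possibly leaf) child $v$, forcing both of them strictly below $v$. Ruling out this common-child case leaves precisely the situation in which $\sigma(x)$ and $\sigma(y)$ descend from two \emph{distinct} children of $u_S$, i.e.\ $\lca_S(\sigma(x),\sigma(y))=u_S$.

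With this in hand, I would use the observation established immediately after the definition of $H_3$, namely that any two vertices lying in a common connected component of $H_3$ must have colors descending from a single child of $u_S$. Since $\sigma(x)$ and $\sigma(y)$ descend from distinct children, $x$ and $y$ therefore lie in distinct connected components of $H_3$. Now suppose, for contradiction, that $x$ and $y$ were in the same connected component of $H_2$. Then $x$ and $y$ would lie in one component of $H_2$ but in distinct components of $H_3$, and Lemma~\ref{lem:exit-not-Gu-edge} would yield $xy\in E(\Gu)$. This contradicts $xy\in E(\Gg)$, because $\Gu$, $\Gg$, and $\Ga$ are pairwise edge-disjoint. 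Hence $x$ and $y$ must lie in distinct connected components of $H_2$, as claimed.

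The argument is a short chain of implications, so I do not anticipate a serious obstacle. The only genuinely delicate point is the strict-versus-nonstrict comparison $\prec_S v$ when translating $xy\notin E(H_2)$ into $\lca_S(\sigma(x),\sigma(y))=u_S$, which is exactly why the proper coloring of $\Gg$ must be used; everything else is bookkeeping with the inclusions among the connected components of $H_1\supseteq H_2$ and of $H_3$.
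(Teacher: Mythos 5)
Your proof is correct and follows essentially the same route as the paper's: decode $xy\in E(H_1)\setminus E(H_2)$ into $xy\in E(\Gg)$ with $\lca_S(\sigma(x),\sigma(y))=u_S$, conclude that $x$ and $y$ lie in distinct connected components of $H_3$, and then invoke Lemma~\ref{lem:exit-not-Gu-edge} to get the contradiction $xy\in E(\Gu)$ with edge-disjointness of the $3$-partition. Your explicit treatment of the $\prec_S$ versus $\preceq_S$ subtlety via the proper coloring of $\Gg$ is a point the paper's proof (which passes from ``no child $v$ with $\sigma(x),\sigma(y)\prec_S v$'' directly to $xy\notin E(H_3)$) leaves implicit, but it does not change the argument.
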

\begin{proof}
  Suppose $xy \in E(H_1)\setminus E(H_2)$. By construction of the auxiliary
  graphs, this implies that $xy\in E(\Gg)$ and there is no $v\in
  \child_S(u_S)$ such that $\sigma(x),\sigma(y) \prec_{S} v$.  The latter
  in particular yields that $xy \notin E(H_3)$.  This, together with the
  fact that $H_3$ is a graph whose connected components are complete
  graphs, implies that $x$ and $y$ are in distinct connected components of
  $H_3$.  We can now use Lemma~\ref{lem:exit-not-Gu-edge} to conclude that
  $x$ and $y$ must also be in distinct connected components of $H_2$ as
  otherwise we would obtain $xy\in E(\Gu)$; a contradiction.
\end{proof}

We are now in the position to demonstrate that Algorithm~\ref{alg:recognition}
is correct.

\begin{lemma}
  \label{lem:algo-explains}
  Let $\graphs$ be a valid input for
  Algorithm~\ref{alg:recognition}.  Then, Algorithm~\ref{alg:recognition}
  returns a relaxed scenario $\scen=(T,S,\sigma,\mu,\tT,\tS)$ that explains
  $\graphs$.
\end{lemma}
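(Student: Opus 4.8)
The plan is to verify that for every pair of leaves $x,y\in L$ the edge $xy$ falls into the same one of the three classes in the input $\graphs$ and in $\graphs(\scen)$. By Lemma~\ref{lem:algo-rs}, the algorithm outputs a genuine relaxed scenario with $L(T)=L$, so $\graphs(\scen)$ is a $3$-partition of $\binom{L}{2}$; since $\graphs$ is a $3$-partition of the same set by assumption, class-agreement for every pair yields $\Gu=\Gu(\scen)$, $\Gg=\Gg(\scen)$, and $\Ga=\Ga(\scen)$. To locate $\lca_T(x,y)$, I would use Obs.~\ref{obs:leaf-sets}: along any root-to-leaf branch of the recursion the sets $L'$ are nested and decreasing, so there is a unique deepest recursion step, say on $(L',u_S)$, with $x,y\in L'$. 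I call this the \emph{separating step}; at it, $x$ and $y$ are passed to different children of the local vertex, so $\lca_T(x,y)$ is exactly the branching vertex ($\rho'$, some $u_i$, or some $v_j$) at which they diverge. That vertex has at least two children and hence survives the suppression in Line~\ref{line:Tphylo}, so its value of $\tT$ is the one computed in \textsc{BuildGeneTree}. Finally, Obs.~\ref{obs:color-subset} gives $\sigma(L')\subseteq L(S(u_S))$ and hence $\lca_S(\sigma(x),\sigma(y))\preceq_S u_S$, so $\tS(\lca_S(\sigma(x),\sigma(y)))\le \tS(u_S)$.

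The argument then splits according to how $x$ and $y$ separate. If $u_S$ is a leaf (the base case), then $\sigma(x)=\sigma(y)=u_S$, so $xy\in E(\Ga)$ by proper colouring of $\Gu,\Gg$ (Cor.~\ref{cor:equal-colors}) and $xy\in E(\Ga(\scen))$ by Lemma~\ref{lem:gt-samecol}. If $x,y$ lie in distinct components of $H_1$ (so $\lca_T(x,y)=\rho'$ with $\tT(\rho')=\tS(u_S)+\epsilon$), then $xy\notin E(H_1)$ forces $xy\in E(\Ga)$, while $\tS(\lca_S(\sigma(x),\sigma(y)))\le \tS(u_S)<\tS(u_S)+\epsilon$ gives $xy\in E(\Ga(\scen))$. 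If $x,y$ lie in the same component of $H_2$ but distinct components of $H_3$ (so $\lca_T(x,y)=v_j$ with $\tT(v_j)=\tS(u_S)-\epsilon$), then Lemma~\ref{lem:exit-not-Gu-edge} yields $xy\in E(\Gu)$; moreover distinct $H_3$-components (whose classes are complete graphs) mean no child of $u_S$ dominates both colours, whence $\lca_S(\sigma(x),\sigma(y))=u_S$ and $\tT(v_j)<\tS(u_S)$, giving $xy\in E(\Gu(\scen))$.

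The remaining case is that $x,y$ lie in the same component of $H_1$ but distinct components of $H_2$, so $\lca_T(x,y)=u_i$ with $\tT(u_i)=\tS(u_S)$. Here $xy\notin E(\Gu)$ in both graphs: an input $\Gu$-edge would also be an $H_2$-edge and thus place $x,y$ in one $H_2$-component, while $\tT(u_i)=\tS(u_S)\ge\tS(\lca_S(\sigma(x),\sigma(y)))$ excludes $xy\in E(\Gu(\scen))$. It then remains to separate $\Gg$ from $\Ga$, which I would do by the position of $\lca_S(\sigma(x),\sigma(y))$. If $\lca_S(\sigma(x),\sigma(y))=u_S$, then $\tT(u_i)=\tS(u_S)$ gives $xy\in E(\Gg(\scen))$; on the input side $xy\notin E(\Ga)$ by Lemma~\ref{lem:exit-Ga-edge} (the two lie in one $H_1$-component and $\lca_S=u_S$) and $xy\notin E(\Gu)$ as above, so $xy\in E(\Gg)$. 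If instead $\lca_S(\sigma(x),\sigma(y))\prec_S u_S$, then some child $v$ of $u_S$ dominates both colours; now $\tS(\lca_S(\sigma(x),\sigma(y)))<\tS(u_S)=\tT(u_i)$ gives $xy\in E(\Ga(\scen))$, while an input $\Gg$-edge would again be an $H_2$-edge (both colours lie strictly below the common child $v$, using that equal colours are excluded from $\Gg$), contradicting distinct $H_2$-components, so $xy\in E(\Ga)$. In every case the two classifications coincide.

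I expect the two delicate points to be, first, the bookkeeping that $\lca_T(x,y)$ is indeed the branching vertex of the separating step and is not removed by the degree-two suppression, and second the same-$H_1$/different-$H_2$ case, where the correct class is not determined by the component structure alone but only together with the location of $\lca_S(\sigma(x),\sigma(y))$ relative to $u_S$. The real content of that case is already carried by Lemma~\ref{lem:exit-Ga-edge} and Lemma~\ref{lem:exit-not-Gu-edge}, so the proof proper should reduce to assembling these facts over the four cases above.
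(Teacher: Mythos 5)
Your proposal is correct and follows essentially the same route as the paper's proof: both identify the recursion step in which $\lca_T(x,y)$ is created, run the same case analysis on the auxiliary-graph components (base case with $u_S$ a leaf; distinct $H_1$-components; same $H_1$ but distinct $H_2$-components, split by whether $\lca_S(\sigma(x),\sigma(y))=u_S$ or $\prec_S u_S$; same $H_2$ but distinct $H_3$-components), invoke Lemmas~\ref{lem:exit-Ga-edge} and~\ref{lem:exit-not-Gu-edge} in exactly the same places, and compare time stamps to classify each pair identically in $\graphs$ and $\graphs(\scen)$, concluding via the $3$-partition structure. The two points you flag are handled soundly and, if anything, slightly more explicitly than in the paper — in particular your use of the proper coloring of $\Gg$ to obtain the strict condition $\sigma(x),\sigma(y)\prec_S v$ required by the definition of $H_2$.
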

\begin{proof}
  Let $\graphs = (\Gu, \Gg, \Ga, \sigma)$ be a valid input with vertex set $L$ 
  for  Algorithm~\ref{alg:recognition}. By Lemma~\ref{lem:algo-rs}, 
  Algorithm~\ref{alg:recognition} returns a relaxed scenario
  $\scen=(T,S,\sigma,\mu,\tT,\tS)$ such that $L(T)=L$.
  We continue with showing that $ \scen$ explains $\graphs$.
  
  Consider two distinct vertices $x,y\in L=L(T)$ and their last common
  ancestor $\lca_T(x,y)$.  Let $L'\subseteq L$ and $u_S\in V(S)$ be the
  input of the recursive call of \texttt{BuildGeneTree} in which
  $\lca_T(x,y)$ was created.  By Obs.~\ref{obs:color-subset}
  and~\ref{obs:leaf-sets}, we have $\sigma(L')\subseteq L(S(u_s))$ and
  $x,y\in L'$, respectively, and therefore $\lca_S(\sigma(x), \sigma(y))
  \preceq_S u_S$.  Moreover, time consistency yields $\tS(\lca_S(\sigma(x),
  \sigma(y))) \le \tS(u_S)$.  The vertex $\lca_T(x,y)$ has been created in
  exactly one of the following three locations in the algorithm: (a) in
  Line~\ref{line:create-rho}, (b)~in Line~\ref{line:create-u_i}, and (c)~in
  Line~\ref{line:create-v_j}.

  In Case~(a), $\lca_T(x,y)$ equals $\rho'$ in the recursion step of
  interest.  Suppose first that $u_S$ is a leaf of $S$ and thus
  $\sigma(x)=\sigma(y)=u_S$.  Hence, we have $xy\in E(\Ga(\scen))$ by
  Cor.~\ref{cor:equal-colors} and $xy\in E(\Ga)$, since $\Gu$ and $\Gg$ are
  properly colored.  Now suppose that $u_S$ is not a leaf.  Then
  $\lca_T(x,y)=\rho'$ implies that $x$ and $y$ lie in distinct connected
  components of the auxiliary graph $H_1$ and thus $xy\notin E(H_1)$.  By
  construction of this graph, the latter yields $xy\in E(\Ga)$.  Moreover,
  we have set $\tT(\rho')=\tS(u_S)+\epsilon > \tS(u_S)$.  Together with
  $\lca_T(x,y)=\rho'$, this implies $\tS(\lca_S(\sigma(x), \sigma(y))) \le
  \tS(u_S) < \tT(\lca_T(x,y))$ and thus $xy\in E(\Ga(\scen))$.

  In Case~(b), $u_S$ is an inner vertex of $S$ and $\lca_T(x,y)$ equals
  $u_i$.  We have set $\tT(\lca_T(x,y)) = \tT(u_i) = \tS(u_S)$.  By
  construction, moreover, $x$ and $y$ must be in the same connected
  component $C_i$ of $H_1$ but in distinct connected components of $H_2$.
  Hence, we have $xy\notin E(H_2)$ which implies $xy\notin E(\Gu)$ by the
  construction of $H_2$.  Suppose first $\lca_S(\sigma(x), \sigma(y)) =
  u_S$.  Then $xy\in E(\Gg)$ as otherwise it would hold $xy\in E(\Ga)$; a
  contradiction to Lemma~\ref{lem:exit-Ga-edge}.  Moreover, we have
  $\tT(\lca_T(x,y)) = \tS(u_S)=\tS(\lca_S(\sigma(x), \sigma(y)))$ and thus
  $xy\in E(\Gg(\scen))$.  Now suppose $\lca_S(\sigma(x), \sigma(y)) \prec_S
  u_S$ and thus, by time consistency, $\tS(\lca_S(\sigma(x), \sigma(y))) <
  \tS(u_S) = \tT(\lca_T(x,y))$. This yields $xy\in E(\Ga(\scen))$.
  Moreover, from $\lca_S(\sigma(x), \sigma(y)) \prec_S u_S$, we conclude
  that $\sigma(x),\sigma(y)\preceq_S w$ for some child $w\in
  \child_S(u_S)$.  Therefore, we must have $xy\in E(\Ga)$ since otherwise
  $xy\in E(\Gg)$ would imply that $xy\in E(H_2)$.

  In Case~(c), $u_S$ is an inner vertex of $S$ and $\lca_T(x,y)$ equals
  $v_j$.  We have set $\tT(\lca_T(x,y)) = \tT(v_j) = \tS(u_S) - \epsilon <
  \tS(u_S)$.  By construction, moreover, $x$ and $y$ must be in the same
  connected component $C_j$ of $H_2$ (and thus also in the same connected
  component $C_i$ of $H_1$) but in distinct connected components of $H_3$.
  This immediately implies (i) that $xy \in E(\Gu)$ by
  Lemma~\ref{lem:exit-not-Gu-edge} and (ii), by construction of $H_3$, that
  $\sigma(x)$ and $\sigma(y)$ lie below distinct children of $u_S$. In
  particular, therefore, we have $\lca_S(\sigma(x), \sigma(y)) = u_S$ and
  thus $\tS(\lca_S(\sigma(x), \sigma(y))) = \tS(u_S) > \tT(\lca_T(x,y))$.
  This implies $xy\in E(\Gu(\scen))$.

  In summary, we have shown that $xy\in E(\Gu)$ iff $xy\in E(\Gu(\scen))$,
  $xy\in E(\Gg)$ iff $xy\in E(\Gg(\scen))$, and $xy\in E(\Ga)$ iff $xy\in
  E(\Ga(\scen))$.  Since $x,y\in L$ where chosen arbitrarily and $L=L(T)$,
  this proves that the relaxed scenario $\scen$ returned by the algorithm
  indeed explains the input $\graphs$.
\end{proof}

As outlined in the proof of Lemma~\ref{lem:algo-explains}, edges
  $xy\in E(\Gg)$ are considered only in Case (b) and we have $\lca_T(x,y) =
  u_i$ and $\lca_S(\sigma(x), \sigma(y)) = u_S$.  In this case, we put
  $\mu(u_i) = u_S$ in Line~\ref{line:mu-tT-inner2} of
  Algorithm~\ref{alg:recognition}. The reconciliation map $\mu$ therefore
  has the following property:
\begin{fact}
  \label{obs:generic}
  Let $\scen$ be a scenario produced by Algorithm~\ref{alg:recognition}
  for a valid input $\graphs=(\Gu, \Gg, \Ga, \sigma)$. Then $xy\in
  E(\Gg)$ implies $\mu(\lca_T(x,y)) = \lca_S(\sigma(x),\sigma(y))$.
\end{fact}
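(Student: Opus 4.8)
The plan is to build directly on the case analysis already carried out in the proof of Lemma~\ref{lem:algo-explains}, since almost all of the work is done there. First I would invoke Lemma~\ref{lem:algo-explains} to conclude that the scenario $\scen$ returned by the algorithm explains $\graphs$, so that $E(\Gg) = E(\Gg(\scen))$; hence an edge $xy \in E(\Gg)$ of the input is precisely an edge of the constructed EDT graph. This lets me reformulate the hypothesis $xy \in E(\Gg)$ as a statement about the vertex $\lca_T(x,y)$ and its image under $\mu$, which is what the conclusion concerns.

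Next I would recall that the proof of Lemma~\ref{lem:algo-explains} partitions a pair $x,y$ according to where $\lca_T(x,y)$ was created: Line~\ref{line:create-rho} (Case~(a)), Line~\ref{line:create-u_i} (Case~(b)), or Line~\ref{line:create-v_j} (Case~(c)). That proof establishes $xy \in E(\Ga(\scen))$ in Case~(a) and $xy \in E(\Gu(\scen))$ in Case~(c). Since $\Gu(\scen)$, $\Gg(\scen)$, and $\Ga(\scen)$ are edge-disjoint by construction and we are assuming $xy \in E(\Gg(\scen))$, neither Case~(a) nor Case~(c) is possible, so $\lca_T(x,y)$ must have been created in Line~\ref{line:create-u_i} as some vertex $u_i$. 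Within Case~(b), the same proof splits further: the subcase $\lca_S(\sigma(x),\sigma(y)) = u_S$ yields $xy \in E(\Gg(\scen))$, while the subcase $\lca_S(\sigma(x),\sigma(y)) \prec_S u_S$ yields $xy \in E(\Ga(\scen))$. Again by edge-disjointness only the first subcase is compatible with $xy \in E(\Gg)$, so I obtain $\lca_S(\sigma(x),\sigma(y)) = u_S$.

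Finally, I would read $\mu(\lca_T(x,y))$ directly off the algorithm: Line~\ref{line:mu-tT-inner2} assigns $\mu(u_i) = u_S$. Combining this with $\lca_T(x,y) = u_i$ and the equality $\lca_S(\sigma(x),\sigma(y)) = u_S$ obtained in the previous step gives $\mu(\lca_T(x,y)) = u_S = \lca_S(\sigma(x),\sigma(y))$, which is exactly the asserted identity.

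The one point that deserves care, and the main (if minor) obstacle, is to ensure that $u_i$ actually survives the suppression of degree-two vertices in Line~\ref{line:Tphylo}, so that in the final tree $T$ the vertex $\lca_T(x,y)$ genuinely is $u_i$ and carries the value $\mu(u_i)=u_S$. This holds because, when $xy \in E(\Gg)$, the vertices $x$ and $y$ lie in the same connected component $C_i$ of $H_1$ but in distinct connected components of $H_2$; consequently $u_i$ receives at least two children $v_j$ (one for each of the two distinct $H_2$-components containing $x$ and $y$) and is therefore not suppressed. With this observation the identification $\lca_T(x,y)=u_i$ used above is justified, and the claim follows.
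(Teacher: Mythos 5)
Your proposal is correct and takes essentially the same route as the paper, which likewise derives the observation directly from the case analysis in the proof of Lemma~\ref{lem:algo-explains}: edges $xy\in E(\Gg)$ arise only in Case~(b), where $\lca_T(x,y)=u_i$, $\lca_S(\sigma(x),\sigma(y))=u_S$, and $\mu(u_i)=u_S$ is set in Line~\ref{line:mu-tT-inner2}; your explicit elimination of Cases~(a) and~(c) via the edge-disjointness of $\Gu(\scen)$, $\Gg(\scen)$, and $\Ga(\scen)$ merely spells out what the paper leaves implicit. Your closing concern about the suppression step in Line~\ref{line:Tphylo} is harmless but redundant: $\lca_T(x,y)$ is by definition a vertex of the final tree $T$, so the vertex $u_i$ that the case analysis identifies with it was necessarily not suppressed (your argument that it has two children $v_j$ is a valid, if unneeded, verification of this).
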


  A main result of this section is the following characterization of
  graph $3$-partitions that derive from relaxed scenarios:
\begin{theorem}
  \label{thm:explainable-charac}
  A graph $3$-partition $\graphs=(\Gu, \Gg, \Ga, \sigma)$ can be explained
  by a relaxed scenario if and only if $\Gu$ and $\Gg$ are properly
  colored, $\Gu$ and $\Ga$ are cographs, and $(\R_S(\graphs),
  \F_S(\graphs))$ is consistent.
\end{theorem}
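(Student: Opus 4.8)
The plan is to prove both implications by assembling results already established; the key observation is that the three stated hypotheses coincide exactly with the conditions defining a \emph{valid input} for Algorithm~\ref{alg:recognition}, so the theorem is essentially a repackaging of the accumulated machinery.

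For the ``only if'' direction, I would start from a relaxed scenario $\scen=(T,S,\sigma,\mu,\tT,\tS)$ explaining $\graphs$, so that $\Gu=\Gu(\scen)$, $\Gg=\Gg(\scen)$, and $\Ga=\Ga(\scen)$, and then verify the three conditions in turn. Proper coloring of $\Gu$ and $\Gg$ is immediate from Cor.~\ref{cor:equal-colors}. That $\Gu$ and $\Ga$ are cographs follows from Lemma~\ref{lem:Gu-cograph} and Lemma~\ref{lem:Ga-cograph}, respectively. For the consistency of $(\R_S(\graphs),\F_S(\graphs))$, I would invoke Prop.~\ref{prop:triples}, which asserts that $S$ agrees with this pair; since $S$ is then a tree displaying every triple in $\R_S(\graphs)$ and none of those in $\F_S(\graphs)$, consistency holds by the very definition of agreement.

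For the ``if'' direction, I would simply note that the hypotheses---namely, $\graphs$ a graph $3$-partition with $\Gu,\Gg$ properly colored, $\Gu,\Ga$ cographs, and $(\R_S(\graphs),\F_S(\graphs))$ consistent---are precisely the defining conditions of a valid input for Algorithm~\ref{alg:recognition}. Lemma~\ref{lem:algo-explains} then yields at once that the algorithm returns a relaxed scenario $\scen$ that explains $\graphs$, completing this direction.

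I do not expect a genuine obstacle to remain at this stage: all the heavy lifting was carried out earlier, in Prop.~\ref{prop:triples} (for necessity) and in the correctness argument for Algorithm~\ref{alg:recognition}, i.e.\ Lemma~\ref{lem:algo-explains} (for sufficiency). The only place where a little care is warranted is in reading off \emph{consistency} from \emph{agreement} in the forward direction---one must recall that ``$(\R_S(\graphs),\F_S(\graphs))$ consistent'' merely asks for the existence of \emph{some} tree realizing the required and forbidden triples, and Prop.~\ref{prop:triples} furnishes $S$ itself as precisely such a witness.
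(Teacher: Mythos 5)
Your proposal is correct and matches the paper's own proof essentially verbatim: the forward direction via Cor.~\ref{cor:equal-colors}, Lemmas~\ref{lem:Gu-cograph} and~\ref{lem:Ga-cograph}, and Prop.~\ref{prop:triples}, and the converse by observing that the hypotheses are exactly the conditions for a valid input to Algorithm~\ref{alg:recognition} and invoking Lemma~\ref{lem:algo-explains}. Your closing remark that consistency follows from agreement because $S$ itself serves as the witnessing tree is exactly the (implicit) step the paper relies on as well.
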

\begin{proof}
  Suppose first that $\graphs$ can be explained by a relaxed scenario.
  Then $\Gu$ and $\Gg$ are properly colored by Cor.~\ref{cor:equal-colors},
  $\Gu$ and $\Ga$ are cographs by Lemmas~\ref{lem:Gu-cograph}
  and~\ref{lem:Ga-cograph}, respectively, and $(\R_S(\graphs),
  \F_S(\graphs))$ is consistent by Prop.~\ref{prop:triples}.  Conversely,
  suppose $\Gu$ and $\Gg$ are properly colored, $\Gu$ and $\Ga$ are
  cographs, and $(\R_S(\graphs), \F_S(\graphs))$ is consistent.  In this
  case, $\graphs = (\Gu, \Gg, \Ga, \sigma)$ is a valid input for
  Algorithm~\ref{alg:recognition} and Lemma~\ref{lem:algo-explains} implies
  that Algorithm~\ref{alg:recognition} returns a relaxed scenario that
  explains $\graphs$.
\end{proof}

This result implies almost immediately that the property of being
explainable by a relaxed scenario is hereditary:
\begin{corollary}
  \label{cor:hereditary}
  A graph $3$-partition $\graphs=(\Gu, \Gg, \Ga, \sigma)$ with vertex set
  $L$ can be explained by a relaxed scenario if and only if $\graphs_{\vert
    L'}$ can be explained by a relaxed scenario for all subsets
  $L'\subseteq L$.
\end{corollary}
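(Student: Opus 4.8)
The plan is to reduce everything to the characterization in Theorem~\ref{thm:explainable-charac} and then to verify that each of the four defining conditions is inherited by restrictions. The backward implication is immediate: taking $L'=L$ gives $\graphs_{\vert L}=\graphs$, so if every restriction is explainable then in particular $\graphs$ itself is. The substance lies in the forward direction. Here I would assume that $\graphs$ is explainable and hence, by Theorem~\ref{thm:explainable-charac}, that $\Gu$ and $\Gg$ are properly colored, $\Gu$ and $\Ga$ are cographs, and $(\R_S(\graphs),\F_S(\graphs))$ is consistent; the goal is then to show that these four properties continue to hold for $\graphs_{\vert L'}=(\Gu[L'],\Gg[L'],\Ga[L'],\sigma_{\vert L'})$ for every $L'\subseteq L$, so that a second application of Theorem~\ref{thm:explainable-charac} yields explainability of $\graphs_{\vert L'}$.

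First I would dispatch the two easy conditions. Proper colorings are clearly inherited by induced subgraphs, since $\Gu[L']$ and $\Gg[L']$ contain only edges already present in $\Gu$ and $\Gg$; hence no monochromatic edge can appear. Being a cograph is hereditary by Prop.~\ref{prop:cograph}(5), which guarantees that every induced subgraph of a cograph is again a cograph, so $\Gu[L']$ and $\Ga[L']$ are cographs.

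The only step requiring care is the consistency of the species triples. The key observation is that for vertices $x,y,z\in L'$ the adjacency relations in the induced graphs $\Gu[L']$, $\Gg[L']$, $\Ga[L']$ coincide with those in $\Gu$, $\Gg$, $\Ga$. Consequently, every triple placed into $\R_S(\graphs_{\vert L'})$ or $\F_S(\graphs_{\vert L'})$ by conditions (a'), (b'), or the forbidden-triple rule of Definition~\ref{def:inf-forb-triples} is witnessed by the very same $x,y,z$, now regarded as elements of $L$. This yields the inclusions $\R_S(\graphs_{\vert L'})\subseteq\R_S(\graphs)$ and $\F_S(\graphs_{\vert L'})\subseteq\F_S(\graphs)$. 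A tree $S$ that agrees with $(\R_S(\graphs),\F_S(\graphs))$ therefore displays all triples of $\R_S(\graphs_{\vert L'})$ and none of $\F_S(\graphs_{\vert L'})$; since the definition of consistency imposes no restriction on the leaf set of the witnessing tree, the same $S$ certifies consistency of $(\R_S(\graphs_{\vert L'}),\F_S(\graphs_{\vert L'}))$.

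With all four conditions verified for $\graphs_{\vert L'}$, Theorem~\ref{thm:explainable-charac} applied to $\graphs_{\vert L'}$ delivers a relaxed scenario explaining it, completing the forward direction. I do not anticipate a genuine obstacle: the entire argument rests on the single structural fact that passing to an induced subgraph preserves adjacency among the retained vertices, so that the witnesses defining the informative and forbidden triples survive the restriction. The only point worth stating explicitly is therefore the triple-set inclusion, from which consistency transfers automatically.
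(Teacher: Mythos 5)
Your proof is correct and follows essentially the same route as the paper: both directions, the hereditary nature of proper coloring and the cograph property (via Prop.~\ref{prop:cograph}), the inclusions $\R_S(\graphs_{\vert L'})\subseteq\R_S(\graphs)$ and $\F_S(\graphs_{\vert L'})\subseteq\F_S(\graphs)$, and a double application of Thm.~\ref{thm:explainable-charac} all match the paper's argument. The only cosmetic difference is that the paper takes the witnessing tree $S$ from the explaining scenario via Prop.~\ref{prop:triples}, whereas you take it directly from the consistency condition of Thm.~\ref{thm:explainable-charac}; both yield the same certificate.
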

\begin{proof}
  The \textit{if}-part is clear as $\graphs=\graphs_{\vert L}$. Conversely,
  suppose that $\graphs=(\Gu, \Gg, \Ga, \sigma)$ is explained by a relaxed
  scenario $\scen=(T,S,\sigma,\mu,\tT,\tS)$ and let $L'\subseteq L$.  By
  Prop.~\ref{prop:triples}, therefore, $S$ agrees with $(\R_S(\graphs),
  \F_S(\graphs))$.  By Thm.~\ref{thm:explainable-charac}, $\Gu$ and $\Gg$
  are properly colored and $\Gu$ and $\Ga$ are cographs.  Now consider
  $\graphs_{\vert L'} = (\Gu[L'], \Gg[L'], \Ga[L'], \sigma_{\vert L'})$.
  Clearly, the induced subgraphs $\Gu[L']$ and $\Gg[L']$ are also properly
  colored.  By Prop.~\ref{prop:cograph}, $\Gu[L']$ and $\Ga[L']$ are also
  cographs.  By definition of the informative and forbidden triples in
  Def.~\ref{def:inf-forb-triples} and the induced subgraph relationships,
  we observe furthermore that $\R_S(\graphs_{\vert L'}) \subseteq
  \R_S(\graphs)$ and $\F_S(\graphs_{\vert L'}) \subseteq \F_S(\graphs)$.
  Hence, $S$ displays all triples in $\R_S(\graphs_{\vert L'})$ and none of
  the triples in $\F_S(\graphs_{\vert L'})$, which yields that
  $(\R_S(\graphs_{\vert L'}), \F_S(\graphs_{\vert L'}))$ is consistent.  We
  can now again apply Thm.~\ref{thm:explainable-charac} to conclude that
  $\graphs_{\vert L'}$ is explainable.
\end{proof}

Using the characterization in Thm.~\ref{thm:explainable-charac}, we can
decide in polynomial time whether a graph $3$-partition is explainable by a
relaxed scenario:
\begin{corollary}
  \label{cor:perf}
  It can be decided in $O(\vert L\vert^4 \log\vert L\vert)$ time whether a 
  graph $3$-partition $\graphs=(\Gu, \Gg, \Ga, \sigma)$ can be explained by a 
  relaxed scenario.
\end{corollary}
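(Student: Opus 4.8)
The plan is to convert Theorem~\ref{thm:explainable-charac} into a decision procedure: $\graphs=(\Gu,\Gg,\Ga,\sigma)$ is explainable if and only if (1)~$\Gu$ and $\Gg$ are properly colored, (2)~$\Gu$ and $\Ga$ are cographs, and (3)~$(\R_S(\graphs),\F_S(\graphs))$ is consistent, so it suffices to test these three conditions within the stated time bound. Conditions (1) and (2) are cheap. Proper coloring of $\Gu$ and $\Gg$ is verified by scanning every edge once and comparing the two endpoint colors, costing $O(\vert L\vert^2)$. Cographhood of $\Gu$ and $\Ga$ is decided by the linear-time cograph recognition underlying Prop.~\ref{prop:cograph} \cite{Corneil:81}; since each graph has $O(\vert L\vert^2)$ edges, this also costs $O(\vert L\vert^2)$.

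Next I would materialize the triple sets. Iterating over all $O(\vert L\vert^3)$ triples $\{x,y,z\}\subseteq L$ and testing the (constant-time, via adjacency matrices) membership conditions of Def.~\ref{def:inf-forb-triples} yields $\R_S(\graphs)$ and $\F_S(\graphs)$, each a set of at most $O(\vert L\vert^3)$ triples on the colors $\sigma(L)$, of which there are at most $\vert L\vert$. This preprocessing runs in $O(\vert L\vert^3)$ time, so the whole cost is governed by the consistency test in condition (3).

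The heart of the proof, and the step I expect to be the main obstacle, is deciding consistency of the \emph{mixed} instance $(\R_S,\F_S)$ within $O(\vert L\vert^4\log\vert L\vert)$. Plain \textsc{Build} on $\R_S$ does not suffice: the Aho tree can be strictly over-resolved and thereby display a forbidden triple even when some tree agreeing with $(\R_S,\F_S)$ exists, so the forbidden triples must steer the construction. Here the special shape of $\F_S$ is decisive: by Def.~\ref{def:inf-forb-triples} each forbidden triple comes in a symmetric pair $\{XZ\vert Y,\,YZ\vert X\}$, whose only admissible alternatives are $XY\vert Z$ or the fan $X\vert Y\vert Z$. I would therefore run a top-down \textsc{Build}-style recursion in which, at each call on a color subset $W\subseteq\sigma(L)$, I first form the Aho graph of $\R_S$ restricted to $W$ and then \emph{close} its vertex partition under the rule that, whenever $X,Z$ already lie in a common block and $\{XZ\vert Y,YZ\vert X\}\in\F_S$ with $Y\in W$, the block of $Y$ is merged in (forcing a fan on $X,Y,Z$ instead of either forbidden resolution). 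The recursion succeeds if every such closed partition has at least two blocks down to singletons, and reports inconsistency as soon as some $W$ with $\vert W\vert\ge2$ collapses to a single block.

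Establishing correctness is the delicate part. The success direction is routine: the tree built from the closed partitions displays every triple of $\R_S$ (the Aho edges are respected) and, by the merging rule, never separates $Z$ from exactly one of $X,Y$, hence displays no triple of $\F_S$. The inconsistency direction is where the argument must be made carefully: I would show by induction on the merge steps that every tree agreeing with $(\R_S,\F_S)$ must keep each computed block inside a single child-subtree, the Aho merges being forced by $\R_S$ and each implied merge being forced because an agreeing tree with $X,Z$ together but $Y$ apart would display the forbidden $XZ\vert Y$. Consequently the closed partition is the finest top-level split any agreeing tree can make, so a collapse to a single block genuinely certifies that no agreeing tree exists. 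For the running time, the recursion tree has $O(\vert\sigma(L)\vert)=O(\vert L\vert)$ internal nodes; at each node the Aho graph and the partition closure are computed from the $O(\vert L\vert^3)$ stored triples using union-find, contributing an extra logarithmic factor, for $O(\vert L\vert^3\log\vert L\vert)$ per node and $O(\vert L\vert^4\log\vert L\vert)$ overall, which dominates the earlier steps and yields the claimed bound.
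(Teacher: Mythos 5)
Your proposal is correct, and its overall skeleton is exactly the paper's: reduce to the characterization of Thm.~\ref{thm:explainable-charac}, verify proper coloring and cographhood of the relevant graphs in $O(\vert L\vert^2)$ time, extract the $O(\vert L\vert^3)$ informative and forbidden triples, and let the mixed consistency test dominate the running time. Where you genuinely diverge is in that dominating step: the paper simply invokes the algorithm \texttt{MTT} of He et al.\ \cite{He:06} as a black box, whose stated complexity $O(\vert \R\vert\,\vert M'\vert + \vert \F\vert\,\vert M'\vert \log\vert M'\vert + \vert M'\vert^2\log\vert M'\vert)$ together with $\vert \R\vert, \vert\F\vert \in O(\vert L\vert^3)$ and $\vert M'\vert\in O(\vert L\vert)$ yields the claimed bound, whereas you re-derive such an algorithm from scratch, via a \textsc{Build}-style recursion whose Aho partition is closed under merges forced by forbidden triples. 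Your closure rule and the forced-merge induction are sound --- and your observation that plain \textsc{Build} on $\R_S$ alone can over-resolve and hit a forbidden triple is precisely why the mixed problem needs this treatment; in essence you have reconstructed the mechanism inside \texttt{MTT}. Two remarks. First, the symmetric pair structure $\{XZ\vert Y,\ YZ\vert X\}$ of $\F_S$ that you call ``decisive'' is not actually needed: the merge ``if $X,Z$ share a block and $XZ\vert Y$ is forbidden with $Y\in W$, absorb the block of $Y$'' is forced for an \emph{arbitrary} forbidden triple by the same child-subtree argument, which is why He et al.'s algorithm handles general mixed instances; your argument is thus more general than you advertise. Second, your per-node bound of $O(\vert L\vert^3\log\vert L\vert)$ for the fixed-point closure needs more than a bare appeal to union-find: naively re-scanning the $O(\vert L\vert^3)$ forbidden triples after each of the up to $O(\vert L\vert)$ merges costs $O(\vert L\vert^4)$ per node and would overshoot the target; you need something like weighted union with watch-lists indexed by the pair $\{X,Z\}$, so that each forbidden triple is re-examined only $O(\log\vert L\vert)$ times per recursion node. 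With that implementation detail supplied, your self-contained route attains the same $O(\vert L\vert^4\log\vert L\vert)$ bound that the paper obtains by citation, at the price of re-proving (a special case of) the result of \cite{He:06}.
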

\begin{proof}
  It can be checked in $O(\vert L\vert^2)$ time whether $\Gu$ and $\Gg$ are
  properly colored. It can be decided in in $O(\vert L\vert + \vert E\vert
  )$ time whether a graph $G=(L,E)$ is a cograph \cite{Corneil:85}. In
  particular, it can also be verified in $O(\vert L\vert^2)$ time that
  $\Gu$ and $\Ga$ are cographs.  Extraction of $\R\coloneqq\R_S(\graphs)$
  and $\F\coloneqq\F_S(\graphs)$ according to
  Def.~\ref{def:inf-forb-triples} requires $O(\vert L\vert^3)$.  Let
  $M'\subseteq M$ be the subset of colors that appear on the leaves of the
  triples in $\R\cup \F$.  By construction, we have $\vert M'\vert \in
  O(\vert L\vert )$.  The algorithm \texttt{MTT}, which stands for
  \emph{mixed triplets problem restricted to trees} and was described in
  \cite{He:06}, constructs a tree on $M'$ that agrees with $(\R,\F)$, if
  one exists, in $O(\vert \R\vert \cdot \vert M'\vert + \vert \F\vert \cdot
  \vert M'\vert \log \vert M'\vert + \vert M'\vert^2 \log \vert M'\vert)$
  time.  This, together with $\vert \R\vert, \vert \F\vert \in O(\vert
  L\vert^3)$ and $\vert M'\vert \in O(\vert L\vert )$ implies that it can
  be decided in $O(\vert L\vert^4\log\vert L\vert)$ whether $(\R,\F)$
  is consistent.
\end{proof}

In particular, it can be decided in $O(\vert L\vert^4 \log\vert L\vert)$ 
whether $\graphs=(\Gu, \Gg, \Ga, \sigma)$ can be explained by a
relaxed scenario without explicit construction of such a scenario.  We will
show in the following that the construction of relaxed scenarios is bounded
by the same complexity. For simplicity, we will explicitly require that
$\sigma\colon L\to M$ is surjective, i.e., that $\sigma(L)=M$ holds. One
easily verifies, however, that the existence of ``unused colors'' in $M$
only increases the size of the species tree $S$ (in particular, the number
of leaves in $S$ that are attached to $\rho_S$) but does not affect the
existence of a relaxed scenario that explains $\graphs$.

\begin{lemma}
  \label{lem:algo-complexity}
  Algorithm~\ref{alg:recognition} can be implemented to run in
  $O(\vert L\vert^4\log\vert L\vert)$ time (for valid inputs
  $\graphs=(\Gu, \Gg, \Ga, \sigma)$ such that $\sigma$ is surjective).
\end{lemma}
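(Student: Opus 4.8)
The plan is to account separately for the two phases of Algorithm~\ref{alg:recognition}: the one-time construction of the species tree $S$ together with its auxiliary data, and the recursive construction of the gene tree $T$ in \textsc{BuildGeneTree}. The strategy is to show that the first phase already costs $O(\vert L\vert^4\log\vert L\vert)$ and dominates the running time, while the recursion contributes only $O(\vert L\vert^3)$.

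For the first phase, I would reuse the analysis underlying Cor.~\ref{cor:perf} essentially verbatim. Extracting $\R_S(\graphs)$ and $\F_S(\graphs)$ from $\graphs$ by inspecting all triples of vertices takes $O(\vert L\vert^3)$ time. Feeding these into the \texttt{MTT} algorithm of \cite{He:06} produces a tree on the colors $M'$ appearing in the triples that agrees with $(\R_S(\graphs),\F_S(\graphs))$; since $\vert\R_S(\graphs)\vert,\vert\F_S(\graphs)\vert\in O(\vert L\vert^3)$ and $\vert M'\vert\in O(\vert L\vert)$, this step runs in $O(\vert L\vert^4\log\vert L\vert)$ time. Because $\sigma$ is surjective, the remaining colors in $M\setminus M'$ can be attached below $\rho_S$ in $O(\vert L\vert)$ time, yielding $S$ on the full color set $M$ (Line~\ref{line:S}). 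Building $\tS$ with $\tS(x)=0$ on all leaves takes linear time by Lemma~\ref{lem:arbitrary-tT} (Line~\ref{line:tS}), and $\epsilon$ is obtained in $O(\vert L\vert)$ time by a single scan of $E(S)$ (Line~\ref{line:epsilon}). Finally, I would preprocess $S$ in $O(\vert L\vert)$ time so that $\lca_S$-, ancestor-, and level-ancestor queries (``which child of $u_S$ lies above a given color?'') can each be answered in $O(1)$; these are exactly the operations required by the definitions of $H_2,H_3$ and by the child-selection in Lines~\ref{line:choose-v-S} and~\ref{line:choose-v-S-for-class}.

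For the recursion, I would first bound the work of a single step on $(L',u_S)$. Given the $O(1)$ queries above, the auxiliary graphs $H_1,H_2,H_3$ and their connected components can all be computed from the induced edge sets $E(\Gu[L'])$ and $E(\Gg[L'])$ in $O(\vert L'\vert^2)$ time, and creating the vertices $u_i,v_j$ and selecting the appropriate children of $u_S$ adds only $O(\vert L'\vert)$; hence each step costs $O(\vert L'\vert^2)$. The key structural observation is that the recursive calls of a step are made on the connected components $C_k$ of $H_3$, which, being nested inside the $H_2$- and $H_1$-components, partition $L'$; consequently, by Obs.~\ref{obs:leaf-sets}, the sets $L'$ occurring in the recursion steps at any fixed recursion depth are pairwise disjoint subsets of $L$. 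Summing $\vert L'\vert^2\le\bigl(\sum\vert L'\vert\bigr)^2$ over one depth therefore gives $O(\vert L\vert^2)$ work per level. Since every recursive call replaces $u_S$ by a strict child $v_S\prec_S u_S$, the recursion depth is at most the height of $S$, which is $O(\vert M\vert)=O(\vert L\vert)$. Multiplying yields $O(\vert L\vert^3)$ total for \textsc{BuildGeneTree}; the final suppression of degree-$2$ vertices in Line~\ref{line:Tphylo} costs time linear in the number of created vertices, which is itself bounded by the total work and hence also $O(\vert L\vert^3)$.

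Combining the two phases gives $O(\vert L\vert^4\log\vert L\vert)+O(\vert L\vert^3)=O(\vert L\vert^4\log\vert L\vert)$, as claimed. I expect the main obstacle to lie not in the arithmetic but in justifying the two amortization ingredients for the recursion: that the per-level sets $L'$ are genuinely disjoint (resting on the $C_k$ partitioning $L'$, via Obs.~\ref{obs:leaf-sets}) and that the per-step cost can indeed be held to $O(\vert L'\vert^2)$, which hinges on reducing the color-ancestry tests in the definitions of $H_2$ and $H_3$ and the child-selection in Lines~\ref{line:choose-v-S}/\ref{line:choose-v-S-for-class} to constant-time level-ancestor queries after linear preprocessing of $S$.
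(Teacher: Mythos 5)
Your proposal is correct and takes essentially the same route as the paper's own proof: the same two-phase split (an MTT-dominated construction of $S$ with its triple extraction, followed by an $O(\vert L\vert^3)$ recursion), the same constant-time LCA and level-ancestor preprocessing to achieve $O(\vert L'\vert^2)$ per step, and the same amortization via pairwise-disjoint $L'$ sets at each recursion depth with depth bounded by the height of $S$ under surjectivity of $\sigma$. The only cosmetic differences---justifying the per-level disjointness through Obs.~\ref{obs:leaf-sets} rather than the paper's explicit induction on the recursion tree, and bounding the degree-2 suppression by total work instead of the sharper $O(\vert L\vert^2)$ vertex count---do not affect the argument.
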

\begin{proof}
  Let $\graphs=(\Gu, \Gg, \Ga, \sigma)$ with vertex set $L$ be a valid
  input and surjective coloring $\sigma\colon L\to M$ that is given as
  input for Algorithm~\ref{alg:recognition}.  By assumption, $\Gu$ and
  $\Gg$ are properly colored, $\Gu$ and $\Ga$ are cographs, and
  $(\R_S(\graphs), \F_S(\graphs))$ is consistent.  Extraction of
  $\R\coloneqq\R_S(\graphs)$ and $\F\coloneqq\F_S(\graphs)$ according to
  Def.~\ref{def:inf-forb-triples} requires $O(\vert L\vert^3)$ operations.
  As argued in the proof of Corollary~\ref{cor:perf}, a tree $S$ on $M$ that 
  agrees with $(\R,\F)$ can be constructed in $O(\vert L\vert^4\log\vert 
  L\vert)$ time using algorithm \texttt{MTT} \cite{He:06}.

  A suitable time map $\tS$ can be constructed in $O(\vert M\vert)=O(\vert
  L\vert )$ time by Lemma~\ref{lem:arbitrary-tT}.

  We can employ the LCA data structure described by 
  Bender et al.\ \cite{Bender:05},
  which pre-processes $S$ in $O(\vert M\vert )=O(\vert L\vert )$ time to
  allow $O(1)$-query of the last common ancestor of pairs of vertices in
  $S$ afterwards.  In addition, we want to access the vertex
  $w\in\child_{S}(u)$ satisfying $v\preceq_{S} w$ for two given vertices
  $u,v\in V(T)$ with $v\prec_{S} u$. To achieve this, we pre-process $S$ as
  follows: We first compute $\depth(v)$ for each $v\in V(T)$, i.e., the
  number of edges on the path from the root to $v$ in a top-down traversal
  of $S$ in $O(\vert L\vert )$ time.  The \emph{Level Ancestor (LA)
  Problem} asks for the ancestor $\LA(v,d)$ of a given vertex $v$ that has
  depth $d$, and has solutions with $O(\vert L\vert )$ pre-processing and
  $O(1)$ query time \cite{Berkman:94,Bender:04}.  Hence, we can obtain the
  desired vertex $w$ as $\LA(v,\depth(u)+1)$ in constant time.

  Since $\sigma(L')\subseteq L(S(u_S))$ always holds by
  Obs.~\ref{obs:color-subset}, every $x\in L$ appears at most once in a
  loop corresponding to Line~\ref{line:loop-u_S-leaf}. Hence, the total
  effort of handling the cases where $u_S$ is a leaf is bounded by $O(\vert
  L\vert )$.  Consider now one execution of \texttt{BuildGeneTree} (without
  the recursive calls) in which $u_S$ is not a leaf.  Construction of the
  auxiliary graphs $H_1$ and $H_2$ is done in $O(\vert L'\vert^2)$, where
  the condition $\sigma(x),\sigma(y) \prec_{S} v$ for some $v\in
  \child_S(u_S)$ in the construction of $H_2$ is equivalent to querying the
  LCA data structure in $O(1)$ time whether $\lca_S(\sigma(x),\sigma(y))\ne
  u_S$.  The connected components of $H_1$ can be obtained in $O(\vert
  L'\vert + \vert E(H_1)\vert )=O(\vert L'\vert^2)$ time using
  breadth-first search.  Since $H_2$ is a subgraph of $H_1$, we can, for
  each connected component $C_i$ of $H_1$, determine the connected
  components $C_j$ of $H_2$ with $C_j\subseteq C_i$ again using
  breadth-first search and only the vertices in $C_i$ as start vertices.
  The overall effort for this is again bounded by $O(\vert L'\vert + \vert
  E(H_1)\vert )=O(\vert L'\vert^2)$.  We can now, for each connected
  component $C_j$ of $H_2$, construct the connected components $C_k$ of
  $H_3$ with $C_k\subseteq C_j$ by (i) adding the edge $xy$ to $H_3$ if
  $\lca_S(\sigma(x),\sigma(y)) \ne u_S$ for all $x,y\in C_j$ and (ii)
  performing breadth-first search on $H_3$ using only the vertices in $C_j$
  as start vertices.  Again, the overall effort for these breadth-first
  searches is bounded by $O(\vert L'\vert^2)$.  The number of connected
  component of the three graph $H_1$, $H_2$, and $H_3$ is bounded by
  $O(\vert L'\vert)$.  For each connected component $C_j$ of $H_2$, we
  have to choose $v^*_S\in \child_S(u_{S})$ such that $\sigma(C_j)\cap
  L(S(v^*_S))\ne\emptyset$ in Line~\ref{line:choose-v-S}.  To this end, we
  pick $x\in C_j$ arbitrarily and query $v^*_S=\LA(\sigma(x),
  \depth(u_S)+1)$.  For each connected component $C_k$ of $H_3$, we can
  find $v_S\in \child_S(u_{S})$ such that $\sigma(C_k)\subseteq L(S(v_S))$
  in Line~\ref{line:choose-v-S-for-class} in the same way.  In summary, for
  each connected component of each graph, the effort of creating a new
  vertex (in case of $H_1$ and $H_2$), attaching the vertex to the tree
  ($H_1$, $H_2$, and $H_3$), choosing $v^*_S$ in Line~\ref{line:choose-v-S}
  ($H_2$), choosing $v_S$ in Line~\ref{line:choose-v-S-for-class} ($H_3$),
  and assigning the values for $\tT$ and $\mu$ for the newly created
  vertices are all constant-time operations.  The overall effort for one
  recursion step (excluding the recursive calls) is therefore bounded by
  $O(\vert L' \vert^2)$.

  To bound the total effort of \texttt{BuildGeneTree}, consider the
  recursion tree $R$ of the algorithm and let $d$ be its maximum depth
  (i.e.\ the maximum distance from $\rho_R$ to a leaf).  Notice that when a
  recursion receives $u_S \in V(S)$ as input, it passes a child of $u_S$ to
  any recursive call that it makes.  Since terminal calls occur on leaves
  of $S$, it follows that $d$ is at most the height of $S$, which is
  $O(\vert V(S)\vert) = O(\vert L\vert)$ under the assumption that $\sigma$
  is surjective.  For $r \in V(R)$, denote by $L'_r$ the set $L'$ received
  as input on the recursive call corresponding to $r$.  If $r$ is not a
  leaf of $R$, then notice that $\{L'_q : q \in \child_R(r)\}$ is a
  partition of $L'_r$ (without repeated subsets), since a recursive call is
  made precisely for each connected component of $H_3$.

  Let $\ell \in \{0,1,\ldots,d\}$.  We claim that for any two vertices $r,
  q \in V(R)$ at distance $\ell$ from $\rho_R$, $L'_r \cap L'_q =
  \emptyset$.  This can be seen by induction, with $\ell = 0$ as the
  trivial base case.  Consider $\ell > 0$.  If $r$ and $q$ have the same
  parent, then $L'_r \cap L'_q = \emptyset$ follows from the observation
  that recursions partition their input $L'$ to their child calls.  If $r$
  and $q$ have distinct parents in $R$, we know by induction that
  $L'_{par_R(r)} \cap L'_{par_R(q)} = \emptyset$.  Since recursions pass a
  subset of their input $L'$, $L'_r \cap L'_q = \emptyset$ holds as well.
  Thus our claim is true.  Now, for a given depth $\ell \in
  \{0,1,\ldots,d\}$, denote by $r_1, \ldots, r_k$ the set of vertices of
  $R$ at distance $\ell$ from $\rho_R$.  The total effort of these vertices
  is $O(\vert L'_{r_1}\vert^2 + \ldots + \vert L'_{r_k}\vert^2)$ and, since
  $\vert L'_{r_1}\vert + \ldots + \vert L'_{r_k}\vert \leq \vert L\vert$ by
  our claim, the total time spent at depth $\ell$ is $O(\vert L\vert^2)$.
  Because this holds for every depth from $0$ to $d \in O(\vert L\vert)$,
  the total time spent in \texttt{BuildGeneTree} is $O(\vert L\vert^3)$.

  It only remains to argue on the time spent constructing the final output
  tree $T$.  Note that in each recursion with corresponding vertex $r \in
  V(R)$, \texttt{BuildGeneTree} adds at most $2\vert L'_r\vert+1$ nodes to
  the constructed tree $T'$ (we always add $\rho'$ and, additionally, in
  non-terminal calls, we add one $u_i$ and one $v_j$ vertex for each of the
  $O(\vert L'_r\vert)$ connected components of $H_1$ and $H_2$,
  respectively, and in terminal calls we add $\vert L'_r\vert$ leaves).
  Since the vertices of $R$ at the same depth $\ell$ receive pairwise
  disjoint $L'_r$ sets, it follows that a total of at most $O(\vert
  L\vert)$ vertices are added to $T$ by the recursive calls at the same
  depth $\ell$.  Since $d \in O(\vert L\vert)$, the resulting tree $T'$ has
  at most $O(\vert L\vert^2)$ vertices.  To obtain the final gene tree $T$,
  we can traverse $T'$ and suppress all vertices with a single child by
  removing the vertex and reconnecting its child to its parent in $(O(\vert
  V(T')\vert)=O(\vert L\vert^2)$ total time.
  
  Hence, the overall time complexity of Algorithm~\ref{alg:recognition} is
  $O(\vert L\vert^4 \log\vert L\vert)$.
\end{proof}

\section{Explanation of $\boldsymbol{\graphs}$ by Restricted Scenarios}
\label{sect:restricted} 

Relaxed scenarios may contain combinations of HGT and deletion events that
render the HGT event ``unobservable'' from extant data, because the gene
family died out in the lineage from which that HGT originated.  It is
therefore of interest to consider more restrictive classes of scenarios
that exclude such ``unobservable'' events.  In this section, we show
  that if a relaxed scenario explains $\graphs$, then there is always some
  scenario without these ``unobservable'' events that also explains
  $\graphs$.  To this end, we introduce the notion of a ``witness'':
\begin{definition}
  Let $\scen=(T,S,\sigma,\mu,\tT,\tS)$ be a relaxed scenario. We say that
  $x\in L(T)$ is a \emph{witness} for $v\in V(T)$ if $x\preceq_T v$ and the
  path from $v$ to $x$ in $T$ does not contain an HGT-edge. The scenario
  $\scen$ is \emph{fully witnessed} if every $v\in V(T)$ has a witness.
\end{definition}
It is not difficult to verify that, in order for a relaxed scenario
$\scen=(T,S,\sigma,\mu,\tT,\tS)$ to be fully witnessed, it is necessary and
sufficient that every vertex $v\in V^0(T)$ has a child $w$ such that
$\mu(w)\preceq_{S}\mu(v)$. In essence, this matches condition (2b) assumed
in the work of Tofigh et al.\ \cite{Tofigh:11} and is also a direct
consequence of condition (O2) in \cite{Hellmuth:17,Nojgaard:18a}.

A vertex $x\in V(T)$ with $\mu(x)\in V(S)$ describes an evolutionary
event that coincides with a \emph{speciation}. This suggests to require
additional constraints on $\mu$ that exclude scenarios that do not have a
simple biological interpretation. In particular, it seems natural to
prevent HGT-edges from emanating from such a vertex. This amounts to the
assumption that speciations and HGT events are not allowed to be lumped
into the same event (cf.\ \cite{Hellmuth:17}). Another interesting 
constraint on a speciation $u$ is to require that they are witnessed by a pair of descendants $x$ and
$y$ in two of the lineages that are separated by the speciation, i.e.,
such that $u=\lca_T(x,y)$ and
$\mu(\lca_T(x,y))=\lca_S(\sigma(x),\sigma(y))$. This condition is
reminiscent, but weaker, than the Last Common Ancestor reconciliation
\cite{Guigo:96,Page:97}.

\begin{definition}
  A relaxed scenario $\scen=(T,S,\sigma,\mu,\tT,\tS)$ is a \emph{restricted
    scenario} if it satisfies the following three constraints: 
  \begin{itemize}
    \item[(S4)] $\scen$ is fully witnessed.
    \item[(S5)] If $\mu(u)\in V^0(S)$, then $\mu(v)\prec_{S} \mu(u)$ holds 
			    for all $v\in\child_T(u)$.
    \item[(S6)] If $\mu(u)\in V^0(S)$, then there exist at least two leaves
      $x,y \in L(T)$ such that $\lca_T(x,y)=u$, both $x$ and $y$ are
      witnesses for $u$, and $\mu(u) = \lca_S(\sigma(x),\sigma(y))$.
  \end{itemize}
\end{definition}
It is worth noting that conditions (S4), (S5), and (S6) are not
necessarily satisfied by the most commonly studied classes of
evolutionary scenarios. For example, the DTL scenarios considered in
\cite{Nojgaard:18a} do not need to satisfy (S5) if $S$ or $T$ is
non-binary. In the remainder of this section, we show that -- curiously enough 
-- any data $\graphs=(\Gu, \Gg, \Ga, \sigma)$ that can be explained by a
relaxed scenario can also be explained by a restricted scenario. We start by
showing that Algorithm~\ref{alg:recognition} already enforces some
additional constraints.

\begin{lemma}
  \label{lem:algo-fully-witnessed}
  Given a valid input $\graphs = (\Gu, \Gg, \Ga, \sigma)$, the scenario
  $\scen=(T,S,\sigma,\mu,\tT,\tS)$ returned by
  Algorithm~\ref{alg:recognition} satisfies (S4), i.e., it is fully
  witnessed.
\end{lemma}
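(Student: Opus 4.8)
The plan is to verify the convenient reformulation stated just before the lemma: a relaxed scenario is fully witnessed if and only if every inner vertex $v \in V^0(T)$ has a child $w$ with $\mu(w) \preceq_S \mu(v)$ (equivalently, the edge $vw$ is not an HGT-edge). Since Lemma~\ref{lem:mu-timing} guarantees $\mu(w) \not\succ_S \mu(v)$ whenever $w \prec_T v$, an edge $vw$ is non-HGT precisely when $\mu(w) \preceq_S \mu(v)$; and a vertex possesses a witness exactly when one can descend to a leaf along non-HGT edges, which is possible as soon as each inner vertex offers at least one such edge. So it suffices to exhibit, at every inner vertex, one outgoing non-HGT edge.

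First I would work with the intermediate tree $T'$ produced by \textsc{BuildGeneTree} (before adding $0_T$ and suppressing degree-2 vertices), where the reconciliation values are given explicitly by the algorithm. Its inner vertices are of the three types $\rho'$, $u_i$, $v_j$ (plus the $\rho'$ of a terminal call). Using the extension of $\preceq_S$ to $V(S)\cup E(S)$, I would check the three easy inclusions: $\mu(u_i)=u_S \preceq_S \parent_S(u_S)u_S=\mu(\rho')$, so every $u_i$ witnesses its parent $\rho'$; $\mu(v_j)=u_S v^*_S \preceq_S u_S=\mu(u_i)$, so every $v_j$ witnesses its parent $u_i$; and in a terminal call each leaf child $x$ of $\rho'$ satisfies $\mu(x)=u_S \preceq_S \mu(\rho')$. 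Each such inner vertex has at least one child (every element of a given component lies in a finer component), so these edges really exist.

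The one genuinely non-trivial point is the vertex $v_j$. Its children are the local roots $\rho''$ of the recursive calls on the $H_3$-components $C_k \subseteq C_j$, with $\mu(\rho'')=u_S v_S$ where $v_S\in\child_S(u_S)$ is chosen so that $\sigma(C_k)\subseteq L(S(v_S))$; comparing $\mu(\rho'')=u_S v_S$ with $\mu(v_j)=u_S v^*_S$ requires $v_S=v^*_S$, which need not hold for every child. Here I would use that $v^*_S$ was selected in Line~\ref{line:choose-v-S} with $\sigma(C_j)\cap L(S(v^*_S))\ne\emptyset$: pick $x\in C_j$ with $\sigma(x)\in L(S(v^*_S))$ and let $C_k$ be its $H_3$-component; then $\sigma(x)$ lies in both $L(S(v_S))$ and $L(S(v^*_S))$, and since subtrees rooted at distinct children of $u_S$ have disjoint leaf sets, $v_S=v^*_S$. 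Thus this particular child gives $\mu(\rho'')=\mu(v_j)$, a non-HGT edge, and every inner vertex of $T'$ has an outgoing non-HGT edge, so $T'$ is fully witnessed.

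Finally I would transfer the property from $T'$ to the output tree $T$, which is where suppression and the planted root must be handled. For $v\in V(T)\setminus\{0_T\}$, take a witness path of $v$ in $T'$; passing to $T$ contracts each maximal run of suppressed degree-2 vertices on this path into a single edge, and because each edge along the path is non-HGT (the $\mu$-values decrease along $\preceq_S$), transitivity of $\preceq_S$ makes every contracted edge non-HGT as well, so the descendant leaf remains a witness in $T$. For $0_T$, the edge $0_T\rho_T$ is non-HGT since $\mu(\rho_T)\preceq_S 0_S=\mu(0_T)$, so prepending it to a witness path of $\rho_T$ yields a witness for $0_T$. I expect the main obstacle to be exactly the $v_j$ argument---ensuring the chosen species child $v^*_S$ coincides with the child $v_S$ of at least one recursive call---together with the bookkeeping needed to show that suppression does not destroy non-HGT paths.
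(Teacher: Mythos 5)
Your proposal is correct and follows essentially the same route as the paper's proof: it analyzes the intermediate tree $T'$ case by case over the three vertex types, resolves the critical $v_j$ case by exactly the same argument (picking $x\in C_j$ with $\sigma(x)\in L(S(v^*_S))$, locating its $H_3$-component $C_k$, and using disjointness of the leaf sets $L(S(v_S))$ over distinct children of $u_S$ to force $v_S=v^*_S$), and then transfers witnesses to $T$ by observing that suppression of degree-2 vertices preserves non-HGT paths via transitivity of $\preceq_S$, with $0_T$ handled through $\mu(\rho_T)\preceq_S 0_S=\mu(0_T)$. The only cosmetic difference is that you make explicit, via Lemma~\ref{lem:mu-timing}, why $\mu(w)\preceq_S\mu(v)$ is equivalent to the edge $vw$ being non-HGT, a point the paper uses implicitly.
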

\begin{proof}
  Consider the intermediate tree $T'$ constructed in
  Algorithm~\ref{alg:recognition} which is not necessarily phylogenetic.
  By a slight abuse of notation, we will simply write $\mu$ and $\tT$ also
  for restrictions to subsets of $V(T)$.  We start with showing that each
  inner vertex $u\in V^0(T')$ has a child $v\in V(T')$ such that
  $\mu(v)\preceq_{S} \mu(u)$ and, thus, that $uv$ is not an HGT edge.  Let 
  $L'\subseteq L$ and $u_S\in V(S)$ be the input of the recursive call of 
  \texttt{BuildGeneTree} in which $u\in V^0(T')$ was created in one of 
  Lines~\ref{line:create-rho}, \ref{line:create-u_i}, or~\ref{line:create-v_j}.

  Suppose first $u=\rho'$ was created in Line~\ref{line:create-rho} and
  thus $\mu(u)=\parent_S(u_S) u_S$.  If $u_S$ is a leaf, then we attached
  all of the elements $x\in L'$ as children of $u$ and set
  $\mu(x)=\sigma(x)$. Since $\sigma(L')\subseteq L(S(u_S))=\{u_S\}$ holds
  by Obs.~\ref{obs:color-subset}, we have $\mu(x)=\sigma(x)=u_S$. Therefore,
  and since $L'$ is non-empty, $u$ has a child $v$ such that $\mu(v)= u_S
  \preceq_{S} \parent_S(u_S) u_S = \mu(u)$.  If $u_S$ is not a leaf, then
  we have attached at least one vertex $u_i$ corresponding to a connected
  component $C_i$ of $H_1$ as a child of $u$ in the same recursion step. In
  particular, we have set $\mu(u_i)=u_S$ in Line~\ref{line:mu-tT-inner2},
  and thus, $\mu(u_i)= u_S \preceq_{S} \parent_S(u_S) u_S = \mu(u)$.

  Suppose $u=u_i$ was created in Line~\ref{line:create-u_i} and thus
  $\mu(u)=u_S$. In particular, $u=u_i$ corresponds to some connected component
  $C_i$ of $H_1$. Since $H_2\subseteq H_1$ there is at least one connected
  component $C_j$ of $H_2$ such that $C_j\subseteq C_i$ and thus we have
  attached at least one vertex $v_j$ as created in Line~\ref{line:create-v_j}
  as a child of $u$ and set $\mu(v_j)= u_S v^*_S$ for some $v^*_S\in
  \child_{S}(u_S)$. Hence, we have $\mu(v_j)= u_S v^*_S \preceq_{S} u_S =
  \mu(u)$.

  Suppose, finally, that $u=v_j$ was created in Line~\ref{line:create-v_j}.
  Hence, $v_j$ corresponds to some connected component $C_j$ of $H_2$ and
  we have set $\mu(v_j)= u_S v^*_S$ for some $v^*_S\in \child_{S}(u_S)$
  such that $\sigma(C_j) \cap L(S(v^*_S)) \ne \emptyset$.  The latter
  implies that there is $x\in C_j$ such that $\sigma(x)\in L(S(v^*_S))$.
  By construction of the auxiliary graphs, there is a connected component
  $C_k$ such that $x\in C_k$ and $C_k\subseteq C_j$. Moreover, we have
  chosen $v_S\in \child_{S}(u_S)$ in Line~\ref{line:choose-v-S-for-class}
  such that $\sigma(C_k)\subseteq L(S(v_S))$. This together with
  $\sigma(x)\in L(S(v^*_S))$ and $\sigma(x)\in \sigma(C_k)$ implies that
  $v^*_S=v_S$.  In particular, we have attached the vertex $\rho'$ as a
  child to $u=v_j$ that was created in Line~\ref{line:create-rho} of the
  the recursion step $\texttt{BuildGeneTree}(C_k, v^*_S)$ and that
  satisfies $\mu(\rho')= \parent_S(v^*_S) v^*_S = u_S v^*_S$.  Hence, we
  have $\mu(\rho')= u_S v^*_S \preceq_{S} u_S v^*_S = \mu(u)$.

  In summary, each inner vertex $u\in V^0(T')$ has a child $v\in V(T')$
  such that $\mu(v)\preceq_{S} \mu(u)$. Therefore and since $T'$ is finite,
  we can find a descendant leaf $x\in L(T')$ for each $u\in V^0(T')$ that
  can be reached from $u$ by non-HGT-edges.

  Now consider a vertex $v\in V^0(T) \setminus \{0_T\} \subseteq V^0(T')$.
  By the arguments above, we find a path $P'= (v\eqqcolon v'_1 - v'_2 -
  \dots - v'_{k'}\coloneqq x)$ in $T'$ from $v$ to some of its descendant
  leaves $x\in L(T')=L(T)$ that does not contain any HGT-edge, i.e., it
  holds $\mu(v'_{i+1}) \preceq \mu(v'_i)$ for all $1\le i < k'$.  Therefore
  and since $T$ is obtained from $T'$ by adding $0_T$ and suppression of
  all vertices with a single child, we have $x\prec_{T} v$ and, moreover,
  the path $P= (v\eqqcolon v_1 - v_2 - \dots - v_k\coloneqq x)$ connecting
  $v$ and $x$ in $T$ contains only vertices that are also contained in $P'$
  in the same order.  We therefore conclude that $\mu(v_{i+1}) \preceq
  \mu(v_i)$ holds for all $1\le i < k$, i.e., $P$ does not contain any
  HGT-edge.  Hence, there is a witness for each vertex $v\in V^0(T)
  \setminus \{0_T\}$ By definition, each leaf $x\in L(T)$ is a witness of
  itself.  Finally, consider $0_T$ (and its unique child $\rho_T$). By
  construction, it holds $\mu(0_T)=0_S$.  Therefore and since every element
  $z\in V(S)\cup E(S)$ satisfies $z\preceq_{S} 0_T$, we have that
  $\mu(\rho_T) \preceq_{S} \mu(0_T)$, and thus $0_T \rho_T$ is not an
  HGT-edge.  Hence, every witness of $\rho_T$ is also a witness of $0_T$,
  which concludes the proof.
\end{proof}

\begin{lemma}
  \label{lem:algo-S5}
  Given a valid input $\graphs = (\Gu, \Gg, \Ga, \sigma)$, the scenario
  $\scen=(T,S,\sigma,\mu,\tT,\tS)$ returned by Algorithm~\ref{alg:recognition}
  satisfies (S5), i.e., $\mu(u)\in V^0(S)$ implies that $\mu(v)\prec_{S} 
  \mu(u)$ for all $v\in\child_T(u)$.
\end{lemma}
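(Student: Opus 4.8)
The plan is to track the reconciliation values that Algorithm~\ref{alg:recognition} assigns and to pin down exactly which vertices of $T$ can be mapped into $V^0(S)$. Recall that the algorithm sets $\mu(\rho')=\parent_S(u_S)u_S\in E(S)$ (Line~\ref{line:mu-tT-inner1}), $\mu(u_i)=u_S\in V(S)$ (Line~\ref{line:mu-tT-inner2}), $\mu(v_j)=u_Sv^*_S\in E(S)$ with $v^*_S\in\child_S(u_S)$ (Line~\ref{line:mu-tT-inner3}), $\mu(x)=\sigma(x)\in L(S)$ for leaves (Line~\ref{line:mu-tT-leaves}), and $\mu(0_T)=0_S$ (Line~\ref{line:mu-tT-planted-root}). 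First I would observe that, since $E(S)$, $L(S)$, and $\{0_S\}$ are pairwise disjoint from $V^0(S)$, the only vertices $u\in V(T)$ with $\mu(u)\in V^0(S)$ are the vertices of type $u_i$ (created in Line~\ref{line:create-u_i}) that survive the suppression of degree-$2$ vertices in Line~\ref{line:Tphylo}; for such a vertex $\mu(u_i)=u_S$, and $u_S\in V^0(S)$ because $u_i$ is created only when $u_S$ is neither a leaf nor $0_S$. It therefore suffices to show $\mu(v)\prec_S u_S$ for every child $v$ of such a $u_i$ in $T$.

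Second, I would establish the central monotonicity statement: every vertex $w$ created in a recursion call \texttt{BuildGeneTree}$(L'',u''_S)$ satisfies $\mu(w)\prec_S\parent_S(u''_S)$. This is a short case check over the four assignment rules above, using the extended order on $V(S)\cup E(S)$ from the Notation section. For the edge $\mu(\rho')=\parent_S(u''_S)u''_S$ we have $\mu(\rho')\preceq_S\parent_S(u''_S)$ while $\mu(\rho')\neq\parent_S(u''_S)$; for $\mu(u_i)=u''_S$ and $\mu(v_j)=u''_Sv^*_S$ we use $u''_S\prec_S\parent_S(u''_S)$; and for a leaf $x$ we have $\sigma(x)\preceq_S u''_S\prec_S\parent_S(u''_S)$ by Obs.~\ref{obs:color-subset}.

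Third, I would combine these. In the intermediate tree $T'$ the children of $u_i$ are exactly the vertices $v_j$, and each satisfies $\mu(v_j)=u_Sv^*_S\prec_S u_S$. Every recursive call nested below a $v_j$ is first invoked on some $v_S\in\child_S(u_S)$ and then strictly descends in $S$, so any strict descendant $w\prec_{T'}v_j$ is created in a call on some $u''_S\preceq_S v_S\prec_S u_S$; the monotonicity statement then gives $\mu(w)\prec_S\parent_S(u''_S)\preceq_S u_S$, hence $\mu(w)\prec_S u_S$. A child $v$ of $u_i$ in $T$ is either a surviving $v_j$ (so $\mu(v)\prec_S u_S$) or the nearest surviving proper descendant of a suppressed $v_j$ (so $v\prec_{T'}v_j$ and again $\mu(v)\prec_S u_S$). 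In both cases $\mu(v)\prec_S u_S=\mu(u_i)$, which is exactly (S5).

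I expect the main obstacle to be the bookkeeping around the suppression step in Line~\ref{line:Tphylo}: the children of $u_i$ in the final tree $T$ need not be the vertices $v_j$ created in $T'$, because a $v_j$ (or a whole chain of single-child vertices) may be contracted, promoting a deeper vertex to become a child of $u_i$. The monotonicity statement is designed precisely to survive this contraction, since it bounds $\mu$ uniformly for all vertices created in strictly lower recursion calls. I would also note that the general Lemmas~\ref{lem:mu-timing} and~\ref{lem:v=w} only yield $\mu(v)\not\succ_S u_S$ and $\mu(v)\neq u_S$, which still leaves open the possibility that $\mu(v)$ is \emph{incomparable} to $u_S$ (an HGT edge); hence the explicit structural argument is genuinely needed to exclude incomparability and force $\mu(v)\prec_S u_S$.
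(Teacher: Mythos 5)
Your proof is correct, but it takes a genuinely different route from the paper's. The paper's proof fixes an arbitrary child $v\in\child_T(u)$, locates the child $v'$ of $u=u_i$ in the intermediate tree $T'$ on the $u$--$v$ path (so $\mu(v')=u_Sv^*_S$), and then reuses the witness-path construction from the proof of Lemma~\ref{lem:algo-fully-witnessed}: it produces one specific path $P$ from $v'$ down to a leaf along which $\mu$ is weakly $\preceq_S$-decreasing, and argues that $v$ \emph{must lie on} $P$ --- otherwise $\lca_{T'}(v,x)$ would have at least two children, survive the suppression in Line~\ref{line:Tphylo}, and sit strictly between $u$ and $v$ in $T$, contradicting $v\in\child_T(u)$. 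You instead prove a uniform recursion invariant --- every vertex created by \texttt{BuildGeneTree}$(L'',u''_S)$ has $\mu$-value $\prec_S \parent_S(u''_S)$ in the extended order (equivalently, lies at or below the edge $\parent_S(u''_S)u''_S$) --- verified by a four-case check of the assignment rules in Lines~\ref{line:mu-tT-inner1}, \ref{line:mu-tT-inner2}, \ref{line:mu-tT-inner3}, and~\ref{line:mu-tT-leaves}, and combine it with the observation that every vertex strictly below a $v_j$ in $T'$ is created in a call whose species parameter is $\preceq_S$ some $v_S\in\child_S(u_S)$, hence strictly below $u_S$. This bounds $\mu$ for \emph{all} descendants of $v_j$ simultaneously, so the suppression step becomes trivial: whether the child of $u_i$ in $T$ is a surviving $v_j$ or a promoted deeper vertex, the bound $\mu(v)\prec_S u_S$ applies. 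What each approach buys: the paper's argument is shorter in context because it recycles the (S4) machinery, at the cost of the somewhat delicate ``$v$ lies on $P$'' step; your invariant is a stronger and more self-contained statement, insensitive to which vertices get contracted, though it requires re-verifying the reconciliation assignments case by case. Your closing remark is also on point: Lemmas~\ref{lem:mu-timing} and~\ref{lem:v=w} only rule out $\mu(v)\succ_S\mu(u)$ and $\mu(v)=\mu(u)$, so incomparability (an HGT edge out of a ``speciation'' vertex) must indeed be excluded by a structural argument specific to the algorithm, which both proofs supply in their own way.
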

\begin{proof}
  Suppose that $\mu(u) \in V^0(S) = V(S) \setminus (L(S) \cup \{0_S\})$ and
  let $v\in \child_T(u)$ be an arbitrary child of $u$.  Inspection of
  Algorithm~\ref{alg:recognition} shows that $u$ must have been created in
  Line~\ref{line:create-u_i} in some recursion step on $L'\subseteq L$ and
  $u_S\in V^0(S)$ and thus $\mu(u)=u_S$.  Consider the intermediate tree
  $T'$ constructed in the algorithm from which $T$ is obtained by adding
  the planted root $0_T$ and suppression of all inner vertices with a
  single child.  In particular, the path connecting $u$ and $v$ in $T'$
  passes through some child $v'$ of $u$ in $T'$ (where $v=v'$ is possible).
  By construction, we have set $\mu(v')=u_S v^*_S$ for some $v^*_S\in
  \child_{S} (u_S)$ in Line~\ref{line:mu-tT-inner3}.  Re-using the
  arguments in the proof of Lemma~\ref{lem:algo-fully-witnessed}, we find a
  path $P= (v' \eqqcolon v_1 - \dots - v_{k}\coloneqq x)$ in $T'$ from $v'$
  to some of its descendant leaves $x\in L(T')=L(T)$ that satisfies
  $\mu(v_{i+1}) \preceq_S \mu(v_i)$ for all $1\le i < k$.  If $v$ lies on
  the path $P$, then the latter and transitivity of $\preceq_S$ immediately
  implies $\mu(v)\preceq_{S} \mu(v') = u_S v^*_S \prec_{S} u_S = \mu(u)$.
  Suppose for contradiction that $v$ is not a vertex in $P$. Then there
  must be some vertex $v_i(\ne v)$ with $1 \le i < k$ that is the last
  common ancestor of $v$ and $x$ in $T'$. In this case, $v_i$ must have at
  least two children in $T'$ and thus it was not suppressed. Since $v_i$
  furthermore lies on the path connecting $u$ and $v$, this contradicts
  that $v\in \child_T(u)$.  Hence, the case that $v$ is not a vertex in $P$
  does not occur.  Therefore, we have $\mu(v) \prec_{S} \mu(u)$, which
  together with the fact that $v\in \child_T(u)$ was chosen arbitrarily,
  implies that $\scen$ satisfies (S5).
\end{proof}

The example in Fig.~\ref{fig:algo-S6-violated} shows that
Algorithm~\ref{alg:recognition} is in general not guaranteed to return a
restricted scenario since it may violate (S6).
\begin{figure}[t]
  \centering
  \includegraphics[width=.75\textwidth]{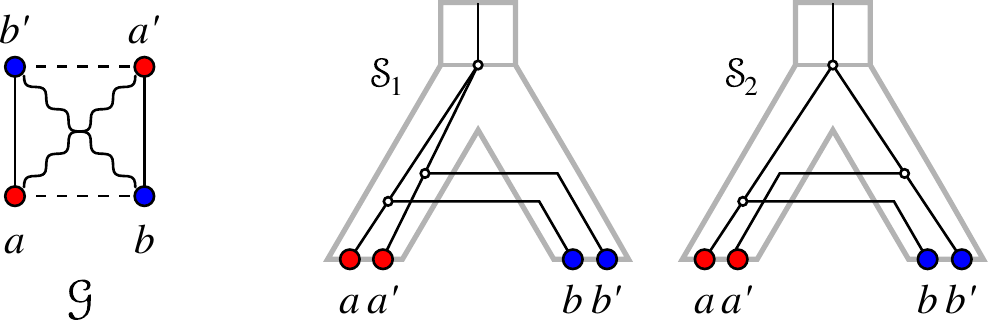}
  \caption{The graph 3-partition $\graphs = (\Gu, \Gg, \Ga, \sigma)$ used
      in Figure~\ref{fig:algo-1-example} as illustration of
      Algorithm~\ref{alg:recognition} is explained by different scenarios:
      Depending on the choice in Line~\ref{line:choose-v-S},
      Algorithm~\ref{alg:recognition} can return $\scen_1$
      as well as the restricted scenario $\scen_2$. 
      To ensure that \emph{always} a restricted scenario
      is returned we provide an alternative subroutine (summarized in 
      Algorithm~\ref{alg:restricted-scenario} below) that can be used in 
      Algorithm~\ref{alg:recognition}.}
\label{fig:algo-S6-violated}
\end{figure}
As we shall see in the following, however, we can construct such a scenario
for any valid input $\graphs = (\Gu, \Gg, \Ga, \sigma)$ by choosing the
vertex $v^*_S\in \child_{S}(u_S)$ in Line~\ref{line:choose-v-S} in a more
sophisticated manner.  More precisely, consider a connected component $C_i$
of $H_1$, for which we have created a corresponding vertex $u_i$ in
Line~\ref{line:create-u_i}).  If there is only one connected component
$C_j$ of $H_2$ such that $C_j\subseteq C_i$ (thus implying $C_j= C_i$),
then we proceed as in the original algorithm.  Otherwise, $C_i$ includes at
least two connected components of $H_2$.  In this case, there exists an
edge $xy\in E(H_1) \setminus E(H_2)$ with $x,y\in C_i$. From
Cor.~\ref{cor:H2-cc} and $H_2\subseteq H_1$ we obtain $x\in C_x \subseteq
C_i$ and $y\in C_y\subseteq C_i$ for two distinct connected components
$C_x$ and $C_y$ of $H_2$.  From the construction of the auxiliary graphs
$H_1$ and $H_2$ and $\sigma(L')\subseteq L(S(u_S))$, we know that $xy \in
E(\Gg)$.  Moreover, we have $\sigma(x)\preceq_S v_{\sigma(x)}$ and
$\sigma(y)\preceq_S v_{\sigma(y)}$ for distinct vertices
$v_{\sigma(x)},v_{\sigma(y)} \in \child_S(u_S)$ because otherwise $xy$
would be an edge in $H_2$.  Upon encountering $C_x$ and $C_y$ during
the iteration over connected components in Line~\ref{line:iterate-C_j},
we simply choose $v_{\sigma(x)}$ and $v_{\sigma(y)}$ in
Line~\ref{line:choose-v-S}, respectively.  Notice that this is in line with
the condition in Line~\ref{line:choose-v-S} because $\sigma(x)\in
\sigma(C_x)\cap L(S(v_{\sigma(x)}))$ and $\sigma(y)\in \sigma(C_y)\cap
L(S(v_{\sigma(y)}))$. For all other connected components, we simply choose
$v^*_S$ as in the original algorithm. These modifications of
Algorithm~\ref{alg:recognition} (which are restricted to the
\textbf{else}-block starting in Line~\ref{line:else}) are summarized in
Algorithm~\ref{alg:restricted-scenario}.

\begin{algorithm}[t]
  \small

  \caption{Alternative for the \textbf{else}-block starting in
    Line~\ref{line:else} of Algorithm~\ref{alg:recognition}. The
    modifications ensure that the returned scenario is restricted.}
  \label{alg:restricted-scenario}

  \begin{algorithmic}[0]
  \State compute $H_1$, $H_2$, and $H_3$ for $L'$ and $u_S$
  \For{\textbf{each} connected component $C_i$ of $H_1$}
    \State create vertex $u_i$ as a child of $\rho'$
    \State $\tT(u_i)\leftarrow \tS(u_{S})$ and $\mu(u_i)\leftarrow u_S$
    \State 
    $\mathcal{L} \leftarrow$ list of connected components $C_j$ of $H_2$
    such that $C_j\subseteq C_i$
    \If{$\vert \mathcal{L}\vert \ge 2$}
      \State choose an edge $xy\in E(H_1[C_i]) \setminus E(H_2[C_i])$
      \State identify the connected components $C_x$ and $C_y$ of $H_2$ that
      contain $x$ and $y$, resp., and
      \State \qquad $v_{\sigma(x)},v_{\sigma(y)} \in \child_S(u_S)$ for which
      $\sigma(x)\preceq_S v_{\sigma(x)}$ and
      $\sigma(y)\preceq_S v_{\sigma(y)}$
    \EndIf
  \EndFor

    \For{\textbf{each} $C_j$ in $\mathcal{L}$}
      \State create vertex $v_j$ as a child of $u_i$

      \If{$\vert \mathcal{L}\vert \ge 2$ and $C_j=C_x$}
        \State $v^*_S\leftarrow v_{\sigma(x)}$
      \ElsIf{$\vert \mathcal{L}\vert \ge 2$ and $C_j=C_y$}
        \State $v^*_S\leftarrow v_{\sigma(y)}$
      \Else 
        \State choose $v^*_S\in \child_S(u_{S})$ such that $\sigma(C_j)\cap
        L(S(v^*_S))\ne\emptyset$
        \EndIf
      \State $\tT(v_j)\leftarrow \tS(u_{S}) - \epsilon$ and
         $\mu(v_j)\leftarrow u_S v^*_S$\;

      \For{\textbf{each} connected component $C_k$ of $H_3$ such that
           $C_k\subseteq C_j$}
        \State identify $v_S\in \child_S(u_{S})$ such that
             $\sigma(C_k)\subseteq L(S(v_S))$
        \State connect \textsc{BuildGeneTree}$(C_k, v_S)$ as a child of $v_j$
      \EndFor 
    \EndFor
  \end{algorithmic}
\end{algorithm}

By the latter arguments we have only constrained choices that were
arbitrary in the original algorithm. All results for
Algorithm~\ref{alg:recognition} (with exception of the complexity results)
therefore remain valid for the modified version. As an immediate
consequence of Lemmas~\ref{lem:algo-explains},
\ref{lem:algo-fully-witnessed}, and~\ref{lem:algo-S5}, we therefore obtain:
\begin{fact}
  \label{obs:mod-algo-explains}
  The modifications of Algorithm~\ref{alg:recognition} summarized in
  Algorithm~\ref{alg:restricted-scenario} ensure that it returns a scenario
  that explains the valid input $\graphs$ and satisfies (S4) and (S5).
\end{fact}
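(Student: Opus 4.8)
The plan is to show that every execution of the modified procedure (Algorithm~\ref{alg:recognition} with its \textbf{else}-block replaced by Algorithm~\ref{alg:restricted-scenario}) is also a \emph{legal} execution of the unmodified Algorithm~\ref{alg:recognition}, and then to invoke Lemmas~\ref{lem:algo-explains}, \ref{lem:algo-fully-witnessed}, and~\ref{lem:algo-S5} without change. The two procedures differ only in how $v^*_S$ is selected in Line~\ref{line:choose-v-S}: the original algorithm imposes solely the constraint $\sigma(C_j)\cap L(S(v^*_S))\neq\emptyset$, while Algorithm~\ref{alg:restricted-scenario} sometimes pins $v^*_S$ to a prescribed child $v_{\sigma(x)}$ or $v_{\sigma(y)}$. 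The first step is therefore to check that these pinned choices still satisfy the original, weaker constraint, so that no illegal choice is ever introduced.

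Concretely, when $\vert\mathcal{L}\vert\geq 2$ the modification selects an edge $xy\in E(H_1[C_i])\setminus E(H_2[C_i])$; by Cor.~\ref{cor:H2-cc} its endpoints lie in two \emph{distinct} connected components $C_x\ni x$ and $C_y\ni y$ of $H_2$, so the two assignments $C_j=C_x\mapsto v_{\sigma(x)}$ and $C_j=C_y\mapsto v_{\sigma(y)}$ are well-defined and cannot clash. Since $x\in C_x$ together with $\sigma(x)\preceq_S v_{\sigma(x)}$ gives $\sigma(x)\in\sigma(C_x)\cap L(S(v_{\sigma(x)}))$ (and symmetrically for $y$), both pinned choices obey the original condition $\sigma(C_j)\cap L(S(v^*_S))\neq\emptyset$. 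For every remaining component $C_j$ the choice is literally the one the original algorithm would make. Hence any scenario produced by the modified procedure coincides with the output of some admissible run of Algorithm~\ref{alg:recognition}.

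It then remains to observe that the proofs of the three cited lemmas never depend on \emph{which} admissible $v^*_S$ is taken. The correctness argument (Lemma~\ref{lem:algo-explains}) classifies each pair $x,y$ purely through the values $\tT(\lca_T(x,y))$, $\tS(\lca_S(\sigma(x),\sigma(y)))$ and the $\lca$-relations, none of which involve $v^*_S$; and the proofs of (S4) and (S5) use only the generic property that $\mu(v_j)=u_S v^*_S$ for some $v^*_S\in\child_S(u_S)$ with $\sigma(C_j)\cap L(S(v^*_S))\neq\emptyset$, which still holds. Applying Lemmas~\ref{lem:algo-explains}, \ref{lem:algo-fully-witnessed}, and~\ref{lem:algo-S5} to the run identified above then yields at once that the returned scenario explains $\graphs$ and satisfies (S4) and (S5).

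The only place that requires a genuine argument — and thus the main obstacle — is the second paragraph: verifying that fixing $v^*_S$ to $v_{\sigma(x)}$ and $v_{\sigma(y)}$ is both well-defined (which rests on $C_x\neq C_y$, supplied by Cor.~\ref{cor:H2-cc}) and admissible under the original selection rule. Once this compatibility is established, the statement is an immediate corollary of the already proven properties of Algorithm~\ref{alg:recognition}, exactly as the surrounding discussion anticipates.
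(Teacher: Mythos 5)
Your proposal is correct and follows essentially the same route as the paper: the paper likewise argues that the pinned choices $v_{\sigma(x)}$, $v_{\sigma(y)}$ still satisfy the original condition $\sigma(C_j)\cap L(S(v^*_S))\neq\emptyset$ (with well-definedness resting on Cor.~\ref{cor:H2-cc}), so the modifications ``only constrain choices that were arbitrary in the original algorithm,'' and the claim then follows immediately from Lemmas~\ref{lem:algo-explains}, \ref{lem:algo-fully-witnessed}, and~\ref{lem:algo-S5}. Your explicit check that those lemmas' proofs are insensitive to the particular admissible $v^*_S$ makes precise what the paper leaves implicit, but it is the same argument.
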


For completeness we show that the modifications do not increase the time
complexity.
\begin{lemma}
  Algorithm~\ref{alg:recognition} with the modifications as summarized in
  Algorithm~\ref{alg:restricted-scenario} can be implemented to run in
  $O(\vert L\vert^4 \log\vert L\vert)$ time (for valid inputs
  $\graphs=(\Gu, \Gg, \Ga, \sigma)$ such that $\sigma$ is surjective).
\end{lemma}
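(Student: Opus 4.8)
The plan is to argue that the modifications, which are confined to the \textbf{else}-block (Algorithm~\ref{alg:restricted-scenario}), merely replace previously arbitrary choices of $v^*_S$ by specific ones, and that making these constrained choices costs no more per recursion step than the $O(\vert L'\vert^2)$ bound already established in the proof of Lemma~\ref{lem:algo-complexity}. Since the remaining parts of Algorithm~\ref{alg:recognition} are untouched — in particular the construction of the species tree $S$ via \texttt{MTT}, which dominates at $O(\vert L\vert^4\log\vert L\vert)$ — the claimed overall bound follows as soon as the per-step cost is confirmed to be unchanged.

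First I would recall that, within a single recursion step on $L'$ and $u_S$, the auxiliary graphs $H_1$, $H_2$, $H_3$ together with the connected-component labels of each (obtained by breadth-first search) are already available in $O(\vert L'\vert^2)$ time, and that the LCA and Level-Ancestor data structures on $S$ support $O(1)$ queries after $O(\vert L\vert)$ preprocessing. Assembling the list $\mathcal{L}$ of connected components $C_j$ of $H_2$ with $C_j\subseteq C_i$ and testing $\vert \mathcal{L}\vert\ge 2$ therefore costs $O(\vert L'\vert)$ per component $C_i$, hence $O(\vert L'\vert^2)$ in total since $H_1$ has $O(\vert L'\vert)$ components.

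The only genuinely new operation is, for each $C_i$ with $\vert \mathcal{L}\vert\ge 2$, locating an edge $xy\in E(H_1[C_i])\setminus E(H_2[C_i])$ and then identifying the host components $C_x,C_y$ of $H_2$ and the children $v_{\sigma(x)},v_{\sigma(y)}\in\child_S(u_S)$. I would scan the edges of $H_1[C_i]$ and test membership in $H_2[C_i]$ — which by the construction of $H_2$ is equivalent to the $O(1)$ check $\lca_S(\sigma(x),\sigma(y))\neq u_S$ — so such an edge is found in $O(\vert E(H_1[C_i])\vert)$ time; Cor.~\ref{cor:H2-cc} guarantees that one exists and that $x,y$ lie in distinct components of $H_2$. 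The components $C_x,C_y$ are read off the precomputed component labels in $O(1)$, while $v_{\sigma(x)}=\LA(\sigma(x),\depth(u_S)+1)$ and $v_{\sigma(y)}=\LA(\sigma(y),\depth(u_S)+1)$ are obtained by two Level-Ancestor queries in $O(1)$. Summed over all components $C_i$, the edge scans cost $O(\vert E(H_1)\vert)=O(\vert L'\vert^2)$, so the per-step total stays at $O(\vert L'\vert^2)$.

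Finally I would invoke the depth-wise accounting from the proof of Lemma~\ref{lem:algo-complexity} verbatim: the inputs $L'_r$ of the recursion nodes $r$ at any fixed depth $\ell$ are pairwise disjoint and the recursion depth is $O(\vert L\vert)$, so the total time in \texttt{BuildGeneTree} remains $O(\vert L\vert^3)$, and assembling the final tree $T$ is still $O(\vert L\vert^2)$. Adding the unchanged $O(\vert L\vert^4\log\vert L\vert)$ cost of the \texttt{MTT} step yields the bound. The only point needing care is the edge-search step: one must verify that deciding $xy\in E(H_2)$ reduces to a constant-time LCA query rather than an explicit reconstruction of $H_2$'s edge set — but this is immediate from the definition of $H_2$, so no genuine obstacle arises.
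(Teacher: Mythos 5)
Your proposal is correct and follows essentially the same route as the paper's proof: reuse the depth-wise accounting of Lemma~\ref{lem:algo-complexity} and verify that the only new work per recursion step --- finding an edge $xy\in E(H_1[C_i])\setminus E(H_2[C_i])$, identifying $C_x,C_y$ via precomputed component labels, and retrieving $v_{\sigma(x)},v_{\sigma(y)}$ by Level-Ancestor queries --- stays within $O(\vert L'\vert^2)$. The only deviations are immaterial: you scan $H_1$-edges with an $O(1)$ LCA membership test where the paper iterates over all vertex pairs of $C_i$ (both $O(\vert L'\vert^2)$ total), and note that Cor.~\ref{cor:H2-cc} only yields that such an edge crosses distinct $H_2$-components, while its \emph{existence} follows from $C_i$ being $H_1$-connected and containing at least two $H_2$-components, as the paper argues just before Algorithm~\ref{alg:restricted-scenario}.
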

\begin{proof}
  Re-using the arguments in the proof of Lemma~\ref{lem:algo-complexity},
  it suffices to show that, in the modified algorithm, the effort of the
  additional steps in one recursion step on $L'\subseteq L$ and some inner
  vertex $u_S\in V^0(S)$ (excluding the recursive calls) is bounded by
  $O(\vert L'\vert^2)$.
  We have already shown in the proof of Lemma~\ref{lem:algo-complexity} how
  the lists $\mathcal{L}$ of connected components $C_j$ of $H_2$ such that
  $C_j\subseteq C_i$ are obtained using breadth-first search with a total
  effort of $O(\vert L'\vert^2)$ time.  We can store, for each vertex $x\in
  L'$, a pointer to the connected component of $H_2$ in a hash table in
  $O(\vert L'\vert )$ time.  For a given connected component $C_i$ of
  $H_1$, choosing an edge $xy\in E(H_1[C_i]) \setminus E(H_2[C_i])$ is
  easily done by iterating over all pairs of vertices in $C_i$. Since
  distinct connected components of $H_1$ are vertex-disjoint, the overall
  effort for this is again bounded by $O(\vert L'\vert^2)$.  For a given
  connected component $C_i$ of $H_1$, identifying the respective connected
  components $C_x$ and $C_y$ and vertices $v_{\sigma(x)},v_{\sigma(y)} \in
  \child_S(u_S)$ can be done in constant time by querying the
  above-mentioned hash table and the LA data structure, respectively. Since
  $H_1$ has at most $O(\vert L'\vert )$ connected components, the total
  effort for the latter look-ups is bounded by $O(\vert L'\vert )$.
  Finally, checking whether $C_j=C_x$ and $C_j=C_y$ can clearly be done in
  constant time if we compare only pointers to the connected components.
  The total time complexity of the second \textbf{for}-loop in
  Algorithm~\ref{alg:restricted-scenario} is therefore the same as in the
  original algorithm.

  In summary, the total effort of one recursion step (excluding the
  recursive calls) is still bounded by $O(\vert L'\vert^2)$, which
  completes the proof.
\end{proof}

We note that scenario $\scen_2$ in Fig.~\ref{fig:algo-S6-violated} may
  be obtained from Algorithm~\ref{alg:recognition} using the subroutine in 
  Algorithm~\ref{alg:restricted-scenario} if the edge 
  $ab'\in E(H_1[C_i]) \setminus E(H_2[C_i])$
  is chosen (over the alternative choice $a'b$)
  in the \textbf{``if $\vert \mathcal{L}\vert \ge 2$ then''} block.

\begin{lemma}
  \label{lem:algo-satisfies-S6}
  Given a valid input $\graphs = (\Gu, \Gg, \Ga, \sigma)$, the scenario
  $\scen=(T,S,\sigma,\mu,\tT,\tS)$ returned by
  Algorithm~\ref{alg:recognition} with the modifications as summarized in
  Algorithm~\ref{alg:restricted-scenario} satisfies~(S6).
\end{lemma}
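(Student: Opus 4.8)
The plan is to first pin down exactly which vertices of $T$ carry an \AX{S6}-obligation. Inspecting the $\mu$-assignments in Algorithm~\ref{alg:recognition}, the only inner vertices $u$ with $\mu(u)\in V(S)$ (rather than an edge of $S$) are those created in Line~\ref{line:create-u_i}, for which $\mu(u_i)=u_S$; leaves map into $L(S)$, the vertices $\rho'$ and $v_j$ map to edges of $S$, and $\mu(0_T)=0_S\notin V^0(S)$. Hence the only candidates with $\mu(u)\in V^0(S)$ are the $u_i$. Such a $u_i$ has one parent and exactly $\vert\mathcal{L}\vert$ children (one $v_j$ per connected component of $H_2$ inside $C_i$), so it survives the suppression in Line~\ref{line:Tphylo} if and only if $\vert\mathcal{L}\vert\ge 2$. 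When $\vert\mathcal{L}\vert=1$ the vertex has a single child, is suppressed, and lies outside $V(T)$, imposing no constraint. It therefore suffices to verify \AX{S6} for each surviving $u=u_i$ with $\vert\mathcal{L}\vert\ge 2$.

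For such a $u_i$ I would invoke the construction in Algorithm~\ref{alg:restricted-scenario}: the discussion preceding it shows that when $\vert\mathcal{L}\vert\ge 2$ there are two distinct components $C_x,C_y\in\mathcal{L}$ whose corresponding children $v_{C_x},v_{C_y}$ of $u_i$ satisfy $\mu(v_{C_x})=u_S v_{\sigma(x)}$ and $\mu(v_{C_y})=u_S v_{\sigma(y)}$ for two \emph{distinct} children $v_{\sigma(x)},v_{\sigma(y)}\in\child_S(u_S)$. Both edges $u_i v_{C_x}$ and $u_i v_{C_y}$ are non-HGT, because $\mu(v_{C_x})=u_S v_{\sigma(x)}\preceq_S u_S=\mu(u_i)$ (and likewise for $v_{C_y}$) under the extended order on $V(S)\cup E(S)$.

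Next I would produce the two witnessing leaves. Re-using the witnessing argument from the proof of Lemma~\ref{lem:algo-fully-witnessed} applied to $v_{C_x}$ inside the intermediate tree $T'$, there is a descendant leaf $x\preceq_{T'} v_{C_x}$ joined to $v_{C_x}$ by a path of non-HGT edges; prepending the non-HGT edge $u_i v_{C_x}$ and transferring the path to $T$ exactly as in that proof (non-HGT-ness survives the suppression step), $x$ becomes a witness for $u_i$ in $T$. Along such a non-HGT path $\mu$ is non-increasing in $\preceq_S$ — this is Lemma~\ref{lem:mu-timing} combined with comparability of the endpoints — so $\sigma(x)=\mu(x)\preceq_S\mu(v_{C_x})=u_S v_{\sigma(x)}$, which by the extended order gives $\sigma(x)\preceq_S v_{\sigma(x)}$. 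Symmetrically I obtain a witness leaf $y$ for $u_i$ with $\sigma(y)\preceq_S v_{\sigma(y)}$.

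Finally I would assemble \AX{S6}. Since $v_{\sigma(x)}\ne v_{\sigma(y)}$ are distinct children of $u_S$ while $\sigma(x)\preceq_S v_{\sigma(x)}$ and $\sigma(y)\preceq_S v_{\sigma(y)}$, we obtain $\lca_S(\sigma(x),\sigma(y))=u_S=\mu(u_i)$. Because $x$ and $y$ descend from the distinct children $v_{C_x},v_{C_y}$ of the non-suppressed vertex $u_i$, their branch point is $u_i$, i.e.\ $\lca_T(x,y)=u_i=u$. Together with $x$ and $y$ being witnesses for $u$, this is precisely \AX{S6}. The two places I expect to need care are the passage from $T'$ to $T$ — guaranteeing that the non-HGT witness paths survive suppression and that $\lca_T(x,y)$ really equals $u_i$ rather than some suppressed proxy — and the monotonicity of $\mu$ along non-HGT paths used to deduce $\sigma(x)\preceq_S v_{\sigma(x)}$; both mirror steps already carried out in the proof of Lemma~\ref{lem:algo-fully-witnessed}, so they should go through cleanly.
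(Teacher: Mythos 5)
Your proof is correct and follows essentially the same route as the paper's own: you identify the $u_i$ vertices of Line~\ref{line:create-u_i} as the only candidates with $\mu(u)\in V^0(S)$, exploit the guarantee of Algorithm~\ref{alg:restricted-scenario} that a surviving $u_i$ has two children mapped to edges above \emph{distinct} children of $u_S$, and re-use the witness-path argument from the proof of Lemma~\ref{lem:algo-fully-witnessed} to obtain two witnesses $x,y$ with $\lca_T(x,y)=u_i$ and $\lca_S(\sigma(x),\sigma(y))=u_S=\mu(u_i)$. The only cosmetic difference is that you make the non-suppression criterion ($\vert\mathcal{L}\vert\ge 2$) and the $\preceq_S$-monotonicity of $\mu$ along non-HGT paths explicit, whereas the paper folds both into the reused argument.
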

\begin{proof}
  Suppose that $\mu(u) \in V^0(S) = V(S) \setminus (L(S) \cup \{0_S\})$.
  Inspection of Algorithm~\ref{alg:recognition} shows that $u$ can only
  have been created in Line~\ref{line:create-u_i} in some recursion step on
  $L'\subseteq L$ and $u_S\in V^0(S)$. In particular, we have $\mu(u)=u_S$
  and $u$ corresponds to some connected component $C_i$ of $H_1$.  Consider
  the intermediate tree $T'$ constructed in the algorithm from which $T$ is
  obtained by adding the planted root $0_T$ and suppression of all inner
  vertices with a single child.  Since $u$ was not suppressed, we must have
  added at least to distinct vertices as children of $u$ in the same
  recursion step.  In particular, the output of the modified algorithm
  satisfies $\mu(v_j)=u_S v_S$ and $\mu(v_{j'})=u_S v'_S$ for two distinct
  children $v_j, v_{j'}$ of $u$ and two distinct vertices $v_S,v'_S \in
  \child_S(u_S)$.  Re-using the arguments in the proof of
  Lemma~\ref{lem:algo-fully-witnessed} and the fact that $\mu(v_j)=u_S v_S
  \prec_S u_S = \mu(u)$, we find a path $P'= (u\eqqcolon v'_1 - v_j
  \eqqcolon v'_2 - \dots - v'_{k'}\coloneqq x)$ in $T'$ from $u$ to some of
  its descendant leaves $x\in L(T')=L(T)$ that passes through $v_j$ and
  does not contain any HGT-edge, i.e., it holds $\mu(v'_{i+1}) \preceq
  \mu(v'_i)$ for all $1\le i < k'$. In particular
  $\sigma(x)=\mu(x)\prec_{S} \mu(v_j) = u_S v_S$.  Therefore, and because
  $T$ is obtained from $T'$ by adding $0_T$ and suppression of all vertices
  with a single child, we have $x\prec_{T} u$ and, moreover, the path $P=
  (u\eqqcolon v_1 - v_2 - \dots - v_k\coloneqq x)$ connecting $u$ and $x$
  in $T$ contains only vertices that are also contained in $P'$ in the same
  order.  We therefore conclude that $\mu(v_{i+1}) \preceq \mu(v_i)$ holds
  for all $1\le i < k$, i.e., $P$ does not contain any HGT-edge.
  Analogously, we find a descendant leaf $y\prec_{S} u$ such that the path
  from $u$ to $y$ in $T'$ passes through $v_{j'}$, the path from $u$ to $y$
  in $T$ does not contain HGT-edges, and furthermore $\sigma(y)\prec_{S}
  u_S v'_S$.

  By construction, we have $\lca_{T'}(x,y)=u$, which implies
  $\lca_{T}(x,y)=u$ since we only added $0_T$ and suppressed the vertices
  with a single child to obtain $T$ from $T'$.  The paths from $u$ to $x$
  and to $y$ in $T$ do not contain HGT-edges. Thus the path from $x$ to $y$
  in $T$ does not contain HGT-edges.  Finally $\sigma(x)\prec_{S} u_S v_S$
  and $\sigma(y)\prec_{S} u_S v'_S$ with $v_S$ and $v'_S$ being distinct
  children of $u_S$ implies $\lca_S(\sigma(x), \sigma(y)) = u_S = \mu(u)$.
  Taken together, the latter arguments imply that $\scen$ satisfies (S6).
\end{proof}

\begin{theorem}
  \label{thm:relaxed-iff-restricted}
  A graph $3$-partition $\graphs=(\Gu, \Gg, \Ga, \sigma)$ can be explained
  by a relaxed scenario if and only if it can be explained by a restricted
  scenario. In particular, Algorithm~\ref{alg:recognition} with the
  modifications summarized in Algorithm~\ref{alg:restricted-scenario}
  constructs a restricted scenario in this case.
\end{theorem}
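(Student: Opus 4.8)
The plan is to assemble pieces that are already in place, so the argument is essentially a bookkeeping exercise over the preceding lemmas. The forward implication is immediate from the definitions: a restricted scenario is by definition a relaxed scenario (it is a relaxed scenario additionally satisfying (S4), (S5), and (S6)), so if $\graphs$ is explained by a restricted scenario then it is in particular explained by a relaxed scenario. No work is needed here.

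For the converse, I would first invoke Theorem~\ref{thm:explainable-charac}. Since $\graphs=(\Gu, \Gg, \Ga, \sigma)$ is explained by some relaxed scenario, that characterization yields that $\Gu$ and $\Gg$ are properly colored, that $\Gu$ and $\Ga$ are cographs, and that $(\R_S(\graphs), \F_S(\graphs))$ is consistent. This is exactly the defining condition for $\graphs$ to be a \emph{valid input} for Algorithm~\ref{alg:recognition}, and hence also for its modified version using the subroutine of Algorithm~\ref{alg:restricted-scenario}.

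Next I would run the modified algorithm on this valid input and collect the properties of its output $\scen=(T,S,\sigma,\mu,\tT,\tS)$. By Observation~\ref{obs:mod-algo-explains} (which combines Lemmas~\ref{lem:algo-explains}, \ref{lem:algo-fully-witnessed}, and~\ref{lem:algo-S5}), the returned $\scen$ is a relaxed scenario that explains $\graphs$ and satisfies constraints (S4) and (S5). By Lemma~\ref{lem:algo-satisfies-S6}, this same $\scen$ additionally satisfies (S6). Since (S4), (S5), and (S6) are precisely the three defining constraints of a restricted scenario, it follows that $\scen$ is a restricted scenario explaining $\graphs$. This simultaneously establishes the converse direction of the equivalence and the ``in particular'' claim that the modified algorithm constructs such a scenario.

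The main obstacle is not in this final assembly, which is entirely routine, but in the groundwork that has already been carried out: the consistency requirement isolated in Theorem~\ref{thm:explainable-charac}, the correctness argument packaged in Observation~\ref{obs:mod-algo-explains}, and especially the careful choice of the children $v_{\sigma(x)}, v_{\sigma(y)}\in\child_S(u_S)$ in Algorithm~\ref{alg:restricted-scenario} that is exploited in the proof of Lemma~\ref{lem:algo-satisfies-S6} to guarantee (S6). Given those ingredients, the theorem is obtained by merely stating that all three constraints hold simultaneously for the constructed scenario.
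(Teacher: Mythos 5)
Your proposal is correct and follows essentially the same route as the paper's proof: the trivial \emph{if}-direction, then running Algorithm~\ref{alg:recognition} with the modifications of Algorithm~\ref{alg:restricted-scenario} and citing Lemmas~\ref{lem:algo-explains}, \ref{lem:algo-fully-witnessed}, \ref{lem:algo-S5}, and~\ref{lem:algo-satisfies-S6} to verify (S4)--(S6). If anything, you are slightly more careful than the paper in explicitly invoking Theorem~\ref{thm:explainable-charac} to justify that $\graphs$ is a valid input for the algorithm, a step the paper leaves implicit.
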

\begin{proof}
  The \textit{if}-direction trivially holds since every restricted scenario
  is also a relaxed scenario.  Conversely, suppose $\graphs$ is explained
  by a relaxed scenario.  Then Algorithm~\ref{alg:recognition} with the
  modifications as summarized in Algorithm~\ref{alg:restricted-scenario}
  returns a scenario $\scen$ that explains $\graphs$ by
  Lemma~\ref{lem:algo-explains}.  By Lemmas~\ref{lem:algo-fully-witnessed},
  \ref{lem:algo-S5}, and~\ref{lem:algo-satisfies-S6}, respectively, $\scen$
  satisfies~(S4), (S5), and (S6), and thus, it is a restricted scenario.
\end{proof}

\begin{corollary}
  \label{cor:arbitrary-S}
  Let $\graphs=(\Gu, \Gg, \Ga, \sigma)$ be graph $3$-partition with vertex
  coloring $\sigma\colon L\to M$.  If $\graphs=(\Gu, \Gg, \Ga, \sigma)$ can
  be explained by a relaxed scenario, then, for every species tree $S^*$ on
  $M$ that agrees with $(\R_S(\graphs), \F_S(\graphs))$, there is a relaxed
  scenario $\scen=(T,S^*,\sigma,\mu,\tT,\tS)$ that explains $\graphs$.
  Moreover, $\scen$ can be chosen to be a restricted scenario.
\end{corollary}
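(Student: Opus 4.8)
The plan is to exploit a structural feature of Algorithm~\ref{alg:recognition}: it is \emph{agnostic} to which species tree is produced in Line~\ref{line:S}. Every subsequent step, and every lemma establishing correctness, accesses $S$ only through the fact that $S$ \emph{agrees with} $(\R_S(\graphs),\F_S(\graphs))$. Hence, given the prescribed tree $S^*$, I would simply run the algorithm with $S^*$ substituted for the tree constructed in Line~\ref{line:S} and check that nothing downstream breaks. First, since $\graphs$ is explained by a relaxed scenario, Thm~\ref{thm:explainable-charac} guarantees that $\Gu$ and $\Gg$ are properly colored, that $\Gu$ and $\Ga$ are cographs, and that $(\R_S(\graphs),\F_S(\graphs))$ is consistent; thus $\graphs$ is a valid input. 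By hypothesis $S^*$ agrees with $(\R_S(\graphs),\F_S(\graphs))$, so it is a legitimate choice for Line~\ref{line:S}. I would then equip $S^*$ with a time map $\tS$ via Line~\ref{line:tS} (Lemma~\ref{lem:arbitrary-tT}) and let the recursion proceed, recomputing the auxiliary graphs $H_1,H_2,H_3$ with respect to $S^*$ at each step.

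The core of the argument is a careful re-reading of the correctness proof. The auxiliary graphs $H_1,H_2,H_3$ do depend on the species tree, but the two pivotal lemmas, Lemmas~\ref{lem:exit-Ga-edge} and~\ref{lem:exit-not-Gu-edge}, were already proved for an \emph{arbitrary} valid input: their proofs invoke $S$ only via the displayed informative triples and the forbidden triples of $(\R_S(\graphs),\F_S(\graphs))$ together with Obs.~\ref{obs:color-subset}, never through any structural feature peculiar to the tree of Line~\ref{line:S}. Consequently Obs.~\ref{obs:color-subset}, Obs.~\ref{obs:leaf-sets}, and Lemma~\ref{lem:algo-explains} carry over verbatim with $S^*$ in place of $S$, so the returned scenario $\scen=(T,S^*,\sigma,\mu,\tT,\tS)$ explains $\graphs$. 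The one point worth checking explicitly is that the choice of $v_S$ in Line~\ref{line:choose-v-S-for-class} remains possible; this is exactly the content of Obs.~\ref{obs:color-subset}, whose proof relies only on the fact that the connected components of $H_3$ group colors lying below a common child of $u_S$ in $S^*$, a property that holds by the \emph{definition} of $H_3$ relative to $S^*$.

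Finally, for the ``moreover'' clause, I would run the same substitution using the modified subroutine of Algorithm~\ref{alg:restricted-scenario}. By Lemmas~\ref{lem:algo-fully-witnessed}, \ref{lem:algo-S5}, and~\ref{lem:algo-satisfies-S6}, the resulting scenario satisfies (S4), (S5), and (S6) and is therefore restricted, while it still explains $\graphs$ by Obs.~\ref{obs:mod-algo-explains}. The only (and modest) obstacle is to confirm that none of the earlier correctness lemmas secretly exploit a property enjoyed specifically by the tree built in Line~\ref{line:S} rather than by \emph{every} tree agreeing with $(\R_S(\graphs),\F_S(\graphs))$; a short audit of each lemma shows that $S$ enters only through triple agreement and Obs.~\ref{obs:color-subset}, so replacing it by $S^*$ is harmless, and the corollary follows.
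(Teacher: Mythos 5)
Your proposal is correct and takes essentially the same route as the paper's own proof: both rest on the observation that Line~\ref{line:S} of Algorithm~\ref{alg:recognition} only requires \emph{some} tree on $M$ agreeing with $(\R_S(\graphs),\F_S(\graphs))$ (not specifically the \texttt{MTT} tree), so $S^*$ may be substituted there, after which Obs.~\ref{obs:mod-algo-explains} and Thm.~\ref{thm:relaxed-iff-restricted} give the explaining (restricted) scenario. Your explicit audit of the downstream lemmas merely spells out what the paper handles implicitly, namely that all correctness arguments access $S$ only through triple agreement and Obs.~\ref{obs:color-subset}.
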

\begin{proof}
  Suppose $\graphs=(\Gu, \Gg, \Ga, \sigma)$ can be explained by a relaxed
  scenario. By Thm.~\ref{thm:explainable-charac}, therefore, $\graphs$ is a
  valid input for Algorithm~\ref{alg:recognition} with the modifications in
  summarized in Algorithm~\ref{alg:restricted-scenario}.  Since the species
  tree $S$ constructed in Line~\ref{line:S} of
  Algorithm~\ref{alg:recognition} is an arbitrary tree $S^*$ on $M$ that
  agrees with $(\R_S(\graphs), \F_S(\graphs))$, i.e., not necessarily the
  tree constructed by \texttt{MTT} \cite{He:06}, 
  Obs.~\ref{obs:mod-algo-explains} immediately implies that there is a relaxed 
  scenario $\scen=(T,S^*,\sigma,\mu,\tT,\tS)$ that explains $\graphs$.  
  Moreover, if $\scen$ is constructed using the modified algorithm, then it is
  a restricted scenario by Thm.~\ref{thm:relaxed-iff-restricted}.
\end{proof}

\section{Explanation of EDT  Graphs by Relaxed Scenarios}
\label{sect:sole-EDT-recognition}

In the two preceding sections, we have
seen that it can be decided efficiently whether a given vertex-colored graph
$(G,\sigma)$ is an EDT graph provided we also know how the complement
$(\overline{G},\sigma)$ is partitioned into a putative LDT graph
$(\Ga,\sigma)$ and putative PDT graph $(\Gu,\sigma)$. It is of immediate
interest to understand whether the information on $(\Ga,\sigma)$ and
$(\Gu,\sigma)$ is necessary, or whether EDT
  graphs can also be recognized efficiently in isolation.
We consider the following decision problem:

\begin{problem}[\PROBLEM{EDT-Recognition}]\ \\
  \begin{tabular}{ll}
    \emph{Input:}    & A colored graph $(G, \sigma)$.\\
    \emph{Question:} & Is $(G,\sigma)$ an EDT graph?
  \end{tabular}
\end{problem}

As we shall see, \PROBLEM{EDT-Recognition} can be answered in
polynomial-time, if we suppose that the scenario explaining $(G, \sigma)$
is HGT-free while, for the general case, \PROBLEM{EDT-Recognition} is
NP-complete.  We start with a characterization of the EDT graphs that can
be explained by HGT-free relaxed scenarios. For this purpose, it will be
useful to note that edge-less LDT graphs rule out the existence of
HGT-edges in fully witnessed scenarios:
\begin{lemma}
  \label{lem:Gu-empty-HGT-free}
  If a relaxed scenario $\scen$ is fully witnessed and 
  $E(\Gu(\scen))=\emptyset$, then $\scen$ is HGT-free.
\end{lemma}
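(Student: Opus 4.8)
The plan is to prove the statement in its strong contrapositive form: assuming $\scen=(T,S,\sigma,\mu,\tT,\tS)$ is fully witnessed and contains an HGT-edge, I would exhibit two leaves that must form an edge of $\Gu(\scen)$, contradicting $E(\Gu(\scen))=\emptyset$. So suppose $uv\in E(T)$ with $v\prec_T u$ is an HGT-edge, i.e.\ $\mu(u)$ and $\mu(v)$ are incomparable in $S$. Since $\mu(0_T)=0_S$ is $\preceq_S$-comparable to every element of $V(S)\cup E(S)$ by \AX{S0}, the planted edge is never an HGT-edge, so $u\in V^0(T)$. By the characterization of full witnessedness recalled above, $u$ has a child $w$ with $\mu(w)\preceq_S\mu(u)$; as $\mu(v)$ is incomparable to $\mu(u)$, we must have $w\neq v$. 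Let $x$ be a witness for $v$ and $y$ a witness for $w$. Because $x\preceq_T v$ and $y\preceq_T w$ with $v,w$ distinct children of $u$, the leaves $x,y$ lie in disjoint subtrees, so $\lca_T(x,y)=u$.

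Next I would transport the witnesses into $S$. Along the HGT-free path from $v$ to $x$, each edge joins $\preceq_S$-comparable images, so Lemma~\ref{lem:mu-timing} (exactly as in the proof of Lemma~\ref{lem:no-hgt-compare}) yields $\sigma(x)=\mu(x)\preceq_S\mu(v)$ by \AX{S1}; likewise $\sigma(y)=\mu(y)\preceq_S\mu(w)\preceq_S\mu(u)$. The decisive structural observation is that $\sigma(x)\not\preceq_S\mu(u)$: otherwise both $\mu(u)$ and $\mu(v)$ would lie $\preceq_S$-above the single leaf $\sigma(x)$, but the elements of $V(S)\cup E(S)$ above a fixed leaf form a chain along the unique root-to-$\sigma(x)$ path, which would force $\mu(u)$ and $\mu(v)$ to be comparable --- a contradiction. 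Writing $p$ for the lower endpoint of $\mu(u)$ (so $p=\mu(u)$ if $\mu(u)\in V(S)$), we then have $\sigma(y)\preceq_S p$ while $\sigma(x)\not\preceq_S p$. Since $m\coloneqq\lca_S(\sigma(x),\sigma(y))$ and $p$ are both ancestors of $\sigma(y)$ and hence comparable, while $\sigma(x)\preceq_S m$ but $\sigma(x)\not\preceq_S p$, I conclude $p\prec_S m$, i.e.\ $m$ is a strict $\preceq_S$-ancestor of $p$.

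Finally I would convert this into a time inequality. If $\mu(u)\in V(S)$, then \AX{S2} gives $\tT(u)=\tS(p)<\tS(m)$ because $m\succ_S p$. If $\mu(u)=qp\in E(S)$ with $q=\parent_S(p)$, then $m\succ_S p$ forces $m\succeq_S q$, so $\tS(m)\ge\tS(q)>\tT(u)$ by \AX{S3}. In either case $\tT(\lca_T(x,y))=\tT(u)<\tS(\lca_S(\sigma(x),\sigma(y)))$, that is $xy\in E(\Gu(\scen))$, contradicting $E(\Gu(\scen))=\emptyset$. I expect the only real bookkeeping to be the case distinction on whether $\mu(u)$ (and, in the witness step, $\mu(v)$) is a vertex or an edge of $S$, handled uniformly through the lower-endpoint device and the extended order $\preceq_S$ on $V(S)\cup E(S)$; the conceptual heart is the choice of the two witnesses $x\prec_T v$ and $y\prec_T w$ for the \emph{sibling} $w$, which pins $\lca_T(x,y)$ to $u$ while forcing their common species ancestor strictly above $\mu(u)$.
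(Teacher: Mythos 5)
Your proof is correct and follows essentially the same route as the paper's: choose a witness leaf below $v$ and a witness leaf below $u$ that avoids $v$, pin $\lca_T(x,y)=u$, use the incomparability of $\mu(u)$ and $\mu(v)$ (via the fact that all elements of $V(S)\cup E(S)$ above a fixed leaf form a chain) to conclude $\mu(u)\prec_S\lca_S(\sigma(x),\sigma(y))$, and finish with the same vertex/edge case split under \AX{S2} and \AX{S3}. The only cosmetic difference is that you manufacture the second witness through the vertical child $w$ with $\mu(w)\preceq_S\mu(u)$, whereas the paper directly invokes a witness of $u$ itself (which necessarily lies below exactly such a child), so the two arguments coincide.
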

\begin{proof}
  Suppose for contradiction that $\scen=(T,S,\sigma,\mu,\tT,\tS)$ contains
  an HGT-edge $uv\in E(T)$ (where $v\prec_{T} u$), i.e., $\mu(u)$ and
  $\mu(v)$ are incomparable in $S$.  By assumption, $u$ has a witness $x\in
  L(T)$, and $v$ has a witness $y\in L(T)$. In particular, it holds
  $\sigma(x)=\mu(x)\preceq_{S} \mu(u)$ and $\sigma(y)=\mu(y)\preceq_{S}
  \mu(v)$ which, together with $\mu(u)$ and $\mu(v)$ being incomparable,
  implies that $\mu(u)\prec_{S} \lca_S(\sigma(x),\sigma(y))$. Moreover,
  since $uv$ is an HGT-edge and the path from $u$ to $x$ does not contain
  an HGT-edge, $x$ cannot be a descendant of $v$. Hence, $\lca_{T}(x,y)=u$.
  We now distinguish cases (a) $\mu(u)\in V(S)$ and (b) $\mu(u)\in E(S)$.
  In Case~(a), we have $\tT(u)=\tS(\mu(u))$ by Condition~\AX{S2} and
  $\tS(\mu(u)) < \tS(\lca_S(\sigma(x),\sigma(y)))$ as a consequence of
  $\mu(u)\prec_{S} \lca_S(\sigma(x),\sigma(y))$.  In Case~(b), we have
  $\mu(u)=ab\in E(S)$ and, by Condition~\AX{S3}, $\tT(u) <
  \tS(a)$. Moreover, $\mu(u)\prec_{S} \lca_S(\sigma(x),\sigma(y))$ implies
  $a\preceq_{S} \lca_S(\sigma(x),\sigma(y))$ by the definition of
  $\preceq_{S}$.  Hence, we have $\tT(u) < \tS(a) \le
  \tS(\lca_S(\sigma(x),\sigma(y)))$.  In summary, it holds
  $\tT(\lca_{T}(x,y))=\tT(u)<\tS(\lca_S(\sigma(x),\sigma(y)))$ and thus
  $xy\in E(\Gu(\scen))$ in both cases; a contradiction to
  $E(\Gu(\scen))=\emptyset$.  Therefore, $\scen$ must be HGT-free.
\end{proof}

The recognition of EDT
graphs can be achieved in polynomial-time in the HGT-free case.
\begin{theorem}
  \label{thm:EDT-HGT-free}
  Let $(\Gg=(L,E), \sigma)$ be a vertex-colored graph, and let $\R$ be the
  set of triples such that $\sigma(x)\sigma(y)\vert \sigma(z)\in \R$ iff
  $xz,yz\in E$ and $xy\notin E$ for some $x,y,z\in L$ of pairwise distinct
  colors.  Then $(\Gg,\sigma)$ is an EDT graph that can be explained by
  an HGT-free relaxed scenario if and only if it is a properly colored
  cograph and $\R$ is consistent. In particular, EDT graphs explained by
  HGT-free relaxed scenario can be recognized in $O(\vert L \vert^3 + \vert
  L \vert \vert\R \vert)$ time.
\end{theorem}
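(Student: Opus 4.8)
The plan is to reduce the statement to the characterization of explainable graph $3$-partitions already established in Theorem~\ref{thm:explainable-charac}, exploiting the fact that an HGT-free scenario forces $\Gu(\scen)$ to be edge-less (Cor.~\ref{cor:Gu-edgeless}). Concretely, given only $(\Gg=(L,E),\sigma)$ I would complete it to the $3$-partition $\graphs\coloneqq(\Gu,\Gg,\Ga,\sigma)$ in which $\Gu$ is the edge-less graph on $L$ and $\Ga\coloneqq\overline{\Gg}$ (both on vertex set $L$). Since every pair of $L$ then lies in exactly one of $\Gg$ and $\Ga$, this is a genuine graph $3$-partition. With this choice one checks directly from Definition~\ref{def:inf-forb-triples} that $\R_S(\graphs)=\R$ (condition (a$'$) is void as $\Gu$ is edge-less, and (b$'$) reduces to ``$xz,yz\in E$, $xy\notin E$ on distinct colors''), and that every triple of $\F_S(\graphs)$ arises from exactly the same pattern as a corresponding triple of $\R$.

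For the \emph{necessity} direction, suppose $(\Gg,\sigma)=(\Gg(\scen),\sigma)$ for an HGT-free relaxed scenario $\scen$. Then $\Gg$ is properly colored by Cor.~\ref{cor:equal-colors} and is a cograph by Lemma~\ref{lem:HGT-free-Gg-cograph}. By Cor.~\ref{cor:Gu-edgeless} the graph $\Gu(\scen)$ is edge-less, so $\graphs(\scen)$ is exactly the completion described above and hence $\R_S(\graphs(\scen))=\R$. Proposition~\ref{prop:triples} guarantees that the species tree $S$ of $\scen$ agrees with $(\R_S(\graphs(\scen)),\F_S(\graphs(\scen)))$; in particular $S$ displays all triples of $\R$, so $\R$ is consistent.

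For the \emph{sufficiency} direction I would verify that the constructed $\graphs=(\Gu,\Gg,\Ga,\sigma)$ is a valid input for Algorithm~\ref{alg:recognition}. Proper coloring of $\Gu$ (edge-less) and $\Gg$ (hypothesis) is immediate; $\Gu$ is trivially a cograph, and $\Ga=\overline{\Gg}$ is a cograph because $\Gg$ is one (Prop.~\ref{prop:cograph}). The only nontrivial point --- and the main obstacle --- is to upgrade the hypothesis ``$\R$ is consistent'' to the consistency of the pair $(\R_S(\graphs),\F_S(\graphs))$ required by Theorem~\ref{thm:explainable-charac}. Here I would use that, for each pattern $xz,yz\in E$, $xy\notin E$ on distinct colors $X,Y,Z$, the induced required triple is $XY\vert Z$ while the induced forbidden triples are $XZ\vert Y$ and $YZ\vert X$; since a tree displaying $XY\vert Z$ can never display either $XZ\vert Y$ or $YZ\vert X$, any tree $S$ witnessing consistency of $\R=\R_S(\graphs)$ automatically displays none of $\F_S(\graphs)$. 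Thus $\F_S(\graphs)$ is dominated by $\R$ and $(\R_S(\graphs),\F_S(\graphs))$ is consistent, so $\graphs$ is a valid input. Lemma~\ref{lem:algo-explains} then yields a relaxed scenario $\scen$ with $\Gg(\scen)=\Gg$ and $\Gu(\scen)=\Gu$ edge-less; by Lemma~\ref{lem:algo-fully-witnessed} this $\scen$ is fully witnessed, whence Lemma~\ref{lem:Gu-empty-HGT-free} forces $\scen$ to be HGT-free, completing the equivalence.

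For the complexity, testing proper coloring and recognizing the cograph $\Gg$ cost $O(\vert L\vert^2)$ (the latter by \cite{Corneil:85}), and extracting $\R$ from all vertex triples costs $O(\vert L\vert^3)$. Because the domination argument shows the forbidden set may be ignored, consistency of $\R$ alone can be tested by the algorithm of \cite{He:06} with empty forbidden part on the color set (of size $\le\vert L\vert$), in $O(\vert\R\vert\,\vert L\vert+\vert L\vert^2\log\vert L\vert)$ time. Summing and absorbing $\vert L\vert^2\log\vert L\vert$ into $\vert L\vert^3$ yields the claimed $O(\vert L\vert^3+\vert L\vert\,\vert\R\vert)$ bound. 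The delicate step throughout is precisely the domination of $\F_S(\graphs)$ by $\R$, which is what makes both the sufficiency proof and the improved running time (avoiding the mixed-triple machinery of Cor.~\ref{cor:perf}) possible.
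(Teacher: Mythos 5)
Your proposal is correct and takes essentially the same route as the paper's own proof: the same completion of $(\Gg,\sigma)$ to the $3$-partition with edge-less $\Gu$ and $\Ga=\overline{\Gg}$, the same identification $\R=\R_S(\graphs)$ together with the observation that every forbidden triple in $\F_S(\graphs)$ is automatically excluded because a tree displaying the informative triple $XY\vert Z$ on the same color set cannot display $XZ\vert Y$ or $YZ\vert X$, and the same appeal to Theorem~\ref{thm:explainable-charac} plus a fully witnessed scenario and Lemma~\ref{lem:Gu-empty-HGT-free} to obtain HGT-freeness. The only cosmetic differences are that the paper invokes Theorem~\ref{thm:relaxed-iff-restricted} where you use Lemma~\ref{lem:algo-fully-witnessed} directly, and it tests consistency of $\R$ with the Aho et al.\ algorithm in $O(\vert L\vert\,\vert\R\vert)$ time rather than \texttt{MTT} with empty forbidden part; both variants yield the claimed $O(\vert L\vert^3+\vert L\vert\,\vert\R\vert)$ bound.
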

\begin{proof}
  Suppose $(\Gg,\sigma)$ is an EDT graph that is explained by the HGT-free
  relaxed scenario $\scen$. By Cor.~\ref{cor:equal-colors} and
  Lemmas~\ref{lem:HGT-free-Gg-cograph}, $(\Gg,\sigma)$ is a properly
  colored cograph. Suppose $xz,yz\in E$ and $xy\notin E$. Since in addition
  $\Gu(\scen)$ is edge-less by Cor.~\ref{cor:Gu-edgeless}, we have
  $xz,yz\notin E(\Ga(\scen))$ and $xy\in E(\Ga(\scen))$. Hence, we obtain
  $\R \subseteq \R_S(\graphs(\scen))$. By
  Thm.~\ref{thm:explainable-charac}, $\R_S(\graphs(\scen))$ and thus also
  its subset $\R$ are consistent.
    
  Now suppose $(\Gg,\sigma)$ is a properly colored cograph and $\R$ is
  consistent. Consider $\graphs=(\Gu\coloneqq (L,\emptyset), \Gg,
  \Ga\coloneqq \overline{\Gg})$.  Since $(\Gu,\sigma)$ is edge-less, it is
  a properly-colored cograph. Since $\Ga$ is the complement of the cograph
  $\Gg$, it is also a cograph. One easily verifies that $\R=\R_S(\graphs)$
  and thus there is a tree $S$ that displays all triples in
  $\R_S(\graphs)$.  Now consider a triple $XZ\vert Y \in \F_S(\graphs)$.
  By construction, this implies that there are $x,y,z\in L$ with pairwise
  distinct colors $X=\sigma(x)$, $Y=\sigma(y)$, and $Z=\sigma(z)$ such (a)
  $xz,yz \in E(\Gg)$ and $xy\notin E(\Gg)$ or (b) $xz,xy \in E(\Gg)$ and
  $yz\notin E(\Gg)$.  In Case~(a), we have $xz,yz \notin E(\Ga)$ and $xy\in
  E(\Ga)$ and thus $S$ displays the informative triple $XY\vert Z\in
  \R_S(\graphs)$.  In Case~(a), we have $xz,xy \notin E(\Ga)$ and $yz\in
  E(\Ga)$ and thus $S$ displays the informative triple $YZ\vert X\in
  \R_S(\graphs)$.  Therefore, the tree $S$ does not display the forbidden
  triple $XZ\vert Y$.  Since $XZ\vert Y \in \F_S(\graphs)$ was chosen
  arbitrarily, we can conclude that $S$ agrees with $(\R_S(\graphs),
  \F_S(\graphs))$.  In summary, therefore, we can apply
  Theorem~\ref{thm:explainable-charac} to conclude that $\graphs$ is
  explained by a relaxed scenario $\scen$.  By
  Theorem~\ref{thm:relaxed-iff-restricted}, $\scen$ can be chosen to be
  fully witnessed.  This together with the fact that $\Gu(\scen)=\Gu$ is
  edge-less and Lemma~\ref{lem:Gu-empty-HGT-free} yields that $\scen$ is
  HGT-free.  In summary, $(\Gg,\sigma)$ is an EDT graph that can be
  explained by a relaxed HGT-free scenario.
  
  Checking whether $(G=(L,E),\sigma)$ is properly colored can be done in
  $O(\vert E\vert)$ time, cographs can be recognized in $O(\vert
  L\vert+\vert E\vert)$ time \cite{Corneil:85}, extraction of $\R$ requires
  $O(\vert L\vert^3)$ time and testing whether $\R$ is consistent can be
  achieved in $O(\vert L\vert \vert \R \vert)$ time \cite{Aho:81}.  Thus,
  EDT graphs can be recognized in time $O(\vert L \vert^3 + \vert L \vert\,
  \vert\R \vert)$ in the HGT-free case.
\end{proof}

The examples in Fig.~\ref{fig:no-EDT} have shown that the connected
components of a given vertex-colored graph $(G,\sigma)$ are not
``independent'' in the sense that $(G,\sigma)$ is an EDT graph if and only
if all of its connected components are EDT graphs, since the components may
impose contradictory constraints on the species tree. However, we will show
next that we can assume w.l.o.g.\ that, if a relaxed scenario $\scen$
explaining $(\Gg,\sigma)$ exists, all pairs $x,y\in L$ that are in distinct
connected components of $\Gg$ form an edge in $\Ga(\scen)$. More precisely,
we have
\begin{lemma}
  Suppose $\graphs = (\Gu, \Gg, \Ga, \sigma)$ is explained by $\scen$ and 
  consider the edge set 
  $F\coloneqq\{xy \mid x,y\in L
  \text{ are in distinct connected components of } \Gg\}$. 
  Then $\graphs'=(\Gu', \Gg, \Ga', \sigma)$ where 
  $\Gu'\coloneqq (L, E(\Gu) \setminus F)$ and
  $\Ga'\coloneqq (L, E(\Ga) \cup F)$
  is explained by a relaxed scenario $\scen'$.
\end{lemma}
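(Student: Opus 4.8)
The plan is to invoke the characterization of explainable $3$-partitions in Theorem~\ref{thm:explainable-charac}: it suffices to verify that $\graphs'$ is a graph $3$-partition in which $\Gu'$ and $\Gg$ are properly colored, $\Gu'$ and $\Ga'$ are cographs, and $(\R_S(\graphs'),\F_S(\graphs'))$ is consistent. First I would fix notation, writing $C_1,\dots,C_m$ for the connected components of $\Gg$. The set $F$ consists exactly of the pairs crossing two distinct components; such a pair cannot be a $\Gg$-edge, so originally it lies in $\Gu$ or $\Ga$. Moving $F$ wholesale out of $\Gu$ and into $\Ga$ therefore keeps the three edge sets pairwise disjoint and still covering $\binom{L}{2}$, so $\graphs'$ is again a $3$-partition. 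Proper coloring of $\Gg$ is inherited verbatim (Cor.~\ref{cor:equal-colors}), and since $\Gu'\subseteq\Gu$, proper coloring of $\Gu'$ follows from that of $\Gu$.

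For the two cograph conditions I would exploit the component structure. Since every cross-component pair has been deleted from $\Gu$, the graph $\Gu'$ has no edges between distinct $C_i$, so $\Gu'=\Gu[C_1]\cupdot\dots\cupdot\Gu[C_m]$ is a disjoint union of induced subgraphs of the cograph $\Gu$, each of which is a cograph by Prop.~\ref{prop:cograph}; disjoint unions of cographs are cographs. Dually, $\Ga'$ contains \emph{all} cross-component pairs while agreeing with $\Ga[C_i]$ inside each $C_i$, so $\Ga'=\Ga[C_1]\gjoin\dots\gjoin\Ga[C_m]$ is a join of the cographs $\Ga[C_i]$ and hence a cograph.

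The crux is the triple condition, and the key observation is that every informative or forbidden species triple is supported by three leaves lying in a common component of $\Gg$. Since $\F_S$ depends only on $\Gg$, we get $\F_S(\graphs')=\F_S(\graphs)$ immediately. For $\R_S(\graphs')$ I would show the containment $\R_S(\graphs')\subseteq\R_S(\graphs)$: if $XY\vert Z$ arises via condition (a') in $\graphs'$, then $xz,yz\in E(\Gu')=E(\Gu)\setminus F$ forces $x,z$ and $y,z$ into the same $\Gg$-component, so $x,y,z$ share one component and all three relevant pairs carry their original classification, whence $XY\vert Z\in\R_S(\graphs)$; and if $XY\vert Z$ arises via condition (b'), then $xz,yz\notin E(\Ga')=E(\Ga)\cup F$ forces $xz,yz\notin F$, again collapsing $x,y,z$ into one component, so $xy\in E(\Ga)$ and $xz,yz\notin E(\Ga)$ give $XY\vert Z\in\R_S(\graphs)$. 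Because $\scen$ explains $\graphs$, Prop.~\ref{prop:triples} furnishes a species tree $S$ agreeing with $(\R_S(\graphs),\F_S(\graphs))$; by the containment just established, this same $S$ agrees with $(\R_S(\graphs'),\F_S(\graphs'))$, so the latter pair is consistent. Theorem~\ref{thm:explainable-charac} then delivers a relaxed scenario $\scen'$ explaining $\graphs'$.

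I expect the main obstacle to be making the containment $\R_S(\graphs')\subseteq\R_S(\graphs)$ airtight, specifically verifying in each of cases (a') and (b') that the \emph{third} pair $xy$ retains its original $\graphs$-classification. This hinges entirely on the confinement principle that both $\Gu'$-edges and non-$\Ga'$-edges live inside single components of $\Gg$: once two of the three pairs are intra-component, the third is forced to be as well, and within a component $\graphs$ and $\graphs'$ coincide.
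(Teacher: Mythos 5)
Your proof is correct, and its overall skeleton coincides with the paper's: both reduce the claim to the characterization in Theorem~\ref{thm:explainable-charac}, observe that proper coloring of $\Gg$ and of $\Gu'\subseteq\Gu$ is inherited, note that $\F_S(\graphs')=\F_S(\graphs)$ because the forbidden triples depend only on the unchanged $\Gg$, and establish $\R_S(\graphs')\subseteq\R_S(\graphs)$ by exactly the confinement argument you identify as the crux (in each of (a') and (b'), two of the three pairs avoid $F$, forcing $x,y,z$ into one $\Gg$-component, whence the third pair also retains its $\graphs$-classification). The one place you genuinely diverge is the cograph verification: the paper argues by contradiction that any induced $P_4$ in $\Gu'$ or $\Ga'$ would have all six of its vertex pairs outside $F$ and hence would already be an induced $P_4$ in $\Gu$ or $\Ga$, whereas you identify the modified graphs exactly as $\Gu'=\Gu[C_1]\cupdot\dots\cupdot\Gu[C_m]$ and $\Ga'=\Ga[C_1]\gjoin\dots\gjoin\Ga[C_m]$ and invoke closure of cographs under induced subgraphs, disjoint union, and join (Prop.~\ref{prop:cograph} and the recursive definition). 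Your variant is slightly more informative---it pins down the global structure of $\Gu'$ and $\Ga'$ rather than merely excluding a $P_4$, and it makes the duality between the two graphs (union versus join over the components of $\Gg$) transparent---while the paper's forbidden-subgraph argument is more local and needs no structural identification; both are equally rigorous, and the remainder of your argument, including deriving consistency of $(\R_S(\graphs'),\F_S(\graphs'))$ from a species tree agreeing with $(\R_S(\graphs),\F_S(\graphs))$, matches the paper's reasoning.
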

\begin{proof}
  Observe first that all pairs $x,y\in L$ that are in distinct connected
  components of $\Gg$ satisfy $xy\in E(\Ga')$.  By
  Theorem~\ref{thm:explainable-charac}, $\Gu$ and $\Gg$ are properly
  colored, $\Gu$ and $\Ga$ are cographs, and $(\R_S(\graphs),
  \F_S(\graphs))$ is consistent.  Since $\Gu'$ is a subgraph of $\Gu$, it
  is still properly colored.
    
  Suppose for contradiction that $\Gu'$ is not a cograph, i.e., it contains
  an induced $P_4=a-b-c-d$. In this case, $ab,bc,cd \in E(\Gu')$ implies
  that $ab,bc,cd \notin F$ and thus, that $a$ and $b$, $b$ and $c$ as well
  as $c$ and $d$ are contained in the same connected component of $\Gg$.
  Consequently, $a$, $b$, $c$, and $d$ are contained in a single connected
  component of $\Gg$, which implies that $ac,bd,ad\notin F$.  Therefore,
  $a-b-c-d$ is also an induced $P_4$ in $\Gu$; a contradiction.  Now
  suppose for contradiction that $\Ga'$ contains an induced
  $P_4=a-b-c-d$. In this case, $ac,bd,ad \notin E(\Ga')$ implies $ac,bd,ad
  \notin \Ga$ and $ac,bd,ad \notin F$.  The latter in particular implies
  that $a$, $b$, $c$, and $d$ are contained in a single connected component of
  $\Gg$ and thus $ab,bc,cd \notin F$.  It follows that $ab$, $bc$, and $cd$
  must also be edges in $\Ga$ and, thus, $a-b-c-d$ is an induced $P_4$ in
  $\Ga$; a contradiction. In summary, $\Gu'$ and $\Ga'$ are cographs.
    
  We continue with showing that $(\R_S(\graphs'), \F_S(\graphs'))$ remains
  consistent.  Suppose $XY\vert Z\in \R_S(\graphs')$, i.e., there are $x,y,z\in
  L$ with pairwise distinct colors $X=\sigma(x)$, $Y=\sigma(y)$, and
  $Z=\sigma(z)$ such that (a') $xz, yz \in E(\Gu')$ and $xy \notin
  E(\Gu')$, or (b') $xy \in E(\Ga')$ and $xz, yz \notin E(\Ga')$.  In both
  cases, we can apply similar arguments as before to conclude that
  $xy,xz,yz\notin F$. Thus, $xz, yz \in E(\Gu)$ and $xy \notin E(\Gu)$, and
  $xy \in E(\Ga)$ and $xz, yz \notin E(\Ga)$, respectively. This in turn
  implies $XY\vert Z\in \R_S(\graphs)$.  Hence, we have
  $\R_S(\graphs')\subseteq\R_S(\graphs)$.  Moreover, $\F_S(\graphs')$ does
  only depend on the (non-)edges of $\Gg$ and since $\Gg$ remained
  unchanged in $\graphs'$, we have $\F_S(\graphs') = \F_S(\graphs)$.  The
  latter two arguments together with $(\R_S(\graphs), \F_S(\graphs))$ being
  consistent imply that $(\R_S(\graphs'), \F_S(\graphs'))$ is also
  consistent.
  
  In summary, $\Gu'$ and $\Gg$ are properly colored, $\Gu'$ and $\Ga'$ are
  cographs, and $(\R_S(\graphs'), \F_S(\graphs'))$ is consistent.
  Theorem~\ref{thm:explainable-charac} therefore implies that $\graphs'$ is
  explained by a relaxed scenario $\scen'$.
  \end{proof}

\begin{corollary}
  \label{cor:EDT-CCs}
  If $(\Gg,\sigma)$ is an EDT graph, then it is explained by a relaxed
  scenario $\scen$ that satisfies $xy\in E(\Ga(\scen))$ for all $x,y\in L$
  that are contained in distinct connected components of $\Gg$.
\end{corollary}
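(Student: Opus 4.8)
The plan is to deduce the corollary directly from the preceding lemma by feeding it an appropriate graph $3$-partition built from the given EDT graph. Since $(\Gg,\sigma)$ is assumed to be an EDT graph, by definition there is a relaxed scenario $\scen_0=(T,S,\sigma,\mu,\tT,\tS)$ with $\Gg=\Gg(\scen_0)$. First I would package this scenario into a full graph $3$-partition by setting $\Gu\coloneqq\Gu(\scen_0)$ and $\Ga\coloneqq\Ga(\scen_0)$, so that $\graphs\coloneqq(\Gu,\Gg,\Ga,\sigma)$ is a graph $3$-partition explained by $\scen_0$; here the defining property is free, since the edge sets of $\Gu(\scen_0)$, $\Gg(\scen_0)$, and $\Ga(\scen_0)$ partition $\binom{L}{2}$ by construction.

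Next I would invoke the previous lemma with this $\graphs$ and the edge set $F=\{xy\mid x,y\in L\text{ lie in distinct connected components of }\Gg\}$. The lemma yields a relaxed scenario $\scen'$ that explains $\graphs'=(\Gu',\Gg,\Ga',\sigma)$, where $\Gu'=(L,E(\Gu)\setminus F)$ and $\Ga'=(L,E(\Ga)\cup F)$. The crucial observation is that the middle coordinate is untouched: $\Gg(\scen')=\Gg$. Hence $\scen'$ still explains the original EDT graph $(\Gg,\sigma)$.

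Finally I would verify the extra property. Any pair $x,y\in L$ lying in distinct connected components of $\Gg$ is by definition an element of $F$, and $F\subseteq E(\Ga')=E(\Ga(\scen'))$ by the construction of $\graphs'$. Thus every such pair is an edge of $\Ga(\scen')$, which is exactly the asserted property, so $\scen'$ is the desired scenario.

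There is essentially no hard step here: the entire content is carried by the preceding lemma, and the only things to check are the routine bookkeeping facts that (i) an EDT graph automatically comes equipped with a companion $3$-partition via the LDT and PDT graphs of any explaining scenario, and (ii) rerouting the cross-component edges from $\Gu$ into $\Ga$ leaves $\Gg$ — and therefore the status of $(\Gg,\sigma)$ as an EDT graph — unchanged.
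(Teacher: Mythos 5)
Your proposal is correct and follows exactly the route the paper intends: the corollary is stated as an immediate consequence of the preceding lemma, and your argument — packaging the explaining scenario $\scen_0$ into the $3$-partition $(\Gu(\scen_0),\Gg,\Ga(\scen_0),\sigma)$, applying the lemma with the cross-component edge set $F$, and noting that $\Gg$ is untouched while $F\subseteq E(\Ga(\scen'))$ — is precisely that deduction, spelled out. No gaps.
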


Let us now turn the general case of \textsc{EDT-Recognition}. We show
 that it is NP-hard by reducing from a problem of deciding whether there
is a tree that displays a given set of fan triples and a suitable choice of
rooted triples. The precise problem statement requires some definitions.
Let $U$ be a set.  Let $\CF$ be a set of fan triples whose leaves are in
$U$, and let $\CR$ be a set of unordered pairs of rooted triples of the
form $\rpair{x}{y}{z}$ with $x,y,z \in U$.  We say that a tree $S^*$ on the
leaf set $U$ \emph{satisfies} \emph{$(\CF, \CR)$} if the following holds:
\begin{itemize}
\item[] For each $x\vert y\vert z \in \CF$, $S^*$ displays
  $x\vert y\vert z$;
\item[] For each $\rpair{x}{y}{z} \in \CR$, $S^*$ displays either
  $xy\vert z$ or $xz\vert y$.
\end{itemize}
This suggests the following decision problem.
\begin{problem}[\PROBLEM{$(\CF,\CR)$-Satisfiability}]\ \\
  \begin{tabular}{ll}
    \emph{Input:}    & A tuple $(U, \CF, \CR)$ where $U$ is a set,
                       $\CF$ is a set of fan triples and \\
                     & $\CR$ is a set of pairs of rooted triples of
                       the form $\rpair{x}{y}{z}$.\\
    \emph{Question:} & Does there exist a tree $S^*$ on leaf set $U$
                       that satisfies $(\CF, \CR)$?
  \end{tabular}
\end{problem}

Jansson et al.\ \cite{jansson2018determining} showed that a slightly different version of
\PROBLEM{$(\CF,\CR)$-Satisfiability}, known as
\PROBLEM{$(F^{+-})$-Consistency}, is NP-hard.  In the
\PROBLEM{$(F^{+-})$-Consistency} problem the input are two sets $F^+$ and
$F^-$ of fan triples and one asks for a tree that displays all fan triples
in $F^+$ but none of the ones in $F^-$.  The latter is equivalent to asking
for a tree that that displays all fan triples in $F^+$ and that displays
for every $x\vert y\vert z\in F^-$ exactly one of the triples $xy\vert z$,
$xz\vert y$, or $yz\vert x$.  This translated to a slightly different
version of \PROBLEM{$(\CF, \CR)$-Satisfiability} by requiring (i) the
elements of $\CR$ to be of the form $\{xy\vert z, xz\vert y, yz\vert x\}$
and (ii) that one of the three triples must be displayed by the final
tree. For our purposes, we must restrict $\CR$ to pairs of triples instead
of triple sets of size $3$.  The NP-hardness proof in
\cite{jansson2018determining} can be adapted to establish the following
result:
\begin{theorem} \label{thm:CFCR-NPc}
  \PROBLEM{$(\CF,\CR)$-Satisfiability} is NP-complete.
\end{theorem}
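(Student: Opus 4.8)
The plan is to prove NP-completeness in two parts. For membership in NP, I would take as certificate a tree $S^*$ on the leaf set $U$: checking that $S^*$ displays every fan triple of $\CF$ and, for every pair $\rpair{x}{y}{z}\in\CR$, at least one of $xy\vert z$ or $xz\vert y$, amounts to a constant number of last-common-ancestor comparisons per constraint, and is therefore polynomial (e.g.\ after the linear-time \lca-preprocessing of \cite{Bender:05}). Hence \PROBLEM{$(\CF,\CR)$-Satisfiability} is in NP.

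For hardness I would reduce from \PROBLEM{$(F^{+-})$-Consistency}, shown NP-hard by Jansson et al.\ \cite{jansson2018determining}. Given an instance $(F^{+},F^{-})$ on leaf set $U$, the one genuine obstacle is that forbidding a fan triple $x\vert y\vert z\in F^{-}$ is naturally a \emph{three}-way disjunction (display one of $xy\vert z$, $xz\vert y$, $yz\vert x$), whereas $\CR$ only provides \emph{two}-way alternatives $\rpair{\cdot}{\cdot}{\cdot}$. The idea is to absorb the third option into a fresh leaf. Concretely I would set $\CF\coloneqq F^{+}$ and, for each $t=x\vert y\vert z\in F^{-}$, introduce a new leaf $w_t$, so the new ground set becomes $U'=U\cupdot\{w_t\mid t\in F^{-}\}$, together with the three pair-constraints $\rpair{w_t}{x}{y}$, $\rpair{w_t}{y}{z}$, and $\rpair{w_t}{x}{z}$; each of these asserts that $w_t$ is not the outlier, and that the respective triple is resolved.

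The heart of the argument, and the step I expect to be most delicate, is the correctness of this gadget: for any tree $S^*$ on $U'$ the three constraints are jointly satisfiable by some placement of $w_t$ if and only if $\{x,y,z\}$ is resolved in $S^*$, and when it is resolved they impose no preference among the three resolutions. I would verify this by a short case analysis. If $x,y,z$ form a fan with common ancestor $v$, then $w_t$ lies either outside the subtree $S^*(v)$, which makes $w_t$ the outlier in all three triples, or inside exactly one child subtree of $v$, which leaves one of the three triples a fan; in either situation one of the three constraints fails. Conversely, if $\{x,y,z\}$ is resolved, say as $xy\vert z$, then attaching $w_t$ as a sibling of $x$ gives $S^*_{\vert xyzw_t}=(((x,w_t),y),z)$, in which $w_t$ groups with $x$ or with $y$ in each of the three triples, so all three constraints hold; the symmetric placements realize the other two resolutions.

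Given the gadget, both directions of the reduction are routine. From a solution $T$ of $(F^{+},F^{-})$ I would attach each $w_t$ independently by the placement above; since the added leaves leave the restriction $T'_{\vert U}=T$ unchanged, every triple of $F^{+}=\CF$ is still displayed and each $\CR$-constraint holds by the gadget. Conversely, from a tree $S^*$ satisfying $(U',\CF,\CR)$ I would take $S^*_{\vert U}$: it displays all of $F^{+}$, and for each $t\in F^{-}$ the gadget forces $\{x,y,z\}$ to be resolved, so no forbidden fan triple is displayed. The construction is clearly polynomial, which establishes NP-hardness and, together with membership, NP-completeness of \PROBLEM{$(\CF,\CR)$-Satisfiability}.
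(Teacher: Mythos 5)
Your proof is correct, but it takes a genuinely different route from the paper's. The paper does not use \PROBLEM{$(F^{+-})$-Consistency} as a black box; it adapts the technique of Jansson et al.\ \cite{jansson2018determining} and reduces from scratch from \PROBLEM{3-Set Splitting} (itself obtained from \PROBLEM{monotone NAE-3-SAT}), constructing an instance with auxiliary leaves $x, z', z'', \alpha_j$ in which the fan triples $x\vert b_j^1\vert \alpha_j$ and $b_j^2\vert b_j^3\vert \alpha_j$ encode the sets $B_j$ and the pairs $\rpair{u_i}{z'}{z''}$ encode the bipartition $\{U_1,U_2\}$. You instead isolate the one real mismatch between the two problems---forbidding a fan $x\vert y\vert z\in F^-$ is a three-way disjunction, while $\CR$ only offers two-way ones---and absorb it into the fresh-leaf gadget $\rpair{w_t}{x}{y}$, $\rpair{w_t}{y}{z}$, $\rpair{w_t}{x}{z}$. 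I verified your case analysis and it is sound, including the non-binary subtlety your wording only implicitly covers: when $\{x,y,z\}$ is a fan below $v$ and $w_t$ sits below a child of $v$ containing \emph{none} of $x,y,z$, all three gadget triples are fans, so a constraint still fails; conversely, the sibling placements realize each of the three resolutions, and since each $w_t$ occurs only in its own three constraints and fan triples among $U$ are preserved under adding or restricting leaves, the two directions of the reduction go through. What each approach buys: yours is shorter and modular, cleanly separating the gadget from the source of hardness, but it depends on the Jansson et al.\ result applying verbatim to the decision problem as stated; the paper's reduction is self-contained relative to a classical NP-complete problem and does not rely on the precise formulation of the $(F^{+-})$-Consistency variant.
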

\begin{proof}
See Appendix. 
\end{proof}
Theorem \ref{thm:CFCR-NPc}, in turn, can be used to prove 
\begin{theorem}\label{thm:EDT-rec-NPc}
\PROBLEM{EDT-Recognition} is NP-complete. Moreover, it remains NP-complete
if the input graph $(G,\sigma)$ is a cograph.
\end{theorem}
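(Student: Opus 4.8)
The plan is to show membership in NP first and then to establish hardness by a polynomial reduction from \PROBLEM{$(\CF,\CR)$-Satisfiability}, which is NP-complete by Thm.~\ref{thm:CFCR-NPc}; I will arrange the reduction so that the graph it outputs is already a cograph, whence the cograph claim subsumes the general one. For membership, a certificate for $(G,\sigma)$ is a bipartition of the non-edges $\binom{L}{2}\setminus E(G)$ into two sets $E(\Gu)$ and $E(\Ga)$. Putting $\Gg\coloneqq G$ and $\graphs\coloneqq(\Gu,\Gg,\Ga,\sigma)$, I verify the conditions of Thm.~\ref{thm:explainable-charac} in polynomial time: proper colouring of $\Gu$ and $\Gg$ and the cograph property of $\Gu$ and $\Ga$ are testable in $O(\vert L\vert^2)$ time (the latter by \cite{Corneil:85}), and consistency of $(\R_S(\graphs),\F_S(\graphs))$ in polynomial time exactly as in Cor.~\ref{cor:perf}. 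Since $(G,\sigma)$ is an EDT graph if and only if \emph{some} bipartition passes all of these tests (and by Cor.~\ref{cor:equal-colors} $G$ must in any case be properly coloured), we obtain \PROBLEM{EDT-Recognition}$\in\mathrm{NP}$.

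For the reduction, given $(U,\CF,\CR)$ I build a properly coloured $(G,\sigma)$ whose colour set is $U$ together with one fresh \emph{dummy} colour per constraint, placing all gene-vertex gadgets on pairwise disjoint vertex sets. The idea is that $\F_S(\graphs)$ depends only on $\Gg=G$ and therefore encodes the \emph{forced} constraints, while the freedom in splitting $\overline{G}$ into $\Gu$ and $\Ga$ encodes the \emph{choices}. Each fan triple $x\vert y\vert z\in\CF$ is realised by three induced $P_3$'s whose centres carry the colours $X$, $Y$, $Z$; by Def.~\ref{def:inf-forb-triples} these contribute $XY\vert Z$, $XZ\vert Y$ and $YZ\vert X$ to $\F_S(\graphs)$, forcing the species tree $S$ of any explaining scenario to display the fan on $\{X,Y,Z\}$. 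Each pair $\{xy\vert z,\,xz\vert y\}\in\CR$ is realised by a choice gadget on colours $X,Y,Z$ (plus a dummy) in which two non-edges sharing a vertex must be split between $\Gu$ and $\Ga$: one assignment makes $XY\vert Z$ an informative species triple via condition (b$'$) of Def.~\ref{def:inf-forb-triples} (cf.\ Lemma~\ref{lem:Ga-T-triples-mixed}), the other makes $XZ\vert Y$ one, while same-colour forcing (Lemma~\ref{lem:gt-samecol}, which always drives equal-coloured non-edges into $\Ga$) together with the cograph requirement on $\Gu$ and $\Ga$ makes the two degenerate assignments (both in $\Gu$, forcing $YZ\vert X$; both in $\Ga$, leaving the fan open) infeasible.

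I then prove equivalence through Thm.~\ref{thm:explainable-charac}. In the backward direction, if $(G,\sigma)$ is an EDT graph explained by $\scen$ with species tree $S$, then $S$ avoids every triple in $\F_S(\graphs(\scen))$ and hence realises all fan constraints, while the choice gadgets force $S_{\vert XYZ}\in\{XY\vert Z,\,XZ\vert Y\}$ for each pair in $\CR$; thus $S_{\vert U}$ satisfies $(\CF,\CR)$. In the forward direction, given a satisfying tree $S^*$ I extend it by inserting the dummy colours to realise, in each choice gadget, the assignment matching $S^*$, define the bipartition of $\overline{G}$ accordingly (sending each fan-gadget endpoint non-edge to $\Gu$ and all equal-coloured and cross-gadget non-edges to $\Ga$), and check the three hypotheses of Thm.~\ref{thm:explainable-charac} with $S^*$ (suitably dummy-augmented) witnessing consistency of $(\R_S,\F_S)$. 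Finally, because every gadget is $P_4$-free — a disjoint union of $P_3$'s, single edges and isolated vertices — and $G$ is the disjoint union of the gadgets, $G$ contains no induced $P_4$ and is a cograph by Prop.~\ref{prop:cograph}; hence \PROBLEM{EDT-Recognition} is already NP-hard on cograph inputs, and a fortiori NP-hard in general.

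The hard part will be the design and verification of the $\CR$ choice gadget. Since $\F_S$ can exclude only \emph{resolved} triples and never a fan, one cannot simply forbid $yz\vert x$ and the fan outright; instead the gadget must force a genuine resolution of $\{X,Y,Z\}$ and rule out $yz\vert x$ solely through the interplay of the partition-dependent informative triples $\R_S$, the cograph constraint on $\Gu$ and $\Ga$, and the $\Ga$-edges forced by equal-coloured vertices. Making these constraints eliminate \emph{exactly} the two unwanted assignments, ensuring that no spurious informative triple leaks across gadgets that share the colours of $U$ (which would corrupt the species-tree constraints), and keeping every gadget $P_4$-free at the same time, is the crux of the argument.
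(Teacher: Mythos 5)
Your overall strategy is exactly the paper's: membership via Thm.~\ref{thm:explainable-charac} (your bipartition certificate is in fact a cleaner NP argument than the paper's one-line ``verify a scenario''), hardness by reduction from \PROBLEM{$(\CF,\CR)$-Satisfiability}, fan constraints enforced through induced $P_3$'s that populate $\F_S(\graphs)$ (the paper uses two $P_3$'s with distinct centres per fan triple; your three-$P_3$ variant is merely redundant and also works), all cross-gadget and equal-coloured non-edges sent to $\Ga$ so that no informative triple leaks between components, and the final observation that the construction is $P_4$-free. All of that matches.

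The genuine gap is the one you name yourself: the $\CR$ choice gadget is never constructed, and the mechanism you hint at would not build it. You propose that the two degenerate assignments be excluded by ``same-colour forcing together with the cograph requirement on $\Gu$ and $\Ga$,'' with \emph{two} non-edges that must be split between $\Gu$ and $\Ga$. The paper's gadget $R_t$ works by an entirely different lever, namely a direct clash between $\R_S$ and $\F_S$: it consists of a paw (triangle $x_t,y_t,w_t$ plus pendant edge $w_tz_t$) together with a disjoint $P_3$ $w_t'-y_t'-z_t'$, where $w_t,w_t'$ carry a fresh dummy colour $\tilde{w}$. The $P_3$ puts $\tilde{y}\tilde{z}\vert\tilde{w}$ into $\F_S(\graphs)$; hence $y_tz_t\in E(\Ga)$ is impossible, since by Def.~\ref{def:inf-forb-triples}(b') (with the two $\Gg$-edges $w_ty_t,w_tz_t$) it would place $\tilde{y}\tilde{z}\vert\tilde{w}$ into $\R_S(\graphs)$ as well, contradicting consistency. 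This forces $y_tz_t\in E(\Gu)$, leaving a \emph{single} free non-edge $x_tz_t$: placing it in $\Gu$ yields the informative triple $XY\vert Z$ via (a'), placing it in $\Ga$ yields $XZ\vert Y$ via (b'). So the dummy colour is not a passive placeholder but the third leaf of the conflict triple that does the forcing; neither Lemma~\ref{lem:gt-samecol} nor the cograph condition plays any role in eliminating the degenerate assignments (within a gadget component all colours are distinct, and the components are far too small for $P_4$-based exclusions in $\Gu$ or $\Ga$). Without this construction — or an equivalent one — both directions of your equivalence remain unproven, so the proposal is a correct outline with its central step missing.
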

\begin{proof}
See Appendix. 
\end{proof}

\section{Explanation of PDT Graphs by Relaxed Scenarios}

If only the information of $\Gu \in \graphs$ is available, it can be tested
whether $\Gu$ is an LDT graph and, in the affirmative case, a relaxed
scenario that explains $\Gu$ can be constructed in polynomial-time
\cite{Schaller:21f}. In contrast, we have seen above that the problem of
recognizing an EDT graph is NP-hard (Theorem~\ref{thm:EDT-rec-NPc}).
This begs the question whether recognition of PDT graphs is an
  easy or hard task.
  
\begin{theorem}\label{thm:char-PDT}
  A graph $(G,\sigma)$ is a PDT graph if and only if the following
  conditions are satisfied:
  \begin{enumerate}[noitemsep]
  \item $G$ is a cograph, and
  \item $(\overline G,\sigma)$ is properly colored, and 
  \item The set of triples $R(G) \coloneqq \{\sigma(x)\sigma(y)\vert
    \sigma(z) \colon xy \in E(G) \text{ and } xz, yz \notin E(G) \text{ and
    }\\ \sigma(x),\sigma(y),\sigma(z) \text{ are pairwise distinct}\}$
    is consistent.
  \end{enumerate}
  In particular, it can be verified if $(G,\sigma)$ is a PDT graph and, in the
  affirmative, a scenario that explains $(G,\sigma)$ can be constructed in
  polynomial time.
\end{theorem}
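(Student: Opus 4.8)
The plan is to reduce the characterization of PDT graphs to the already-established characterization of explainable graph $3$-partitions in Theorem~\ref{thm:explainable-charac}. The crucial observation is that conditions (1)--(3) depend only on $G$ and $\sigma$, not on any particular way of splitting $\overline G$ into a putative LDT and EDT graph. This lets me prove the two directions with essentially different partitions.

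For the \emph{only if} direction, I would assume $(G,\sigma)$ is a PDT graph, so $G=\Ga(\scen)$ for some relaxed scenario $\scen$, and read off the three conditions from the lemmas of Sections~\ref{sect:cograph} and~\ref{sect:main}. Condition~(1) is immediate from Lemma~\ref{lem:Ga-cograph}, which states that $\Ga(\scen)$ is a cograph. Condition~(2) follows from Lemma~\ref{lem:gt-samecol}: any two distinct leaves of the same color are joined by an edge of $\Ga(\scen)=G$, hence are non-adjacent in $\overline G$, so $(\overline G,\sigma)$ is properly colored. For condition~(3), note that $R(G)$ is precisely the subset of $\R_S(\graphs(\scen))$ arising from rule~(b') of Definition~\ref{def:inf-forb-triples} applied to $\Ga(\scen)=G$, so $R(G)\subseteq \R_S(\graphs(\scen))$; since $S$ agrees with $(\R_S(\graphs(\scen)),\F_S(\graphs(\scen)))$ by Proposition~\ref{prop:triples}, the tree $S$ displays all of $\R_S(\graphs(\scen))$ and hence all of $R(G)$, proving $R(G)$ consistent.

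For the \emph{if} direction I would exhibit a concrete valid input for Algorithm~\ref{alg:recognition}, namely the $3$-partition $\graphs\coloneqq(\Gu,\Gg,\Ga,\sigma)$ with $\Gu\coloneqq(L,\emptyset)$, $\Gg\coloneqq\overline G$, and $\Ga\coloneqq G$. This is a genuine $3$-partition since $E(\overline G)\cupdot E(G)=\binom{L}{2}$. The graph $\Gu$ is edge-less, hence a properly colored cograph; $\Gg=\overline G$ is properly colored by condition~(2); and $\Ga=G$ is a cograph by condition~(1). It remains to check that $(\R_S(\graphs),\F_S(\graphs))$ is consistent. Because $\Gu$ has no edges, rule~(a') contributes nothing and one computes directly $\R_S(\graphs)=R(G)$. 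The crux is the forbidden triples: a triple $\sigma(x)\sigma(z)\vert\sigma(y)\in\F_S(\graphs)$ arises precisely when $xz,yz\in E(\overline G)$ and $xy\notin E(\overline G)$, i.e.\ when $xz,yz\notin E(G)$ and $xy\in E(G)$ with pairwise distinct colors; but this is exactly the condition under which $\sigma(x)\sigma(y)\vert\sigma(z)\in R(G)$, and a tree displaying $\sigma(x)\sigma(y)\vert\sigma(z)$ cannot display $\sigma(x)\sigma(z)\vert\sigma(y)$ (nor $\sigma(y)\sigma(z)\vert\sigma(x)$). Hence any tree $S$ witnessing the consistency of $R(G)$ from condition~(3) displays all of $\R_S(\graphs)$ and avoids all of $\F_S(\graphs)$, so $(\R_S(\graphs),\F_S(\graphs))$ is consistent. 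Theorem~\ref{thm:explainable-charac} then yields a relaxed scenario $\scen$ explaining $\graphs$, and since $\Ga(\scen)=\Ga=G$, the graph $(G,\sigma)$ is a PDT graph.

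The main obstacle, and the only genuinely non-routine step, is exactly this last equivalence: verifying that condition~(3) on $R(G)$ alone forces consistency of the \emph{pair} $(\R_S(\graphs),\F_S(\graphs))$, i.e.\ that the forbidden triples generated by $\Gg=\overline G$ are automatically avoided by any tree realizing $R(G)$. Everything else is bookkeeping. For the complexity claim I would invoke the established running times: testing whether $G$ is a cograph in $O(\vert L\vert+\vert E\vert)$ time~\cite{Corneil:85}, checking proper coloring of $\overline G$ and extracting $R(G)$ in $O(\vert L\vert^3)$ time, deciding consistency of $R(G)$ via \texttt{BUILD}~\cite{Aho:81}, and finally constructing the explaining scenario with Algorithm~\ref{alg:recognition} in $O(\vert L\vert^4\log\vert L\vert)$ time by Lemma~\ref{lem:algo-complexity}.
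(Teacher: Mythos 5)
Your proof is correct, and while your forward direction matches the paper's almost verbatim (the paper cites Cor.~\ref{cor:equal-colors} where you use Lemma~\ref{lem:gt-samecol}, and obtains consistency of $R(G)$ via Theorem~\ref{thm:explainable-charac} where you invoke Prop.~\ref{prop:triples} directly --- cosmetic differences), your converse takes a genuinely different route. The paper embeds $G$ into the $3$-partition with $\Gu=\overline G$ and $\Gg$ \emph{edge-less}, whereas you set $\Gg=\overline G$ and $\Gu$ edge-less. The paper's choice is engineered so that $\F_S(\graphs)=\emptyset$, making consistency of the pair $(\R_S(\graphs),\F_S(\graphs))$ literally equivalent to condition~(3) with no further work; your choice produces a nonempty forbidden set, and you correctly discharge it by observing that every witness $(x,y,z)$ of a forbidden triple $\sigma(x)\sigma(z)\vert\sigma(y)$ simultaneously witnesses the informative triple $\sigma(x)\sigma(y)\vert\sigma(z)\in R(G)$ on the same color set, which any tree realizing $R(G)$ displays and which excludes both forbidden resolutions; this is indeed the only non-routine step, and it is sound (in both variants one checks, as you do, that rule (a$'$) respectively (b$'$) contributes nothing extra, so $\R_S(\graphs)=R(G)$ either way). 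What the paper's variant buys is brevity. What yours buys is a stronger conclusion for free: since your $\Gu$ is edge-less, Theorem~\ref{thm:relaxed-iff-restricted} lets you choose a fully witnessed explaining scenario, which by Lemma~\ref{lem:Gu-empty-HGT-free} is then HGT-free --- so your construction actually shows that every PDT graph is the PDT graph of an HGT-free scenario, mirroring the structure of the proof of Theorem~\ref{thm:EDT-HGT-free}, something the paper's partition (with $\Gu=\overline G$ typically nonempty) does not yield. Your complexity bookkeeping is fine and slightly more explicit than the paper's one-line appeal to Lemma~\ref{lem:algo-complexity}; note only that this lemma is stated for surjective $\sigma$, a harmless restriction the paper also glosses over.
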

\begin{proof}
  Suppose that $(G,\sigma)$ is a PDT graph. Hence, there is a relaxed
  scenario $\scen$ such that $G=\Ga(\scen)$. By Lemma \ref{lem:Ga-cograph},
  $G$ must be a cograph. Since $G=\Ga(\scen)$, its complement $\overline G$
  comprises all edges of $\Gg(\scen)$ and $\Gu(\scen)$. By
  Cor.\ \ref{cor:equal-colors}, $\Gg(\scen)$ and $\Gu(\scen)$ are always
  properly colored and so $(\overline G,\sigma)$ is also properly
  colored. The set $R(G)$ is precisely the set of triples as specified in
  Def.\ \ref{def:inf-forb-triples}(b') and, in particular, $R(G)\subseteq
  \R_S(\graphs)$ where $\graphs=(\Gu(\scen), \Gg(\scen), \Ga(\scen),
  \sigma)$.  By Theorem \ref{thm:explainable-charac}, $(\R_S(\graphs),
  \F_S(\graphs))$ is consistent, an thus in particular $R(G)$ is
  consistent.

  Conversely, assume that $(G,\sigma)$ satisfies Conditions (1), (2)
    and (3). Consider $\graphs=(\Gu, \Gg, \Ga, \sigma)$ such that $\Ga =
  G$, $\Gg=(V(G),\emptyset)$ and $\Gu = \overline G$. Since $G$ is a
  cograph and $\Gu = \overline G$, Prop.\ \ref{prop:cograph} implies that
  $\Gu$ is a cograph. Moreover, by Condition (2), $(\Gu,\sigma)$ is a
  properly colored cograph. Since there are no edges in $\Gg$, it follows
  that $\Gg$ is also a properly colored cograph. Since $\Gg$ is edge-less,
  we have $\F_S(\graphs) = \emptyset$. Moreover, since $G$ is the
  complement of $\Gu$, Def.\ \ref{def:inf-forb-triples}(b') and the
  definition of $R(G)$ imply $R(G) = \R_S(\graphs)$. Condition (3) now
  implies that $\R_S(\graphs)$ is consistent. Together with $\F_S(\graphs)
  = \emptyset$ this implies that $(\R_S(\graphs), \F_S(\graphs))$ is
  consistent.  Hence, all conditions of Theorem
  \ref{thm:explainable-charac} are satisfied and we conclude that there is
  a relaxed scenario that explains $\graphs=(\Gu, \Gg, \Ga, \sigma)$. In
  particular, $G=\Ga$ is a PDT graph.
  Re-using the arguments in the proof of Lemma~\ref{lem:algo-complexity}, 
  we can construct a scenario for $\graphs=(\Gu, \Gg, \Ga, \sigma)$ (and thus 
  for $G=\Ga$ in $O(\ell^4 \log \ell)$ where $\ell=\max(\vert L \vert, \vert 
  \sigma(L) \vert)$.
\end{proof}

We note that PDT graphs can be recognized faster than the construction
of an explaining scenario with the help of Theorem~\ref{thm:char-PDT}.
Cographs can be recognized in $O(\vert V\vert+\vert E\vert)$ time
\cite{Corneil:85} and $O(\vert V\vert^2)$ operations are sufficient to
verify that the complement of $G$ is properly colored. The triple set
$R(G)$ contains at most $O(\vert\sigma(V)\vert^3)$ triples which can be
constructed in $O(\vert V\vert^3)$ time. The Aho \emph{et al.} algorithm
checks triple consistency in $O(\vert R\vert\,\vert V\vert)$ time. Hence,
PDT graphs can be recognized in $O( \vert V\vert(\vert V\vert^2 +
\vert\sigma(V)\vert^3))$ time.

\section{Orthology and Quasi-Orthology}
\label{sect:orthology}

Most of the mathematical results concerning orthology have been obtained in
an HGT-free setting. There, a pair of genes $x$ and $y$ is orthologous if
their last common ancestor $\lca_T(x,y)$ coincides with the last common
ancestor of the two species in which they reside \cite{Fitch:00}.
Thus, we expect a close connection between orthology and the graph
$\Gg(\scen)$.  Thm.~\ref{thm:EDT-HGT-free} in the previous section,
furthermore, is reminiscent of the characterization of orthology graphs
that can be reconciled with species trees in HGT-free duplication/loss
scenarios \cite{HernandezRosales:12a,Hellmuth:17}. We therefore close
this contribution by connecting the graph $\Gg(\scen)$ with different
notions of orthology in scenarios with HGT that have been discussed in
the literature.

Disagreements on the ``correct'' definition of orthology in the
presence of HGT stem for the fact that, in general, pairs of genes
originating from a speciation event may be separated by HGT, and thus
become xenologs. They may even eventually reside in the same species and
therefore appear as paralogs. Choanozoa, for example, have two CCA-adding
enzymes, one vertically inherited through the eukaryotic lineage, the other
horizontally acquired from a bacterial lineage \cite{Betat:15a}. To
accommodate such differences, Darby et al.\ \cite{Darby:17} proposed a classification of
subtypes of xenology and, in line with \cite{Fitch:00}, reserve the terms
\emph{ortholog} and \emph{paralog} to situations in which the path between
$x$ and $y$ does not contain an HGT event. In this section, we briefly
survey notions of orthology that have ``natural'' definitions in the
setting of relaxed scenarios and explore their mathematical properties and
their relationships with EDT graphs. 

\begin{definition}\label{def:wqo}
  Let $\scen=(T,S,\sigma,\mu,\tT,\tS)$ be a relaxed scenario.  Two distinct
  vertices $x,y\in L(T)$ are \emph{weak quasi-orthologs} if
  $\mu(\lca_T(x,y))\in V^0(S)$.
\end{definition}
Def.~\ref{def:wqo} is, in essence, Walter Fitch's original, purely
event-based definition of orthology \cite{Fitch:70}.  The graph
$\wQO(\scen)$ with vertex set $L(T)$ and the weak quasi-orthologous pairs
as its edges is the \emph{weak quasi-orthology graph} of $\scen$.

In later work, Walter M.\ Fitch \cite{Fitch:00} emphasizes the condition that ``the common
ancestor lies in the cenancestor (i.e., the most recent common ancestor) of
the taxa from which the two sequences were obtained'', which translates to
the following notion:
\begin{definition}
  \label{def:ortho}
  Let $\scen=(T,S,\sigma,\mu,\tT,\tS)$ be a relaxed scenario. Then two
  distinct genes $x,y\in L(T)$ are \emph{strict quasi-orthologs} if
  $\mu(\lca_T(x,y))=\lca_S(\sigma(x),\sigma(y))$.
\end{definition}
The graph $\sQO(\scen)$ with vertex set $L(T)$ and the strict
quasi-orthologous pairs as its edges is the \emph{strict quasi-orthology
graph} of $\scen$. By Obs.~\ref{obs:generic}, all edges of $\Gg$ form
  strictly quasi-orthologous pairs in the scenarios produced by
  Algorithm~\ref{alg:recognition}.

Later definitions explicitly exclude xenologs
\cite{Gray:83,Fitch:00}. Translating the concept of orthology used by
Darby et al.\ \cite{Darby:17} to our notation yields
\begin{definition}
  Let $\scen=(T,S,\sigma,\mu,\tT,\tS)$ be a relaxed scenario.  Two distinct
  vertices $x,y\in L(T)$ are \emph{weak orthologs} if $\mu(\lca_T(x,y))\in
  V^0(S)$ and $\lambda(e)=0$ for all edges $e$ along the path between $x$
  and $y$ in $T$.
\end{definition}
The graph $\wO(\scen)$ with vertex set $L(T)$ and the pairs of weak
orthologs as its edges will be called the \emph{weak orthology graph} of
$\scen$.  The most restrictive notion of orthology is obtained by enforcing
both the matching of last common ancestors and the exclusion of horizontal
transfer:
\begin{definition}
  Let $\scen=(T,S,\sigma,\mu,\tT,\tS)$ be a relaxed scenario.  Two distinct
  vertices $x,y\in L(T)$ are \emph{strict orthologs} if $\mu(\lca_T(x,y)) =
  \lca_S(\sigma(x), \sigma(y))$ and $\lambda(e)=0$ for all edges $e$ along
  the path between $x$ and $y$ in $T$.
\end{definition}
The graph $\sO(\scen)$ with vertex set $L(T)$ and the pairs of (strict)
orthologs as its edges will be called the \emph{(strict) orthology graph}
of $\scen$. We note that strict orthologs also appear in the definition of
property (S6): A relaxed scenario satisfies (S6) if and only if $\mu(u)\in
V^0(S)$ implies that there is a pair of strict orthologs $x$ and $y$ with
$\lca_T(x,y)=u$.  The alternative notions of orthology and the proposed
terminology are summarized in Table~\ref{tab:ortho-terminology}.

\begin{table}
  \caption{Summary of the alternative notions of orthology in the presence
    of HGT events.}
  \label{tab:ortho-terminology}
  \centering
  \renewcommand{\arraystretch}{1.3}
  \begin{tabular}{|l||c|c|}
    \hline
    Reconciliation condition & HGT irrelevant  & HGT excluded \\
    \hline
    $\mu(\lca_T(x,y)) \in V^0(S)$  & $\wQO(\scen)$ &  $\wO(\scen)$ \\
    & weak quasi-ortholog  & weak ortholog \\
    \hline
    $\mu(\lca_T(x,y)) = \lca_S(\sigma(x), \sigma(y))$ &
    $\sQO(\scen)$ & $\sO(\scen)$ \\
    & strict quasi-ortholog & (strict) ortholog \\
    \hline
  \end{tabular}
\end{table}

From $\mu(\lca_T(x,y))=\lca_S(\sigma(x),\sigma(y))$, we obtain
$\mu(\lca_T(x,y))\in V(S)$. Furthermore, if $x$ and $y$ are distinct, then
$\lca_T(x,y)$ is not a leaf and (S1) in the definition of relaxed scenarios
implies that $\mu(\lca_T(x,y))$ is also not a leaf.  Hence we have:
\begin{fact}
  If $x,y\in L$ are distinct and
  $\mu(\lca_T(x,y))=\lca_S(\sigma(x),\sigma(y))$, then $\mu(\lca_T(x,y))\in
  V^0(S)$ for every relaxed scenario $\scen=(T,S,\sigma,\mu,\tT,\tS)$.
\end{fact}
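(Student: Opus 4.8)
The plan is to show that the vertex $u\coloneqq\lca_T(x,y)$ maps under $\mu$ into the set $V^0(S)=V(S)\setminus(L(S)\cup\{0_S\})$ of genuine inner vertices of $S$, by verifying the three membership conditions separately. First I would record the immediate consequence of the hypothesis: since $\mu(u)=\lca_S(\sigma(x),\sigma(y))$ and the last common ancestor of two leaves of $S$ is by definition an element of $V(S)$, we have $\mu(u)\in V(S)$. This already resolves the vertex-versus-edge ambiguity built into the codomain $V(S)\cup E(S)$ of $\mu$, so it only remains to exclude the cases $\mu(u)\in L(S)$ and $\mu(u)=0_S$.

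Next I would establish that $u$ itself is an inner vertex of $T$. Because $x$ and $y$ are distinct leaves, their last common ancestor $u$ is a proper ancestor of each and hence cannot be a leaf, i.e.\ $u\notin L(T)$. Moreover $u\neq 0_T$, since the planted root $0_T$ has a unique child and therefore cannot be the least common ancestor of two incomparable leaves. Thus $u\in V^0(T)$.

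Finally I would transfer this information to $S$ through the scenario axioms. Axiom \AX{S1} states that $\mu(x)\in L(S)$ if and only if $x\in L(T)$; its contrapositive applied to $u\notin L(T)$ yields $\mu(u)\notin L(S)$. Likewise \AX{S0} gives $\mu(u)=0_S$ if and only if $u=0_T$, and since $u\neq 0_T$ we obtain $\mu(u)\neq 0_S$ (alternatively, $\lca_S(\sigma(x),\sigma(y))\preceq_S\rho_S\prec_S 0_S$ already forces $\mu(u)\neq 0_S$). Combining the three observations gives $\mu(u)\in V(S)\setminus(L(S)\cup\{0_S\})=V^0(S)$, as claimed.

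I do not anticipate a genuine obstacle here; the argument is a direct bookkeeping of the defining axioms of a relaxed scenario. The only two points that require a moment's care are the vertex/edge distinction, which is settled at once by the hypothesis, and the exclusion of the planted root, which follows either from \AX{S0} or from the elementary fact that a last common ancestor of leaves lies weakly below $\rho_S$.
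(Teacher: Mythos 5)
Your proof is correct and follows essentially the same route as the paper: the hypothesis immediately places $\mu(\lca_T(x,y))$ in $V(S)$, and axiom \AX{S1} applied to the non-leaf $\lca_T(x,y)$ excludes $L(S)$. Your explicit exclusion of $0_S$ (via \AX{S0} or via $\lca_S(\sigma(x),\sigma(y))\preceq_S\rho_S\prec_S 0_S$) is a step the paper leaves implicit, so if anything your write-up is slightly more complete.
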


As an immediate consequence, every strict quasi-ortholog is a weak
quasi-ortholog and every strict ortholog is a weak ortholog.  Furthermore
strict or weak orthologs are strict or weak quasi-orthologs, respectively.
In terms of the corresponding graphs, we therefore have the following
subgraph relations:
\begin{equation}
  \label{eq:ortho-subgraph-rel}
  \sO(\scen)\subseteq \sQO(\scen), \qquad
  \wO(\scen)\subseteq \wQO(\scen), \qquad
  \sQO(\scen)\subseteq \wQO(\scen), \qquad
  \sO(\scen)\subseteq \wO(\scen).
\end{equation}
That is, we have $\sO(\scen)\subseteq \sQO(\scen)\subseteq \wQO(\scen)$ and
$\sO(\scen)\subseteq \wO(\scen)\subseteq \wQO(\scen)$, while $\sQO(\scen)$
and $\wO(\scen)$ are incomparable w.r.t.\ the subgraph relation.

\begin{lemma}\label{lem:wqo-cograph}
  The weak quasi-orthology graph $\wQO(\scen)$ and the weak orthology graph
  $\wO(\scen)$ are cographs for every relaxed scenario $\scen$.
\end{lemma}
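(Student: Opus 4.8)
The plan is to treat the two graphs separately, handling $\wQO(\scen)$ directly by exhibiting a cotree and then reducing $\wO(\scen)$ to the cograph property of $\wQO(\scen)$. For $\wQO(\scen)$, the key observation is that adjacency depends only on the last common ancestor: $xy$ is an edge precisely when $\mu(\lca_T(x,y))\in V^0(S)$. I would therefore define $t\colon V^0(T)\to\{0,1\}$ by $t(u)=1$ iff $\mu(u)\in V^0(S)$, and view the gene tree $T$ as a rooted tree on $L(T)$. The planted root is irrelevant here, since $\lca_T(x,y)\prec_T 0_T$ for all distinct leaves $x,y$, so $0_T$ is never a last common ancestor and every such $\lca_T(x,y)$ lies in $V^0(T)$. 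Then $(T,t)$ is a cotree with $xy\in E(\wQO(\scen))$ iff $t(\lca_T(x,y))=1$, and Proposition~\ref{prop:cograph} immediately yields that $\wQO(\scen)$ is a cograph.

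For $\wO(\scen)$ the adjacency condition additionally forbids HGT-edges along the $x$--$y$ path, which is not a function of $\lca_T(x,y)$ alone, so the cotree argument does not apply verbatim. The plan here is to decompose $T$ along its HGT-edges. Let $T^-$ be the forest obtained from $T$ by deleting every edge $e$ with $\lambda(e)=1$. The crucial elementary fact is that the unique $x$--$y$ path in $T$ is HGT-free if and only if $x$ and $y$ lie in the same connected component of $T^-$. Consequently, writing $L_K\coloneqq L(T)\cap K$ for each component $K$ of $T^-$, the graph $\wO(\scen)$ contains no edge joining leaves from distinct components, so that $\wO(\scen)=\bigcupdot_K \wO(\scen)[L_K]$.

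The final step is to identify each piece $\wO(\scen)[L_K]$ with $\wQO(\scen)[L_K]$. For $x,y\in L_K$ the $x$--$y$ path is automatically HGT-free, so the extra condition in the definition of $\wO$ is vacuous and the two adjacency conditions coincide; hence $\wO(\scen)[L_K]=\wQO(\scen)[L_K]$. Being an induced subgraph of the cograph $\wQO(\scen)$, each $\wQO(\scen)[L_K]$ is a cograph by Proposition~\ref{prop:cograph}(5), and a disjoint union of cographs is again a cograph (by the recursive definition, or because any $P_4$ is connected and hence confined to a single component). This gives that $\wO(\scen)$ is a cograph, completing the proof.

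I expect the only real subtlety --- the main obstacle in an otherwise short argument --- to be the verification that the HGT-free-path condition is exactly ``same component of $T^-$'', and in particular that for $x,y\in L_K$ the last common ancestor $\lca_T(x,y)$ itself lies in $K$, so that restricting attention to $L_K$ is legitimate. This holds because a connected component of the forest $T^-$ is a subtree of $T$ and therefore contains the entire path between any two of its vertices; I would state this carefully to make the equality $\wO(\scen)[L_K]=\wQO(\scen)[L_K]$ airtight. Everything else --- the cotree construction for $\wQO(\scen)$ and the closure of cographs under induced subgraphs and disjoint unions --- is routine given Proposition~\ref{prop:cograph}.
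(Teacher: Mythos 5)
Your proposal is correct and follows essentially the same route as the paper: the cotree $(T,t)$ with $t(u)=1$ iff $\mu(u)\in V^0(S)$ for $\wQO(\scen)$, and for $\wO(\scen)$ the deletion of HGT-edges to obtain a forest together with the observation that, within a component, $\lca_T(x,y)$ is preserved and the HGT-free-path condition is equivalent to lying in the same component. The only cosmetic difference is your last step, where you invoke heredity of cographs on the induced subgraphs $\wQO(\scen)[L_K]$ and closure under disjoint union, whereas the paper reads each component of the labeled forest off directly as a cotree; both are immediate from Proposition~\ref{prop:cograph}.
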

\begin{proof}
 Let $\scen=(T,S,\sigma,\mu,\tT,\tS)$ be a relaxed scenario.  Consider the
 labeling $t\colon V^0(T)\to\{0,1\}$ with $t(u)=1$ iff $\mu(u)\in
 V^0(S)$. We have $xy\in E(\wQO(\scen))$ if and only if
 $t(\lca_T(x,y))=1$. Thus $(T,t)$ is a cotree that explains
 $\wQO(\scen)$. By Prop.\ \ref{prop:cograph}, $\wQO(\scen)$ is a cograph.
 
 Consider  $(T,t)$ and remove all HGT-edges from $T$ to obtain the
 forest $(T^*,t)$.  Although the tree(s) in $(T^*,t)$ are not necessarily
 phylogenetic, we can obtain a cograph $G$ with edges $xy\in E(G)$
 precisely if $x,y$ are leaves of a connected component of $(T^*,t)$ and
 $t(\lca_{T^*}(x,y))=1$.  One easily verifies that any two leaves $x$ and
 $y$ in a connected component of $T^*$ satisfy
 $\lca_{T^*}(x,y)=\lca_T(x,y)$.  Therefore, $xy\in E(G)$ precisely if the
 path connecting $x$ and $y$ in $T$ does not contain an HGT edge and
 $t(\lca_T(x,y))=1$ (or, equivalently $\mu(u)\in V^0(S)$).  Consequently,
 $G=\wO(\scen)$ and thus, $\wO(\scen)$ is a cograph.
\end{proof}
It is worth noting that $xy\in E(\wQO(\scen))$ does not imply
$\sigma(x)\ne\sigma(y)$, i.e., $(\wQO(\scen),\sigma)$ is not necessarily
properly colored. The genes $a$ and $a'$ in Figure~\ref{fig:2-colP4} serve
as an example. Now consider the two relaxed scenarios $\scen$ as shown in
Fig.~\ref{fig:EDT-P4-4col}.  In both cases, one observes that $\Gg(\scen) =
\sQO(\scen)$. In each case, $\Gg(\scen)$ contains an induced
$P_4$. Therefore, we obtain
\begin{fact}
  In general, $\sQO(\scen)$ is not a cograph.  
\end{fact}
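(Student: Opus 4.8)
The plan is to prove this negative statement by a counterexample: I would produce a single relaxed scenario $\scen$ whose strict quasi-orthology graph $\sQO(\scen)$ contains an induced $P_4$, so that $\sQO(\scen)$ fails the $P_4$-freeness criterion of Prop.~\ref{prop:cograph} and hence is not a cograph. The scenarios drawn in Fig.~\ref{fig:EDT-P4-4col} are the obvious candidates, since Lemma~\ref{lem:4col-P4} guarantees that their EDT graph $\Gg(\scen)$ carries an induced $P_4$ on four colors; the crux is then to transfer this $P_4$ from $\Gg(\scen)$ to $\sQO(\scen)$ by showing that the two graphs coincide on such an example.

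First I would record a containment valid in \emph{every} relaxed scenario, namely $\sQO(\scen)\subseteq\Gg(\scen)$. Indeed, if $xy\in E(\sQO(\scen))$ and $u\coloneqq\lca_T(x,y)$, then by definition $\mu(u)=\lca_S(\sigma(x),\sigma(y))\in V(S)$, so axiom~\AX{S2} gives $\tT(u)=\tS(\mu(u))=\tS(\lca_S(\sigma(x),\sigma(y)))$ and therefore $xy\in E(\Gg(\scen))$. It thus remains only to secure the reverse inclusion for a suitable concrete example, i.e.\ to exhibit a scenario in which every EDT edge is already a strict quasi-ortholog.

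For the reverse inclusion I would exploit Obs.~\ref{obs:generic}: the graph $3$-partition $\graphs(\scen_i)$ realized by either scenario of Fig.~\ref{fig:EDT-P4-4col} is explainable, hence a valid input for Algorithm~\ref{alg:recognition}; running the algorithm on it yields a scenario $\scen'$ with $\Gg(\scen')=\Gg(\scen_i)$, and Obs.~\ref{obs:generic} ensures that $xy\in E(\Gg(\scen'))$ implies $\mu(\lca_T(x,y))=\lca_S(\sigma(x),\sigma(y))$, i.e.\ $\Gg(\scen')\subseteq\sQO(\scen')$. Combined with the general containment of the previous step this gives $\Gg(\scen')=\sQO(\scen')$ (matching the equality one reads off the explicit embeddings). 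Since $\Gg(\scen')=\Gg(\scen_i)$ contains an induced $P_4$ by Lemma~\ref{lem:4col-P4}, so does $\sQO(\scen')$, and Prop.~\ref{prop:cograph} yields that $\sQO(\scen')$ is not a cograph, completing the proof. The one genuinely nonroutine point is the reverse inclusion $\Gg(\scen)\subseteq\sQO(\scen)$: it fails for arbitrary relaxed scenarios and must be arranged via Obs.~\ref{obs:generic} (or checked by direct inspection of the figure), whereas the general inclusion and the final cograph argument are immediate.
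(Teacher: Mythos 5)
Your proof is correct, and it reaches the paper's conclusion by a noticeably different verification. The paper uses the same counterexamples (the scenarios of Fig.~\ref{fig:EDT-P4-4col}) but simply asserts, by direct inspection of the drawn reconciliations, that $\Gg(\scen)=\sQO(\scen)$ holds there; the induced $P_4$ then rules out cograph-ness. You instead avoid inspecting the figure's reconciliation map at all: you prove the general containment $\sQO(\scen)\subseteq\Gg(\scen)$ from \AX{S2} (this is exactly the paper's Lemma~\ref{lem:Gg-subset-sQO}, which the paper only states \emph{after} this observation), and you secure the reverse containment not for the figure's scenario but for a re-explained scenario $\scen'$ produced by Algorithm~\ref{alg:recognition} on the valid input $\graphs(\scen_i)$, via Obs.~\ref{obs:generic}. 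Together these give $\Gg(\scen')=\sQO(\scen')$, which is precisely the paper's later Lemma~\ref{lem:alg1-Gg=SQO}, proved there from the same two ingredients — so you have effectively anticipated that lemma. Your route buys robustness and a stronger conclusion (any explainable $3$-partition whose EDT graph contains an induced $P_4$ yields some scenario with non-cograph $\sQO$, without reading $\mu$ off a picture), while the paper's route buys brevity and ties the phenomenon to the concrete scenarios shown. One small attribution error: Lemma~\ref{lem:4col-P4} does not \emph{guarantee} that $\Gg(\scen_i)$ contains an induced $P_4$; it is a conditional statement describing the forced structure of $S$ and $T$ \emph{given} such a $P_4$. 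The existence of the $P_4$ in these examples is supplied by the construction in Fig.~\ref{fig:EDT-P4-4col} itself (as its caption records), so that citation should point to the figure rather than the lemma; this is cosmetic and does not affect the validity of your argument.
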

  
\begin{lemma}
  \label{lem:so-cograph}
  The strict orthology graph $\sO(\scen)$ is a cograph for every relaxed
  scenario $\scen$. 
\end{lemma}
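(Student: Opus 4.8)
The plan is to reduce the statement to three facts already in hand: that $\Ga(\scen)$ is a cograph (Lemma~\ref{lem:Ga-cograph}), that cographs are closed under induced subgraphs, complementation, and disjoint union (Prop.~\ref{prop:cograph}), and a \emph{localized} version of the monotonicity arguments of Lemmas~\ref{lem:no-hgt-compare} and~\ref{lem:order}. First I would decompose $\sO(\scen)$ along HGT-free components. Let $T^{*}$ be the forest obtained from $T$ by deleting every HGT-edge, let $C_1,\dots,C_k$ be its connected components, and set $L_i \coloneqq L(T)\cap V(C_i)$, so that $\{L_1,\dots,L_k\}$ partitions $L(T)$. If $xy\in E(\sO(\scen))$, then by definition $\lambda(e)=0$ for every edge $e$ on the path between $x$ and $y$ in $T$, so this path survives in $T^{*}$ and $x,y$ lie in a common component $C_i$. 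Hence $\sO(\scen)$ has no edges joining distinct $L_i$, i.e.\ $\sO(\scen)=\bigcupdot_{i}\sO(\scen)[L_i]$, and since disjoint unions of cographs are cographs it suffices to treat each $\sO(\scen)[L_i]$ separately.

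The key local step is the following. Fix $i$, take distinct $x,y\in L_i$, and put $u\coloneqq\lca_T(x,y)$. Because $x$ and $y$ lie in the same component, the entire path connecting them in $T$—and in particular its topmost vertex $u$—lies in $C_i$ and is HGT-free. Applying Lemma~\ref{lem:mu-timing} edge by edge along the HGT-free paths $x\preceq_T u$ and $y\preceq_T u$ (on a non-HGT edge the $\mu$-images are comparable, so the forbidden ``$\succ_S$'' collapses to ``$\preceq_S$'', exactly as in the proof of Lemma~\ref{lem:no-hgt-compare}) yields $\mu(x),\mu(y)\preceq_S\mu(u)$. Since $x,y$ are leaves, $\mu(x)=\sigma(x)$ and $\mu(y)=\sigma(y)$ by \AX{S1}, so $\lca_S(\sigma(x),\sigma(y))\preceq_S\mu(u)$, and time consistency gives $\tS(\lca_S(\sigma(x),\sigma(y)))\le\tT(u)$. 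In particular $xy\notin E(\Gu(\scen))$, so $\Gu(\scen)[L_i]$ is edge-less.

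Next I would identify $\sO(\scen)[L_i]$ with $\overline{\Ga(\scen)[L_i]}$. On $L_i$ the HGT-free condition is automatic, so $xy\in E(\sO(\scen))$ is equivalent to $\mu(u)=\lca_S(\sigma(x),\sigma(y))$. Using $\lca_S(\sigma(x),\sigma(y))\preceq_S\mu(u)$ and repeating the case analysis of Lemma~\ref{lem:order} (if $\lca_S(\sigma(x),\sigma(y))\prec_S\mu(u)$ then $\tS(\lca_S(\sigma(x),\sigma(y)))<\tT(u)$, whether $\mu(u)$ is a vertex or an edge of $S$), this equality holds iff $\tT(u)=\tS(\lca_S(\sigma(x),\sigma(y)))$, i.e.\ iff $xy\in E(\Gg(\scen))$. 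Since $\Gu(\scen)[L_i]$ is edge-less and the three graphs partition $\binom{L_i}{2}$, for $x,y\in L_i$ we have $xy\in E(\Gg(\scen))$ iff $xy\notin E(\Ga(\scen))$. Therefore $\sO(\scen)[L_i]=\overline{\Ga(\scen)[L_i]}$.

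Finally, $\Ga(\scen)$ is a cograph by Lemma~\ref{lem:Ga-cograph}, hence so is the induced subgraph $\Ga(\scen)[L_i]$, and hence so is its complement by Prop.~\ref{prop:cograph}. Thus each $\sO(\scen)[L_i]$ is a cograph, and $\sO(\scen)$, being their disjoint union, is a cograph. I expect the main obstacle to be the second step: Lemma~\ref{lem:order} is proved only for globally HGT-free scenarios, whereas here I must run the same monotonicity argument inside a single HGT-free \emph{component} of an otherwise arbitrary scenario. The point that makes this work is that Lemma~\ref{lem:mu-timing} is an edge-wise statement, so restricting attention to paths all of whose edges are non-HGT suffices to recover $\lca_S(\sigma(x),\sigma(y))\preceq_S\mu(\lca_T(x,y))$ locally, which is precisely what forces $\Gu(\scen)[L_i]$ to be edge-less and lets the complementation trick go through.
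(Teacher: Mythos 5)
Your proof is correct, but it takes a genuinely different route from the paper's. The paper proves Lemma~\ref{lem:so-cograph} by surgery on the scenario: whenever a pair of witnesses $x,x'$ is weakly but not strictly orthologous, it inserts a new vertex $p^*$ below $p=\lca_T(x,x')$ collecting the children in $A^*$, maps $p^*$ to the edge $ww^*$ of $S$, and iterates until $\wO(\scen_k)=\sO(\scen_k)=\sO(\scen)$; the claim then follows from the cotree argument of Lemma~\ref{lem:wqo-cograph}. You instead argue structurally: cut $T$ at the HGT-edges, observe that uniqueness of paths in a tree makes the HGT-free-path condition automatic within each component $L_i$ (so $\sO(\scen)$ has no cross-component edges), run the edge-wise monotonicity of Lemma~\ref{lem:mu-timing} along the surviving paths to recover $\lca_S(\sigma(x),\sigma(y))\preceq_S\mu(\lca_T(x,y))$ locally, and conclude via the case analysis of Lemma~\ref{lem:order} that $\Gu(\scen)[L_i]$ is edge-less and $\sO(\scen)[L_i]=\Gg(\scen)[L_i]=\overline{\Ga(\scen)[L_i]}$, a cograph by Lemma~\ref{lem:Ga-cograph} and Prop.~\ref{prop:cograph}; the disjoint union over components is again a cograph. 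I verified the delicate points: same-component leaves do lie on a common HGT-free $T$-path; the dichotomy $\mu(u)=\lca_S(\sigma(x),\sigma(y))$ versus $\lca_S(\sigma(x),\sigma(y))\prec_S\mu(u)$ (including $\mu(u)\in E(S)$, where equality is impossible and \AX{S3} forces a strict time gap) correctly translates to membership in $E(\Gg(\scen))$ versus $E(\Ga(\scen))$; and same-colored pairs cause no trouble since \AX{S1} rules out $\mu(u)$ being a leaf. Your argument is shorter and more elementary — in effect a localization of Lemma~\ref{lem:HGT-free-Gg-cograph} and Cor.~\ref{cor:HGT-free-Gg-equals-Theta} to the HGT-free components — and it buys an explicit description of $\sO(\scen)$ in terms of $\graphs(\scen)$ and the component structure. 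What it does not buy is the byproduct the paper extracts from its heavier construction: since the surgery preserves $\graphs(\scen)$, the paper immediately obtains the subsequent proposition that a graph 3-partition is explainable by a relaxed scenario if and only if it is explainable by one with $\sO(\scen)=\wO(\scen)$, a statement your decomposition argument does not yield.
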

\begin{proof}
  Let $\scen=(T,S,\sigma,\mu,\tT,\tS)$ be a relaxed scenario. Note that
  $\sO(\scen)\subseteq\wO(\scen)$. Furthermore, if $xx'\in E(\wO(\scen))$,
  then $x$ and $x'$ are leaves in the same subtree of the forest $F(T)$
  obtained by removing all HGT edges from $T$, i.e., $x$ and $x'$ are
  witnesses of $\lca_T(x,x')$.  By definition, we have
  $\sO(\scen)\ne\wO(\scen)$ if and only if there are two vertices $x,x'\in
  L(T)$ with $\mu(\lca_T(x,x'))\in V^0(S)$ but $\mu(\lca_T(x,x'))\neq
  \lca_S(\sigma(x),\sigma(x'))$, and there is no HGT-edge on the path
  between $x$ and $x'$ in $T$. Note that the latter condition is equivalent
  to $x$ and $x'$ being witnesses of $\lca_T(x,x')$.  In this case, $xx'\in
  E(\wO(\scen))$ but $xx'\notin E(\sO(\scen))$ and Lemma~\ref{lem:order}
  implies $\lca_S(\sigma(x),\sigma(x'))\prec_S \mu(\lca_T(x,x'))$.  In the
  following, set $p\coloneqq \lca_T(x,x')$, $w\coloneqq \mu(p)$,
  $\sO\coloneqq \sO(\scen)$ and $\wO\coloneqq \wO(\scen)$.
  
  We proceed by modifying $(T,\tT)$ and the reconciliation map $\mu$ to
  obtain a scenario $\scen' = (T',S,\sigma,\mu',\tT',\tS)$ such that
  $\sO=\sO(\scen')$ remains unchanged and the edge $xx'$ is removed from
  $\wO$. This, in particular, ensures that
  $\sO\subseteq\wO(\scen')\subsetneq\wO$ holds.

  Since $\lca_S(\sigma(x),\sigma(x'))\prec_S w = \mu(p)$, and both $x$ and
  $x'$ are witnesses of $p$, there is a unique child $w^*\in\child_S(w)$
  such that $\lca_S(\sigma(x),\sigma(x'))\preceq_S w^*$.  For this vertex
  $w^*$, let $A^*\subseteq \child_T(p)$ be the subset of all children $q$
  of $p$ that satisfy (i) $q$ has a witness and (ii) for every witness $y$
  of $q$ holds $\sigma(y)\in L(S(w^*))$. By construction, the unique
  children $q_x$ and $q_{x'}$ of $p$ that satisfy $x\preceq_T q_x$ and
  $x'\preceq_T q_{x'}$ are contained in $A^*$, i.e., $A^*\neq
  \emptyset$. Moreover, for any two distinct $q_1,q_2\in A^*$ and all
  $x_1\in L(T(q_1))$ and $x_2\in L(T(q_2))$ such that $x_1$ is a witness of
  $q_1$ and $x_2$ is a witness $q_2$, we have
  $\lca_S(\sigma(x_1),\sigma(x_2))\preceq_S w^*$.  Note that $pq$ cannot be
  an HGT-edge of $T$ for all $q\in A^*$, since incomparability of $\mu(p)$
  and $\mu(q)$ would imply that at least one edge $uv$ along the path from
  $q$ to its witness $x_q$ must satisfy that $\mu(u)$ and $\mu(v)$ are
  incomparable (otherwise, condition~(ii) in the construction of $A^*$ is
  is not possible).  Thus, if $pq\in E(T)$ is an HGT edge for some
  $q\in\child_T(p)$, then $q\notin A^*$.
      
  Now construct a modified gene tree $T'$ as follows: If $A^*= \child_T(p)$
  we set $T'=T$ and relabel $p$ as $p^*$.  Otherwise, we insert an
  additional vertex $p^*$ into $T$ that has $p$ as its parent and the
  vertices $q_i\in A^*$, $1\le i\le \vert A^*\vert$ as its children. Note
  that by construction $w^*$ has at least 2 children. The time map for the
  modified tree is set by $\tau_{T'}(v)=\tau_{T}(v)$, $v\in V(T)$, and
  $\tau_{T'}(p^*)=\tau_{T}(p)-\epsilon$ for sufficiently small
  $\epsilon>0$.  Since we started with a relaxed scenario that explains
  $\sO$, $T'$ remains a phylogenetic tree.  Moreover, we define the
  modified reconciliation $\mu'$ by setting $\mu(p^*)=ww^*\in E(S)$ and
  $\mu'(v)=\mu(v)$ for all $v\in V(T')\setminus\{p^*\}$ and set
  $\scen'\coloneqq(T',S,\mu',\sigma,\tau_{T'},\tS)$. By construction,
  $\lca_{T'}(x,x')=p^*$ and thus, $\mu(p^*)\in E(S)$ implies $xx'\notin
  E(\wO(\scen'))$.  Furthermore, if $\lca_T(y_1,y_2)=p$, $y_1\in L(T(q_1))$
  for some $q_1\in A^*$ and $y_2\in L(T(q_2))$ for some $q_2\in
  \child_T(p)\setminus A^*$, then $\lca_{T'}(y_1,y_2)=p$ because $y_2$ is
  not a descendant of $p^*$ in $\scen'$. Finally, if $\lca_{T}(y_1,y_2)\ne
  p$, then $\lca_{T'}(y_1,y_2)=\lca_{T}(y_1,y_2)$. The latter two arguments
  together with the fact that the reconciliation maps for $T$ and $T'$
  coincide for all vertices distinct from $p^*$ imply
  $\sO(\scen')=\sO$. Furthermore, $x_1x_2\in E(\wO(\scen'))$ if and only if
  $x_1x_2\in E(\wO)$ and $\lca_{T'}(x_1,x_2)\neq p^*$.  In particular,
  $\vert E(\wO(\scen')\vert < \vert E(\wO)\vert$. The modification of
  $\scen$ also preserves witnesses: if $x$ is a witness of $v\ne p$ in
  $\scen$ then $x$ remains a witness of $v$ in $\scen'$; if $x$ is a
  witness of $p$ in $\scen$ then it is a witness of $p^*$ in $\scen'$ and,
  since $pp^*$ is not a HGT-edge, $x$ remains a witness of $p$. Thus $q\in
  A^*$ has a witness $x$ that is also a witness of $p$ in both $\scen$ and
  $\scen'$, and a witness of $p^*$ in $\scen'$. In particular, therefore,
  $p^*q$ with $q\in A^*$ is not an HGT edge.  Conversely, if $pq\in E(T)$
  is an HGT edge in $\scen$, $pq$ is also an HGT edge in $\scen'$ because
  $\mu'(p)=\mu(p)$ and $\mu'(q)=\mu(q)$ and $S$ remains unchanged.  The
  latter argument holds for all HGT edges in $\scen$, resp., $\scen'$.
  Therefore, $uv$ is an HGT-edge in $\scen$ if and only if $uv$ it an HGT
  edge in $\scen'$.  In particular, therefore, if the path from $u\in
  V^0(T)$ to the leaf $x\in L(T)$ is HGT-free in $\scen$, then it is also
  HGT-free in $\scen'$.

  Repeating this construction produces a finite sequence of scenarios
  $\scen=\scen_0,\scen_1,\dots,\scen_k$ with the same strict orthology
  graphs $\sO=\sO(\scen_1)=\dots=\sO(\scen_k)$ and in each step strictly
  reduces the number of edges in the weak orthology graph, i.e.,
  $\wO(\scen_{i})\subsetneq \wO(\scen_{i-1})$ for $1\le i\le k$ as long as
  in $\scen_{i-1}$ there is a vertex $p$ with a set $A^*$ with $\vert
  A^*\vert\ge 2$. Eventually we arrive at a relaxed scenario $\scen_k$ with
  a refined gene tree $T_k$ that contains no vertex $p$ with set $A^*$ as
  defined above. In $\scen_k$, therefore, $w=\mu_k(\lca_{T_k}(x,y))\in V^0$
  implies $\lca_S(\sigma(x),\sigma(y))=w$, which in turn implies
  $\wO(\scen_k)=\sO(\scen_k)=\sO$. The assertion now follows since
  $\wO(\scen_k)$ is a cograph by Lemma \ref{lem:wqo-cograph}.
\end{proof} 

The modification of a relaxed scenario $\scen$ in the proof of
Lemma~\ref{lem:so-cograph} only affects the last common ancestors of pairs
of genes $x,x'$ with $\mu(\lca_T(x,x;))\succ_S
\lca_S(\sigma(x),\sigma(x'))$ and thus $xy\in E(\Gu)$. Furthermore, in the
modified scenario $\scen'$, by construction we still have
$\mu(\lca_{T'}(x,x;))\succ_S \lca_S(\sigma(x),\sigma(x'))$, since either
$\lca_{T'}(x,x')=\lca_{T}(x,x')$ or
$\tau_T(\lca_{T'}(x,x'))=\tau_T(\lca_{T'}(x,x'))-\epsilon$ for an
arbitrarily small $\epsilon$. Therefore, we have
$\graphs(\scen)=\graphs(\scen')$ in each step, which immediately implies
\begin{proposition}
  A graph 3-partition $\graphs$ is explained by a relaxed scenario if
  and only if it is explained by a relaxed scenario satisfying
  $\sO(\scen)=\wO(\scen)$.
\end{proposition}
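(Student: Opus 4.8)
The plan is to derive the statement almost immediately from the construction already carried out in the proof of Lemma~\ref{lem:so-cograph}, so that the only genuinely new ingredient is the invariance of the explained $3$-partition. The \emph{if}-direction is trivial: a relaxed scenario satisfying $\sO(\scen)=\wO(\scen)$ is in particular a relaxed scenario, so if such a scenario explains $\graphs$, then $\graphs$ is explained by a relaxed scenario. It thus remains to treat the \emph{only if}-direction. First I would take a relaxed scenario $\scen$ that explains $\graphs$ and apply the iterative refinement from the proof of Lemma~\ref{lem:so-cograph}. This yields a finite sequence $\scen=\scen_0,\scen_1,\dots,\scen_k$ of relaxed scenarios in which the strict orthology graph stays fixed, the weak orthology graph strictly loses edges in each step, and the terminal scenario satisfies $\wO(\scen_k)=\sO(\scen_k)$. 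The whole argument then reduces to showing that each refinement step preserves the explained graph $3$-partition, i.e.\ $\graphs(\scen_i)=\graphs(\scen_{i-1})$ for all $1\le i\le k$; granting this, $\scen_k$ explains the same $\graphs$ as $\scen$ and satisfies $\sO(\scen_k)=\wO(\scen_k)$, which is exactly what is claimed.

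The key step is therefore the invariance of $\graphs$ under one modification step, which is precisely the content of the remark preceding this proposition. Concretely, a single step only changes the last common ancestor of those pairs $x,x'$ whose lca is relocated from $p$ to the freshly inserted vertex $p^*$; for every such pair the scenario $\scen_{i-1}$ had $\mu(\lca_T(x,x'))=\mu(p)\succ_S \lca_S(\sigma(x),\sigma(x'))$, so $xx'\in E(\Gu(\scen_{i-1}))$. In the refined scenario $\scen_i$ one has $\lca_{T'}(x,x')=p^*$ with $\tau_{T'}(p^*)=\tT(p)-\epsilon$, and choosing $\epsilon$ smaller than the gap $\tT(p)-\tS(\lca_S(\sigma(x),\sigma(x')))$ keeps $\tau_{T'}(\lca_{T'}(x,x'))>\tS(\lca_S(\sigma(x),\sigma(x')))$, so $xx'$ remains in $E(\Gu(\scen_i))$. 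That gap is strictly positive precisely because $\lca_S(\sigma(x),\sigma(x'))\prec_S \mu(p)$ together with $\mu(p)\in V^0(S)$ and time consistency gives $\tS(\lca_S(\sigma(x),\sigma(x')))<\tS(\mu(p))=\tT(p)$. All pairs not affected by the step keep both their $\lca_T$ and their reconciliation unchanged, so their EDT/LDT/PDT classification is untouched. Hence no pair migrates between $\Gu$, $\Gg$, and $\Ga$, and $\graphs(\scen_i)=\graphs(\scen_{i-1})$.

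The main obstacle is exactly this timing bookkeeping: I must rule out that pushing an affected lca down by $\epsilon$ flips an LDT pair into an EDT or PDT pair. This is why the argument pins down the sign of $\tT(p)-\tS(\lca_S(\sigma(x),\sigma(x')))$ and selects $\epsilon$ below it; an arbitrary small $\epsilon$ suffices because only finitely many such gaps occur in a single step. Everything else — the existence of the sequence, the stabilization of $\sO$, and the conclusion $\wO(\scen_k)=\sO(\scen_k)$ — is inherited verbatim from Lemma~\ref{lem:so-cograph}, so combining it with $\graphs(\scen_k)=\graphs(\scen)$ finishes the proof.
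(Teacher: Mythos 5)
Your proposal is correct and takes essentially the same route as the paper: the paper proves this proposition precisely via the remark preceding it, namely that each modification step in the proof of Lemma~\ref{lem:so-cograph} leaves $\graphs(\scen)$ invariant because the pairs whose last common ancestor is relocated to $p^*$ satisfy $\mu(\lca_T(x,x'))\succ_S \lca_S(\sigma(x),\sigma(x'))$ and retain this relation for an arbitrarily small $\epsilon$ --- exactly your argument, with the admissible choice of $\epsilon$ (below the positive gap $\tT(p)-\tS(\lca_S(\sigma(x),\sigma(x')))$) made explicit. One purely notational slip, inherited verbatim from the paper's remark: since $\mu(p)\in V^0(S)$ forces $\tT(\lca_T(x,x'))=\tS(\mu(p))>\tS(\lca_S(\sigma(x),\sigma(x')))$, such pairs lie in $E(\Ga)$ (the PDT graph), not in $E(\Gu)$; the strict inequality you actually preserve is the correct one for $\Ga$, so the argument itself is unaffected.
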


Finally, we show that every valid input $\graphs=(\Gu, \Gg, \Ga,
  \sigma)$ has an explanation such that the EDT graph $\Gg$ represents the
  strict quasi-orthologs.  This explanation can, in particular, by obtained
  with Alg.~\ref{alg:recognition}.  To see this, we first provide
  \begin{lemma}\label{lem:Gg-subset-sQO}
    Let $\scen$ be a relaxed scenario. Then $\sQO(\scen)\subseteq
    \Gg(\scen)$.
  \end{lemma}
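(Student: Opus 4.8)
The plan is to verify the claimed inclusion edge-by-edge, reducing it directly to the time-consistency axiom \AX{S2}. Let $\scen=(T,S,\sigma,\mu,\tT,\tS)$ be a relaxed scenario and take an arbitrary edge $xy\in E(\sQO(\scen))$. By Definition~\ref{def:ortho} this means that $x,y\in L(T)$ are distinct and that $\mu(\lca_T(x,y))=\lca_S(\sigma(x),\sigma(y))$. Abbreviating $u\coloneqq\lca_T(x,y)$, the goal is precisely to show $\tT(u)=\tS(\lca_S(\sigma(x),\sigma(y)))$, which is exactly the condition for $xy\in E(\Gg(\scen))$; note that the requirement $x\ne y$ for membership in $\Gg(\scen)$ is already built into the definition of strict quasi-orthologs.

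First I would record that $\mu(u)=\lca_S(\sigma(x),\sigma(y))\in V(S)$; indeed, by the \textsc{Observation} immediately preceding this lemma, since $x\ne y$ we even have $\mu(u)\in V^0(S)$, but all that is needed here is $\mu(u)\in V(S)$. With $\mu(u)\in V(S)$ in hand, axiom \AX{S2} applies and gives $\tS(\mu(u))=\tT(u)$. Substituting $\mu(u)=\lca_S(\sigma(x),\sigma(y))$ then yields
\[
\tT(\lca_T(x,y))=\tT(u)=\tS(\mu(u))=\tS(\lca_S(\sigma(x),\sigma(y))),
\]
so that $xy\in E(\Gg(\scen))$ by the definition of the EDT graph. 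Since $xy\in E(\sQO(\scen))$ was arbitrary, this establishes $\sQO(\scen)\subseteq\Gg(\scen)$.

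There is essentially no genuine obstacle in this argument: the statement is an immediate consequence of time consistency, with the only point worth stating explicitly being that the equality $\mu(u)=\lca_S(\sigma(x),\sigma(y))$ forces $\mu(u)$ to be a \emph{vertex} of $S$ (rather than an edge), which is what unlocks \AX{S2} rather than the weaker bound coming from \AX{S3}. This observation is the one place where the hypothesis does real work, and it is exactly the content of the \textsc{Observation} quoted just above the lemma.
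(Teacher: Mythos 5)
Your proof is correct and follows the same route as the paper's: both observe that $\mu(\lca_T(x,y))=\lca_S(\sigma(x),\sigma(y))\in V(S)$ and then invoke axiom \AX{S2} to conclude $\tT(\lca_T(x,y))=\tS(\lca_S(\sigma(x),\sigma(y)))$, i.e., $xy\in E(\Gg(\scen))$. Your remark that the hypothesis forces $\mu(u)$ to be a vertex rather than an edge, unlocking \AX{S2} instead of \AX{S3}, is precisely the point the paper's proof relies on as well.
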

  \begin{proof}
    Assume that $xy\in E(\sQO(\scen))$. Thus we have $x\neq y$ and
    $\mu(\lca_T(x,y))=\lca_S(\sigma(x), \sigma(y))\in V(S)$, which in
    turn yields $\tS(\mu(\lca_T(x,y))) = \tS(\lca_S(\sigma(x),
    \sigma(y)))$. Together with \AX{S2}, this implies that
    $\tS(\lca_S(\sigma(x), \sigma(y))) = \tT(\lca_T(x,y))$ and,
    therefore, $xy\in E(\Gg(\scen))$. Hence, we have
    $\sQO(\scen)\subseteq \Gg(\scen)$.
  \end{proof}
  \begin{lemma}
    \label{lem:alg1-Gg=SQO}
    If $\scen$ is a scenario produced by Algorithm~\ref{alg:recognition} to
    explain the valid input $\graphs=(\Gu,\Gg,\Ga,\sigma)$, then
    $\Gg=\sQO(\scen)$.
  \end{lemma}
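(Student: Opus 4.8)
The plan is to prove the set equality $\Gg=\sQO(\scen)$ by establishing the two inclusions separately, exploiting the fact that one direction holds for every relaxed scenario while the other uses the specific structure of the output of Algorithm~\ref{alg:recognition}. First I would record the trivial reductions: since $\scen$ \emph{explains} $\graphs$ by hypothesis, we have $\Gg(\scen)=\Gg$, and both $\Gg$ and $\sQO(\scen)$ share the vertex set $L(T)=L$. Hence it suffices to compare edge sets, and the claim becomes $E(\Gg(\scen))=E(\sQO(\scen))$.

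For the inclusion $\sQO(\scen)\subseteq\Gg(\scen)$ I would simply invoke Lemma~\ref{lem:Gg-subset-sQO}, which asserts exactly this containment for an arbitrary relaxed scenario; no properties special to Algorithm~\ref{alg:recognition} are needed here. For the reverse inclusion $\Gg(\scen)\subseteq\sQO(\scen)$ I would take an arbitrary edge $xy\in E(\Gg(\scen))=E(\Gg)$ and apply Observation~\ref{obs:generic}. Because $\scen$ was produced by Algorithm~\ref{alg:recognition} for the valid input $\graphs$, that observation guarantees $\mu(\lca_T(x,y))=\lca_S(\sigma(x),\sigma(y))$. Since an edge of a simple graph joins distinct endpoints, $x\ne y$, so this is precisely the defining condition of Definition~\ref{def:ortho} for $x$ and $y$ to be strict quasi-orthologs, i.e.\ $xy\in E(\sQO(\scen))$. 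Combining the two inclusions yields $\Gg=\Gg(\scen)=\sQO(\scen)$.

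I do not expect a genuine obstacle here: all the real work has already been done. The non-trivial content is isolated in Observation~\ref{obs:generic}, which was itself read off from the proof of Lemma~\ref{lem:algo-explains} (edges of $\Gg$ are created only in Case~(b), where the local root $u_i$ satisfies $\mu(u_i)=u_S$ and simultaneously $\lca_S(\sigma(x),\sigma(y))=u_S$). The only point worth stating explicitly is that the hypothesis ``$\scen$ is produced by Algorithm~\ref{alg:recognition}'' is what licenses the use of Observation~\ref{obs:generic}, since the reverse inclusion is \emph{not} valid for general relaxed scenarios — indeed $\sQO(\scen)$ can be a strict subgraph of $\Gg(\scen)$ in an arbitrary scenario. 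The proof is therefore a two-line combination of Lemma~\ref{lem:Gg-subset-sQO} and Observation~\ref{obs:generic}.
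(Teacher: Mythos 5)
Your proof is correct and takes essentially the same route as the paper's: both directions rest on Observation~\ref{obs:generic} for $\Gg\subseteq\sQO(\scen)$ and on Lemma~\ref{lem:Gg-subset-sQO} together with $\Gg(\scen)=\Gg$ for the reverse inclusion. Your version merely spells out a few details (distinctness of edge endpoints, the appeal to Definition~\ref{def:ortho}) that the paper leaves implicit.
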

  \begin{proof}
    Obs.~\ref{obs:generic} implies that $\Gg\subseteq \sQO(\scen)$ for
    every scenario $\scen$ produced by
    Algorithm~\ref{alg:recognition}. Conversely, every scenario $\scen$
    produced by Algorithm~\ref{alg:recognition} with input
    $\graphs=(\Gu,\Gg,\Ga,\sigma)$ is relaxed
    (cf.\ Lemma~\ref{lem:algo-explains}) and satisfies, in particular, $\Gg
    = \Gg(\scen)$. Hence, we can apply Lemma~\ref{lem:Gg-subset-sQO} to
    conclude that $\sQO(\scen)\subseteq \Gg(\scen) = \Gg$.
  \end{proof}
  It is important to note, however, that there are scenarios for which
  $\Gg\subseteq\sQO(\scen)$ is not true. As an example, consider the
  scenario $\scen$ in Figure~\ref{fig:simple-scenarios}(top row, middle) in
  which $xy\in \Gg(\scen)$ but $\mu(\lca_T(x,y))\neq
  \lca_S(\sigma(x),\sigma(y))$ and thus, $xy\notin \sQO(\scen)$.

\paragraph{Generic Scenarios.} 
It will sometimes be useful to assume that time maps are generic in the
sense that two inner vertices of the gene or species tree have the same
time stamp only if they belong to the same biological event. For our
purposes, it seems sufficient to rule out that concurrent nodes are mapped
to different positions in the species tree, i.e., we postulate the
following ``genericity'' axiom for evolutionary scenarios:
\begin{itemize}
\item[\AX{G}] If $\tT(v)=\tS(U)$ for $v\in V^0(T)$ and $U\in V^0(S)$,
  then $\mu(v)=U$.
\end{itemize}
Axiom \AX{G} stipulates that no two distinct speciation \emph{events},
i.e., inner nodes of the species tree are concurrent and that no other
evolutionary event (duplication or horizontal transfer) happens concurrent
with a speciation. Note that two vertices of the gene tree ``belong''
to the same speciation event if they are reconciled with the same vertex of
$S$. Thus $u,u'\in V(T)$ with $\mu(u)=\mu(u')\in V(S)$ are considered as
the same speciation event and thus also necessarily have the same time
stamp $\tT(u)=\tT(u')$.

As an immediate consequence of \AX{G}, we observe that
$\tT(\lca_T(x,y))=\tS(U)$ implies $\mu(\lca_T(x,y))=U$. Conversely, since
$T$ is phylogenetic, every $v\in V^0(T)$ (except the planted root) is the
last common ancestor of some pair of vertices, and $\mu(0_T)=0_S$, we can
equivalently express \AX{G} as
\begin{itemize}
\item[\AX{G'}] If $\tT(\lca_T(x,y))=\tS(U)$ for $x,y\in L(T)$ and $U\in
  V^0(S)$, then $\mu(\lca_T(x,y))=U$.
\end{itemize}

\begin{definition}
  A relaxed scenario satisfying \AX{G}, or equivalently \AX{G'}, is called
  \emph{generic}.
\end{definition}

We note in passing that it is not a trivial endeavor to modify a
  relaxed scenario $\scen$ to a generic one $\scen'$ such that
  $\graphs(\scen) = \graphs(\scen')$.  Simply adjusting the time maps is,
  in general, not enough. For example, consider scenario
  $\scen_3=(T,S,\sigma,\mu,\tT,\tS)$ in
  Figure~\ref{fig:HGT-free-Gg-quasi-ortho}(C).  Without adjusting the
  reconciliation map $\mu$, any generic scenario
  $\scen'=(T,S,\sigma,\mu,\tT',\tS')$ would satisfy $ab\notin
  E(\Gg(\scen'))$ although $ab\in E(\Gg(\scen_3))$. Hence, additional
  effort is needed to adjust $\mu$, i.e., to map $\lca_T(a,b)$ to
  $\lca_S(\sigma(a),\sigma(b))$ instead of mapping it to the edge
  $\rho_S\sigma(c)$. However, for every scenario $\scen$, there exists a
  (possibly alternative) scenario $\scen'$ that is computed using
  $\graphs(\scen)$ as input for Algorithm~\ref{alg:recognition} in
  conjunction with Algorithm~\ref{alg:restricted-scenario}.  Therefore,
  $\scen'$ satisfies $\graphs(\scen') = \graphs(\scen)$ and the conditions
  provided in Observation~\ref{obs:generic} and
  \ref{obs:mod-algo-explains}. These strong constraints on $\scen'$ might
  be helpful in transforming it into a generic scenario.

\begin{theorem}
  \label{thm:Gg-iff-quasi-ortho}
  For a generic scenario $\scen=(T,S,\sigma,\mu,\tT,\tS)$ it always holds
  that $\Gg(\scen) = \sQO(\scen)$ and thus, $\Gg(\scen)\subseteq
  \wQO(\scen)$. In particular, if $\scen$ is HGT-free or $S$ and $T$ are
  binary, then $\sQO(\scen)$ is a cograph.
\end{theorem}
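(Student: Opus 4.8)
The plan is to reduce everything to the single equality $\Gg(\scen) = \sQO(\scen)$, since both remaining assertions of the theorem follow from it together with results already established. The inclusion $\sQO(\scen) \subseteq \Gg(\scen)$ is precisely Lemma~\ref{lem:Gg-subset-sQO} and holds for \emph{every} relaxed scenario, so only the reverse inclusion will actually require the genericity hypothesis. The overall structure is therefore: prove $\Gg(\scen) \subseteq \sQO(\scen)$ using axiom \AX{G'}, combine with Lemma~\ref{lem:Gg-subset-sQO} to get equality, and then read off the two consequences.

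For the inclusion $\Gg(\scen) \subseteq \sQO(\scen)$ I would take an arbitrary edge $xy \in E(\Gg(\scen))$ and set $u \coloneqq \lca_T(x,y)$ and $U \coloneqq \lca_S(\sigma(x),\sigma(y))$, so that by the definition of the EDT graph $\tT(u) = \tS(U)$. The one thing to verify carefully is that $U \in V^0(S)$, which is exactly what makes \AX{G'} applicable: since $xy \in E(\Gg(\scen))$, Cor.~\ref{cor:equal-colors} gives $\sigma(x) \ne \sigma(y)$, so $U$ is not a leaf, and as the last common ancestor of two leaves it satisfies $U \preceq_S \rho_S$, whence $U \ne 0_S$ and thus $U \in V^0(S)$. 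Now \AX{G'}, applied to $\tT(\lca_T(x,y)) = \tS(U)$, yields $\mu(\lca_T(x,y)) = U = \lca_S(\sigma(x),\sigma(y))$, i.e., $xy \in E(\sQO(\scen))$. Together with Lemma~\ref{lem:Gg-subset-sQO} this gives $\Gg(\scen) = \sQO(\scen)$.

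The remaining claims are then immediate. The relation $\Gg(\scen)\subseteq \wQO(\scen)$ follows from $\Gg(\scen) = \sQO(\scen)$ together with the subgraph inclusion $\sQO(\scen)\subseteq \wQO(\scen)$ recorded in \eqref{eq:ortho-subgraph-rel}. For the cograph statement I would split into the two listed hypotheses: if $\scen$ is HGT-free, then $\Gg(\scen)$ is a cograph by Lemma~\ref{lem:HGT-free-Gg-cograph}; if both $S$ and $T$ are binary, then $\Gg(\scen)$ is a cograph by Theorem~\ref{thm:binary->Gg-cograph}. In either case $\sQO(\scen) = \Gg(\scen)$ is a cograph.

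The proof is short because all the heavy lifting has already been done by the earlier lemmas, and I do not expect a genuine obstacle. The only point demanding real attention is confirming the hypothesis of the genericity axiom, namely that $\lca_S(\sigma(x),\sigma(y))$ is truly an inner (non-leaf, non-planted) vertex of $S$ so that \AX{G'} applies — this is where proper coloring via Cor.~\ref{cor:equal-colors} is essential. (For completeness one may also note $u = \lca_T(x,y) \in V^0(T)$, although the form \AX{G'} only formally requires the $S$-side condition $U \in V^0(S)$.)
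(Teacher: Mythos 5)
Your proposal is correct and follows essentially the same route as the paper's proof: the inclusion $\Gg(\scen)\subseteq\sQO(\scen)$ via \AX{G'}, equality via Lemma~\ref{lem:Gg-subset-sQO}, the subgraph relation from Equ.~\eqref{eq:ortho-subgraph-rel}, and the cograph claim from Lemma~\ref{lem:HGT-free-Gg-cograph} and Theorem~\ref{thm:binary->Gg-cograph}. Your explicit verification that $\lca_S(\sigma(x),\sigma(y))\in V^0(S)$ via Cor.~\ref{cor:equal-colors} is a point the paper's proof leaves implicit, and it is a welcome addition rather than a deviation.
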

\begin{proof}
  Let $\scen=(T,S,\sigma,\mu,\tT,\tS)$ be a generic scenario. 
  Assume first that $xy\in E(\Gg(\scen))$. By definition, $x\neq y$ and 
  $\tT(\lca_T(x,y))=\tS(\lca_S(\sigma(x),\sigma(y)))$. 
  By \AX{G'}, $\mu(\lca_T(x,y))=\lca_S(\sigma(x),\sigma(y))$. 
  Hence, $xy\in E(\sQO(\scen))$ and, therefore, 
  $\Gg(\scen)\subseteq \sQO(\scen)$. 
  By Lemma \ref{lem:Gg-subset-sQO}, we have
    $\sQO(\scen)\subseteq\Gg(\scen)$ and, thus, $\sQO(\scen) =
    \Gg(\scen)$. By Equ.~\eqref{eq:ortho-subgraph-rel}, we have
  $\Gg(\scen) = \sQO(\scen)\subseteq \wQO(\scen)$.  Moreover, $\sQO(\scen)
  = \Gg(\scen)$ together with Lemma~\ref{lem:HGT-free-Gg-cograph} and
  Theorem~\ref{thm:binary->Gg-cograph} implies that $\sQO(\scen)$ is a
  cograph whenever $\scen$ is HGT-free or $S$ and $T$ are binary.
\end{proof}

Note that a pair of weak quasi-orthologs $x,y\in L(T)$ may have arisen in a
speciation and have been transferred to the species $\sigma(x)$ and
$\sigma(y)$ in which they are found at later points in time. Thus
$\tT(\lca_T(x,y)) \lessgtr \tS(\lca_S(\sigma(x),\sigma(y))$ is possible,
see Figure~\ref{fig:weak-quasi-ortholog} for two examples.  Consequently,
$\wQO(\scen) \neq \Gg(\scen)$ is possible for generic scenarios.

\begin{figure}[t]
  \centering
  \includegraphics[width=0.45\textwidth]{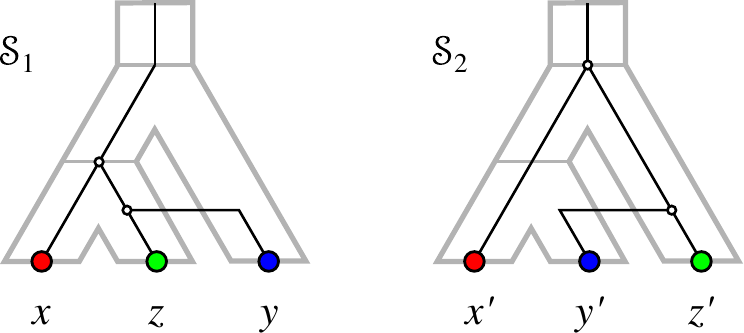}
  \caption{The two pairs $x$ and $y$ as well as $x'$ and $y'$ are weak
    quasi-orthologs in $\scen_1=(T,S,\sigma,\mu,\tT,\tS)$ and
    $\scen_2=(T',S',\sigma',\mu',\tau_{T'},\tau_{S'})$, respectively, but
    it holds $\tT(\lca_T(x,y)) < \tS(\lca_S(\sigma(x),\sigma(y))$ and
    $\tau_{T'}(\lca_{T'}(x',y')) >
    \tau_{S'}(\lca_{S'}(\sigma'(x'),\sigma'(y'))$.}
  \label{fig:weak-quasi-ortholog}
\end{figure}

As an immediate consequence of Lemma~\ref{lem:order}, equality between
$\wQO(\scen)$ and $\Gg(\scen)$ also holds for HGT-free scenarios. In
particular, by definition, $\sO(\scen) = \sQO(\scen)$. Hence, together with
Lemma~\ref{lem:HGT-free-Gg-cograph}, we obtain
\begin{corollary}
  \label{cor:HGT-free-Gg-equals-Theta}
  Every relaxed scenario $\scen$ without HGT-edges satisfies
  $\Gg(\scen)=\sQO(\scen)=\sO(\scen)$.  In this case, $\sQO(\scen)$ is a
  cograph.
\end{corollary}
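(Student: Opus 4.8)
The plan is to chain together three simple equalities/implications, each of which reduces to comparing defining conditions against the content of Lemma~\ref{lem:order} and Lemma~\ref{lem:HGT-free-Gg-cograph}. First I would establish $\Gg(\scen)=\sQO(\scen)$. Fix two distinct leaves $x,y\in L(T)$. The concluding sentence of Lemma~\ref{lem:order} states precisely that, for an HGT-free relaxed scenario, $xy\in E(\Gg(\scen))$ holds if and only if $\mu(\lca_T(x,y))=\lca_S(\sigma(x),\sigma(y))$. But this right-hand condition is exactly the defining condition for $xy\in E(\sQO(\scen))$ in Definition~\ref{def:ortho}. Since the two graphs share the vertex set $L(T)$, the equivalence of their edge conditions gives $\Gg(\scen)=\sQO(\scen)$ directly.

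Next I would argue $\sQO(\scen)=\sO(\scen)$. The key observation is that $\scen$ is HGT-free, i.e.\ $\lambda(e)=0$ for every edge $e\in E(T)$; consequently the additional requirement in the definition of strict orthologs — that $\lambda(e)=0$ along the $x$--$y$ path in $T$ — is vacuously satisfied for every pair of leaves. Hence the strict-ortholog condition collapses to $\mu(\lca_T(x,y))=\lca_S(\sigma(x),\sigma(y))$, which coincides with the strict quasi-ortholog condition, yielding $\sO(\scen)=\sQO(\scen)$ (as already noted in the text preceding the statement). Combining the two steps produces the asserted chain $\Gg(\scen)=\sQO(\scen)=\sO(\scen)$.

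Finally, for the cograph assertion I would simply invoke Lemma~\ref{lem:HGT-free-Gg-cograph}, which guarantees that $\Gg(\scen)$ is a cograph for any HGT-free relaxed scenario. Since we have just shown $\sQO(\scen)=\Gg(\scen)$, it follows that $\sQO(\scen)$ (and hence $\sO(\scen)$) is a cograph. I do not expect a genuine obstacle here: everything is an immediate consequence of Lemma~\ref{lem:order} and Lemma~\ref{lem:HGT-free-Gg-cograph}, and the only point requiring care is matching the three defining conditions — for $E(\Gg(\scen))$, $E(\sQO(\scen))$, and $E(\sO(\scen))$ — against one another, in particular recording that HGT-freeness renders the path condition in the two orthology definitions trivially true.
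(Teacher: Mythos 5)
Your proof is correct and follows essentially the same route as the paper: the final equivalence in Lemma~\ref{lem:order} identifies $E(\Gg(\scen))$ with the strict quasi-ortholog condition of Definition~\ref{def:ortho}, HGT-freeness renders the path condition vacuous so that $\sO(\scen)=\sQO(\scen)$ by definition, and Lemma~\ref{lem:HGT-free-Gg-cograph} supplies the cograph claim. If anything, your write-up is slightly cleaner than the paper's one-line derivation, whose preamble refers to equality of $\wQO(\scen)$ and $\Gg(\scen)$ (apparently a slip for $\sQO(\scen)$, since Fig.~\ref{fig:HGT-free-Gg-quasi-ortho}(A) exhibits an HGT-free scenario with $\Gg(\scen_1)\subsetneq\wQO(\scen_1)$), whereas you correctly chain through $\sQO(\scen)$ throughout.
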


\begin{corollary}
  Let $\scen$ be a generic scenario. Then $\Gg(\scen)=\wQO(\scen)$ if and
  only if $\mu(\lca_T(x,y))=\lca_S(\sigma(x),\sigma(y))$ for all $xy\in
  E(\wQO(\scen))$, which holds if and only if $\sQO(\scen)=\wQO(\scen)$. In
  this case, $\sQO(\scen)$ is a cograph.
  \label{cor:sameLCA}
\end{corollary}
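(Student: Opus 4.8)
The plan is to establish the two stated equivalences by chaining the definitions of $\sQO(\scen)$ and $\wQO(\scen)$ with Theorem~\ref{thm:Gg-iff-quasi-ortho}, and then to read off the cograph property from Lemma~\ref{lem:wqo-cograph}. Throughout I would keep in mind the inclusion $\sQO(\scen)\subseteq\wQO(\scen)$ recorded in Equation~\eqref{eq:ortho-subgraph-rel}, which turns every ``equality'' claim into the reverse inclusion.

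First I would handle the definitional equivalence, namely that $\mu(\lca_T(x,y))=\lca_S(\sigma(x),\sigma(y))$ for all $xy\in E(\wQO(\scen))$ holds if and only if $\sQO(\scen)=\wQO(\scen)$. Since $\sQO(\scen)\subseteq\wQO(\scen)$ always holds, equality is equivalent to the reverse inclusion $\wQO(\scen)\subseteq\sQO(\scen)$. By Definition~\ref{def:ortho}, a pair $xy$ lies in $E(\sQO(\scen))$ precisely when $\mu(\lca_T(x,y))=\lca_S(\sigma(x),\sigma(y))$; hence $\wQO(\scen)\subseteq\sQO(\scen)$ says exactly that this identity holds for every $xy\in E(\wQO(\scen))$. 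Note that this equivalence is purely definitional and does not use genericity.

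Next I would invoke genericity for the first equivalence, $\Gg(\scen)=\wQO(\scen)$ if and only if $\sQO(\scen)=\wQO(\scen)$. Theorem~\ref{thm:Gg-iff-quasi-ortho} gives $\Gg(\scen)=\sQO(\scen)$ for every generic scenario. Substituting $\sQO(\scen)$ for $\Gg(\scen)$ turns the condition $\Gg(\scen)=\wQO(\scen)$ into $\sQO(\scen)=\wQO(\scen)$, and conversely, so the two conditions are equivalent. Together with the previous paragraph, all three displayed conditions are equivalent.

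Finally, under these (equivalent) conditions we have $\sQO(\scen)=\wQO(\scen)$, and Lemma~\ref{lem:wqo-cograph} asserts that $\wQO(\scen)$ is a cograph for every relaxed scenario; hence $\sQO(\scen)$ is a cograph as well. I do not anticipate a genuine obstacle here, as the statement is essentially a bookkeeping consequence of Theorem~\ref{thm:Gg-iff-quasi-ortho}, the definitions of the two quasi-orthology graphs, and Lemma~\ref{lem:wqo-cograph}. The only point requiring slight care is to keep separate which equivalence relies on genericity (the one involving $\Gg(\scen)$) from the one that is purely definitional (the one relating $\sQO(\scen)$ and $\wQO(\scen)$).
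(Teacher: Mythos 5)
Your proof is correct and matches the paper's intended derivation: the corollary is stated there without explicit proof as an immediate consequence of Theorem~\ref{thm:Gg-iff-quasi-ortho} (giving $\Gg(\scen)=\sQO(\scen)$ under genericity), the inclusion $\sQO(\scen)\subseteq\wQO(\scen)$ from Equation~\eqref{eq:ortho-subgraph-rel} together with Definition~\ref{def:ortho}, and Lemma~\ref{lem:wqo-cograph} for the cograph property. Your explicit note that only the equivalence involving $\Gg(\scen)$ uses genericity, while the other is purely definitional, is an accurate and careful reading of the statement.
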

The example in Figure~\ref{fig:HGT-free-Gg-quasi-ortho}(C) show that the 
condition \AX{G} cannot be dropped in Cor.~\ref{cor:sameLCA}.

\begin{figure}[ht]
  \centering
  \includegraphics[width=0.85\textwidth]{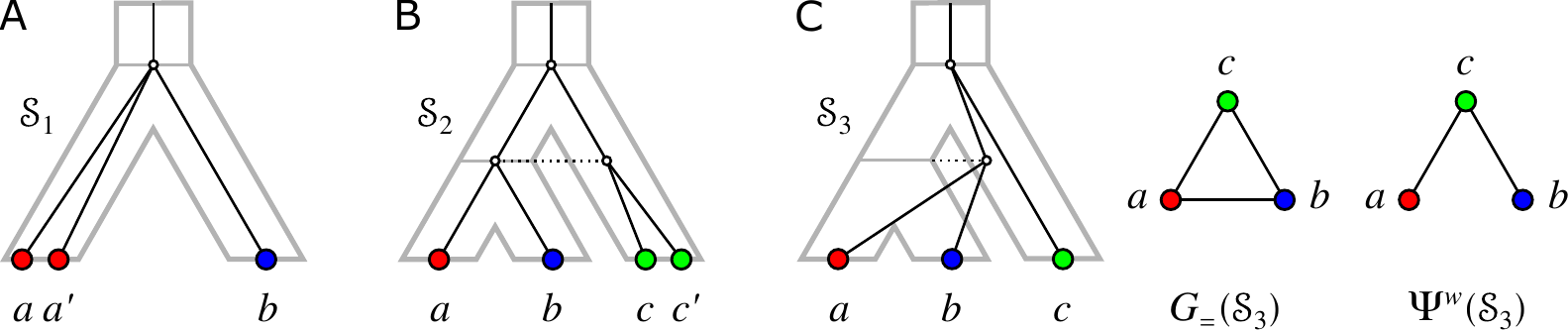}
  \caption{(A) A HGT-free relaxed scenario where $\Gg(\scen_1)\subsetneq
    \wQO(\scen_1)$.  The vertices $a$ and $a'$ are weak quasi-orthologs but
    $aa'\notin E(\Gg(\scen_1))$.  (B) An HGT-free, non-generic
      relaxed scenario. (C) A non-generic relaxed scenario for which
    $\Gg(\scen_3)\ne \wQO(\scen_3)$ even though
    $\mu(\lca_T(x,y))=\lca_S(\sigma(x),\sigma(y))$ holds for all
    $xy\in E(\wQO(\scen_3))$.}
  \label{fig:HGT-free-Gg-quasi-ortho}
\end{figure}

Equ.\ \ref{eq:ortho-subgraph-rel} and Thm.~\ref{thm:Gg-iff-quasi-ortho}
immediately imply
\begin{corollary}
  Every generic scenario $\scen$ satisfies
  $\sO(\scen) \subseteq \sQO(\scen) = \Gg(\scen)\subseteq \wQO(\scen)$.
\end{corollary}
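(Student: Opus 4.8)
The plan is to assemble the chain of inclusions directly from the two results that the preceding sentence already flags as sufficient, namely the subgraph relations collected in Equation~\eqref{eq:ortho-subgraph-rel} and the identity $\Gg(\scen)=\sQO(\scen)$ established for generic scenarios in Theorem~\ref{thm:Gg-iff-quasi-ortho}. Since every piece of the statement has already been proved in isolation, the entire task reduces to concatenating them in the correct order; no new combinatorial or timing argument is required.

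First I would fix a generic scenario $\scen=(T,S,\sigma,\mu,\tT,\tS)$ and record the leftmost inclusion $\sO(\scen)\subseteq\sQO(\scen)$, which is exactly one of the four subgraph relations listed in Equation~\eqref{eq:ortho-subgraph-rel} and holds for every relaxed scenario (hence in particular for the generic one). Next I would invoke Theorem~\ref{thm:Gg-iff-quasi-ortho}: because $\scen$ is generic, that theorem supplies both the central equality $\sQO(\scen)=\Gg(\scen)$ and the rightmost inclusion $\Gg(\scen)\subseteq\wQO(\scen)$. Splicing these together yields $\sO(\scen)\subseteq\sQO(\scen)=\Gg(\scen)\subseteq\wQO(\scen)$, which is precisely the asserted statement.

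The only subtlety worth flagging is that the equality $\sQO(\scen)=\Gg(\scen)$ is the single place where the genericity hypothesis is actually consumed: for a general relaxed scenario one has only $\sQO(\scen)\subseteq\Gg(\scen)$ by Lemma~\ref{lem:Gg-subset-sQO}, and the reverse inclusion can fail (the counterexample being the middle scenario of the top row of Figure~\ref{fig:simple-scenarios}, already cited after Lemma~\ref{lem:alg1-Gg=SQO}). So the proof should explicitly name genericity as the property that upgrades that inclusion to an equality. Beyond this bookkeeping there is no genuine obstacle; the corollary is immediate once the ingredients are named, and I would expect to write it in two or three sentences.
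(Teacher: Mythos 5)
Your proposal is correct and matches the paper's own argument exactly: the paper derives this corollary as an immediate consequence of Equation~\eqref{eq:ortho-subgraph-rel} (giving $\sO(\scen)\subseteq\sQO(\scen)$) together with Theorem~\ref{thm:Gg-iff-quasi-ortho} (giving $\sQO(\scen)=\Gg(\scen)\subseteq\wQO(\scen)$ for generic $\scen$). Your added remark correctly locates where genericity is consumed, since for general relaxed scenarios Lemma~\ref{lem:Gg-subset-sQO} yields only $\sQO(\scen)\subseteq\Gg(\scen)$.
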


\section{Concluding Remarks}

We have developed a complete characterization of graph $3$-partition
$\graphs$ on a species-colored set of vertices that can be explained by an
relaxed scenario $\scen$ (Thm.~\ref{thm:explainable-charac}). We showed,
furthermore, that whenever such an explaining relaxed scenario exists, one
can also find explanations from a much more restricted class of scenarios
that are fully witnessed and satisfy certain natural constraints for
``speciation events'' (Thm.~\ref{thm:relaxed-iff-restricted}). The
existence of such scenarios can be tested in polynomial time, and in the
positive case, both relaxed and restricted scenarios explaining the input
$3$-partition can be constructed, again in polynomial time.  If only the
information of $\Gg \in \graphs$ is available, it can be tested in
polynomial-time as whether $\Gg$ is an EDT graph in the HGT-free case
(cf.\ Thm.~\ref{thm:EDT-HGT-free}), while the problem becomes NP-hard for
general relaxed scenarios (cf.\ Thm.~\ref{thm:EDT-rec-NPc}).  In contrast,
PDT graphs can be recognized in polynomial-time
(cf.\ Thm.~\ref{thm:char-PDT}).  These approaches extend earlier work on
LDT graphs, which serve as the basis for indirect methods for the inference
of HGT events \cite{Schaller:21f}.  If only the information of $\Gu \in
\graphs$ is available, it can be tested whether $\Gu$ is an LDT graph and,
in the affirmative case, a relaxed scenario that explains $\Gu$ can be
constructed in polynomial-time \cite{Schaller:21f}.

Relaxed scenarios also can be used to formalize Walter Fitch's concept of
xenologous gene pairs \cite{Fitch:00,Darby:17}.  Given a relaxed scenario
$\scen=(T,S,\sigma,\mu,\tT,\tS)$, we define the \emph{xenology relation}
$R$ by setting $(x,y)\in R$ precisely if $x,y\in L(T)$ and the unique path
connecting $x$ and $y$ in $T$ contains an HGT edge. The resulting graph
$\gfitch(\scen) \coloneqq (L(T), R)$ is known as \emph{symmetrized Fitch
graph} \cite{Geiss:18a,Hellmuth:2019a,Hellmuth:18a}. It is always a
properly colored multipartite graph. Thm.~5 in \cite{Schaller:21f} shows
that for every properly colored multipartite graph there is a relaxed
scenario $\scen$ such that $\Gu(\scen) = \gfitch(\scen)$. On the other
hand, by \cite[Thm.~4]{Schaller:21f}, the LDT graph $\Gu(\scen)$ is always
a subgraph of $\gfitch(\scen)$ for every relaxed scenario $\scen$. Thus,
for every $\scen$ and every $xy\in \Gu(\scen)$, the two genes $x$ and $y$
are separated by at least on HGT-event. There are examples of relaxed
scenarios $\scen$ for which $\Gu(\scen)\neq \gfitch(\scen)$
(cf.\ \cite[Fig.~7]{Schaller:21f}).  Whether $\Gu(\scen) \subsetneq
\gfitch(\scen)$ or $\Gu(\scen) = \gfitch(\scen)$ heavily depends on the
particular scenario $\scen$. Given $\graphs = (\Gu,\Gg,\Ga,\sigma)$, which
may be estimated empirically from sequence similarity data, an explaining
scenario $\scen$ is not uniquely determined in general. This begs the
question whether there is a relaxed scenario $\scen$ that explains
$\graphs$ and satisfies $\Gu=\gfitch(\scen)$. To see that this is not the
case, consider $\graphs\coloneqq \graphs(\scen_2)$, where $\scen_2$ is the
scenario as in Fig.\ \ref{fig:EDT-P4-3col}. In this case, $\Gu$ is not a
complete multipartite graph and thus, $\Gu \subsetneq \gfitch(\scen)$ for
every relaxed scenario $\scen$ that explains $\Gu$. Consequently, the
information on HGT-events is not always provided entirely by the knowledge
of $\Gu$ alone. The graphs $\Gg$ and $\Ga$ thus may add additional
information for the inference of HGT. It will therefore be an interesting
topic for future work to understand how to employ $\graphs
=(\Gu,\Gg,\Ga,\sigma)$ to detect HGT-events and to which extend HGT-events
are uniquely determined for a given $\graphs$.

Relaxed scenarios provide a very general framework in which the concepts of
orthology, paralogy, and xenology can be studied in a rigorous manner. In
Section ``Orthology and Quasi-Orthology'', 
 we compared different concepts of orthology
that have been proposed for situations with horizontal transfer. We obtained
simple results describing the mutual relationships of the corresponding
variants of ``orthology graphs'' on $L(T)$, and their relations with
$\Gg$. With the exception of the strict quasi-orthology graph
$\wQO(\scen)$, the alternative notions lead to colored cographs similar to
the HGT-free case, see \cite{Hellmuth:13a}.  The latter connections are of
practical importance since the EDT graph $\Gg$, or the 3-partition graphs,
can be estimated from sequence similarities. It will be interesting,
therefore, to explore if techniques similar to those employed by
Schaller et al.\ \cite{Schaller:21a} can be used to identify the edges on $\Gg$ that do not
correspond to orthology-relationships.

We found that, similar to LDT graphs, PDT graphs are also cographs. This is
in general not the case for EDT graphs, although EDT graphs are perfect
(Prop.~\ref{prop:EDT-perfect-graph}). If both gene tree and species tree
are binary, i.e., fully resolved, then the EDT graph is a cograph.
However, not all proper vertex colorings of a cograph result in an EDT
graph (Fig.~\ref{fig:no-EDT}). It remains an interesting open problem to
characterize the ``EDT-colorings'' of cographs in analogy to the
hc-colorings of cograph that appear in the context of reciprocal best match
graphs \cite{Geiss:20a,Valdivia:23a}. Moreover, it is at least of
theoretical interest to ask how difficult it is to decide whether a
suitable coloring $\sigma$ exists such that $(\Gu,\Gg,\Ga,\sigma)$ is
explained by a relaxed scenario. Finding such a coloring corresponds to
assigning species to genes, a problem that arises in metagenomics.  Indeed,
when DNA is extracted from bulk samples taken from the environment, the
species that contains each sequence is unknown since they belong to members
of a diverse population (for instance, microbial or fungal).  Popular
techniques to recover a species assignment include sequence similarity
analysis~\cite{teeling2004tetra} and phylogenetic
reconstructions~\cite{darling2014phylosift}.  Since our approaches combine
these two ideas, it will be interesting to see whether EDT-colorings can be
useful in the context of metagenomics.

The reconciliation of $T$ and $S$ implicitly determines what kind of
evolutionary event corresponds to a vertex $v\in V^0(T)$. Given a relaxed
or restricted scenario $\scen$, the assignment of an event label $t(v)\in
Q$ from some pre-defined set $Q$ of event types is, of course, a matter of
biological interpretation of $\scen$. The definitions of ``DTL scenarios''
as in \cite{Tofigh:11,Bansal:12,Stolzer:12} assign event labels
to the inner vertices of $T$ that then must satisfy certain consistency
conditions with the local behavior of the reconciliation map
$\mu$. Event labelings $t:V^0(T)\to Q$ also play a key role in orthology
detection in duplication/loss scenarios
\cite{HernandezRosales:12a,Lafond:14,Hellmuth:17,Schaller:21a}. In relaxed
scenarios, it is not always possible to assign event types that match with
straightforward biological interpretations in an unambiguous manner. For
example, from a biological perspective, \emph{speciation events} are
usually defined as ``passing on the entire ancestral genome to each
offspring lineage''. In Fig.~\ref{fig:2-colP4}, however, $\lca_T(a,a')$
describes a gene duplication that occurs together with the speciation
event. As noted in \cite[Fig.2]{Stadler:20a}, this issue already arises in
the setting of DL-scenarios with multifurcating trees even in HGT-free
scenarios that satisfy the speciation constraint \AX{S6}, see also
\cite{Geiss:20b}.  Some further pertinent results on event-based
reconciliation in the presence of HGT were discussed by
N{\o}jgaard et al.\ \cite{Nojgaard:18a}. These point out subtle differences for non-binary
species trees in the definition of event-based DTL-scenarios \cite{Tofigh:11}
and suggest a natural notion of event-annotated relaxed scenarios. Because
of these difficulties we have avoided to consider event types as a formal
level in this contribution. Instead, these issues will be the focus of a
forthcoming contribution.

It is reassuring that a graph $3$-partition $\graphs$ that can be explained
by a relaxed scenario can always also be explained by a restricted
scenario. This begs the question, however, whether there is a simple, local
editing algorithm that converts a ``true'' scenario in a restricted or at
least a fully witnessed one. In the case of HGT-free scenarios, there is a
simple rule to exclude ``non-observable'' vertices in $T$: in this
restricted setting, it suffices to recursively remove all deleted genes and
all inner vertices with a single child \cite{HernandezRosales:12a}. The
situation seems to be much less obvious for relaxed scenarios, since these
models are somewhat more general than ``event-driven'' scenarios.  For
instance, relaxed scenarios allow multiple descendants from nodes $v\in
V(T)$ with $\mu(v)\in V(S)$. As a consequence, is seems difficult to
interpret a vertex $v$ that is reconciled with a vertex in the species tree
as a ``speciation event'' in the strict sense. The exact meaning of
``events'', therefore, deserves a more detailed analysis in the setting of
relaxed scenarios.

\begin{appendix}
\section*{Appendix}

%\section{Additional proofs}

\subsection*{Proof of Lemma~\ref{lem:algo-rs}}
\label{appx:additional-proofs}

In this section, we show in detail that, given a valid input
$\graphs=(\Gu,\Gg,\Ga,\sigma)$ with vertex set $L$,
Algorithm~\ref{alg:recognition} indeed returns a relaxed scenario
$\scen=(T,S,\sigma,\mu,\tT,\tS)$ such that $L(T)=L$. The proof parallels
the arguments in the proof of Thm.~2 in \cite{Schaller:21f}.

\begin{proof}[Proof of Lemma~\ref{lem:algo-rs}]
  Let $\sigma\colon L\to M$ and set $\R=\R_S(\graphs)$ and
  $\F=\F_S(\graphs)$. By a slight abuse of notation, we will simply write
  $\mu$ and $\tT$ also for restrictions to subsets of $V(T)$.  By
  assumption, $(\R,\F)$ is consistent, and thus, a tree $S$ on $M$ that
  displays $\Sri$ exists, and can be constructed in Line \ref{line:S}
  e.g.\ using \texttt{MTT} \cite{He:06}.  By Lemma~\ref{lem:arbitrary-tT},
  we can always construct a time map $\tS$ for $S$ satisfying $\tS(x)=0$
  for all $x\in L(S)$ in Line~\ref{line:tS}.  By definition,
  $\tS(y)>\tS(x)$ must hold for every edge $yx\in E(S)$, and thus, we
  obtain $\epsilon>0$ in Line~\ref{line:epsilon}.

  Recall that $\sigma(L')\subseteq L(S(u_S))$ holds in every recursion step
  by Obs.~\ref{obs:color-subset} and note that we reach the
  \emph{else}-block starting in Line~\ref{line:aux-graphs} only if $u_S$ is
  not a leaf.  Therefore, the auxiliary graphs $H_1$, $H_2$, and $H_3$ are
  well-defined and there is a vertex $v^*_S\in\child_S(u_S)$ such that
  $\sigma(C_j)\cap L(S(v^*_S))\ne\emptyset$ for every connected component
  $C_j$ of $H_2$ in Line~\ref{line:choose-v-S}, and a vertex $v_S\in
  \child_S(u_{S})$ such that $\sigma(C_k)\subseteq L(S(v_S))$ for every
  connected component $C_k$ of $H_3$ in
  Line~\ref{line:choose-v-S-for-class}.  Moreover, $\parent_S(u_{S})$ is
  always defined since we have $u_S=\rho_S$ and thus $\parent_S(u_S)=0_S$
  in the top-level recursion step, and recursively call the function
  \texttt{BuildGeneTree} on vertices $v_S$ such that $v_S\prec_S u_S$.

  In summary, all assignments are well-defined in every recursion step.  It
  is easy to verify that the algorithm terminates since, in each recursion
  step, we either have that $u_S$ is a leaf, or we recurse on vertices
  $v_{S}$ that lie strictly below $u_S$.  We argue that the resulting tree
  $T'$ is a (not necessarily phylogenetic) tree on $L$ by observing that,
  in each step, each $x\in L'$ is either attached to the tree as a leaf (if
  $u_S$ is a leaf) or passed down to a recursion step on some connected
  component of $H_3$ since each connected component $C_k$ of $H_3$
  satisfies $C_k\subseteq C_j$ for some connected component $C_j$ of $H_2$
  which in turn satisfies $C_j\subseteq C_i$ for some connected component
  $C_i$ of $H_1$.  Nevertheless, $T'$ is turned into a phylogenetic tree
  $T$ by suppression of degree-two vertices in Line~\ref{line:Tphylo}.
  Finally, $\mu(x)$ and $\tT(x)$ are assigned for all vertices $x\in
  L(T')=L$ in Line~\ref{line:mu-tT-leaves}, and for all newly created inner
  vertices in Lines~\ref{line:mu-tT-inner1}, \ref{line:mu-tT-inner2},
  and~\ref{line:mu-tT-inner3}.

  Before we continue to show that $\scen$ is a relaxed scenario, we first
  show that the conditions for time maps and time consistency are satisfied
  for $(T', S, \sigma, \mu, \tT, \tS)$:
  \par\noindent
  \begin{claim}
    \label{clm:tT-mu-in-T-prime}
    For all $x,y \in V(T')$ with $x\prec_{T'} y$, we have $\tT(x)<\tT(y)$.
    Moreover, for all $x\in V(T')$, the following statements are true:
    \vspace{-0.02in}
    \begin{description}
      \item[(i)] if $\mu(x)\in V(S)$, then $\tT(x)=\tS(\mu(x))$, and
      \item[(ii)] if $\mu(x)=(a,b)\in E(S)$, then $\tS(b)<\tT(x)<\tS(a)$.
    \end{description}
  \end{claim}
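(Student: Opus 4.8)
The plan is to exploit that Algorithm~\ref{alg:recognition} assigns $\mu$ and $\tT$ by completely explicit formulas, so the claim reduces to a finite case analysis over the kinds of vertices of $T'$ and over the edge types of $T'$. Concretely, every vertex of $T'$ is a leaf $x$, a local root $\rho'$ (Line~\ref{line:create-rho}), a vertex $u_i$ (Line~\ref{line:create-u_i}), or a vertex $v_j$ (Line~\ref{line:create-v_j}), with $\tT$ and $\mu$ fixed in Lines~\ref{line:mu-tT-inner1}, \ref{line:mu-tT-leaves}, \ref{line:mu-tT-inner2}, and~\ref{line:mu-tT-inner3}.

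First I would establish (i) and (ii) by inspection. The vertices $x$ and $u_i$ are exactly those with $\mu(\cdot)\in V(S)$: a leaf has $\tS(\sigma(x))=0=\tT(x)$ because $\tS$ vanishes on $L(S)$ (Line~\ref{line:tS}), and $u_i$ has $\tT(u_i)=\tS(u_S)=\tS(\mu(u_i))$ by construction, which is (i). For the vertices with $\mu(\cdot)\in E(S)$, namely $\rho'$ and $v_j$, I would use the choice of $\epsilon$ in Line~\ref{line:epsilon}, which guarantees $3\epsilon\le\tS(p)-\tS(q)$ for every edge $pq\in E(S)$. Writing $\mu(\rho')=\parent_S(u_S)\,u_S$, this yields $\tS(u_S)<\tS(u_S)+\epsilon=\tT(\rho')<\tS(\parent_S(u_S))$; writing $\mu(v_j)=u_S\,v^*_S$ with $v^*_S\in\child_S(u_S)$ yields $\tS(v^*_S)<\tS(u_S)-\epsilon=\tT(v_j)<\tS(u_S)$. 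Both are exactly (ii).

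Then I would prove the time-map inequality $\tT(x)<\tT(y)$ for $x\prec_{T'}y$. Since $\prec_{T'}$ is the transitive closure of the parent--child relation and $<$ is transitive on $\mathbb{R}$, it suffices to verify the strict decrease on every edge of $T'$. There are only four edge types. Three are local to a single recursion step and immediate: $\rho'\to u_i$ gives $\tS(u_S)<\tS(u_S)+\epsilon$, $u_i\to v_j$ gives $\tS(u_S)-\epsilon<\tS(u_S)$, and in the leaf base case $\rho'\to x$ gives $0<\epsilon$ (since then $u_S$ is a leaf, so $\tS(u_S)=0$). The step I expect to be the crux is the single edge type bridging two recursion levels: the edge from a $v_j$ created at species vertex $u_S$ to the local root $\rho''$ returned by the recursive call \textsc{BuildGeneTree}$(C_k,v_S)$ (Line~\ref{line:recursive-call}), where $v_S\in\child_S(u_S)$ by Line~\ref{line:choose-v-S-for-class}. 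Here $\tT(v_j)=\tS(u_S)-\epsilon$ while $\tT(\rho'')=\tS(v_S)+\epsilon$, so the two levels push their time stamps toward each other by $\epsilon$ each, and $\tT(v_j)>\tT(\rho'')$ therefore requires a separation strictly larger than $2\epsilon$. This is precisely where the factor $\tfrac13$ in Line~\ref{line:epsilon} is used: $u_S v_S\in E(S)$ forces $\tS(u_S)-\tS(v_S)\ge 3\epsilon$, whence $\tT(v_j)-\tT(\rho'')=(\tS(u_S)-\tS(v_S))-2\epsilon\ge\epsilon>0$. Having dispatched all four edge types, transitivity completes the inequality, and together with (i) and (ii) the claim follows.
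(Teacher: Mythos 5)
Your proposal is correct and follows essentially the same route as the paper's proof: both reduce the monotonicity of $\tT$ to a check on each of the four edge/vertex types produced by \textsc{BuildGeneTree}, verify (i) and (ii) from the explicit assignments in Lines~\ref{line:mu-tT-inner1}, \ref{line:mu-tT-leaves}, \ref{line:mu-tT-inner2}, and~\ref{line:mu-tT-inner3}, and invoke the bound $\tS(u_S)-\tS(v_S)\ge 3\epsilon$ exactly at the cross-recursion edge $v_j \to \rho''$, which is also the crux in the paper. The only difference is organizational (you separate (i)/(ii) from the edge-wise monotonicity check, whereas the paper interleaves them per vertex), which is immaterial.
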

  \begin{claim-proof}
    Recall that we always write an edge $uv$ of a tree $T$ such that
    $v\prec_T u$.  For the first part of the statement, it suffices to show
    that $\tT(x)<\tT(y)$ holds for every edge $yx\in E(T')$, and thus to
    consider all vertices $x\neq \rho_{T'}$ in $T'$ and their unique
    parent, which will be denoted by $y$ in the following.  Likewise, we
    have to consider all vertices $x\in V(T')$ including the root to show
    the second statement.  The root $\rho_{T'}$ of $T'$ corresponds to the
    vertex $\rho'$ created in Line~\ref{line:create-rho} in the top-level
    recursion step on $L$ and $\rho_{S}$.  Hence, we have $\mu(\rho_{T'}) =
    \parent_S(\rho_S)\rho_S = 0_S\rho_S \in E(S)$ and
    $\tT(\rho_{T'})=\tS(\rho_S) +\epsilon$
    (cf.\ Line~\ref{line:mu-tT-inner1}).  Therefore, we have to show
    Subcase~(ii).  Since $\epsilon>0$, it holds that
    $\tS(\rho_S)<\tT(\rho_{T'})$.  Moreover, $\tS(0_S)-\tS(\rho_{S})\ge
    3\epsilon$ holds by construction, and thus
    $\tS(0_S)-(\tT(\rho_{T'})-\epsilon)\ge 3\epsilon$ and
    $\tS(0_S)-\tT(\rho_{T'})\ge 2\epsilon$, which together with
    $\epsilon>0$ implies $\tT(\rho_{T'})<\tS(0_S)$.

    We now consider the remaining vertices
    $x\in V(T')\setminus\{\rho_{T'}\}$. Every such vertex $x$ is introduced
    into $T'$ in some recursion step on $L'$ and $u_S$ in
    exactly one of the following four ways:
    \begin{enumerate}[noitemsep,nolistsep]
      \item[(a)] $x\in L(T')$ is a leaf attached to some inner
      vertex $\rho'$ in Line~\ref{line:attach-leaf},
      \item[(b)] $x=u_i$ is created in Line~\ref{line:create-u_i},
      \item[(c)] $x=v_j$ is created in Line~\ref{line:create-v_j}, and
      \item[(d)] $x=w_k\coloneqq \texttt{BuildGeneTree}(C_k,v_S)$ is
        attached to the tree in Line~\ref{line:recursive-call}.
    \end{enumerate}
    Note that if $x=\rho'$ is created in Line~\ref{line:create-rho}, then
    $\rho'$ is either the root of $T'$, or equals a vertex $w_k\coloneqq
    \texttt{BuildGeneTree}(C_k,v_S)$ that is attached to the tree in
    Line~\ref{line:recursive-call} in the ``parental'' recursion step.

    In Case~(a), we have that $x\in L(T')$ is a leaf and attached to some
    inner vertex $y=\rho'$. Since $u_S$ must be a leaf in this case, and
    thus $\tS(u_S)=0$, we have $\tT(y)=0+\epsilon=\epsilon$ and $\tT(x)=0$
    (cf.\ Lines~\ref{line:mu-tT-inner1} and~\ref{line:mu-tT-leaves}).
    Since $\epsilon>0$, this implies $\tT(x)<\tT(y)$.  Moreover, we have
    $\mu(x)=\sigma(x)\in L(S)\subset V(S)$
    (cf.\ Line~\ref{line:mu-tT-leaves}), and thus have to show Subcase~(i).
    Since $u_S$ is a leaf and $\sigma(L')\subseteq L(S(u_S))$, we conclude
    $\sigma(x)=u_S$.  Thus we obtain $\tT(x)=0=\tS(u_S)=\tS(\mu(x))$.

    In Case~(b), we have that $x=u_i$ is created in
    Line~\ref{line:create-u_i} and attached as a child to some vertex
    $y=\rho'$ created in the same recursion step.  Thus, we have
    $\tT(y)=\tS(u_S)+\epsilon$, $\tT(x)=\tS(u_S)$ and $\mu(x)=u_S\in V(S)$
    (cf.\ Lines~\ref{line:mu-tT-inner1} and~\ref{line:mu-tT-inner2}).
    Therefore and because$\epsilon>0$, it holds $\tT(x)<\tT(y)$ and
    Subcase~(i) is satisfied.

    In Case~(c), we have that $x=v_j$ is created in
    Line~\ref{line:create-v_j} and attached as a child to some vertex
    $y=u_i$ created in the same recursion step.  Thus, we have
    $\tT(y)=\tS(u_S)$ and $\tT(x)=\tS(u_S)-\epsilon$
    (cf.\ Lines~\ref{line:mu-tT-inner2} and~\ref{line:mu-tT-inner3}).
    Therefore and since $\epsilon>0$, it holds $\tT(x)<\tT(y)$.  Moreover,
    we have $\mu(x)=u_S v^*_S \in E(S)$ for some $v^*_S\in\child_S(u_S)$.
    Hence, we have to show Subcase~(ii). By a similar calculation as
    before, $\epsilon>0$, $\tS(u_S)-\tS(v^*_S)\ge 3\epsilon$ and
    $\tT(x)=\tS(u_S)-\epsilon$ imply $\tS(v^*_S)<\tT(x)<\tS(u_S)$.

    In Case~(d), $x=w_k\coloneqq \texttt{BuildGeneTree}(C_k,v_S)$ is
    attached to the tree in Line~\ref{line:recursive-call} and equals
    $\rho'$ as created in Line~\ref{line:create-rho} in some ``child''
    recursion step with $v_S\in\child_S(u_S)$. Thus, we have
    $\tT(x)=\tS(v_S)+\epsilon$ and $\mu(x)=u_S v_S \in E(S)$
    (cf.\ Line~\ref{line:mu-tT-inner1}). Moreover, $x$ is attached as a
    child of some vertex $y=v_j$ as created in
    Line~\ref{line:create-v_j}. Thus, we have $\tT(y)=\tS(u_S)-\epsilon$.
    By construction and since $u_S v_S \in E(S)$, we have
    $\tS(u_S)-\tS(v_S)\ge 3\epsilon$.  Therefore, $(\tT(y)+\epsilon) -
    (\tT(x)-\epsilon) \ge 3\epsilon$ and thus $\tT(y)- \tT(x) \ge
    \epsilon$. This together with $\epsilon>0$ implies $\tT(x)<\tT(y)$.
    Moreover, since $\mu(x)=u_S v_S \in E(S)$ for some
    $v_S\in\child_S(u_S)$, we have to show Subcase~(ii).  By a similar
    calculation as before, $\epsilon>0$, $\tS(u_S)-\tS(v_S)\ge 3\epsilon$
    and $\tT(x)=\tS(v_S)+\epsilon$ imply $\tS(v_S)<\tT(x)<\tS(u_S)$.
  \end{claim-proof}

  The tree $T$ is obtained from $T'$ by first adding a planted root $0_T$
  (and connecting it to the original root) and then suppressing all inner
  vertices except $0_T$ that have only a single child in Line
  \ref{line:Tphylo}.  In particular, $T$ is a planted phylogenetic tree by
  construction.  The root constraint (S0) $\mu(x)=0_S$ if and only if
  $x=0_T$ also holds by construction
  (cf.\ Line~\ref{line:mu-tT-planted-root}).  Since we clearly have not
  contracted any outer edges $(y,x)$, i.e.\ with $x\in L(T')$, we conclude
  that $L(T')=L(T)=L$.  As argued before, we have $\tT(x)=0$ and
  $\mu(x)=\sigma(x)$ whenever $x\in L(T')=L(T)$
  (cf.\ Line~\ref{line:mu-tT-leaves}).  Since, in addition, all other
  vertices are mapped by $\mu$ to some edge of $S$, inner vertex, or $0_S$
  (cf.\ Lines~\ref{line:mu-tT-inner1}, \ref{line:mu-tT-inner2},
  \ref{line:mu-tT-inner3}, and \ref{line:mu-tT-planted-root}), the leaf
  constraint (S1) is satisfied.

  By construction, we have $V(T)\setminus \{0_T\} \subseteq V(T')$.
  Moreover, suppression of vertices clearly preserves the
  $\preceq$-relation between all vertices $x,y\in V(T)\setminus \{0_T\}$.
  Together with Claim~\ref{clm:tT-mu-in-T-prime}, this implies
  $\tT(x)<\tT(y)$ for all vertices $x,y\in V(T)\setminus \{0_T\}$ with
  $x\prec_{T} y$.  For the single child $\rho_T$ of $0_T$ in $T$, we have
  $\tT(\rho_T)\le \tS(\rho_S)+\epsilon$ where equality holds if the root of
  $T'$ was not suppressed and thus is equal to $\rho_T$.  Moreover,
  $\tT(0_T)=\tS(0_S)$ and $\tS(0_S)-\tS(\rho_S)\ge 3\epsilon$ hold by
  construction.  Taken together the latter two arguments imply that
  $\tT(\rho_T)<\tT(0_T)$.  In particular, we obtain $\tT(x)<\tT(y)$ for all
  vertices $x,y\in V(T)$ with $x\prec_{T} y$.  Hence, $\tT$ is a time map
  for $T$, which, moreover, satisfies $\tT(x)=0$ for all $x\in L(T)$.

  To show that $\scen=(T,S,\sigma,\mu, \tT,\tS)$ is a relaxed scenario, it
  remains to show the two time consistency constraints (S2) and (S3) in
  Def.~\ref{def:rs}.  For $0_T$, we have $\tT(0_T)=\tS(0_S)=\tS(\mu(0_T))$.
  Hence, condition in (S2) is satisfied for $0_T$.  The remaining vertices
  of $T$ are all vertices of $T'$ as well.  The latter two arguments
  together with Claim~\ref{clm:tT-mu-in-T-prime} imply that conditions (S2)
  and (S3) are also satisfied, and thus $\scen$ is a relaxed scenario.

\end{proof}

\subsection*{Hardness of  EDT Graph Recognition}
\label{sec-appx:hardness}

To establish the NP-hardness of \textsc{$(\CF,\CR)$-Satisfiability} and
\textsc{EDT-Recognition}, we start from
\begin{problem}[\PROBLEM{3-Set Splitting}]\ \\
  \begin{tabular}{ll}
    \emph{Input:}    & A finite set $U$ and a collection
                       $B=\{B_1,\dots,B_m\}$ of subsets of $U$ \\ 
                     & s.t.\ $\vert B_i\vert=3$ for all $i$.  \\
    \emph{Question:} & Is there a partition $\{U_1, U_2\}$ of $U$ into
                       two sets such that, for each $B_j \in B$,\\
		     & we have $B_j\cap U_1\neq \emptyset$ and
                       $B_j\cap U_2\neq \emptyset$. \\
		       & In other words, none of the $B_j \in B$ is
                       entirely contained in either $U_1$ or $U_2$. 
  \end{tabular}
\end{problem}

Lov{\'a}sz \cite{lovasz1973coverings} showed that the ``unrestricted'' version
  of \PROBLEM{3-Set Splitting}, in which elements in $B_j \in B$ have size
  $\vert B_i\vert \leq 3$ instead of $\vert B_i\vert= 3$, is
  NP-complete. There does not seem to be a published proof for the
  NP-completeness of the ``restricted'' variant of \PROBLEM{3-Set
    Splitting}. For completeness, we include a simple argument starting
  from
\begin{problem}[\PROBLEM{monotone NAE-3-SAT}]\ \\
  \begin{tabular}{ll}
    \emph{Input:} & Given a set of clauses $C = \{C_1, \dots, C_m\}$
                    over a set $U$ of Boolean variables \\
	          & s.t.\ $\vert C_i\vert =3$ for all $i$ 
		    and $C_i$ contains no negated variables. \\
    \emph{Question:}& Is there a truth assignment to $U$ such 
      that in each $C_i$\\
		  & not all three literals are set to true?
  \end{tabular}
\end{problem}

As shown by Porschen et al.\ \cite[Thm.~3]{PORSCHEN20141}, \PROBLEM{monotone NAE-3-SAT} is
NP-complete. Its is straightforward to see that \PROBLEM{monotone
  NAE-3-SAT} and \PROBLEM{3-Set Splitting} are equivalent in the following
sense: Interpret the $C_i\in C$ as sets and put $B=C$. Then $(C,U)$ is a
yes-instance of \PROBLEM{monotone NAE-3-SAT} if and only if $(B,U)$ is a
yes-instance of \PROBLEM{3-Set Splitting} because we can obtain a solution
$\{U_1, U_2\}$ for $(B,U)$ from a solution for $(C,U)$ by setting
$U_1\coloneqq \{x\in U \mid x \text{ is true}\}$ and $U_2\coloneqq
U\setminus U_1$. Conversely, a solution for $(C,U)$ is obtained from a
solution $\{U_1, U_2\}$ for $(B,U)$ by assigning ``true'' exactly to all
$x\in U_1$.  Consequently, we have
\begin{proposition}\label{prop:3-set-splitting}
\PROBLEM{3-Set Splitting} is NP-complete.
\end{proposition}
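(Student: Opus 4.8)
The plan is to establish membership in NP and then NP-hardness by a direct reduction from \PROBLEM{monotone NAE-3-SAT}, whose NP-completeness is supplied by Porschen et al.\ \cite[Thm.~3]{PORSCHEN20141}. For membership in NP, a candidate partition $\{U_1,U_2\}$ of $U$ serves as a polynomial-size certificate: one iterates over the $m$ triples $B_j$ and checks in constant time per set that $B_j\cap U_1\neq\emptyset$ and $B_j\cap U_2\neq\emptyset$, so the whole verification runs in $O(m)$ time.

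For hardness, I would take an instance $(C,U)$ of \PROBLEM{monotone NAE-3-SAT}, where $C=\{C_1,\dots,C_m\}$ is a set of clauses over the Boolean variables $U$, each clause consisting of exactly three positive literals, and form the \PROBLEM{3-Set Splitting} instance $(B,U)$ by reading every clause $C_i$ as the three-element subset $B_i\coloneqq C_i\subseteq U$ and putting $B=\{B_1,\dots,B_m\}$. This transformation is computable in linear time and preserves the ground set $U$, so no auxiliary gadgets or blow-up are introduced.

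The central step is the equivalence of the two instances, which rests on identifying a truth assignment $\beta\colon U\to\{\textnormal{true},\textnormal{false}\}$ with the partition $U_1\coloneqq\{x\in U\mid \beta(x)=\textnormal{true}\}$ and $U_2\coloneqq U\setminus U_1$. Under this correspondence, a clause $C_i$ has its three literals \emph{not all equal} precisely when its variables are neither all true nor all false, i.e., when $B_i\not\subseteq U_1$ and $B_i\not\subseteq U_2$, which is exactly the splitting condition $B_i\cap U_1\neq\emptyset$ and $B_i\cap U_2\neq\emptyset$. Hence $\beta$ is an NAE-satisfying assignment for $(C,U)$ if and only if $\{U_1,U_2\}$ splits every set of $(B,U)$, and the two maps between assignments and partitions are mutually inverse, giving a solution-preserving bijection.

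The only point I expect to require care — and the one worth stating explicitly rather than glossing over — is the semantics of ``NAE'': the clause condition must be read as \emph{not all equal} (neither all three literals true nor all three false), since demanding merely ``not all true'' would render the all-false assignment trivially satisfying and the source problem vacuous. With that reading fixed the correspondence above is exact and symmetric, so NP-completeness of \PROBLEM{3-Set Splitting} follows immediately from that of \PROBLEM{monotone NAE-3-SAT}.
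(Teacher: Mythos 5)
Your proposal is correct and takes essentially the same route as the paper: NP membership via the partition certificate, and hardness by the identity reduction from \PROBLEM{monotone NAE-3-SAT} of Porschen et al., identifying a truth assignment $\beta$ with the partition $U_1=\{x\in U \mid \beta(x)=\textnormal{true}\}$, $U_2=U\setminus U_1$ so that the not-all-equal condition on a clause coincides exactly with the splitting condition on the corresponding $3$-set. Your explicit caveat about the ``not all equal'' semantics is well taken --- the paper's own phrasing of the question (``not all three literals are set to true'') is imprecise on precisely this point, while its equivalence argument, like yours, implicitly relies on the full NAE reading.
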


We are now in the position to prove NP-completeness of
\PROBLEM{$(\CF,\CR)$-Satisfiability} (Thm.~\ref{thm:CFCR-NPc}).

\begin{proof}[Proof of Theorem \ref{thm:CFCR-NPc}.]
  Given a tree $S^*$, it can be verified in polynomial-time as whether
  $S^*$ satisfies $(\CF, \CR)$. Hence,
  $\textsc{$(\CF,\CR)$-Satisfiability}\in \text{NP}$.  To show NP-hardness
  we use a reduction from \PROBLEM{3-Set Splitting}.

  Given an instance $(U, B)$ of \PROBLEM{3-Set Splitting}, construct an
  instance $(U', \CF, \CR)$ of \textsc{$(\CF, \CR)$-Satisfiability} as
  follows. For $B_j \in B$, we order its three elements arbitrarily and
  write $B_j = \{b_j^1, b_j^2, b_j^3\}$. Let
  $U' \coloneqq U \cup \{x, z', z''\} \cup \{\alpha_j : 1 \leq j \leq m\}$
  and let
  \begin{align*}
    \CF \coloneqq&\{x\vert z'\vert z''\} \cup
                         \bigcup_{j=1}^m \{\ x \vert  b_j^1 \vert
                         \alpha_j,\ b_j^2 \vert  b_j^3 \vert  \alpha_j\ \}, \\
    \CR \coloneqq& \{\ \rpair{u_i}{z'}{z''} : u_i \in U\ \}
  \end{align*}
  It is easy to verify that this reduction can be performed in polynomial
  time. We show that there exists a 3-set splitting of $B$ if and only if
  there exists a tree $S^*$ that satisfies $(\CF, \CR)$.

  \begin{figure}
    \centering \includegraphics[width=0.85\textwidth]{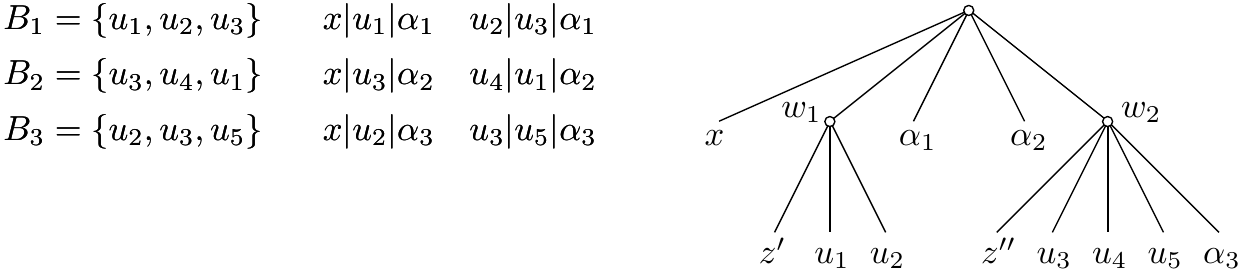}
    \caption{An example instance of \textsc{Set-Splitting} with
      solution $U_1 = \{u_1, u_2\}, U_2 = \{u_3, u_4, u_5\}$.  The elements
      of $B_1, B_2, B_3$ are in the order chosen by the reduction.  The
      constructed tree $S^*$ is shown, along with the
      $x\vert b_j^1\vert \alpha_j$ and $b_j^2\vert b_j^3\vert \alpha_j$ fan
      triples that must be displayed.  Note that each of the three cases in
      which two elements of $B_j$ have the same parent occurs. }
    \label{fig:setsplitting}
  \end{figure}

  Assume first that $(U,B)$ is a yes-instance of \PROBLEM{3-Set Splitting},
  i.e., there is a partition $\{U_1, U_2\}$ of $U$ such that
  $\vert B_j \cap U_1\vert \in \{1, 2\}$ for each $B_j \in B$.  We
  construct a tree $S^*$ that satisfies $(\CF, \CR)$, see
  Figure~\ref{fig:setsplitting} for an illustrative example.  Start
  with $S^*$ as the tree in which the root has three children
  $x, w_1, w_2$.  Then, add each element of $\{z'\} \cup U_1$ as a child of
  $w_1$, and add each element of $\{z''\} \cup U_2$ as a child of
  $w_2$. Notice that $S^*$ displays $x\vert z'\vert z''$ as required by
  $\CF$. Moreover, because each $u_i$ has either $z'$ or $z''$ as a sibling
  but not both, $S^*$ displays either $u_i z' \vert z''$ or
  $u_i z'' \vert z'$ for each $u_i \in U$, and thus satisfies the
  constraints in $\CR$. We next add the remaining $\alpha_j$ leaves as
  children of existing vertices of $S^*$, which cannot alter the triples
  and fan triples gathered so far.

  For each $B_j \in B$, exactly two of $b_j^1, b_j^2$ and $ b_j^3$ have the
  same parent $w \in \{w_1, w_2\}$ in $S^*$, because $\{U_1, U_2\}$ is a
  3-set splitting. There are three cases, and in each one, we let the
  reader verify that $S^*$ displays $x\vert b_j^1\vert \alpha_j$ and
  $b_j^2\vert b_j^3\vert \alpha_j$:
  \newline
  if either $b_j^1$ and $b_j^2$ or $b_j^1$ and $b_j^3$ have the same parent
  $w$, then add $\alpha_j$ as a child of the root of $S^*$;
  \newline
  if $b_j^2$ and $b_j^3$ have the same parent $w$, then add $\alpha_j$
  as a child of $w$.

  It is now straightforward to verify that $S^*$ satisfies $(\CF,\CR)$.

  Suppose now that $(U', \CF, \CR)$ is a yes-instance of
  \textsc{$(\CF,\CR)$-Satisfiability}, i.e., there
  exists a tree $S^*$ that satisfies $(\CF, \CR)$.  By the construction of
  $\CR$, for $u_i \in U$, $S^*$ displays either $u_i z' \vert z''$ or
  $u_i z'' \vert z'$.  We claim that the partition $\{U_1, U_2\}$ where
  \begin{align*}
    U_1 \coloneqq& \{u_i : S^* \mbox{ displays } u_i z' \vert z'' \} 
    \text{ and}\\
    U_2 \coloneqq& \{u_i : S^* \mbox{ displays } u_i z'' \vert z' \}
  \end{align*}
  is a 3-set splitting of $B$. In fact, $U_1\cap U_2=\emptyset$, since
  $S^*$ cannot display both $u_i z' \vert z''$ and $u_i z'' \vert z'$ at
  the same time. Moreover, by construction of $\CR$ and since $S^*$
  satisfies $(\CF, \CR)$, at least one of the triples $u_i z' \vert z''$
  and $u_i z'' \vert z'$ must be displayed by $S^*$ for all $u_i\in
  U$. Consequently, $U_1\cup U_2=U$.

  Assume, for contradiction, that $\{U_1, U_2\}$ is not a 3-set splitting
  of $B$. Hence, there is a $B_j = \{b_j^1, b_j^2, b_j^3\}$ in $B$ such
  that either $B_j \subseteq U_1$ or $B_j \subseteq U_2$.  First, suppose
  that $B_j \subseteq U_1$. By construction of $U_1$, $S^*$ displays $b_j^1
  z' \vert z''$, $b_j^2 z' \vert z''$, and $ b_j^3 z' \vert z''$. Since
  $S^*$ displays $x\vert z'\vert z'' \in \CF$, we have $r \coloneqq
  \lca_{S^*}(x, z') = \lca_{S^*}(x, z'') = \lca_{S^*}(z', z'')$.  Let $y'$
  be the unique child of $r$ such that $z'\preceq_{S^*} y'$, and note that
  $x$ and $z''$ are not descendants of $y'$. Since $S^*$ displays $b_j^1 z'
  \vert z''$, $b_j^2 z' \vert z''$, and $b_j^3 z' \vert z''$, it follows
  that $b_j^1$, $b_j^2$, and $b_j^3$ are all descendants of $y'$.  Now,
  $\alpha_j$ cannot be a descendant of $y'$, as otherwise $S^*$ would
  display $b_j^1 \alpha_j \vert x$, as opposed to the fan triple $x\vert
  b_j^1\vert \alpha_j \in \CF$ that $S^*$ must display. On the other hand,
  if $\alpha_j$ is not a descendant of $y'$, then $b^j_2,b^j_3\prec_{S^*}
  y'$ implies that $S^*$ displays $b_j^2 b_j^3 \vert \alpha_j$, a
  contradiction since $b_j^2 \vert b_j^3 \vert \alpha_j \in \CF$.  Hence,
  $B_j \subseteq U_1$ is not possible. By interchanging the roles of $z'$
  and $z''$ and using similar arguments, one shows that $B_j \subseteq U_2$
  is not possible either. In summary, $\{U_1, U_2\}$ is a 3-set splitting.
\end{proof}

We are now in the position to prove NP-completeness of
\PROBLEM{EDT-Recognition} (Thm.~\ref{thm:EDT-rec-NPc}).

\begin{proof}[Proof of Theorem \ref{thm:EDT-rec-NPc}]
  First note that the problem is in NP, since a scenario that explains a
  given instance $(G, \sigma)$ can easily be verified in polynomial time.
  We show that \textsc{EDT-Recognition} is NP-hard by reduction from the
  \textsc{$(\CF, \CR)$-Satisfiability} problem.  Let $(U, \CF, \CR)$ be an
  instance of \textsc{$(\CF, \CR)$-Satisfiability}. We proceed by
  constructing a corresponding instance $(G, \sigma)$ of
  \textsc{EDT-Recognition} as the disjoint union of colored graphs $F_t$
  for all $t\in\CF$ and $R_t$ for all $t\in \CR$.

  The color set $\Sigma$ comprises a distinct color $\sigma(u)$ for each
  $u\in U$, and a distinct color $\sigma(t)$ for each $t\in\CR$.  Note that
  for each pair of triples $t = \rpair{x}{y}{z} \in \CR$ a single color
  $\sigma(t)$ is used. Hence, $\Sigma$ contains $\vert U \vert + \vert
  \CR\vert$ colors.
  
  \medskip\noindent For each $t \coloneqq x\vert y\vert z \in \CF$, we
  define $F_t$ as the vertex colored graph with
  \begin{itemize}
  \item[] vertex set $V(F_t)\coloneqq \{x_t,y_t,z_t,x'_t,y'_t,z'_t\}$,
  \item[] edge set $E(F_t) \coloneqq \{x_ty_t,y_tz_t,x'_tz'_t,z'_ty'_t\}$,
    and
  \item[] vertex coloring $\sigma(x_t)=\sigma(x_t')=\sigma(x)$,
    $\sigma(y_t)=\sigma(y_t')=\sigma(y)$, and
    $\sigma(z_t)=\sigma(z_t')=\sigma(z)$.
  \end{itemize}
  By construction $F_t$ consists of two connected components, namely the
  two $P_3$s $x_t-y_t-z_t$ and $x'_t-z'_t-y'_t$ on three colors.  In
  particular, $F_t$ is properly colored. Moreover, $F_t$ and $F_{t'}$ are
  vertex disjoint for distinct $t,t'\in \CF$ even though $t$ and $t'$ may
  have leaves in common and thus, the vertices in $V(F_t)$ and $V(F_{t'})$
  may share colors.
    
 \medskip\noindent
  For each $t \coloneqq \rpair{x}{y}{z} \in \CR$ we define $R_t$ as the vertex
  colored graph with 
  \begin{itemize}
  \item[] vertex set $V(R_t) \coloneqq \{x_t, y_t, z_t, w_t, y_t', z_t',
    w_t'\}$,
  \item[] edge set $E(R_t) \coloneqq \{x_t w_t, x_t y_t, w_t y_t, w_t z_t,
    w_t' y_t', y_t' z_t'\}$, and
  \item[] vertex coloring 
    $\sigma(x_t) = \sigma(x)$,
    $\sigma(y_t) = \sigma(y_t') = \sigma(y)$, 
    $\sigma(z_t) = \sigma(z_t') = \sigma(z)$, and
    $\sigma(w_t) = \sigma(w_t') = \sigma(t)$. 
  \end{itemize}
  By construction, $R_t$ consists of two connected components, a so-called
  \emph{paw graph} on the four vertices $x_t$, $y_t$, $z_t$, and $w_t$ and
  the $P_3$ $w_t' - y_t' - z_t'$. In particular, $R_t$ is properly colored.
  Again, $R_t$ and $R_{t'}$ for distinct $t,t'\in\CR$ are vertex disjoint
  but may share certain colors.  Since $\CF \cap \CR = \emptyset$, we have
  $t\ne t'$ for any $F_t$ and $R_{t'}$, i.e., each $t$ unambiguously refer
  to either a subgraph $F_t$ or a subgraph $R_t$ of $(G,\sigma)$. The
  graphs $F_t$ and $R_t$ are illustrated in Fig.~\ref{fig:NPc-gadgets}(A)
  and (B), respectively.
  
  \begin{figure}[h]
    \centering 
    \includegraphics[width=0.7\textwidth]{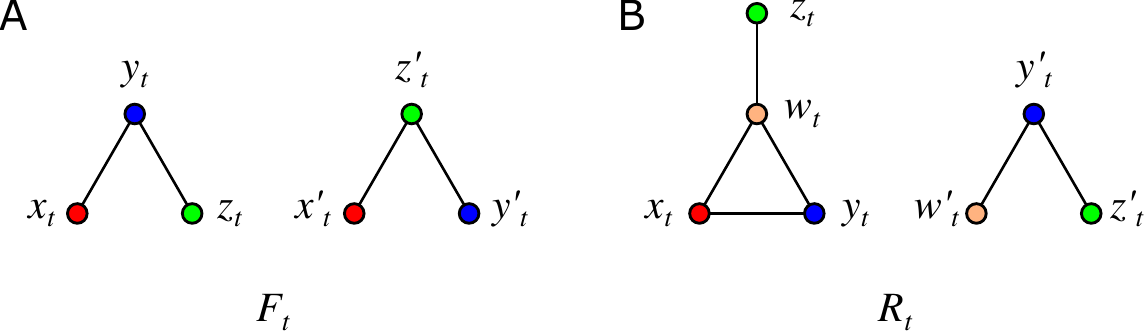}
    \caption{The graphs $F_t$ and $R_t$ as constructed in the proof of
      Theorem~\ref{thm:EDT-rec-NPc}.}
     \label{fig:NPc-gadgets}
  \end{figure}

  Since $F_t$ and $R_t$ can be constructed in constant time for each
  $t\in\CF\cup\CR$, the graph $(G,\sigma)$ can be constructed in polynomial
  time. Every connected component of $G$ is either a paw component'' or a
  ``$P_3$ component''. By construction, any two vertices that are in the
  same connected component of $(G,\sigma)$ have different colors. Thus
  $(G,\sigma)$ is properly colored.
  
  We proceed by showing that there exists a tree $S^*$ that satisfies
  $(\CF, \CR)$ if and only if there exists a relaxed scenario $\scen$ that
  explains $(G, \sigma)$.  As we shall see, $F_t$ ensures that the species
  tree $S^*$ displays the fan triple
  $\sigma(x)\vert\sigma(y)\vert\sigma(z)$, while $R_t$ enforces the species
  tree to display either $\sigma(x)\sigma(y)\vert\sigma(z)$ or
  $\sigma(x)\sigma(z)\vert\sigma(y)$.

  \begingroup
  % make ML's shorthands local -- check carefully that this
  % does not interfer with any strange definitions of the springer package.
  % probably replace in the end to keep typsetters from fucking up the
  % manuscript.
  \renewcommand{\a}{\tilde{a}}
  \renewcommand{\b}{\tilde{b}}
  \renewcommand{\c}{\tilde{c}}
  \newcommand{\p}{\tilde{p}}
  \newcommand{\q}{\tilde{q}}
  \renewcommand{\r}{\tilde{r}}
  \renewcommand{\u}{\tilde{u}}
  \renewcommand{\v}{\tilde{v}}
  \newcommand{\w}{\tilde{w}}  
  \newcommand{\x}{\tilde{x}}
  \newcommand{\y}{\tilde{y}}
  \newcommand{\z}{\tilde{z}}

  In the following we simplify the notation and denote the color of a
  vertex $u$ in $G$ by $\tilde u$ instead of $\sigma(u)$.

  Suppose first that $(G, \sigma)$ is a yes-instance of
  \textsc{EDT-Recognition} and thus, there exists a relaxed scenario $\scen
  = (T,S,\sigma,\mu,\tT,\tS)$ that explains $(G, \sigma)$. We show that
  there exists a tree $S^*$ that satisfies $(\CF,\CR)$. Consider $\graphs =
  (\Gu(\scen), \Gg(\scen), \Ga(\scen), \sigma)$, where by assumption
  $\Gg(\scen) = G$. By Prop.\ \ref{prop:triples}, the species tree $S$ of
  $\scen$ agrees with $(\R_S(\graphs), \F_S(\graphs))$.

  We claim that $S_{\vert \x\y\z}$ coincides with the fan triple $\x\vert
  \y\vert \z$ for every $t = x \vert y \vert z \in \CF$.  To see this,
  consider the subgraph $F_t$ in $G$. It contains $x_t-y_t-z_t$ and $x'_t -
  z'_t - y'_t$ as induced $P_3$s. By Definition~\ref{def:inf-forb-triples},
  therefore, $\x \y \vert \z$, $\x \z \vert \y$, and $\y \z\vert \x$ are
  forbidden triples of $\F_S(\graphs)$, and thus $S_{\vert \x\y\z}$ must
  display $\x\vert \y\vert \z$ as claimed.  We next claim that for each $t
  = \rpair{x}{y}{z} \in \CR$, $S_{\vert \x\y\z}$ is either $\x\y\vert \z$
  or $\x\z\vert \y$. Consider the subgraph $R_t$ in $G$.  It contains $w_t'
  - y_t' - z_t'$ as an induced $P_3$. By
  Definition~\ref{def:inf-forb-triples}, therefore, $\w \y \vert \z$ and
  $\y \z \vert \w$ are forbidden triples of $\F_S(\graphs)$. We argue next
  that $y_t z_t \in E(\Gu(\scen))$. To this end, suppose for contradiction
  that $y_t z_t \in E(\Ga(\scen))$. This together with
  Definition~\ref{def:inf-forb-triples} and $w_t y_t, w_tz_t \in
  E(G)=E(\Gg(\scen))$ implies that $\y \z \vert \w$ is an informative
  triple of $\R_S(\graphs)$; a contradiction to $\y \z \vert \w$ being a
  forbidden triple. Together with $y_t z_t \notin E(G)$, this leaves $y_t
  z_t \in E(\Gu(\scen))$ as the only possibility. Now consider $x_t z_t$,
  which is not an edge in $G=\Gg(\scen)$. We have the two possibilities
  $x_t z_t \in E(\Gu(\scen))$ and $x_t z_t \in E(\Ga(\scen))$. Again using
  Definition~\ref{def:inf-forb-triples}, $x_t z_t, y_t z_t \in
  E(\Gu(\scen))$ and $x_t y_t \notin E(\Gu(\scen))$ yield the informative
  triple $\x \y \vert \z$ in the former case; and $x_t z_t \in
  E(\Ga(\scen))$ and $x_t y_t, y_t z_t \notin E(\Ga(\scen))$ yield the
  informative triple $\x \z \vert \y$. Hence, in either case, $S_{\vert
    \x\y\z}$ is either $\x \y\vert \z$ or $\x \z \vert \y$, as claimed.

  We now construct a tree $S^*$ that satisfies $(\CF, \CR)$ from $S$ as
  follows.  We first set $S' \coloneqq S_{\vert \{\tilde{u} : u \in U \}}$.
  In other words, $S'$ is the minimal phylogenetic subtree of $S$ that
  connects all leaves that are distinct from $\w_t$ for $t \in \CR$.
  Moreover, since $w_t$ is not part of any of the aforementioned triples
  and fan triples, the tree $S'$ still displays, for every $t = x \vert y
  \vert z \in \CF$, the fan triple $\x\vert \y\vert \z$ and, for every $t =
  \rpair{x}{y}{z} \in \CR$, either the triple $\x\y\vert \z$ or the triple
  $\x\z\vert \y$. The tree $S^*$ obtained from $S'$ by relabeling, for each
  $u\in U$, the leaf $\u$ by $u$ therefore satisfies $(\CF, \CR)$.
  
  Suppose that $(U, \CF, \CR)$ is a yes-instance of
  \textsc{$(\CF,\CR)$-Satisfiability} and thus, there exists a tree $S^*$
  on leaf set $U$ that satisfies $(\CF, \CR)$. We first construct a graph
  3-partition $\graphs = (\Gu, \Gg, \Ga, \sigma)$ and then use
  Theorem~\ref{thm:explainable-charac} to argue that
  $\graphs$ can be explained by some relaxed scenario.

  We start by setting $\Gg\coloneqq G$ and proceed as follows:
  \begin{enumerate}[label=(A{{\arabic*}})]
  \item \label{item:rule-ccs} for any two distinct connected components
    $H_1$ and $H_2$ of $G$ and any $x \in H_1, y \in H_2$, add $xy$ to
    $E(\Ga)$;
  \item \label{item:rule-F} for each $t = x\vert y\vert z \in \CF$, 
    add $x_t z_t$ and $x_t' y_t'$ to $E(\Gu)$;
  \item \label{item:rule-R} for each $t = \rpair{x}{y}{z} \in \CR$, add
    $y_t z_t$ and $w_t' z_t'$ to $E(\Gu)$ and, for $x_t z_t$, there are two
    cases:
    \begin{enumerate}
    \item 
      if $S^*$ displays $xy\vert z$, then add $x_t z_t$ to $E(\Gu)$;
    \item 
      if $S^*$ displays $xz\vert y$, then add $x_t z_t$ to $E(\Ga)$.
    \end{enumerate}
    Note that no other case is possible since $S^*$ satisfies $(\CF,\CR)$.
  \end{enumerate}

  This completes the construction of $\graphs$. Since
  rules~\ref{item:rule-F} and \ref{item:rule-R} assign an edge in either
  $\Ga$ or $\Gu$ to every non-adjacent pair of vertices within the same
  connected component, i.e., induced $P_3$ or paw graph of $\Gg$, and
  rule~\ref{item:rule-ccs} covers all edges between these connected
  components, $\graphs$ is a graph 3-partition.

  \begin{claim}\label{claim:gsmall}
    For each $ab \in E(\Gu)$, $a$ and $b$ are in the same connected
    component of $G$. Moreover, the connected components of $\Gu$ are
    isolated edges or induced $P_3$s.
    
    \smallskip\normalfont
    \noindent
    \emph{Proof of Claim \ref{claim:gsmall}.}  Only Steps~\ref{item:rule-F}
    and~\ref{item:rule-R} add edges to $\Gu$, and they only add edges
    between vertices of the same $P_3$ or paw component of $G$.  Moreover,
    in each such component, these steps never add more than two edges to
    $\Gu$, and so the connected components of $\Gu$ are isolated edges or
    induced $P_3$s, as claimed.
    \hfill$\diamond$
  \end{claim}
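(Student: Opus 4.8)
The plan is to read the claim directly off the construction of $\graphs=(\Gu,\Gg,\Ga,\sigma)$, tracking exactly which edges are placed into $E(\Gu)$. First I would observe that only the gadget rules \ref{item:rule-F} and \ref{item:rule-R} ever add edges to $\Gu$, whereas the inter-component rule \ref{item:rule-ccs} contributes solely to $E(\Ga)$. For each gadget I would then check that both endpoints of every added $\Gu$-edge lie in a single connected component of $\Gg=G$: the edges $x_tz_t$ and $x_t'y_t'$ of \ref{item:rule-F} join the two endpoints of the two $P_3$ components of $F_t$, and the edges $y_tz_t$, the optional $x_tz_t$, and $w_t'z_t'$ of \ref{item:rule-R} lie inside the paw component and the $P_3$ component of $R_t$, respectively. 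Since no rule ever joins vertices of distinct components by a $\Gu$-edge, this already yields the first assertion, that any $ab\in E(\Gu)$ has $a$ and $b$ in the same connected component of $G$.

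For the second assertion I would argue component by component, which is legitimate precisely because $\Gu$-edges never cross $G$-components. In each of the two $P_3$ components of $F_t$ exactly one $\Gu$-edge is created (joining the two endpoints), producing an isolated $\Gu$-edge while the middle vertex stays $\Gu$-isolated; the same happens in the $P_3$ component $w_t'-y_t'-z_t'$ of $R_t$ via $w_t'z_t'$. The only component that can receive two $\Gu$-edges is the paw component of $R_t$: rule \ref{item:rule-R} always inserts $y_tz_t$, and additionally inserts $x_tz_t$ precisely when $S^*$ displays $xy\vert z$. In that case the two $\Gu$-edges $x_tz_t$ and $y_tz_t$ share the vertex $z_t$ and hence form the path $x_t-z_t-y_t$; in the complementary case only the isolated edge $\{y_t,z_t\}$ survives in $\Gu$. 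Collecting these observations, every $G$-component contributes at most two edges to $\Gu$, arranged either as a single edge or as a two-edge path, so every connected component of $\Gu$ is an isolated edge or an induced $P_3$ (the remaining gadget vertices being $\Gu$-isolated).

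The argument is essentially bookkeeping, so there is no deep obstacle; the one point that needs genuine care is verifying that the two $\Gu$-edges in the paw component really form an \emph{induced} $P_3$ and not merely a path. For $x_t-z_t-y_t$ to be induced in $\Gu$ one must rule out $x_ty_t\in E(\Gu)$, which follows because $x_ty_t\in E(G)=E(\Gg)$ and the three graphs $\Gu$, $\Gg$, $\Ga$ form an edge-disjoint $3$-partition. I would therefore invoke this edge-disjointness explicitly, and also note that the two isolated $\Gu$-edges appearing within a single gadget (for instance the two endpoint edges of $F_t$) live in different $G$-components and hence cannot merge into a larger $\Gu$-component.
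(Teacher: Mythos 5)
Your proposal is correct and follows essentially the same route as the paper's proof: only rules \ref{item:rule-F} and \ref{item:rule-R} add edges to $\Gu$, each within a single $P_3$ or paw component of $G$, with at most two $\Gu$-edges per component, yielding isolated edges or induced $P_3$s. Your explicit verification that $x_t-z_t-y_t$ is induced (since $x_ty_t\in E(\Gg)$ and the $3$-partition is edge-disjoint) merely spells out a detail the paper leaves implicit.
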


  \begin{claim}\label{claim:propcolored}
    The graphs $\Gu$ and $\Gg$ are properly colored.  
    
    \smallskip\normalfont
    \noindent
    \emph{Proof of Claim \ref{claim:propcolored}.}  Because $\Gg = G$, the
    graph $\Gg$ is properly colored by construction. As for $\Gu$, the
    endpoints of $\Gu$ edges always belong to the same $P_3$ on three
    colors or $P_4$ on four colors in $G$ by Claim~\ref{claim:gsmall}, and
    they have a different color by construction.
    \hfill$\diamond$
  \end{claim}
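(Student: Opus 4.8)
The plan is to treat $\Gg$ and $\Gu$ separately, relying on the structural facts already recorded for the construction together with Claim~\ref{claim:gsmall}. Both parts are short, so the proposal is essentially a matter of identifying which earlier observation each half of the claim rests on.

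For $\Gg$ the argument is immediate: by construction $\Gg = G$, and it has already been observed that $(G,\sigma)$ is properly colored (each gadget $F_t$ and $R_t$ is a disjoint union of a $P_3$ or paw on pairwise distinct colors, and distinct gadgets are vertex-disjoint). Hence no edge of $\Gg$ joins two equally-colored vertices, and the $\Gg$-part of the claim is settled without further work.

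For $\Gu$ I would invoke Claim~\ref{claim:gsmall}, which asserts that every edge $ab \in E(\Gu)$ has both endpoints inside a single connected component of $G$. Such a component is either an induced $P_3$ on three colors (coming from an $F_t$ gadget or from the $P_3$-part of an $R_t$ gadget) or a paw on four colors (the paw-part of an $R_t$ gadget). The key observation is that within any one connected component of $G$ all vertices carry pairwise distinct colors---this is exactly the property noted when $(G,\sigma)$ was built. Therefore $\sigma(a)\neq\sigma(b)$ for every edge $ab\in E(\Gu)$, so $\Gu$ is properly colored.

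There is no genuinely hard step here; the entire content is bookkeeping. The one point requiring care is to confirm that rules~\ref{item:rule-F} and~\ref{item:rule-R}---the only rules that add edges to $\Gu$---never connect vertices lying in distinct gadget components, and that they never add an edge between two vertices of equal color inside a component. Both facts are packaged by Claim~\ref{claim:gsmall} together with the proper coloring of $G$, so once that claim is in hand the present claim follows at once.
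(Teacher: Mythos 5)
Your proof is correct and follows the paper's argument exactly: $\Gg = G$ is properly colored by construction, and for $\Gu$ you combine Claim~\ref{claim:gsmall} with the observation that all vertices within a single connected component of $G$ carry pairwise distinct colors. (Indeed, your description of the four-vertex component as a paw is more accurate than the paper's text, which calls it a $P_4$ on four colors.)
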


  \begin{claim}\label{claim:cographs}
    The graphs $\Gu$ and $\Ga$ are cographs.
    
    \smallskip\normalfont
    \noindent
    \emph{Proof of Claim \ref{claim:cographs}.}  For $\Gu$, this holds
    because its connected components have at most $3$ vertices by
    Claim~\ref{claim:gsmall} and, thus, it cannot contain an induced $P_4$.
    Now consider the graph $\Gg\cup\Gu$. Since only Steps~\ref{item:rule-F}
    and~\ref{item:rule-R} add edges to $\Gu$, and they only add edges
    between vertices of the same $P_3$ or paw component of $G$, the
    connected components of $\Gg\cup \Gu$ all have $3$ or $4$ vertices. In
    particular, upon inspection of Fig.~\ref{fig:NPc-gadgets} and
    Steps~\ref{item:rule-F} and~\ref{item:rule-R}, one easily verifies that
    none of these components contains an induced $P_4$. Therefore,
    $\Gg\cup\Gu$ must be a cograph.  Finally, since $\graphs$ is a graph
    3-partition, $\Ga$ is the complement graph of $\Gg\cup\Gu$ and thus
    also a cograph.
    \hfill$\diamond$
  \end{claim}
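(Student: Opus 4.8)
The plan is to treat the two graphs separately, leaning on Claim~\ref{claim:gsmall} and on the characterization of cographs as the $P_4$-free graphs (Prop.~\ref{prop:cograph}). For $\Gu$, Claim~\ref{claim:gsmall} already asserts that each connected component of $\Gu$ is an isolated edge or an induced $P_3$, hence has at most three vertices. Since an induced $P_4$ must lie entirely inside a single connected component and requires four vertices, $\Gu$ cannot contain an induced $P_4$ and is therefore a cograph.

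For $\Ga$, the key observation is that $\graphs$ is a graph $3$-partition, so the three edge sets partition $\binom{L}{2}$; consequently $\Ga = \overline{\Gg \cup \Gu}$ is the complement of $\Gg \cup \Gu$. As the complement of a cograph is again a cograph (Prop.~\ref{prop:cograph}), it suffices to prove that $\Gg \cup \Gu$ is $P_4$-free.

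To this end I would first argue that adding the $\Gu$-edges does not merge connected components of $G = \Gg$: rules~\ref{item:rule-F} and~\ref{item:rule-R} only insert edges between vertices lying in a common $P_3$ or paw component of $G$, whereas every pair of vertices from distinct components of $G$ is handled by rule~\ref{item:rule-ccs} and lands in $\Ga$, not $\Gu$. Hence the connected components of $\Gg \cup \Gu$ coincide with those of $G$, each of which has three or four vertices. It then remains to inspect the finitely many resulting components, using the explicit edge sets of $F_t$ and $R_t$ from Fig.~\ref{fig:NPc-gadgets}: each of the two $P_3$ components of an $F_t$ becomes a triangle once its single missing edge (added by rule~\ref{item:rule-F}) is inserted; the $P_3$ component $w_t' - y_t' - z_t'$ of an $R_t$ likewise becomes a triangle after adding $w_t' z_t'$; and the paw component of an $R_t$, after adding $y_t z_t$, becomes either $K_4$ (when $x_t z_t$ is placed in $\Gu$) or the diamond $K_4 - e$ (when $x_t z_t$ is placed in $\Ga$). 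None of triangle, $K_4$, or diamond contains an induced $P_4$, so $\Gg \cup \Gu$ is $P_4$-free and thus a cograph, whence $\Ga$ is a cograph as well.

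The bulk of the work, and the only place needing care, is this last case analysis verifying that every component of $\Gg \cup \Gu$ is one of the three small $P_4$-free graphs above; this is routine given the explicit gadgets. The conceptual step that makes the analysis finite is the observation that $\Gu$ never bridges two components of $G$, so the component structure — and hence the bounded vertex count per component — is preserved in passing from $\Gg$ to $\Gg \cup \Gu$.
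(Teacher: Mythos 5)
Your proof is correct and follows essentially the same route as the paper: the bound on component size from Claim~\ref{claim:gsmall} handles $\Gu$, and $\Ga$ is treated as the complement of the $P_4$-free graph $\Gg\cup\Gu$ via the $3$-partition property and Prop.~\ref{prop:cograph}. The only difference is that you spell out the component-by-component inspection (triangles, $K_4$, and the diamond $K_4-e$) that the paper leaves as an ``easily verifies'' step, and your explicit case analysis is accurate.
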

  
  By Theorem~\ref{thm:explainable-charac}, it remains to show that
  $(\R_S(\graphs), \F_S(\graphs))$ is consistent. To this end, we construct
  a species tree $S$ that agrees with $(\R_S(\graphs),
  \F_S(\graphs))$. First, we set $S\coloneqq S^*$ and, for each $u \in U$,
  relabel the leaf $u$ in $S$ to $\u$.  Second, we insert the remaining
  leaves $\{\w_t \colon t \in \CR\}$ to $S$. To this end, for each $t
  \coloneqq \rpair{x}{y}{z} \in \CR$, we add $\w_t$ as a child of
  $\lca_S(\y, \z)$. We note that if $S$ contains a fan triple $\a\vert
  \b\vert \c$ (resp.\ rooted triple $\a\b\vert \c$) for $\a,\b, \c \in
  \Sigma$, then after inserting a leaf as a child of an existing vertex of
  $S$, the tree $S$ still displays $\a\vert \b\vert \c$ or $\a\b\vert \c$,
  respectively.  Therefore, each insertion of a leaf $\w_t$ preserves the
  triples and fan triples that are already displayed by $S$.
    
  We continue by showing that $S$ agrees with $(\R_S(\graphs), \F_S(\graphs))$.

  \begin{claim}\label{claim:display-R}
    The species tree $S$ displays every triple in $\R_S(\graphs)$.
    
    \smallskip\normalfont
    \noindent
    \emph{Proof of Claim \ref{claim:display-R}.}  Suppose that there are
    $a,b,c \in V(G)$ that imply an informative triple $\sigma(a) \sigma(b)
    \vert \sigma(c) \in \R_S(\graphs)$ (we refrain from using $x,y,z$ as in
    Definition~\ref{def:inf-forb-triples} to avoid confusion with the $x_t,
    y_t, z_t$ vertices).  Together with
    Definition~\ref{def:inf-forb-triples}, this implies that one of the
    following two cases holds: (1) $ac, bc \in E(\Gu)$ and $ab\notin
    E(\Gu)$ or (2) $ab \in E(\Ga)$ and $ac, bc \notin E(\Ga)$.
  
    \textit{Case~(1): $ac, bc \in E(\Gu)$ and $ab \notin E(\Gu)$.}  By
    rule~\ref{item:rule-ccs}, vertices of distinct connected components of
    $G$ are connected by edges in $\Ga$. Since $ac, bc \in E(\Gu)$, the
    vertices $a,b$ and $c$ must be contained in the same connected
    component of $G$.  Clearly, each $P_3$ component contains at most one
    edge in $\Gu$ (since two of the three possible edges are edges in
    $G=\Gg$).  Therefore, $a,b,c$ must be part of a paw component belonging
    to an $R_t$ subgraph, with $t = \rpair{x}{y}{z} \in \CR$. In
    particular, we must have $a=x_t$, $b=y_t$, and $c=z_t$ (noting that the
    roles of $a$ and $b$ are interchangeable). Since $x_t z_t = ac \in
    E(\Gu)$, $S^*$ must display $xy\vert z$ according to
    rule~\ref{item:rule-R} and, thus, $S$ displays $\x \y \vert \z =
    \sigma(a) \sigma(b) \vert \sigma(c)$.
    
    \textit{Case~(2): $ab \in E(\Ga)$ and $ac, bc \notin E(\Ga)$.}  By
    rule~\ref{item:rule-ccs}, vertices of distinct connected components of
    $G$ are connected by edges in $\Ga$. Since $ac, bc \notin E(\Ga)$, the
    vertices $a,b$ and $c$ must be contained in the same connected
    component of $G$. Since we never add $\Ga$ edges between vertices in a
    $P_3$ component, $a,b,c$ must be part of a paw component belonging to
    an $R_t$ subgraph, with $t = \rpair{x}{y}{z} \in \CR$. In particular,
    we must have $a=x_t$ and $b=z_t$ (again, the roles of $a$ and $b$ are
    interchangeable). Since $x_t z_t = ab \in E(\Ga)$, $S^*$ must display
    $xz\vert y$ according to rule~\ref{item:rule-R} and, thus, $S$ displays
    $\x \z \vert \y$. By construction of $S$, $\w_t$ is a child of
    $\lca_S(\y,\z)$. Together with $S$ displaying $\x \z \vert \y$, this
    implies that $S$ also displays $\x \z \vert \w_t$. For $c$, the two
    possibilities $c=y_t$ and $c=w_t$ remain, for which we obtain
    $\sigma(a) \sigma(b) \vert \sigma(c) = \x \z \vert \y$ and $\sigma(a)
    \sigma(b) \vert \sigma(c) = \x \z \vert \w_t$, respectively. Hence,
    $\sigma(a) \sigma(b) \vert \sigma(c)$ is displayed by $S$ in both
    cases.
    
    In summary, $S$ displays every informative triple of $\R_S(\graphs)$.
    \hfill$\diamond$
  \end{claim}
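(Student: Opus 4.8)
The plan is to take an arbitrary informative triple $\sigma(a)\sigma(b)\vert\sigma(c)\in\R_S(\graphs)$, witnessed by vertices $a,b,c\in V(G)$ of pairwise distinct colors, and to show directly that $S$ displays it. By Definition~\ref{def:inf-forb-triples}, exactly one of the two defining patterns holds: (1) $ac,bc\in E(\Gu)$ and $ab\notin E(\Gu)$, or (2) $ab\in E(\Ga)$ and $ac,bc\notin E(\Ga)$. The first move in both cases is \emph{localization}: rule~\ref{item:rule-ccs} places every cross-component pair of $G$ into $\Ga$, so in Case~(1) the two $\Gu$-edges $ac,bc$ force $a,b,c$ into a single connected component of $G$, and in Case~(2) the two non-$\Ga$ pairs $ac,bc$ do the same. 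Combined with Claim~\ref{claim:gsmall} (every $\Gu$-component is an isolated edge or an induced $P_3$, and $\Gu$-edges live inside $G$-components), this pins down which gadget $F_t$ or $R_t$ the three vertices inhabit, and in particular shows that these gadget configurations are the \emph{only} possible sources of informative triples.

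For Case~(1) I would argue that the host component cannot be a $P_3$-component coming from an $F_t$: such a component carries at most one $\Gu$-edge, since two of its three pairs are already $\Gg=G$-edges, whereas $ac$ and $bc$ are two distinct $\Gu$-edges. Hence the component is a paw belonging to some $R_t$ with $t=\rpair{x}{y}{z}\in\CR$. Inspecting the edges added by rule~\ref{item:rule-R}, the only way a paw contains two $\Gu$-edges sharing a vertex is the configuration $y_tz_t,x_tz_t\in E(\Gu)$, i.e.\ the $\Gu$-$P_3$ is $x_t-z_t-y_t$; this is exactly the sub-case in which $S^*$ displays $xy\vert z$. Reading off colors, $\{a,b\}=\{x_t,y_t\}$ and $c=z_t$ (and $ab=x_ty_t\in E(\Gg)$ confirms $ab\notin E(\Gu)$), so the informative triple is $\tilde{x}\tilde{y}\vert\tilde{z}$, which $S$ inherits from $S^*$ because the $U$-leaves keep their displayed triples under relabeling and the $\tilde{w}_t$-insertions preserve displayed triples.

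For Case~(2) the localization step again confines $a,b,c$ to one $G$-component, and since no $\Ga$-edge is ever inserted inside a $P_3$-component, the host must be a paw $R_t$ with $t=\rpair{x}{y}{z}$. The only $\Ga$-edge that rule~\ref{item:rule-R} can add inside a paw is $x_tz_t$, which occurs precisely when $S^*$ displays $xz\vert y$; hence $\{a,b\}=\{x_t,z_t\}$ and $S$ displays $\tilde{x}\tilde{z}\vert\tilde{y}$. The third vertex $c$, having a color distinct from $\tilde{x}$ and $\tilde{z}$, is either $y_t$ or $w_t$ (and in both cases $ac,bc$ are $\Gg$- or $\Gu$-edges, so $ac,bc\notin E(\Ga)$ as required). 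If $c=y_t$ the triple is $\tilde{x}\tilde{z}\vert\tilde{y}$ and we are done; if $c=w_t$ the triple is $\tilde{x}\tilde{z}\vert\tilde{t}$, which still needs to be checked.

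The step I expect to be the crux is this last sub-case $c=w_t$. Here I must exploit the specific placement of the inserted leaf: $\tilde{w}_t$ was attached as a child of $\lca_S(\tilde{y},\tilde{z})$. Because $S$ displays $\tilde{x}\tilde{z}\vert\tilde{y}$ we have $\lca_S(\tilde{x},\tilde{z})\prec_S\lca_S(\tilde{y},\tilde{z})=\lca_S(\tilde{x},\tilde{y})$, and since $\tilde{w}_t$ hangs directly below $\lca_S(\tilde{y},\tilde{z})$ it follows that $\lca_S(\tilde{x},\tilde{w}_t)=\lca_S(\tilde{z},\tilde{w}_t)=\lca_S(\tilde{y},\tilde{z})\succ_S\lca_S(\tilde{x},\tilde{z})$, i.e.\ $S$ displays $\tilde{x}\tilde{z}\vert\tilde{w}_t=\tilde{x}\tilde{z}\vert\tilde{t}$. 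With both sub-cases of Case~(2) and the single configuration of Case~(1) settled, every informative triple of $\R_S(\graphs)$ is displayed by $S$, which completes the claim.
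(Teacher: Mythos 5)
Your proposal is correct and takes essentially the same route as the paper's proof: localization of $a,b,c$ into one connected component of $G$ via rule (A1), exclusion of the $P_3$-components to land in a paw gadget of some $R_t$, identification of $\{a,b,c\}$ with $\{x_t,y_t,z_t\}$ (Case 1) resp.\ $\{a,b\}=\{x_t,z_t\}$ with $c\in\{y_t,w_t\}$ (Case 2), and the same use of the placement of $\tilde{w}_t$ as a child of $\lca_S(\tilde{y},\tilde{z})$ to get $\tilde{x}\tilde{z}\vert\tilde{w}_t$, which you in fact spell out in more detail than the paper. One cosmetic remark: in Case (1) you phrase the $P_3$-exclusion as applying to components ``coming from an $F_t$'', whereas the paw conclusion requires excluding the $P_3$-components of the $R_t$ gadgets ($w_t'-y_t'-z_t'$) as well; your argument (at most one $\Gu$-edge since two of the three pairs are already $\Gg$-edges) applies verbatim to those too, which is what the paper's blanket statement about all $P_3$ components covers.
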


  \begin{claim}\label{claim:nodisplay-F}
    The species tree $S$ does not display any triple in $\F_S(\graphs)$.

    \smallskip\normalfont
    \noindent    
    \emph{Proof of Claim \ref{claim:nodisplay-F}.}  Suppose that there are
    vertices $a,b,c \in V(G)$ that imply a forbidden triple $\sigma(a)
    \sigma(b) \vert \sigma(c) \in \F_S(\graphs)$.  By
    Definition~\ref{def:inf-forb-triples}, we have (1) $ab, bc \in E(\Gg)$
    and $ac \notin E(\Gg)$ or (2) $ab, ac \in E(\Gg)$ and $bc \notin
    E(\Gg)$. In the following, we consider only Case~(1), since analogous
    arguments apply in Case~(2).  Because $ab,bc \notin E(\Ga)$, we know
    that $a$, $b$ and $c$ are contained in the same connected component of
    $G$.
    
    Suppose that $a$, $b$, and $c$ are in the same $P_3$ component of some
    $F_t$ subgraph where $t = x\vert y\vert z \in \CF$. Thus
    $\{\sigma(a),\sigma(b),\sigma(c)\} = \{\x,\y,\z\}$.  In this case,
    since $S^*$ contains $x\vert y\vert z$, $S$ contains $\x\vert \y\vert
    \z = \sigma(a) \vert \sigma(b) \vert \sigma(c)$ and thus does not
    contain the forbidden triple implied by $a,b,c$.
    
    Suppose that $a$, $b$, and $c$ are in the same $P_3$ component of some 
    $R_t$ component where $t = \rpair{x}{y}{z} \in \CR$. Thus
    $\{\sigma(a),\sigma(b),\sigma(c)\} = \{\w_t,\y,\z\}$. Since we have added 
    $\w_t$ as a child of $\lca_S(\y,\z)$, $S$ contains $\w_t\vert \y\vert
    \z = \sigma(a) \vert \sigma(b) \vert \sigma(c)$ and thus does not
    contain the forbidden triple implied by $a,b,c$.
  
    Finally, suppose that $a$, $b$, and $c$ are in the same paw component
    of some $R_t$ component where $t = \rpair{x}{y}{z} \in \CR$. Then
    either (i) $a = y_t$, $b = w_t$, $c = z_t$; (ii) $a = z_t$, $b = w_t$,
    $c = y_t$; (iii) $a = x_t$, $b = w_t$, $c = z_t$; or (iv) $a = z_t$, $b
    = w_t$, $c = x_t$. In Cases~(i) and~(ii), we again have
    $\{\sigma(a),\sigma(b),\sigma(c)\} = \{\w_t,\y,\z\}$ and, as argued
    before, $S$ does not contain the forbidden triples implied by $a,b,c$.
    Now consider Cases~(iii) and~(iv), and thus $\sigma(a)\sigma(b) \vert
    \sigma(c) = \x \w_t \vert \z$ and $\sigma(a)\sigma(b) \vert \sigma(c) =
    \z \w_t \vert \x$, respectively.  Since $S^*$ displays either $xy\vert
    z$ or $xz\vert y$, $S$ displays $\x\y\vert \z$ or $\x\z\vert \y$. Since
    we have moreover added $\w_t$ as a child of $\lca_S(\y,\z)$, $S$
    displays $\x \vert \z \vert \w_t$ or $\x \z \vert \w_t$,
    respectively. Hence, $S$ displays none of the two forbidden triples
    obtained in Cases~(iii) and~(iv).
    
    Taken together, $S$ does not display a triple in $\F_S(\graphs)$.
    \hfill$\diamond$
  \end{claim}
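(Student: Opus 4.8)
The plan is to argue by contradiction: suppose some triple of $\F_S(\graphs)$ is displayed by $S$. By Definition~\ref{def:inf-forb-triples}, every forbidden triple is witnessed by three vertices $a,b,c\in V(G)$ of pairwise distinct colors that induce a $P_3$ in $\Gg=G$; after renaming I would fix the center to be $b$, so that $ab,bc\in E(\Gg)$, $ac\notin E(\Gg)$, and the forbidden triple in question is $\sigma(a)\sigma(b)\vert\sigma(c)$ (the symmetric case $ab,ac\in E(\Gg)$, $bc\notin E(\Gg)$ is dispatched by the same argument). The first reduction is to localise $a,b,c$: since rule~\ref{item:rule-ccs} places every inter-component pair into $\Ga$, and here $ab,bc\in E(\Gg)$ are not $\Ga$-edges, the three vertices must lie in a single connected component of $G$ (where they automatically have distinct colors). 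It then suffices to walk through the possible component types---the $P_3$ components coming from the $F_t$ gadgets and from the pendant $P_3$ of each $R_t$, and the paw components coming from each $R_t$---and to exhibit in each case a fan triple or a binary triple displayed by $S$ that is \emph{incompatible} with $\sigma(a)\sigma(b)\vert\sigma(c)$.

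For the $P_3$ components the argument is short. The unique induced $P_3$ is the whole component, so its three colors are fixed. If it comes from $F_t$ with $t=x\vert y\vert z\in\CF$, the colors are $\{\sigma(x),\sigma(y),\sigma(z)\}$, and because $S^*$ displays the fan triple $x\vert y\vert z$ and the relabelling preserves it, $S$ displays $\sigma(x)\vert\sigma(y)\vert\sigma(z)$; a fan triple excludes every binary triple on those colors, in particular the forbidden one. If it is the $P_3$ $w_t'-y_t'-z_t'$ of some $R_t$, the colors are $\{\sigma(t),\sigma(y),\sigma(z)\}$, and since the leaf of color $\sigma(t)$ was inserted as a child of $\lca_S(\sigma(y),\sigma(z))$, the three pairwise last common ancestors coincide and $S$ again displays the corresponding fan triple.

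The delicate case is the paw component of an $R_t$ with $t=\rpair{x}{y}{z}\in\CR$. Here I would first note that $z_t$ is a pendant whose only neighbour is $w_t$, so the non-adjacent endpoint pair $ac$ of any induced $P_3$ is either $x_tz_t$ or $y_tz_t$, forcing the center $b$ to be the common neighbour $w_t$. This leaves exactly $(a,b,c)\in\{(y_t,w_t,z_t),(z_t,w_t,y_t),(x_t,w_t,z_t),(z_t,w_t,x_t)\}$. The first two reuse the color set $\{\sigma(t),\sigma(y),\sigma(z)\}$ and are excluded exactly as above via the placement of $\sigma(t)$ below $\lca_S(\sigma(y),\sigma(z))$. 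For the remaining two, whose forbidden triples are $\sigma(x)\sigma(t)\vert\sigma(z)$ and $\sigma(z)\sigma(t)\vert\sigma(x)$, I would combine two facts: $S^*$ satisfies $\CR$, so $S$ displays $\sigma(x)\sigma(y)\vert\sigma(z)$ or $\sigma(x)\sigma(z)\vert\sigma(y)$; and $\sigma(t)$ sits as a leaf-child of $r\coloneqq\lca_S(\sigma(y),\sigma(z))$, whence $\lca_S(\sigma(t),v)=r$ for every color $v$ strictly below $r$. In the first subcase $\lca_S(\sigma(x),\sigma(z))=r$, so the three pairwise ancestors of $\sigma(x),\sigma(t),\sigma(z)$ coincide and $S$ displays the fan; in the second $\lca_S(\sigma(x),\sigma(z))\prec_S r$, so $S$ displays $\sigma(x)\sigma(z)\vert\sigma(t)$. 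In both subcases $\sigma(t)$ is never grouped with a single one of $\sigma(x),\sigma(z)$ against the other, so neither forbidden triple is displayed.

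I expect this last step to be the main obstacle, since it is the only place where one must track \emph{where} the inserted leaf of color $\sigma(t)$ falls relative to $\sigma(x),\sigma(y),\sigma(z)$ and make the two resolutions of the $\CR$-constraint interlock with the ``child of $\lca_S(\sigma(y),\sigma(z))$'' placement. Once both subcases are checked, assembling the three component types shows that no triple of $\F_S(\graphs)$ is displayed by $S$, which establishes the claim.
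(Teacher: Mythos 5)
Your proposal is correct and follows essentially the same route as the paper's own proof: the same localization of $a,b,c$ to a single connected component via rule~\ref{item:rule-ccs}, the same case split over $F_t$-$P_3$, pendant-$P_3$, and paw components, and in the paw cases (iii)/(iv) the same interlocking of the two possible resolutions of the $\CR$-constraint with the placement of the $\sigma(t)$-leaf as a child of $\lca_S(\sigma(y),\sigma(z))$, yielding a fan triple or $\sigma(x)\sigma(z)\vert\sigma(t)$ exactly as in the paper. Your only additions are presentational: you normalize the $P_3$ center to $b$ and explicitly justify (via the pendant $z_t$) why the center in the paw must be $w_t$, steps the paper leaves implicit.
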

  \endgroup

  We have constructed the graph 3-partition $\graphs =
  (\Gu,\Gg,\Ga,\sigma)$ such that $\Gu$ and $\Gg$ are properly colored by
  Claim~\ref{claim:propcolored}, $\Gu$ and $\Ga$ are cographs by
  Claim~\ref{claim:cographs}, and $(\R_S(\graphs),\F_S(\graphs))$ is
  consistent by Claim~\ref{claim:display-R} and
  Claim~\ref{claim:nodisplay-F}. By Theorem~\ref{thm:explainable-charac},
  this implies that $\graphs$ can be explained by a relaxed scenario
  $\scen$. Since $\Gg(\scen) = \Gg = G$, we can conclude that $G$ is an EDT
  graph.
   
  In summary, we have established that \textsc{EDT-Recognition} is
  NP-complete. Moreover, the graph $G$ constructed in the reduction from
  the \textsc{$(\CF,\CR)$-Satisfiability} problem is a cograph because it
  does not contain a $P_4$ as an induced subgraph. Therefore
  \textsc{EDT-Recognition} remains NP-hard if the input graph is a cograph.
  \end{proof}
  
  \end{appendix}

\bibliographystyle{spbasic}      % mathematics and physical sciences
\bibliography{edt-graph}   % name your BibTeX data base

\end{document}